\newcommand{\VD}{\mbox{$V\!D$}}
\newcommand{\VM}{\mbox{$V\!M$}}
\newcommand{\HM}{\mbox{$H\!M$}}
\newcommand{\HStrip}{\mbox{$H\!Strip$}}
\newcommand{\Vor}{\mbox{$V\!D$}}
\newcommand{\Tri}{\mbox{$T\!r\!i$}}
\newcommand{\SPM}{\mbox{$S\!P\!M$}}
\def\calP{\mathcal{P}}
\def\calM{\mathcal{M}}
\def\spp{$L_1$-SP}
\def\spq{$L_1$-SPM}
\def\lgvd{$L_1$-GVD}
\def\calR{\mathcal{R}}
\def\calF{\mathcal{F}}
\def\bay{bay(\overline{cd})}
\def\canal{canal(x,y)}
\def\c{core}
\newcommand{\PC}{\gamma}
\def\gvd{G\!V\!D}
\def\lemmaspace{\vspace*{-0.00in}}
\def\sectionspace{\vspace*{-0.00in}}
\def\subsectionspace{\vspace*{-0.00in}}
\begin{document}

\baselineskip=14.0pt

\title{
\vspace*{-0.55in} Computing $L_1$
Shortest Paths among Polygonal Obstacles in the Plane\thanks{This
research was supported in part by NSF under Grant CCF-0916606.}}

\author{
Danny Z. Chen\thanks{Department of Computer Science and Engineering,
University of Notre Dame, Notre Dame, IN 46556, USA.
E-mail: {\tt \{dchen, hwang6\}@nd.edu}.
}
\hspace*{0.3in} Haitao Wang\footnotemark[2] \thanks{Corresponding
author.
}}

\date{}

\maketitle

\thispagestyle{empty}

\newtheorem{lemma}{Lemma}
\newtheorem{theorem}{Theorem}
\newtheorem{corollary}{Corollary}
\newtheorem{fact}{Fact}
\newtheorem{definition}{Definition}
\newtheorem{observation}{Observation}
\newtheorem{condition}{Condition}
\newtheorem{property}{Property}
\newtheorem{claim}{Claim}
\newenvironment{proof}{\noindent {\textbf{Proof:}}\rm}{\hfill $\Box$
\rm}

\pagestyle{plain}
\pagenumbering{arabic}
\setcounter{page}{1}

\begin{abstract}
Given a point $s$ and a set of $h$ pairwise disjoint polygonal
obstacles of totally $n$ vertices in the plane, we present a new
algorithm for building an $L_1$ shortest path map of size $O(n)$
in $O(T)$ time and $O(n)$ space such that for any query point $t$,
the length of the $L_1$ shortest obstacle-avoiding path from $s$ to
$t$ can be reported in $O(\log n)$ time and the actual shortest path can be
found in additional time proportional to the number of edges of the
path, where $T$ is the time for triangulating the free space.
It is currently known that
$T=O(n+h\log^{1+\epsilon}h)$ for an arbitrarily small constant $\epsilon>0$.
If the triangulation can be done optimally (i.e., $T=O(n+h\log h)$),
then our algorithm is optimal. Previously, the best algorithm computes 
such an $L_1$ shortest path map in $O(n\log n)$ time and $O(n)$ space.  
Our techniques can
be extended to obtain improved results for other related problems,
e.g., computing the $L_1$ geodesic Voronoi diagram for a set of point
sites in a polygonal domain,
finding shortest paths with fixed orientations, finding
approximate Euclidean shortest paths, etc.
\end{abstract}


\section{Introduction}
\label{sec:intro}

Computing obstacle-avoiding shortest paths in the plane is a
fundamental problem in computational geometry and has many
applications. The Euclidean version that measures the path length
by the Euclidean distance has been well studied (e.g., see
\cite{ref:ChenCo11,ref:ChenCo11Curved,ref:GhoshAn91,ref:HershbergerAn99,ref:InkuluA10,ref:KapoorEf88,ref:KapoorAn97,ref:MitchellSh96,ref:RohnertSh86,ref:StorerSh94}).
In this paper, we consider the $L_1$ version, defined as follows.
Given a point $s$ and a set of $h$ pairwise disjoint polygonal
obstacles, $\calP=\{P_1,P_2,\ldots,P_h\}$, of totally $n$
vertices in the plane, where $s$ is considered as a special point obstacle, the
plane minus the interior of the obstacles is called the {\em free
space} of $\calP$. Two obstacles are pairwise {\em disjoint} if they
do not intersect in their interior. The {\em $L_1$ shortest path
map problem}, denoted by \spq, is to compute a single-source
shortest path map (SPM for short) with $s$ as the {\em source point} such
that for any query point $t$,
an $L_1$ shortest obstacle-avoiding path from $s$ to $t$ can be
obtained efficiently. Note that such a path can consist of any polygonal segments
but the length of each segment of the path is measured by the $L_1$ metric.

We say that an SPM has {\em standard query performances} if for any
query point $t$, the length of the $L_1$ shortest obstacle-avoiding
path from $s$ to $t$ can be reported in $O(\log n)$ time and an
actual shortest path can be found in additional time proportional to
the number of edges (or turns) of the path.

If the input also includes another point $t$ and the problem
only asks for one single $L_1$ shortest path from $s$ to $t$, then we call this
problem version the {\em $L_1$ shortest path problem}, denoted by \spp.

A closely related problem version solvable by our approach is to
find shortest rectilinear paths. A {\em rectilinear path} is a path
each of whose edges is parallel to a coordinate axis and its length
is measured by the Euclidean distances or $L_1$ distances of its
segments (they are the same for rectilinear paths). Rectilinear
shortest paths are used
widely in VLSI design and network wire-routing applications. As shown in
\cite{ref:ClarksonRe87,ref:LarsonFi81,ref:MitchellAn89,ref:MitchellL192},
it is easy to convert an arbitrary polygonal path to a rectilinear
path with the same $L_1$ length. Thus, in this paper, we focus on
computing polygonal paths measured by the $L_1$ distance.

\subsection{Previous Work}

The \spp\ problem has been studied extensively (e.g., see
\cite{ref:ChenSh00,ref:ClarksonRe87,ref:ClarksonRe88,ref:LarsonFi81,ref:MitchellAn89,ref:MitchellL192,ref:WidmayerOn91}).
In general, there are two approaches for solving this problem:
Constructing a sparse ``path preserving" graph (analogous to a
visibility graph), and the continuous Dijkstra paradigm.
Clarkson, Kapoor, and Vaidya \cite{ref:ClarksonRe87} constructed a
graph of $O(n\log n)$ nodes and $O(n\log n)$ edges such that a
shortest path can be found in the graph in $O(n\log^2 n)$ time;
subsequently, they gave an algorithm of $O(n\log^{1.5}n)$ time and
$O(n\log^{1.5}n)$ space \cite{ref:ClarksonRe88}. Based on some
observations, Chen, Klenk, and Tu \cite{ref:ChenSh00} showed that
the problem was solvable in $O(n\log^{1.5}n)$ time and $O(n\log n)$
space. By applying the
continuous Dijkstra paradigm, Mitchell
\cite{ref:MitchellAn89,ref:MitchellL192} solved the problem in
$O(n\log n)$ time and $O(n)$ space.
An $O(n+h\log h)$ time lower bound can be established for
solving \spp\ (e.g., based on the results in
\cite{ref:deRezendeRe85}). Hence, Mitchell's algorithm is worst-case
optimal. Recently, by using a corridor structure and building a
smaller path preserving graph, Inkulu and Kapoor
\cite{ref:InkuluPl09} solved the \spp\ problem
in $O(n+h\log^{1.5}n)$ time and $O(n+h\log^{1.5}h)$ space.

For the query version of the problem, i.e., \spq, Mitchell's
algorithm \cite{ref:MitchellAn89,ref:MitchellL192} builds an SPM of
size $O(n)$ in $O(n\log n)$ time and $O(n)$ space with the standard query
performances.

In addition, for the {\em convex case} where all
polygonal obstacles in $\calP$ are convex, to our best knowledge,
we are not aware of any previous better results than those mentioned above.


\sectionspace
\subsection{Our Results}

We present an algorithm for \spq\ that builds an SPM of size
$O(n)$ in $O(T)$ time and $O(n)$ space with the standard query
performances, where $T$ always refers to the time for triangulating the free space
of $\calP$ in the paper. 
It is obvious to see that given an SPM, we can always
add $h-1$ line segments in the
free space to connect the obstacles in $\calP$ together to obtain a
single simple polygon and then triangulate the free space, in totally $O(n)$ time
\cite{ref:Bar-YehudaTr94,ref:ChazelleTr91}. It is currently known that
$T=\Omega(n+h\log h)$ and $T=O(n+h\log^{1+\epsilon}h)$
\cite{ref:Bar-YehudaTr94}, where $\epsilon$ is
an arbitrarily small positive constant. Therefore, we
essentially solve \spq\ in $\Theta(T)$ time. In other words, our
result shows that 
building an SPM is equivalent to triangulating the free space
of $\calP$ in terms of the running time.

Our approach uses Mitchell's algorithm
\cite{ref:MitchellAn89,ref:MitchellL192} as a procedure and further explores
the corridor structure of $\calP$ \cite{ref:KapoorAn97}. One
interesting observation we found is that to find an $L_1$ shortest
path among convex obstacles, it is sufficient to consider only the at
most four extreme vertices (along the horizontal and vertical
directions) of each obstacle (these vertices define a {\em core} for each
obstacle). Mitchell's algorithm is then applied to these cores, which
takes only $O(h\log h)$ time.
More work needs to be
done for computing an SPM. For example, one
key result we have is that we give an $O(n'+m')$ time algorithm for a 
special case of
constructing the $L_1$ geodesic Voronoi diagram in a simple polygon of
$n'$ vertices for $m'$ weighted point sites, where the sites all lie outside
the polygon and influence the polygon through one (open) edge (see
Fig.~\ref{fig:geoVoi}).
We are not aware of any specific previous work on this problem, although an
$O((n'+m')\log (n'+m'))$ time solution may be obtained by standard
techniques. Our linear time algorithm, which is
clearly optimal, may be interesting in its own right.

For the convex case where all obstacles in $\calP$ are
convex, we can find a shortest $s$-$t$ path in $O(n+h\log h)$ time and
$O(n)$ space since the triangulation can be done in $O(n+h\log h)$
time (e.g., by the approaches in
\cite{ref:Bar-YehudaTr94,ref:HertelFa85}); this is optimal.
A by-product of our
techniques, which may be a little ``surprising", is that in $O(n+h\log h)$ time
and $O(n)$ space, we can build
an SPM of size $O(h)$ (instead of $O(n)$) such that the shortest path
{\em length} queries are answered in $O(\log h)$ time
each (instead of $O(\log n)$ time).

\begin{figure}[t]
\begin{minipage}[t]{\linewidth}
\begin{center}
\includegraphics[totalheight=1.5in]{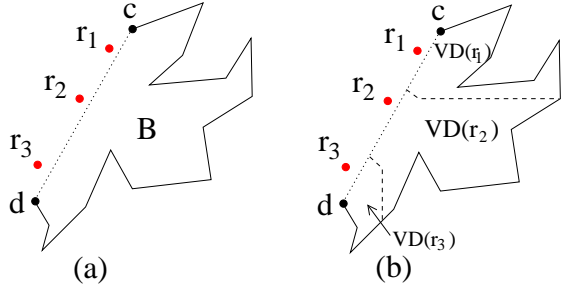}
\caption{\footnotesize (a) Three weighted sites (in red) and
a simple polygon $B$ with an open edge $\overline{cd}$. The goal is to compute
the $L_1$ geodesic Voronoi diagram in $B$ with respect to the three
sites which influence $B$ only through the edge $\overline{cd}$. (b)
Illustrating a possible solution: $B$ is partitioned into
three Voronoi regions $\Vor(r_i)$ for each $r_i$, $1\leq i\leq 3$.}
\label{fig:geoVoi}
\end{center}
\end{minipage}
\end{figure}

\subsection{Applications}
\label{sec:application}

Our techniques can be extended to solve other problems.

The {\em $L_1$ geodesic Voronoi diagram problem}, denoted by \lgvd, is defined 
as follows.  Given an obstacle set $\calP$ and a set
of $m$ point sites in the free space, compute the geodesic Voronoi
diagram for the $m$ point sites under the $L_1$ distance metric
among the obstacles in $\calP$.  Mitchell \cite{ref:MitchellAn89,ref:MitchellL192},
solves the \lgvd\ problem in $O((n+m)\log(n+m))$ time.
Our approach can compute it in $O(T'+n+(m+h)\log(m+h))$ time, where $T'$
is the time for triangulating the free space along with the $m$ point
sites. It is known that $T'=O(n+(m+h)\log^{1+\epsilon}(m+h))$
\cite{ref:Bar-YehudaTr94} or alternatively we can obtain 
$T'=O(n+h\log^{1+\epsilon}h+m\log n)$. Note that
when applying our algorithm to a single simple polygon $P$ of $n$ vertices,
the $L_1$ geodesic Voronoi diagram for $m$ point sites in $P$ can be
obtained in $O(n+m\log^{1+\epsilon} m)$ or $O(n+m(\log n+\log m))$
time. In comparison, the Euclidean version of the one simple polygon
case was solved in $O((n+m)\log (n+m))$ time \cite{ref:PapadopoulouA98}.

We also give better results
for the shortest path problem in ``fixed orientation metrics"
\cite{ref:MitchellAn89,ref:MitchellL192,ref:WidmayerOn87}, for which
a sought path is allowed to follow only a given set of orientations.
For a number $c$ of given orientations,
Mitchell's algorithm \cite{ref:MitchellAn89,ref:MitchellL192} finds
such a shortest path in $O(cn\log n)$ time and $O(cn)$ space, and
our algorithm takes $O(n+ h\log^{1+\epsilon}h+ c^2h\log ch)$ time and
$O(n+c^2h)$ space.  In addition, our approach also leads to an
$O(n+h\log^{1+\epsilon}h +(1/\delta)h\log \frac{h}{\sqrt{\delta}})$ time algorithm for
computing a $\delta$-optimal Euclidean shortest path among polygonal
obstacles for any constant $\delta>0$. For this problem, Mitchell's
algorithm \cite{ref:MitchellAn89,ref:MitchellL192} takes
$O((\sqrt{1/\delta})n\log n)$ time, and Clarkson's algorithm
\cite{ref:ClarksonAp87} runs in $O((1/\delta) n\log n)$ time.

\subsectionspace
\section{An Overview of Our Approaches}

In this section, we give an overview of our approaches as well as the
organization of this paper.
Denote by $\calF$ the free space of $\calP$.
We begin with our algorithm for the convex case, which is
a key procedure for solving the general problem.

We first discuss the \spp\ problem.
In the convex case, each obstacle in $\calP=\{P_1,P_2,\ldots,P_{h}\}$
is convex. For each $P_i\in \calP$, we compute its {\em
core}, denoted by $core(P_i)$, which is a simple polygon by
connecting the topmost, leftmost, bottommost, and rightmost points
of $P_i$.  Let $core(\calP)$ be the set of all $h$ cores of $\calP$.
For any point $t$ in the free space $\calF$,
we show that given any shortest $s$-$t$ path avoiding all
cores in $core(\calP)$, we can find in $O(n)$ time a shortest
$s$-$t$ path avoiding all obstacles in $\calP$ with the same $L_1$
length. Based on this observation, our algorithm
has two main steps: (1) Apply Mitchell's algorithm
\cite{ref:MitchellAn89,ref:MitchellL192} on $core(\calP)$ to compute
a shortest $s$-$t$ path $\pi_{\c}(s,t)$ avoiding the cores in
$core(\calP)$, which takes $O(h\log h)$ time since each core in
$core(\calP)$ has at most four vertices; (2) based on
$\pi_{\c}(s,t)$, compute a shortest $s$-$t$ path avoiding all
obstacles in $\calP$ in $O(n)$ time. This algorithm takes overall
$O(n+h\log h)$ time and $O(n)$ space.

To build an SPM in $\calF$ (with respect to the source point $s$),
similarly, we first apply Mitchell's algorithm
on $core(\calP)$ to compute an SPM of $O(h)$ size in the free space with respect to
all cores, which can be done in $O(n+h\log h)$ time and $O(n)$ space.
Based on the above SPM, in additional $O(n)$ time,
we are able to compute an SPM in $\calF$. Our results for the convex
case are given in Section \ref{sec:convexcase}.

For the general problem where the obstacles in $\calP$ are not
necessarily convex, based on a triangulation of the free space $\calF$,
we first compute a {\em corridor structure} \cite{ref:KapoorAn97},
which consists of $O(h)$ corridors and $O(h)$ junction triangles.
Each corridor possibly has a {\em corridor path}. As in
\cite{ref:KapoorAn97}, the corridor
structure can be used to partition the plane into a set $\calP'$
of $O(h)$ pairwise disjoint convex polygons of totally $O(n)$
vertices such that a shortest $s$-$t$ path in $\calF$ is
a shortest $s$-$t$ path avoiding the convex polygons in $\calP'$ and
possibly containing some corridor paths. All corridor paths
are contained in the polygons of $\calP'$. Thus, in addition to the
corridor paths, finding a shortest path is reduced to an instance
of the convex case. By incorporating the corridor path information
into Mitchell's continuous Dijkstra paradigm
\cite{ref:MitchellAn89,ref:MitchellL192}, our algorithm for
the convex case can be modified to find a shortest path in $O(T)$ time.
The above algorithm is presented in Section \ref{sec:general}.

Sections \ref{sec:spm}, \ref{sec:bay}, and \ref{sec:canal} are
together devoted to compute an SPM in $\calF$ (Section \ref{sec:spm}
outlines the algorithm). 
We use the corridor structure to partition
$\calF$ into the {\em ocean} $\calM$, {\em bays}, and {\em canals}. While the
ocean $\calM$ may be multiply connected, every bay or canal is a
simple polygon. Each bay has a single common boundary edge with
$\calM$ and each canal has two common boundary edges with $\calM$.
But two bays or two canals, or a bay and a canal do not share any
boundary edge. A common boundary edge of a bay (or canal) with
$\calM$ is called a {\em gate}. Thus each bay has one gate and each
canal has two gates. Further, the ocean $\calM$ is exactly the free
space with respect to the convex polygonal set $\calP'$.
By modifying our algorithm for the convex case, we can compute an SPM
in $\calM$ in $O(T)$ time. This part is discussed in Section
\ref{sec:spm}.

Denote by $\SPM(\calM)$ the SPM in $\calM$.
To obtain an SPM in $\calF$, we need to ``expand" $\SPM(\calM)$ into all bays and canals through their gates.
Here, a {\em challenging subproblem} is to solve efficiently a special case of the
(additively) weighted $L_1$ geodesic Voronoi diagram problem on a simple polygon
$B$: The weighted point sites all lie outside $B$ and influence $B$
through one (open) edge (e.g., see Fig.~\ref{fig:geoVoi}). The
subproblem models the procedure of expanding $\SPM(\calM)$ into a bay, where
the polygon $B$ is the bay, the point sites are obstacle vertices
in $\calM$, the weight of each site is the length of its shortest path
to the source point $s$, and the edge of the polygon (e.g., $\overline{cd}$
in Fig.~\ref{fig:geoVoi}) is the gate of the bay.
As discussed before, we give a linear time solution for this
subproblem in Section \ref{sec:bay}.
Note that although our presentation for solving the subproblem is long
and technically complicated, the algorithm itself is
simple and easy to implement; our effort is mostly for simplifying
the algorithm and showing its correctness.

Expanding $\SPM(\calM)$ into canals, which is discussed in
Section \ref{sec:canal}, is also done in linear time by using our
solution for the above subproblem as a main procedure. In summary,
given $\SPM(\calM)$, computing an SPM for the entire free space
$\calF$ takes additional $O(n)$ time.

We discuss a little more about the above challenging subproblem.
The problem may not look ``challenging"
at all as it can be solved by many existing techniques. For example,
one may attempt to use the continuous Dijkstra approach
\cite{ref:MitchellAn89,ref:MitchellL192} to let the ``wavelet" enter
into the bays/canals. However, that would lead to an $O((n'+m')\log
(n'+m'))$ time solution for the subproblem since it takes
logarithmic time to process each event, where $n'$ is the number of
vertices of $B$ and $m'$ is the number of weighted sites, and
consequently it
would take an overall $O(n\log n)$ time for building an SPM in
$\calF$. One may also want to use a sweeping algorithm
\cite{ref:FortuneA87}, which would also
lead to an $O((n'+m')\log (n'+m'))$ time solution since again it
takes logarithmic time to process each event. In addition, the
divide-and-conquer approach \cite{ref:ShamosCl75} would also take
$O((n'+m')\log (n'+m'))$ time since the merge procedure takes linear
time. Our algorithm for the subproblem, which can be viewed as an
incremental approach, takes $O(n'+m')$ time. Incremental approaches
have been widely used in geometric algorithms, and normally they can
result in good randomized algorithms. Incremental approaches have
also been used for constructing Voronoi diagrams, which usually
take quadratic time. Our result demonstrates that incremental
approaches are able to yield optimal deterministic solutions for
building Voronoi diagrams, and the success of it hinges on
discovering many geometric properties of the problem. We should point
out that our techniques for solving the challenging subproblem are quite
independent of other parts of the paper.

In Section \ref{sec:fixed}, we generalize our techniques to solve some
related problems discussed in Section \ref{sec:application}.
Section \ref{sec:con} concludes the paper.

As in \cite{ref:MitchellAn89,ref:MitchellL192}, for simplicity of
discussion, we assume that the free space $\calF$ is connected and the
point $t$ is always in $\calF$ (thus, a feasible $s$-$t$ path always
exists), and no two obstacle vertices lie on the same horizontal or
vertical line. In the rest of this paper, unless otherwise stated, a
shortest path always refers to an $L_1$ shortest path and a length
is always in the $L_1$ metric.


\sectionspace
\section{Shortest Paths among Convex Obstacles}
\label{sec:convexcase}

In this section, we give our algorithms for the convex case, which are
also used for the general case in later sections.
Let $\calP'=\{P_1',P_2'\ldots,P_{h}'\}$ be a set of $h$ pairwise
disjoint convex polygonal obstacles of totally $n$ vertices. With respect to the source point $s$, our algorithm builds an SPM of $O(n)$ size with standard query performances in $O(n+h\log h)$ time and $O(n)$ space.

\sectionspace
\subsection{Notation and Observations}

For each convex polygon $P_i'\in \calP'$, we define its {\em core}, denoted by
$\c(P_i')$, as the simple polygon by connecting the leftmost,
topmost, rightmost, and bottommost vertices of $P_i'$ with line
segments (see Fig.~\ref{fig:core}). Note that $\c(P_i')$ is
contained in $P_i'$ and has at most four edges. Let $\c(\calP')$ be
the set of the cores of all obstacles in $\calP'$. Consider a point $t$ in the free space $\calF$. A key observation (to be proved)
is that a shortest $s$-$t$ path avoiding the cores in $\c(\calP')$
corresponds to a shortest $s$-$t$ path avoiding the obstacles in
$\calP'$ with the same $L_1$ length.
Note that a path avoiding the cores in $\c(\calP')$ may intersect
the interior of some obstacles in $\calP'$.

To prove the above key observation, we first define some concepts.
Consider an obstacle $P_i'$ and $\c(P_i')$. For each edge
$\overline{ab}$ of $\c(P_i')$ with vertices $a$ and $b$, if
$\overline{ab}$ is not an edge of $P_i'$, then it divides $P_i'$
into two polygons, one of them containing $\c(P_i')$; we call the
one that does not contain $\c(P_i')$ an {\em ear} of $P'_i$ {\em
based on} $\overline{ab}$, denoted by $ear(\overline{ab})$ (see
Fig.~\ref{fig:core}). If $\overline{ab}$ is also an edge of $P_i$,
then $ear(\overline{ab})$ is not defined. Note that
$ear(\overline{ab})$ has only one edge bounding $\c(P_i')$, i.e.,
$\overline{ab}$, which we call its {\em core edge}. The other edges
of $ear(\overline{ab})$ are on the boundary of $P_i'$, which we call
{\em obstacle edges}. There are two paths between $a$ and $b$ along
the boundary of $ear(\overline{ab})$: One path is the core edge
$\overline{ab}$ and the other consists of all its obstacle edges. We
call the latter path the {\em obstacle path} of the ear. A line
segment is {\em positive-sloped} (resp., {\em negative-sloped}) if
its slope is positive (resp., negative). An ear is {\em
positive-sloped} (resp., {\em negative-sloped}) if its core edge is
positive-sloped (resp., negative-sloped). Note that by our
assumption no two obstacle vertices lie on the same horizontal or
vertical line, and thus no ear has a horizontal or vertical core
edge. A point $p$ is {\em higher} (resp., {\em lower}) than another
point $q$ if the $y$-coordinate of $p$ is no smaller (resp., no
larger) than that of $q$. The next observation is self-evident.

\begin{figure}[t]
\begin{minipage}[t]{0.48\linewidth}
\begin{center}
\includegraphics[totalheight=1.2in]{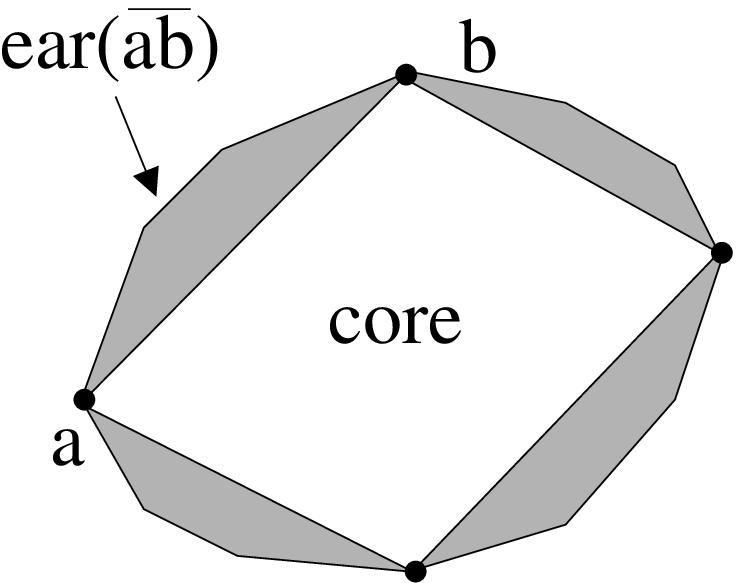}
\caption{\footnotesize Illustrating the core and ears of a convex
obstacle; $ear(\overline{ab})$ is indicated.} \label{fig:core}
\end{center}
\end{minipage}
\hspace*{0.04in}
\begin{minipage}[t]{0.5\linewidth}
\begin{center}
\includegraphics[totalheight=1.2in]{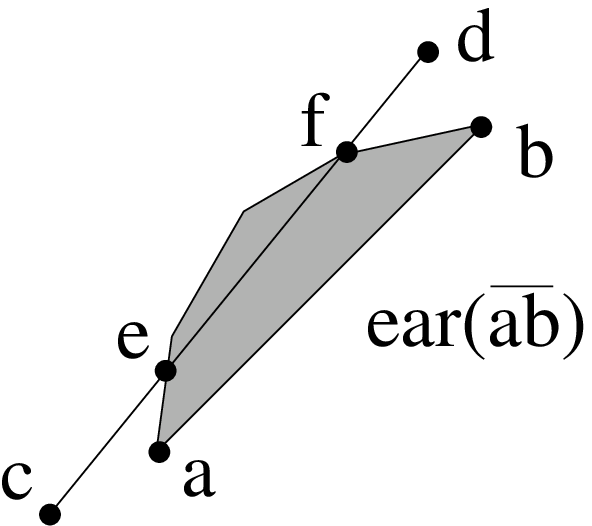}
\caption{\footnotesize The line segment $\overline{cd}$ penetrates
$ear(\overline{ab})$; $\overline{cd}$ intersects the obstacle path
of $ear(\overline{ab})$ at $e$ and $f$.} \label{fig:travelthrough}
\end{center}
\end{minipage}
\vspace*{-0.15in}
\end{figure}

\begin{observation}\label{obser:10}
For any ear, its obstacle path is monotone in both the $x$- and
$y$-coordinates. Specifically, consider an ear $ear(\overline{ab})$
and suppose the vertex $a$ is lower than the vertex $b$. If
$ear(\overline{ab})$ is positive-sloped, then the obstacle path from
$a$ to $b$ is monotonically increasing in both the $x$- and
$y$-coordinates; if it is negative-sloped, then the obstacle path
from $a$ to $b$ is monotonically decreasing in the $x$-coordinates
and monotonically increasing in the $y$-coordinates.
\end{observation}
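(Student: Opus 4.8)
The plan is to reduce the claim to the elementary fact that the boundary of a convex polygon breaks up, at its four extreme vertices, into chains that are monotone in both coordinates, and then to match the monotonicity directions to the slope of the core edge case by case. First I would note that, by the definition of $\c(P_i')$, every core edge joins two cyclically consecutive extreme vertices of $P_i'$ — the leftmost $\ell$, the topmost $u$, the rightmost $r$, and the bottommost $d$ — so it is one of $\overline{\ell u}$, $\overline{ur}$, $\overline{rd}$, $\overline{d\ell}$ (coincidences among these four vertices only shorten the list and introduce no new case). When such an edge $\overline{ab}$ is not an edge of $P_i'$, the obstacle path of $ear(\overline{ab})$ is, by definition, the portion of the boundary of $P_i'$ running from $a$ to $b$ on the side away from $\c(P_i')$; equivalently, it is the sub-chain of the boundary joining $a$ and $b$ that avoids the other two extreme vertices.

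Next I would establish that this sub-chain is monotone in both coordinates. Orienting the boundary of $P_i'$ counterclockwise, the extreme vertices are met in the cyclic order $d, r, u, \ell$, cutting the boundary into (at most) four chains; by convexity — together with the general-position assumption that no two obstacle vertices share an $x$- or $y$-coordinate, which makes each extreme vertex unique and rules out axis-parallel edges — the $x$-coordinate along the boundary has a single maximum at $r$ and a single minimum at $\ell$, and the $y$-coordinate a single maximum at $u$ and a single minimum at $d$. Hence each of the four chains is strictly monotone in both coordinates, with directions: increasing in $x$ and in $y$ from $d$ to $r$; decreasing in $x$ and increasing in $y$ from $r$ to $u$; decreasing in $x$ and in $y$ from $u$ to $\ell$; increasing in $x$ and decreasing in $y$ from $\ell$ to $d$. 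The obstacle path of any ear is one of these chains, possibly traversed in reverse, so it is monotone in both coordinates, which proves the first sentence of the observation.

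Finally, to obtain the precise directions stated, I would run through the four core edges, in each case orienting the chain so that it starts at the lower endpoint. For $\overline{dr}$ we have $x(d)<x(r)$ and $y(d)<y(r)$, so the edge is positive-sloped with $a=d$ and $b=r$, and the chain from $a$ to $b$ increases in both coordinates, as required. For $\overline{ru}$ we have $x(u)<x(r)$ and $y(r)<y(u)$, so the edge is negative-sloped with $a=r$ and $b=u$, and the chain from $a$ to $b$ decreases in $x$ and increases in $y$, as required. The edges $\overline{u\ell}$ (positive-sloped, lower endpoint $\ell$) and $\overline{d\ell}$ (negative-sloped, lower endpoint $d$) are symmetric after reversing the counterclockwise traversal so that it runs from the lower endpoint to the higher one, and both conform to the claimed pattern.

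I do not expect any genuine obstacle here — the paper is right to call this self-evident. The only points needing a little care are bookkeeping ones: checking that coincidences of extreme vertices and axis-parallel edges cannot occur (guaranteed by the stated general-position assumptions) and consistently reorienting each chain so that ``from $a$ to $b$'' means ``from the lower endpoint to the higher one''; beyond that, it is just the standard monotonicity of convex boundary chains.
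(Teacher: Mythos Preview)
Your proposal is correct and complete. The paper itself gives no proof of this observation, remarking only that it is ``self-evident''; your argument --- splitting the convex boundary at its four extreme vertices into monotone arcs and then reading off the slope sign and monotonicity directions for each of the four core edges --- is exactly the natural justification the paper leaves implicit.
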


For an ear $ear(\overline{ab})$ and a line segment $\overline{cd}$,
we say that $\overline{cd}$ {\em penetrates} $ear(\overline{ab})$ if
the following hold (see Fig.~\ref{fig:travelthrough}): (1)
$\overline{cd}$ intersects the interior of $ear(\overline{ab})$, (2)
neither $c$ nor $d$ is in the interior of $ear(\overline{ab})$, and
(3) $\overline{cd}$ does not intersect the core edge $\overline{ab}$
at its interior. The next lemma will be useful later.

\begin{lemma}\label{lem:10}
Suppose a line segment $\overline{cd}$ penetrates an ear
$ear(\overline{ab})$. If $\overline{cd}$ is positive-sloped (resp.,
negative-sloped), then $ear(\overline{ab})$ is also positive-sloped
(resp., negative-sloped).
\end{lemma}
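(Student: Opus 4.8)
The plan is to pin down exactly where $\overline{cd}$ enters and exits the ear, and then to compare its slope against the obstacle path of $ear(\overline{ab})$ using Observation~\ref{obser:10}. First I would observe that $ear(\overline{ab})$ is convex: it is one of the two pieces into which the chord $\overline{ab}$ divides the convex polygon $P_i'$ (recall $\c(P_i')$ lies inside $P_i'$ and $\overline{ab}$ joins two vertices of $P_i'$). Consequently $\overline{cd}\cap ear(\overline{ab})$ is a single line segment, say $\overline{ef}$; by condition~(1) in the definition of ``penetrates'' it meets the interior of the ear, so $e\neq f$, and $e,f$ lie on the boundary of $ear(\overline{ab})$ (if $\overline{ef}$ is a proper sub-segment of $\overline{cd}$, then moving slightly past $e$ or $f$ along $\overline{cd}$ leaves the ear; if $\overline{ef}=\overline{cd}$ then $e=c$ and $f=d$, which by condition~(2) are not interior to the ear).

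Next I would argue that both $e$ and $f$ lie on the obstacle path of the ear, not on the interior of the core edge: the boundary of $ear(\overline{ab})$ is the core edge $\overline{ab}$ together with the obstacle path, whose endpoints are again $a$ and $b$; and by condition~(3), $\overline{cd}$ — hence neither $e$ nor $f$ — meets the interior of $\overline{ab}$. So $e,f$ lie on the closed obstacle path. Since the general position assumption rules out vertical obstacle edges, Observation~\ref{obser:10} gives that the obstacle path is strictly $x$-monotone, so $x_e\neq x_f$; after renaming I may assume $x_e<x_f$, and after possibly swapping $a$ and $b$ I may assume $a$ is lower than $b$ so that Observation~\ref{obser:10} applies verbatim.

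Finally I would compare slopes. The segment $\overline{ef}$ is contained in $\overline{cd}$ and hence has the same slope sign, and the ear is either positive- or negative-sloped since $\overline{ab}$ is neither horizontal nor vertical. If $ear(\overline{ab})$ is positive-sloped, Observation~\ref{obser:10} says the obstacle path from $a$ to $b$ increases in both coordinates, so the endpoint of $\overline{ef}$ with the smaller $x$-coordinate also has the smaller $y$-coordinate; thus $\overline{ef}$, and therefore $\overline{cd}$, is positive-sloped. If $ear(\overline{ab})$ is negative-sloped, the obstacle path decreases in $x$ while increasing in $y$, so the endpoint of $\overline{ef}$ with the smaller $x$-coordinate is reached later along the path and therefore has the larger $y$-coordinate; thus $\overline{ef}$ and $\overline{cd}$ are negative-sloped. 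Hence $\overline{cd}$ and $ear(\overline{ab})$ are both positive-sloped or both negative-sloped, which in particular proves the lemma.

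I expect the only delicate point to be the first two steps — verifying that $\overline{cd}\cap ear(\overline{ab})$ is a genuine non-degenerate segment with both endpoints on the obstacle path rather than on the interior of the core edge. Once that is established, Observation~\ref{obser:10} does essentially all the work, and the degenerate sub-cases (such as $e$ or $f$ coinciding with a vertex of the core edge) are harmless because $a$ and $b$ themselves lie on the obstacle path.
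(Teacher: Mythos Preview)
Your proof is correct, but it takes a different route from the paper's. The paper argues by contradiction: assuming $\overline{cd}$ is positive-sloped and $ear(\overline{ab})$ is negative-sloped, Observation~\ref{obser:10} makes $a$ and $b$ the rightmost and leftmost points of the ear, so the ear lies in the vertical strip between the lines through $a$ and $b$; a positive-sloped segment entering the interior of this region must then cross the negative-sloped chord $\overline{ab}$ at an interior point, violating condition~(3) of penetration. Your argument instead establishes convexity of the ear, pins down the two boundary points $e,f$ of $\overline{cd}\cap ear(\overline{ab})$ on the obstacle path, and reads off the slope sign of $\overline{ef}$ directly from the coordinate monotonicity in Observation~\ref{obser:10}. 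The paper's proof is shorter for Lemma~\ref{lem:10} alone, but your approach has the advantage that it already isolates the points $e$ and $f$ on the obstacle path and their monotone ordering --- which is exactly the setup the paper then redoes in the proof of Lemma~\ref{lem:20}. So your argument effectively proves Lemmas~\ref{lem:10} and~\ref{lem:20} in one pass.
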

\begin{proof}
We only prove the case when $\overline{cd}$ is positive-sloped since
the other case is similar.

Assume to the contrary that $ear(\overline{ab})$ is negative-sloped.
Without loss of generality (WLOG), we assume $a$ is lower than $b$. By
Observation \ref{obser:10}, the obstacle path of
$ear(\overline{ab})$ from $a$ to $b$ is monotonically decreasing in
the $x$-coordinates. Thus, the rightmost point and leftmost point of
$ear(\overline{ab})$ are $a$ and $b$, respectively. Note that
$ear(\overline{ab})$ is contained in the region between the two
vertical lines passing through $a$ and $b$. Since $\overline{cd}$ is
positive-sloped and $\overline{ab}$ is negative-sloped, if
$\overline{cd}$ intersects an interior point of
$ear(\overline{ab})$, then $\overline{cd}$ must cross
$\overline{ab}$ at an interior point. But since $\overline{cd}$
penetrates $ear(\overline{ab})$, $\overline{cd}$ cannot intersect
any interior point of $\overline{ab}$. Hence, we have a
contradiction. The lemma thus follows.
\end{proof}

Clearly, if $\overline{cd}$ penetrates the ear $ear(\overline{ab})$,
then $\overline{cd}$ intersects the boundary of $ear(\overline{ab})$
at two points and both points lie on the obstacle path of
$ear(\overline{ab})$ (e.g., see Fig.~\ref{fig:travelthrough}).

\begin{lemma}\label{lem:20}
Suppose a line segment $\overline{cd}$ penetrates an ear
$ear(\overline{ab})$. Let $e$ and $f$ be the two points on the
obstacle path of $ear(\overline{ab})$ that $\overline{cd}$
intersects. Then the $L_1$ length of the line segment
$\overline{ef}$ is equal to that of the portion of the obstacle path
of $ear(\overline{ab})$ between $e$ and $f$ (see
Fig.~\ref{fig:travelthrough}).
\end{lemma}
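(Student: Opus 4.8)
The plan is to reduce the statement to the elementary fact that any polygonal path which is monotone simultaneously in the $x$- and the $y$-coordinate has $L_1$ length equal to the $L_1$ distance between its two endpoints, and then observe that both $\overline{ef}$ and the relevant portion of the obstacle path are such objects with the same endpoints.

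First I would record the two structural facts already available. By the remark immediately preceding the lemma, since $\overline{cd}$ penetrates $ear(\overline{ab})$, it meets the boundary of the ear in exactly two points $e$ and $f$, and both of them lie on the obstacle path of $ear(\overline{ab})$; in particular the portion $\pi_{ef}$ of the obstacle path between $e$ and $f$ is well defined. By Observation~\ref{obser:10}, the whole obstacle path of $ear(\overline{ab})$ is monotone in both the $x$- and $y$-coordinates, hence so is its sub-path $\pi_{ef}$. The segment $\overline{ef}$, being a sub-segment of $\overline{cd}$, is of course also monotone in both coordinates.

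Next comes the only computation. Write $\pi_{ef}$ as a concatenation of segments $\overline{p_0p_1},\overline{p_1p_2},\ldots,\overline{p_{k-1}p_k}$ with $p_0=e$ and $p_k=f$, where the intermediate $p_i$ are the obstacle vertices of $ear(\overline{ab})$ lying strictly between $e$ and $f$ along the obstacle path. Then the $L_1$ length of $\pi_{ef}$ equals $\sum_{i=1}^{k}\bigl(|x_{p_i}-x_{p_{i-1}}|+|y_{p_i}-y_{p_{i-1}}|\bigr)=\sum_{i=1}^{k}|x_{p_i}-x_{p_{i-1}}|+\sum_{i=1}^{k}|y_{p_i}-y_{p_{i-1}}|$. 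Because the sequence $x_{p_0},x_{p_1},\ldots,x_{p_k}$ is monotone, the first sum telescopes to $|x_e-x_f|$, and because $y_{p_0},y_{p_1},\ldots,y_{p_k}$ is monotone, the second telescopes to $|y_e-y_f|$. Hence the $L_1$ length of $\pi_{ef}$ is $|x_e-x_f|+|y_e-y_f|$, which is precisely the $L_1$ length of the straight segment $\overline{ef}$, and the lemma follows.

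I do not expect a genuine obstacle here; the only points needing a line of care are (i) justifying that $e$ and $f$ lie on the obstacle path (already noted) and that $\pi_{ef}$ inherits double monotonicity from the full obstacle path (immediate from Observation~\ref{obser:10}), and (ii) the telescoping identity, which I would state as a one-line general fact about doubly monotone polygonal paths. Note that Lemma~\ref{lem:10} is not actually needed for this argument, since the telescoping holds regardless of the sign of the slope of $\overline{cd}$ or of $\overline{ab}$; it is presumably recorded for use elsewhere.
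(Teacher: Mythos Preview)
Your proposal is correct and follows essentially the same approach as the paper: both arguments reduce to the fact that the obstacle path of the ear is doubly monotone (Observation~\ref{obser:10}), hence the sub-path $\pi_{ef}$ and the segment $\overline{ef}$, sharing endpoints, have the same $L_1$ length.

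The one noteworthy difference is that the paper first invokes Lemma~\ref{lem:10} to pin down the sign of the slope (positive-sloped $\overline{cd}$ forces a positive-sloped ear, hence both paths are monotonically \emph{increasing} in $x$ and $y$), and then asserts the equality of $L_1$ lengths without writing out the telescoping. You instead carry out the telescoping explicitly and observe, correctly, that the sign of the monotonicity is irrelevant once both paths share the same endpoints, so Lemma~\ref{lem:10} is not needed here. Your version is therefore marginally cleaner and more self-contained for this lemma in isolation; the paper's route is a bit terser but carries an unnecessary dependency. Either is fine.
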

\begin{proof}
WLOG, suppose $\overline{cd}$ is
positive-sloped and $e$ is lower than $f$. By Lemma \ref{lem:10},
$ear(\overline{ab})$ is also positive-sloped. The segment
$\overline{ef}$ from $e$ to $f$ is monotonically increasing in both
the $x$- and $y$-coordinates. Denote by $\widehat{ef}$ the portion
of the obstacle path of $ear(\overline{ab})$ between $e$ and $f$.
Since $ear(\overline{ab})$ is positive-sloped, by Observation
\ref{obser:10}, the portion $\widehat{ef}$ from $e$ to $f$ is
monotonically increasing in both the $x$- and $y$-coordinates.
Therefore, the $L_1$ lengths of $\overline{ef}$ and $\widehat{ef}$
are equal. The lemma thus follows.
\end{proof}

If $\overline{cd}$ penetrates $ear(\overline{ab})$, then by Lemma
\ref{lem:20}, we can obtain another path from $c$ to $d$ by
replacing $\overline{ef}$ with the portion of the obstacle path of
$ear(\overline{ab})$ between $e$ and $f$ such that the new path has
the same $L_1$ length as $\overline{cd}$ and the new path does not
intersect the interior of $ear(\overline{ab})$.

The results in the following lemma have been proved in
\cite{ref:MitchellAn89,ref:MitchellL192}.

\begin{lemma}{\em \cite{ref:MitchellAn89,ref:MitchellL192}}\label{lem:30}
There exists a shortest $s$-$t$ path in the free space such that if
the path makes a turn at a point $p$, then $p$ is an obstacle
vertex.
\end{lemma}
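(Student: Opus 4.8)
The plan is to start from an arbitrary shortest $s$-$t$ path and repeatedly ``massage'' it using one elementary $L_1$ fact: a polygonal path that is monotone in both the $x$- and $y$-coordinates has $L_1$-length exactly $\|A-B\|_1$, where $A,B$ are its endpoints (the coordinate differences telescope along the edges). In particular, any two such monotone, obstacle-avoiding paths between the same pair of points have equal $L_1$-length, so one may be substituted for the other inside a shortest path without changing its length. Among all shortest $s$-$t$ paths I would fix one, call it $\pi$, that lexicographically minimizes the pair $(\,\text{number of edges},\ \text{number of turns not at obstacle vertices}\,)$, and then argue that the second coordinate is $0$.

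Suppose $\pi$ turns at a point $p$ that is not an obstacle vertex (recall $s$ itself counts as a point obstacle, so $p\neq s$). Let $e_1,e_2$ be the two edges of $\pi$ incident to $p$. First I would establish a local normal form. If $p$ lies in the interior of $\calF$, pick points $a\in e_1$, $b\in e_2$ close enough to $p$ that the segment $\overline{ab}$ lies in a free disk around $p$; since the subpath $a\text{-}p\text{-}b$ is shortest, comparing the $L_1$-lengths of $\overline{ap}\cup\overline{pb}$ and $\overline{ab}$ forces the directions of $e_1$ and $e_2$ to lie in a common closed quadrant (otherwise $\overline{ab}$ is a strictly shorter detour), i.e.\ $\pi$ is locally $xy$-monotone at $p$; the same reasoning applies when $p$ lies on the relative interior of an obstacle edge, using a free half-disk. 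Now I would \emph{slide} $p$: translate it along the direction of $e_1$, which leaves the total length unchanged by the telescoping identity as long as both incident edges remain monotone in the same quadrant. I continue sliding until forced to stop, which happens in one of two ways. Either (a) $p$ runs into an obstacle: if it lands on a vertex we are done (the non-vertex-turn count has dropped, contradicting minimality); if it lands on the relative interior of an obstacle edge we keep sliding $p$ along that edge --- the $L_1$-length, as a function of the slide parameter, is convex and piecewise linear, hence has a whole plateau of minimizers, and we push $p$ to the nearer end of that plateau, which is either an endpoint of the edge (a vertex, done) or a place where one incident edge becomes axis-parallel, from which we iterate. Or (b) the swept region was obstacle-free and $p$ reaches a configuration in which an incident edge becomes collinear with the adjacent edge of $\pi$ (or $p$ reaches $s$ or $t$), so that two turns merge into one --- this strictly decreases the number of edges, again contradicting minimality.

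The step I expect to be the real obstacle is not the geometry of a single slide but \emph{termination}: showing that the iteration in case (a) --- a turn wandering from one obstacle edge to an adjacent one, possibly dragging neighboring turns with it --- is finite and that the chosen lexicographic potential strictly decreases over each maximal round, so that no slide ever ``un-lands'' a turn from a vertex and the bookkeeping of edge counts when collinear edges merge stays consistent. Here the general-position assumption of the paper (no two obstacle vertices share an $x$- or $y$-coordinate) is needed to exclude degenerate stalls, and the convexity of the one-dimensional $L_1$-length functions guarantees that each slide is well defined. Once termination is in hand, the minimizing path $\pi$ can have no turn at a non-vertex, which is exactly the assertion of the lemma.
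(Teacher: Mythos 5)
The paper gives no proof of Lemma~\ref{lem:30}: it is quoted verbatim from Mitchell \cite{ref:MitchellAn89,ref:MitchellL192}, where it is a by-product of the continuous-Dijkstra machinery (an $L_1$ wavefront, being a union of $\pm1$-sloped wavelets, bends only around obstacle vertices, so the shortest-path map it produces has obstacle vertices as its only roots). There is therefore no in-paper argument to compare against, and I am judging your sketch on its own merits.

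Your sliding strategy is a legitimate way to prove this from scratch, and you correctly locate where the work is: but, as written, a single slide does not always strictly decrease your lexicographic potential, so the minimality contradiction does not yet go through. Concretely, after pushing $p$ to the near end of the plateau $E\cap Rec(q_1,q_2)$ (using the paper's notation $Rec(\cdot,\cdot)$ for the axis-parallel rectangle), you can halt at a configuration in which one of the incident edges has become axis-parallel while $p$ is still an interior point of the obstacle edge $E$. At that instant neither coordinate of your potential has changed --- same edge count, same number of non-vertex turns --- and no further slide of $p$ alone within $Rec(q_1,q_2)$ along $E$ keeps the length. To escape one has to change slide direction (now along the axis-parallel edge, leaving $E$ into free space), possibly stall on a second obstacle edge, and so on; a ``round'' is thus a chain of slides moving $p$ monotonically inside $Rec(q_1,q_2)$, and one must separately argue that the chain terminates (monotone motion in a bounded rectangle crossing finitely many obstacle edges) and that it can only end by landing on a vertex or by $p$ reaching $q_2$ (which annihilates the edge $\overline{pq_2}$ and drops the edge count). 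You flag this as the likely obstacle but do not close it, and the general-position assumption does not by itself close it: the stalled $p$ is not an obstacle vertex, so nothing prevents it from sharing a coordinate with $q_2$.

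If you want a short self-contained proof that sidesteps the bookkeeping entirely, consider the ``sleeve'' route. Let $\pi$ be any $L_1$-shortest $s$-$t$ path and let $D$ be the union of the triangles of a Steiner-point-free triangulation of $\calF$ that $\pi$ visits; a shortest path visits each triangle at most once, so $D$ is a simple polygon whose vertices are all obstacle vertices and $\pi\subset D$. By \cite{ref:HershbergerCo94} the Euclidean geodesic $\pi'$ from $s$ to $t$ inside $D$ is also $L_1$-shortest inside $D$, hence $L_1(\pi')\le L_1(\pi)$; since $D\subset\calF$, $\pi'$ is a valid free-space path, so $L_1(\pi')\ge L_1(\pi)$ and therefore $\pi'$ is a global $L_1$-shortest $s$-$t$ path. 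A Euclidean geodesic in a simple polygon turns only at reflex vertices of that polygon, which here are obstacle vertices, giving the lemma in one stroke.
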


We call a shortest path that satisfies the property in Lemma
\ref{lem:30} a {\em vertex-preferred shortest path}. Mitchell's
algorithm \cite{ref:MitchellAn89,ref:MitchellL192} can find a
vertex-preferred shortest $s$-$t$ path. Denote by $\Tri(\calP')$ a
triangulation of the free space and the space inside all obstacles.
Note that the free space can be triangulated in $O(n+h\log h)$ time
\cite{ref:Bar-YehudaTr94,ref:HertelFa85} and the space inside all
obstacles can be triangulated in totally $O(n)$ time
\cite{ref:ChazelleTr91}. Hence, $\Tri(\calP')$ can be computed in
$O(n+h\log h)$ time. The next lemma gives our key observation.

\lemmaspace
\begin{lemma}\label{lem:40}
Given a vertex-preferred shortest $s$-$t$ path that avoids the
polygons in $\c(\calP')$, we can find in $O(n)$ time a shortest
$s$-$t$ path with the same $L_1$ length that avoids the obstacles in
$\calP'$.
\end{lemma}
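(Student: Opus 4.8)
The plan is to convert the given path $\pi$ (a vertex-preferred shortest $s$-$t$ path avoiding all cores) into a path avoiding all obstacles, by repairing it segment by segment wherever it cuts through an obstacle. Since $\pi$ avoids every core $\c(P_i')$ but may penetrate the ears of the obstacles, the obstacle-violating portions of $\pi$ lie entirely inside ears. First I would observe that each maximal edge of $\pi$ is a straight segment; by Lemma~\ref{lem:30} the turning points are obstacle vertices, and the interior of each edge lies in the free space with respect to $\c(\calP')$, so wherever an edge $\overline{cd}$ of $\pi$ meets the interior of an obstacle $P_i'$, it does so by crossing into an ear $ear(\overline{ab})$ of $P_i'$ without touching the interior of the core edge $\overline{ab}$. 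I would then argue that such an edge $\overline{cd}$ in fact \emph{penetrates} $ear(\overline{ab})$ in the precise sense defined before Lemma~\ref{lem:10}: conditions (1) and (3) are immediate, and condition~(2) (neither endpoint interior to the ear) holds because the endpoints of $\overline{cd}$ are obstacle vertices or $s$ or $t$, none of which lies in the interior of an ear.

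Next I would apply the replacement operation justified by Lemmas~\ref{lem:10} and~\ref{lem:20}: for each ear that an edge $\overline{cd}$ penetrates, let $e,f$ be the two intersection points of $\overline{cd}$ with the obstacle path of that ear, and substitute for $\overline{ef}$ the sub-path of the obstacle path between $e$ and $f$. By Lemma~\ref{lem:20} this substitution preserves the $L_1$ length, and the new piece runs along the boundary of the ear, hence avoids the interior of $ear(\overline{ab})$ and therefore of $P_i'$. Doing this for every ear penetrated by every edge of $\pi$ yields a path $\pi'$ from $s$ to $t$ with the same $L_1$ length as $\pi$; I would then check that $\pi'$ avoids all obstacles in $\calP'$. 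The point to verify is that the repairs do not interfere with one another: distinct ears have disjoint interiors (they are separated by the cores), and a single edge $\overline{cd}$ can meet an ear in at most one penetrating sub-segment because $\overline{cd}$ is a segment and the ear is convex-ish relative to $\overline{cd}$'s slope (here Observation~\ref{obser:10} and the monotonicity argument in Lemma~\ref{lem:10} do the work); so the substitutions are on disjoint portions of $\pi$ and can be carried out independently.

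Finally I would address the running time. The repaired path $\pi'$ could a priori have many edges (the obstacle paths of the ears may be long), but for the purpose of this lemma we only need to \emph{produce} a shortest obstacle-avoiding path in $O(n)$ time, and the total number of edges introduced is bounded by the total number of obstacle edges, i.e.\ $O(n)$. To find, for each edge $\overline{cd}$ of $\pi$, which ears it penetrates and the corresponding points $e,f$, I would use the triangulation $\Tri(\calP')$: walk $\overline{cd}$ through the triangulation, and each time it enters a triangle inside some obstacle, trace along the obstacle path of the relevant ear. Charging each triangle and each obstacle edge visited to the overall linear budget (each is visited $O(1)$ times because $\pi$, being a shortest path, does not revisit regions — or, more carefully, because the pieces of $\pi$ inside obstacles are disjoint), the whole reconstruction runs in $O(n)$ time.

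The main obstacle I anticipate is the bookkeeping in the last step — precisely bounding the cost of locating all penetrated ears and all intersection points $e,f$ within $O(n)$ total time, and making sure that an edge of $\pi$ that grazes several obstacles, or that enters and leaves the same obstacle's ears in a complicated way, is handled so that the total work telescopes to linear. The geometric heart, namely that penetration forces matching slopes (Lemma~\ref{lem:10}) and that the straight shortcut and the boundary detour have equal $L_1$ length (Lemma~\ref{lem:20}), is already in hand; the remaining difficulty is organizing the global surgery on $\pi$ and its complexity analysis.
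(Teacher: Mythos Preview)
Your proposal is correct and follows essentially the same route as the paper: show each edge of the vertex-preferred path penetrates any ear it meets, invoke Lemmas~\ref{lem:10} and~\ref{lem:20} to swap the straight sub-segment $\overline{ef}$ for the corresponding obstacle-path detour at no $L_1$ cost, and use the triangulation $\Tri(\calP')$ to carry out all the surgeries in linear time. One small omission: you verify that the repaired path $\pi'$ avoids $\calP'$ and has the same $L_1$ length as $\pi$, but you should also state the one-line reason that $\pi'$ is actually a \emph{shortest} obstacle-avoiding path --- namely, since $\c(P_i')\subseteq P_i'$ for every $i$, any $\calP'$-avoiding path is also $\c(\calP')$-avoiding, so the $\c(\calP')$-shortest length is a lower bound on the $\calP'$-shortest length, which $\pi'$ attains.
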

\lemmaspace
\begin{proof}
Consider a vertex-preferred shortest $s$-$t$ path for $\c(\calP')$,
denoted by $\pi_{\c}(s,t)$. Suppose it makes turns at
$p_1,p_2,\ldots,p_k$, ordered from $s$ to $t$ along the path, and
each $p_i$ is a vertex of a core in $\c(\calP')$. Let $p_0=s$ and
$p_{k+1}=t$. Then for each $i = 0,1,\ldots, k$, the portion of
$\pi_{\c}(s,t)$ from $p_i$ to $p_{i+1}$ is the line segment
$\overline{p_ip_{i+1}}$, which does not intersect the interior of
any core in $\c(\calP')$. Below, we first show that we can find a
path from $p_i$ to $p_{i+1}$ such that it avoids the obstacles in
$\calP'$ and has the same $L_1$ length as $\overline{p_ip_{i+1}}$.

If $\overline{p_ip_{i+1}}$ does not intersect the interior of any
obstacle in $\calP'$, then we are done with $\overline{p_ip_{i+1}}$.
Otherwise, because $\overline{p_ip_{i+1}}$ avoids $\c(\calP')$, it
can intersects only the interior of some ears. Consider any such ear
$ear(\overline{ab})$. Below, we prove that $\overline{p_ip_{i+1}}$
penetrates $ear(\overline{ab})$.

First, we already know that $\overline{p_ip_{i+1}}$ intersects the
interior of $ear(\overline{ab})$. Second, it is obvious that neither
$p_i$ nor $p_{i+1}$ is in the interior of $ear(\overline{ab})$. It
remains to show that $\overline{p_ip_{i+1}}$ cannot intersect the
core edge $\overline{ab}$ of $ear(\overline{ab})$ at the interior of
$\overline{ab}$. Denote by $A'\in\calP'$ the obstacle that contains
$ear(\overline{ab})$. The interior of $\overline{ab}$ is in the
interior of $A'$. Since $\overline{p_ip_{i+1}}$ does not intersect
the interior of $A'$, $\overline{p_ip_{i+1}}$ cannot intersect
$\overline{ab}$ at its interior. Therefore, $\overline{p_ip_{i+1}}$
penetrates $ear(\overline{ab})$.

Recall that we have assumed that no two obstacle vertices lie on the
same horizontal or vertical line. Since both $p_i$ and $p_{i+1}$ are
obstacle vertices, the segment $\overline{p_ip_{i+1}}$ is either
positive-sloped or negative-sloped. WLOG,
assume $\overline{p_ip_{i+1}}$ is positive-sloped. By Lemma
\ref{lem:10}, $ear(\overline{ab})$ is also positive-sloped. Let $e$
and $f$ denote the two intersection points between
$\overline{p_ip_{i+1}}$ and the obstacle path of
$ear(\overline{ab})$, and $\widehat{ef}$ denote the portion of the
obstacle path of $ear(\overline{ab})$ between $e$ and $f$. By Lemma
\ref{lem:20}, we can replace the line segment $\overline{ef}$
($\subseteq \overline{p_ip_{i+1}}$) by $\widehat{ef}$ to obtain a
new path from $p_i$ to $p_{i+1}$ such that the new path has the same
$L_1$ length as $\overline{p_ip_{i+1}}$. Further, as a portion of
the obstacle path of $ear(\overline{ab})$, $\widehat{ef}$ is a
boundary portion of the obstacle $A'$ that contains
$ear(\overline{ab})$, and thus $\widehat{ef}$ does not intersect the
interior of any obstacle in $\calP'$.

By processing each ear whose interior is intersected by
$\overline{p_ip_{i+1}}$ as above, we find a new path from $p_i$ to
$p_{i+1}$ such that the path has the same $L_1$ length as
$\overline{p_ip_{i+1}}$ and the path does not intersect the interior
of any obstacle in $\calP'$.

By processing each segment $\overline{p_ip_{i+1}}$ in
$\pi_{\c}(s,t)$ as above for $i=0,1,\ldots,k$, we obtain another
$s$-$t$ path $\pi(s,t)$ such that the $L_1$ length of $\pi(s,t)$ is
equal to that of $\pi_{\c}(s,t)$ and $\pi(s,t)$ avoids all obstacles
in $\calP'$. Below, we show that $\pi(s,t)$ is a shortest $s$-$t$
path avoiding the obstacles in $\calP'$.

Since each core in $\c(\calP')$ is contained in an obstacle in
$\calP'$, the length of a shortest $s$-$t$ path avoiding
$\c(\calP')$ cannot be longer than that of a shortest $s$-$t$ path
avoiding $\calP'$. Because the length of $\pi(s,t)$ is equal to that
of $\pi_{\c}(s,t)$ and $\pi_{\c}(s,t)$ is a shortest $s$-$t$ path
avoiding $\c(\calP')$, $\pi(s,t)$ is a shortest $s$-$t$ path
avoiding $\calP'$.

Note that the above discussion also provides a way to construct
$\pi(s,t)$, which can be easily done in $O(n)$ time with the help of
the triangulation $\Tri(\calP')$. The lemma thus follows.
\end{proof}

Since each core in $\c(\calP')$ is contained in an obstacle in
$\calP'$, the corollary below follows from Lemma \ref{lem:40}
immediately.

\begin{corollary}
A shortest $s$-$t$ path avoiding the obstacles in $\calP'$ is a
shortest $s$-$t$ path avoiding the cores in $\c(\calP')$.
\end{corollary}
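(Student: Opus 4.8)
The plan is to compare the two optimal path lengths, show they coincide, and then read off the statement. Write $d_{\c}$ for the $L_1$ length of a shortest $s$-$t$ path avoiding the cores in $\c(\calP')$, and $d_{\calP'}$ for the $L_1$ length of a shortest $s$-$t$ path avoiding the obstacles in $\calP'$.

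First I would record the easy inequality. Since each core $\c(P_i')$ is contained in $P_i'$, the interior of $\c(P_i')$ lies in the interior of $P_i'$, so any $s$-$t$ path that avoids all obstacles in $\calP'$ automatically avoids all cores in $\c(\calP')$. In particular a shortest path for $\calP'$ is a feasible path for $\c(\calP')$, whence $d_{\c}\le d_{\calP'}$.

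Next I would establish the reverse inequality using the two preceding lemmas. Apply Lemma~\ref{lem:30} with the obstacle set taken to be $\c(\calP')$: there exists a vertex-preferred shortest $s$-$t$ path avoiding $\c(\calP')$, and its length is $d_{\c}$. Feeding this path into Lemma~\ref{lem:40} produces an $s$-$t$ path of the same $L_1$ length $d_{\c}$ that avoids all obstacles in $\calP'$, so $d_{\calP'}\le d_{\c}$. Combining the two bounds yields $d_{\c}=d_{\calP'}$.

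Finally, let $\pi$ be any shortest $s$-$t$ path avoiding the obstacles in $\calP'$; then $\pi$ has length $d_{\calP'}=d_{\c}$, and by the containment observation $\pi$ also avoids the cores in $\c(\calP')$. Since the length of $\pi$ equals the minimum $d_{\c}$ taken over all $s$-$t$ paths avoiding $\c(\calP')$, $\pi$ is a shortest $s$-$t$ path avoiding $\c(\calP')$, which is exactly the claim. I do not expect any genuine obstacle here; the only points requiring a word of care are that the property of ``avoiding'' is monotone under set containment, and that Lemma~\ref{lem:30} must first be invoked for the core instance so that Lemma~\ref{lem:40} is applicable to an actual vertex-preferred shortest path.
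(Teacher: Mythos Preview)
Your proof is correct and follows essentially the same approach as the paper: the paper states that the corollary follows ``immediately'' from Lemma~\ref{lem:40} together with the containment $\c(P_i')\subseteq P_i'$, which is exactly your two-inequality argument (the equality $d_{\c}=d_{\calP'}$ is already established inside the proof of Lemma~\ref{lem:40}). Your explicit appeal to Lemma~\ref{lem:30} to first secure a vertex-preferred shortest path for the core instance is a nice bit of extra care that the paper leaves implicit.
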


\subsection{Computing a Single Shortest Path}

Based on Lemma \ref{lem:40}, our algorithm for finding a single shortest $s$-$t$ path works as follows: (1)
Apply Mitchell's algorithm \cite{ref:MitchellAn89,ref:MitchellL192}
on $\c(\calP')$ to find a vertex-preferred shortest $s$-$t$ path
avoiding the cores in $\c(\calP')$; (2) by Lemma \ref{lem:40}, find
a shortest $s$-$t$ path that avoids the obstacles in $\calP'$. The
first step takes $O(h\log h)$ time and $O(h)$ space since the cores
in $\c(\calP')$ have totally $O(h)$ vertices. The second step takes
$O(n)$ time and $O(n)$ space.

\begin{theorem}\label{theo:10}
Given a set of $h$ pairwise disjoint convex polygonal obstacles of
totally $n$ vertices in the plane, we can find an $L_1$ shortest
path between two points in the free space in $O(n+h\log h)$ time and
$O(n)$ space.
\end{theorem}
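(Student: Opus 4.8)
The plan is to assemble the ingredients developed above into the two-phase algorithm sketched in the preceding subsection, and then simply add up running times.

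\emph{Phase 1 (preprocessing).} For each convex obstacle $P_i'\in\calP'$ I would scan its boundary to locate the leftmost, topmost, rightmost, and bottommost vertices and connect them in order to form $\c(P_i')$; summed over all obstacles this costs $O(n)$ time, and since each core has at most four edges, $\c(\calP')$ has $O(h)$ vertices in total. In the same phase I would compute the triangulation $\Tri(\calP')$ of the free space together with the obstacle interiors, which by \cite{ref:Bar-YehudaTr94,ref:HertelFa85,ref:ChazelleTr91} takes $O(n+h\log h)$ time and $O(n)$ space. \emph{Phase 2 (shortest path on the cores).} I would run Mitchell's continuous Dijkstra algorithm \cite{ref:MitchellAn89,ref:MitchellL192} with source $s$ on the obstacle set $\c(\calP')$; by Lemma~\ref{lem:30} this returns a vertex-preferred shortest $s$-$t$ path $\pi_{\c}(s,t)$ avoiding the cores, and because $\c(\calP')$ has only $O(h)$ vertices the running time is $O(h\log h)$ and the space is $O(h)$. \emph{Phase 3 (lifting back to the obstacles).} I would feed $\pi_{\c}(s,t)$ together with $\Tri(\calP')$ into the construction of Lemma~\ref{lem:40}, which in $O(n)$ additional time produces a path $\pi(s,t)$ avoiding every obstacle of $\calP'$ and of the same $L_1$ length; by Lemma~\ref{lem:40} (and the corollary following it) $\pi(s,t)$ is in fact a shortest $s$-$t$ path avoiding $\calP'$. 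Adding the three phases gives $O(n+h\log h)$ time and $O(n)$ space, as claimed.

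The genuinely nontrivial content has already been discharged before the theorem statement: the correctness of Phase 3 — that replacing each ear-penetrating subsegment of $\pi_{\c}(s,t)$ by the monotone obstacle subpath between its two crossing points preserves $L_1$ length (Lemmas~\ref{lem:10}--\ref{lem:20}), eliminates the obstacle intersection, and leaves a path no longer than any obstacle-avoiding $s$-$t$ path — is exactly Lemma~\ref{lem:40}. Hence I do not anticipate any real obstacle in proving Theorem~\ref{theo:10} itself; the only things to check are that each phase runs within the stated bound and that the $O(h)$-space footprint of Mitchell's run on the cores, the $O(n)$-space triangulation, and the $O(n)$-space reconstruction of Lemma~\ref{lem:40} together stay within $O(n)$ space. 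If anything deserves a sentence of care, it is the remark (already made in the text) that a path avoiding the cores may pass through obstacle interiors, so Phase 3 cannot be skipped; but once Lemma~\ref{lem:40} is invoked this is a non-issue.
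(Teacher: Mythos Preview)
Your proposal is correct and follows essentially the same approach as the paper: compute the cores, run Mitchell's algorithm on the $O(h)$-vertex core set, and then invoke Lemma~\ref{lem:40} (using the triangulation $\Tri(\calP')$) to lift the resulting vertex-preferred path to an obstacle-avoiding shortest path. The paper's own Section~3.2 describes exactly this two-step pipeline with the same time and space accounting.
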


\subsection{Computing the Shortest Path Map}

In this subsection, we compute the SPM for $\calP'$.
Mitchell's algorithm \cite{ref:MitchellAn89,ref:MitchellL192} can
compute an $O(n)$ size SPM with the standard query performances
in $O(n\log n)$ time and $O(n)$ space.

By applying Mitchell's algorithm
\cite{ref:MitchellAn89,ref:MitchellL192} on the core set
$\c(\calP')$, we can compute an $O(h)$ size SPM in $O(h\log h)$ time and $O(h)$ space, denoted by $\SPM(\c(\calP'),s)$.
With a planar point location data structure
\cite{ref:EdelsbrunnerOp86,ref:KirkpatrickOp83}, for any query point
$t$ in the free space $\calF$, the length of a shortest
$s$-$t$ path avoiding $\c(\calP')$ can be reported in $O(\log h)$
time, which is also the length of a shortest $s$-$t$ path avoiding
$\calP'$ by Lemma \ref{lem:40}. We thus have the following result.

\lemmaspace
\begin{theorem}\label{theo:20}
Given a set of $h$ pairwise disjoint convex polygonal obstacles of
totally $n$ vertices in the plane, in $O(n+h\log h)$ time and $O(n)$ space, we can
construct a shortest path map of size $O(h)$ with respect to a
source point $s$, such that the length of an $L_1$ shortest path
between $s$ and any query point in the free space can be reported in
$O(\log h)$ time.
\end{theorem}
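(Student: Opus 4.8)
The plan is to avoid ever running Mitchell's algorithm on the fine obstacle set $\calP'$ directly, and instead run it on the tiny set $\c(\calP')$ of cores and then argue that the resulting map already answers the length queries for $\calP'$. First I would spend $O(n)$ time scanning the vertices of each convex obstacle $P_i'$ to pick out its leftmost, topmost, rightmost, and bottommost vertices and form $\c(P_i')$; over all $h$ obstacles this yields $\c(\calP')$, a set of $h$ pairwise disjoint convex polygons with at most four vertices each, hence $O(h)$ vertices in total. Then I would apply Mitchell's continuous Dijkstra algorithm \cite{ref:MitchellAn89,ref:MitchellL192} to $\c(\calP')$ with source $s$. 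Since this instance has only $O(h)$ vertices (and no triangulation of $\calP'$ is needed here), it produces a shortest path map $\SPM(\c(\calP'),s)$ of size $O(h)$ in $O(h\log h)$ time and $O(h)$ space, i.e.\ a planar subdivision each of whose cells stores in $O(1)$ space the data needed to evaluate, at any point of the cell, the $L_1$ geodesic distance from $s$ among the cores $\c(\calP')$.

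Next I would preprocess $\SPM(\c(\calP'),s)$, a planar subdivision of complexity $O(h)$, with a standard planar point-location structure \cite{ref:EdelsbrunnerOp86,ref:KirkpatrickOp83}, costing $O(h)$ additional space and supporting $O(\log h)$-time queries. Given a query point $t$ in the free space $\calF$ of $\calP'$, observe that $t$ lies outside every obstacle $P_i'$, hence outside every core $\c(P_i')\subseteq P_i'$; therefore $t$ lies in a well-defined cell of $\SPM(\c(\calP'),s)$, and locating that cell and evaluating its stored distance function returns, in $O(\log h)$ time, the length of a shortest $s$-$t$ path avoiding $\c(\calP')$. Correctness of reporting this value as the answer follows from Lemma~\ref{lem:40} together with the corollary following it: the corollary gives that a shortest $s$-$t$ path avoiding $\calP'$ is also a shortest $s$-$t$ path avoiding $\c(\calP')$, while Lemma~\ref{lem:40} converts a shortest $s$-$t$ path avoiding $\c(\calP')$ into one avoiding $\calP'$ of the same $L_1$ length; hence the two geodesic distances coincide at every point of $\calF$.

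Summing the three phases gives $O(n)+O(h\log h)+O(h)=O(n+h\log h)$ time and $O(n)$ space (the $\Theta(n)$ space being just the input; the map and auxiliary structures are $O(h)$), with $O(\log h)$ length-query time, which is the claimed bound. The only step requiring genuine justification rather than bookkeeping is the equality of the two distance functions on $\calF$, and that is exactly what Lemma~\ref{lem:40} and its corollary supply, so this theorem is essentially immediate from them. I would finally remark that this statement covers \emph{length} queries only; reporting an actual shortest path from $s$ to $t$ still needs the full $O(n)$-size map obtained via Lemma~\ref{lem:40}, which is treated separately.
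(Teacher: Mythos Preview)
Your proposal is correct and follows essentially the same approach as the paper: compute the $O(h)$-vertex core set in $O(n)$ time, run Mitchell's algorithm on it to get an $O(h)$-size SPM in $O(h\log h)$ time, add a planar point-location structure, and invoke Lemma~\ref{lem:40} (and its corollary) to equate the core-avoiding and obstacle-avoiding $L_1$ distances at every free-space point. Your closing remark that this handles only length queries, with actual-path reporting deferred to the $O(n)$-size map, also mirrors the paper's discussion immediately following the theorem.
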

\lemmaspace

The result in Theorem \ref{theo:20} is superior to Mitchell's
algorithm \cite{ref:MitchellAn89,ref:MitchellL192} in three aspects,
i.e., the preprocessing time, the SPM size, and the length query
time.  However, with the SPM for Theorem \ref{theo:20}, an actual
shortest path avoiding $\calP'$ between $s$ and a query point $t$
cannot be reported in additional time proportional to the number of
turns of the path, although we can use this SPM to report an actual
shortest path $\pi_{\c}(s,t)$ between $s$ and $t$ avoiding
$\c(\calP')$ in additional time proportional to the number of turns
of $\pi_{\c}(s,t)$ and then find an actual shortest path avoiding
$\calP'$ between $s$ and $t$ in another $O(n)$ time using
$\pi_{\c}(s,t)$ by Lemma \ref{lem:40}.

To process queries on actual shortest paths avoiding $\calP'$
efficiently, in Lemma \ref{lem:50} below, using
$\SPM(\c(\calP'),s)$, we compute an SPM for $\calP'$, denoted by
$\SPM(\calM)$, of $O(n)$ size, which has the standard query performances, i.e.,  answers a shortest path length query in $O(\log n)$ time and reports an actual path in
additional time proportional to the number of turns of the path.

\lemmaspace
\begin{lemma}\label{lem:50}
Given the shortest path map $\SPM(\c(\calP'),s)$ for the core set
$\c(\calP')$, we can compute a shortest path map $\SPM(\calM)$
for the obstacle set $\calP'$ in $O(n)$ time (with the help of the
triangulation $\Tri(\calP')$).
\end{lemma}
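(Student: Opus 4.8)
The plan is to construct $\SPM(\calM)$ by \emph{refining} the given map $\SPM(\c(\calP'),s)$ and restricting it to $\calM$. The starting observation is that, by the corollary to Lemma~\ref{lem:40}, for every point $t\in\calM$ the $L_1$ distance from $s$ to $t$ avoiding $\calP'$ equals the distance avoiding $\c(\calP')$; hence the distance field over $\calM$ is exactly the one recorded by $\SPM(\c(\calP'),s)$, and every cell root of $\SPM(\c(\calP'),s)$ (which is $s$ or a vertex of a core, hence an obstacle vertex) remains a legitimate root. So the only defect to repair is that, inside a cell $C$ of $\SPM(\c(\calP'),s)$ with root $r$, the segment $\overline{rt}$ used to reach a point $t$ may leave $\calM$, and such a cell must be replaced by smaller cells in which the reconstructed path genuinely stays in $\calM$.

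For the refinement, fix a cell $C$ with root $r$. Since $C$ avoids $\c(\calP')$, the part $C\setminus\calM$ is a union of pieces of ears, and for $t\in C\cap\calM$ the segment $\overline{rt}$ can only \emph{penetrate} ears: this is precisely the argument inside the proof of Lemma~\ref{lem:40}, namely that $\overline{rt}$ avoids the cores and therefore cannot cross a core edge $\overline{ab}$ in its interior, so it penetrates $ear(\overline{ab})$. By Lemmas~\ref{lem:10} and~\ref{lem:20} we reroute $\overline{rt}$ along the obstacle path of each penetrated ear, obtaining a path of the same $L_1$ length that lies in $\calM$; moreover, because each ear is monotone in both coordinates (Observation~\ref{obser:10}), all the detour pieces and the surviving straight pieces share the slope of $\overline{rt}$, so the whole rerouted path is coordinate-monotone. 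Using this, the set of points $t\in C$ whose chord penetrates a fixed ear $E$ has a simple structure (which one establishes by a case analysis on the slope of $E$ and on the side of the line through $E$'s core edge on which $r$ lies), and inside that region the detour hugs the obstacle path of $E$; we subdivide the region further, roughly one slab per edge of that obstacle path, making an appropriate vertex of the obstacle path the root of each slab, with the corresponding monotone two-segment path certifying its distance. Since all ears of $\calP'$ are pairwise disjoint — the ears of one obstacle are the connected components of $P_i'\setminus\c(P_i')$, and distinct obstacles are disjoint — the slabs produced for different ears are formed independently and merged into $C\cap\calM$.

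The remaining work, and the main obstacle, is the complexity bookkeeping. An ear $E$ with $k_E$ obstacle-path edges contributes only $O(k_E)$ new cells, because in the part of $\calM$ "shadowed" by $E$ the shortest-path field is determined by the $O(k_E)$ vertices of $E$'s obstacle path acting as sources, regardless of how many cells of $\SPM(\c(\calP'),s)$ that shadow meets or of how shadows of different ears nest; and $\sum_E k_E=O(n)$ because every ear edge is an obstacle edge used by exactly one ear. Together with the $O(h)$ original cells this gives $|\SPM(\calM)|=O(n)$, but turning the informal "shadow" claim into a rigorous argument — controlling the interaction of an ear's detour region with distinct cell roots and with other ears' detour regions, via disjointness, monotonicity, and the slope restriction of Lemma~\ref{lem:10} — is where the real care is needed. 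One also checks that each final cell carries a monotone free-space path of the claimed length from its root to every point of the cell, so that $\SPM(\calM)$ is a bona fide $L_1$ shortest path map. Finally, all the computation — locating each ear among the cells of $\SPM(\c(\calP'),s)$, building the detour slabs, and assembling the refined subdivision — is local and is carried out by walking the triangulation $\Tri(\calP')$ in $O(n)$ time, after which a standard planar point-location structure on the resulting $O(n)$-size subdivision supports $O(\log n)$ length queries, with a path reported in time proportional to its number of turns.
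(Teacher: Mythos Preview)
Your high-level setup is right---the distance field on $\calM$ agrees with the core SPM, and the only issue is refining each cell so that points are visible to their root within $\calM$---but the route you take from there is different from the paper's and leaves a real gap.

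The paper's argument is considerably simpler than your slab construction. For each cell $C_{core}(r)$ of $\SPM(\c(\calP'),s)$, it sets $C(r)=C_{core}(r)\setminus(\text{ear interiors})$, proves that $C(r)$ is a connected (possibly degenerate) simple polygon, and then observes that decomposing $C(r)$ into SPM cells is exactly the problem of building an SPM inside the simple polygon $C(r)$ with source $r$. Since the Euclidean SPM in a simple polygon is also an $L_1$ SPM, this is handled by a known linear-time algorithm (e.g., Guibas et al.). The total complexity bound is then a one-line planarity argument: the boundaries of all the $C(r)$'s together have $O(n)$ edges. There is no per-ear case analysis, no shadow regions, no interaction bookkeeping.

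Your approach, by contrast, builds the refinement explicitly ear by ear, and you correctly flag that ``turning the informal shadow claim into a rigorous argument \ldots\ is where the real care is needed.'' That care is not supplied. The assertion that ear $E$ contributes only $O(k_E)$ new cells ``regardless of how many cells of $\SPM(\c(\calP'),s)$ that shadow meets or of how shadows of different ears nest'' is exactly the crux, and it is not obvious: if the shadow of $E$ is cut by several core-SPM cell boundaries, each piece may need its own slab structure, and when the rerouted path through one ear enters the shadow of another ear you must argue the resulting subdivision is still linear. These interactions can be controlled, but the clean way to do so is precisely the paper's reduction to SPM-in-a-simple-polygon, which absorbs all of this into an existing black box.
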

\lemmaspace

\begin{proof}
Note that the polygons in $\calP'$ are pairwise disjoint in their
interior. For simplicity of discussion in this proof, we assume that
any two different polygons in $\calP'$ have disjoint interior as
well as disjoint boundaries.

Consider a cell $C_{core}(r)$ with the root $r$ in
$\SPM(\c(\calP'),s)$. Recall that $r$ is always a vertex of a core
in $\c(\calP')$ and all points in $C_{core}(r)$ are visible to $r$
with respect to $\c(\calP')$
\cite{ref:MitchellAn89,ref:MitchellL192}. In other words, for any
point $p$ in the cell $C_{core}(r)$, the line segment
$\overline{rp}$ is contained in $C_{core}(r)$, and further, there
exists a shortest $s$-$p$ path avoiding $\c(\calP')$ that contains
$\overline{rp}$.

Denote by $\calF(\calP')$ (resp., $\calF(\c(\calP'))$) the free
space with respect to $\calP'$ (resp., $\c(\calP')$). Note that the
cell $C_{core}(r)$ is a simple polygon in $\calF(\c(\calP'))$. We
assume that $C_{core}(r)$ contains some points in $\calF(\calP')$
since otherwise we do not need to consider $C_{core}(r)$.

The cell $C_{core}(r)$ may intersect some ears. In other words,
certain space in $C_{core}(r)$ may be occupied by some ears. Let
$C(r)$ be the subregion of $C_{core}(r)$ by removing from
$C_{core}(r)$ the space occupied by all ears except their obstacle
paths. Thus $C(r)$ lies in $\calF(\calP')$. However, for each point
$p\in C(r)$, $p$ may not be visible to $r$ with respect to $\calP'$.
Our task here is to further decompose $C(r)$ into a set of {\em SPM
regions} such that each such region has a root visible to all points
in the region with respect to $\calP'$; further, we need to make
sure that each point $q$ in an SPM region has a shortest path in
$\calF(\calP')$ from $s$ that contains the line segment connecting
$q$ and the root of the region. For this, we first show that $C(r)$
is a connected region.

To show that $C(r)$ is connected, it suffices to show that for
any point $p\in C(r)$, there is a path in $C(r)$ that connects $r$
and $p$. Consider an arbitrary point $p\in C(r)$. Since $p\in
C_{core}(r)$, $\overline{rp}$ is in $C_{core}(r)$ and there is a
shortest path in $\calF(\c(\calP'))$ from $s$ to $p$ that contains
$\overline{rp}$. If the segment $\overline{rp}$ does not intersect
the interior of any ear, then we are done since $\overline{rp}$ is
in $C(r)$. If $\overline{rp}$ intersects the interior of some ears,
then let $ear(\overline{ab})$ be one of such ears. By the proof of
Lemma \ref{lem:40}, $\overline{rp}$ penetrates $ear(\overline{ab})$.
Let $e$ and $f$ be the two points on the obstacle path of
$ear(\overline{ab})$ that $\overline{rp}$ intersects, and
$\widehat{ef}$ be the portion of the obstacle path between $e$ and
$f$. Note that if $\overline{rp}$ is horizontal or vertical, then it
cannot penetrate $ear(\overline{ab})$ due to the monotonicity of its
obstacle path by Observation \ref{obser:10}. WLOG,
assume $\overline{rp}$ is positive-sloped. Then by Lemma
\ref{lem:20}, $ear(\overline{ab})$ is also positive-sloped. Recall
that $e$ and $f$ lie on $\overline{rp}$. WLOG,
assume $r$ is higher than $p$ and $f$ is higher than $e$. Then the
segment $\overline{ef}$ from $e$ to $f$ is monotonically increasing
in both the $x$- and $y$-coordinates. By Observation \ref{obser:10},
the obstacle path portion $\widehat{ef}$ from $e$ to $f$ is also
monotonically increasing in both the $x$- and $y$-coordinates. As in
the proof of Lemma \ref{lem:40}, for any point $q\in \widehat{ef}$,
there is a shortest path in  $\calF(\c(\calP'))$ from $s$ to $q$
that contains $\overline{rf}$ and the portion of $\widehat{ef}$
between $f$ and $q$. Since $\overline{ef}$ is on $\overline{rp}$
contained in the cell $C_{\c}(r)$, by the properties of the shortest
path map $\SPM(\calM)$ \cite{ref:MitchellAn89,ref:MitchellL192},
$\widehat{ef}$ is also contained in the cell $C_{\c}(r)$. Thus,
$\widehat{ef}$ is also contained in $C(r)$. If we process each ear
whose interior intersects $\overline{rp}$ as above, we find a path
in $C(r)$ that connects $r$ and $p$; further, this path has the same
$L_1$ length as $\overline{rp}$. Hence, $C(r)$ is a connected
region.

Next, we claim that for any point $p\in C(r)$, there is a shortest
path in $\calF(\calP')$ from $s$ to $p$ that contains $r$. Indeed,
since $p\in C_{core}(r)$, there is a shortest path in
$\calF(\c(\calP'))$ from $s$ to $p$ that contains $\overline{rp}$;
let $\pi_{core}(s,r)$ be the portion of this path between $s$ and
$r$. On one hand, we have shown above that there is a path from $r$
to $p$ in $C(r)$ with the same $L_1$ length as $\overline{rp}$. On
the other hand, by Lemma \ref{lem:40}, there exists a path in
$\calF(\calP')$ from $s$ to $r$ with the same length as
$\pi_{core}(s,r)$. Hence, a concatenation of these two paths results
in a shortest path from $s$ to $p$ in $\calF(\calP')$ that contains
$r$. Our claim thus follows.

The above claim and its proof also imply that decomposing $C(r)$
into a set of SPM regions is equivalent to computing an SPM in
$C(r)$ with the vertex $r$ as the source point, which we denote by
$\SPM(C(r))$. Since $C(r)$ is a connected region and $C_{core}(r)$
is a simple polygon, we claim that $C(r)$ is a (possibly degenerate)
simple polygon.  This is because for any ear $E$ that intersects
$C_{core}(r)$, the portion $E\cap C_{core}(r)$ lies on the boundary
of the simple polygon $C_{core}(r)$; thus, removing $E$ except its
obstacle path from $C_{core}(r)$ (to form $C(r)$) changes only the
boundary shape of $C_{core}(r)$ but does not change the nature of a
simple polygonal region (from $C_{core}(r)$ to $C(r)$). Based on the
fact that $C(r)$ is a (possibly degenerate) simple polygon,
$\SPM(C(r))$ can be easily computed in linear time in terms of the
number of edges of $C(r)$. For example, since the Euclidean shortest
path between any two points in a simple polygon is also an $L_1$
shortest path between the two points \cite{ref:HershbergerCo94}, an
SPM in a simple polygon with respect to the Euclidean distance is
also one with respect to the $L_1$ distance. Therefore, we can use a
corresponding shortest path algorithm for the Euclidean case (e.g.,
\cite{ref:GuibasLi87}) to compute each $\SPM(C(r))$ in our problem.

Note that our discussion above also implies that given
$\SPM(\c(\calP'),s)$, for each cell $C_{\c}(r)$ with a root $r$, we
can compute the corresponding $\SPM(C(r))$ separately. Clearly, the
$\SPM(C(r))$'s corresponding to all cells in $\SPM(\c(\calP'),s)$
constitute a shortest path map $\SPM(\calM)$ for $\calP'$.

Due to the planarity of the cell regions involved, the total number
of edges of all $C(r)$'s is $O(n)$.  Given a triangulation
$\Tri(\calP')$, all regions $C(r)$ can be obtained in totally $O(n)$
time. Computing all $\SPM(C(r))$'s also takes totally $O(n)$ time.
Thus, $\SPM(\calM)$ can be constructed in $O(n)$ time. The lemma
thus follows.
\end{proof}

Theorem \ref{theo:20} and Lemma \ref{lem:50} together lead to the
following result.

\lemmaspace
\begin{theorem}\label{theo:30}
Given a set of $h$ pairwise disjoint convex polygonal obstacles of
totally $n$ vertices in the plane, in $O(n+h\log h)$ time and $O(n)$ space, we can
construct a shortest path map of size $O(n)$ with respect to a
source point $s$, such that given any query point $t$ in the free
space, the length of an $L_1$ shortest $s$-$t$ path can be reported
in $O(\log h)$ time and an actual path can be found in $O(\log n+k)$
time where $k$ is the number of turns of the path.
\end{theorem}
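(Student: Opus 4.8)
The plan is to combine the three ingredients already in hand. First I would apply Theorem~\ref{theo:20}: running Mitchell's algorithm on the core set $\c(\calP')$ produces $\SPM(\c(\calP'),s)$, an SPM of size $O(h)$, in $O(h\log h)$ time and $O(h)$ space, and the triangulation $\Tri(\calP')$ is computed in $O(n+h\log h)$ time by the facts cited just before that theorem. Next I would apply Lemma~\ref{lem:50} to $\SPM(\c(\calP'),s)$ and $\Tri(\calP')$, obtaining $\SPM(\calM)$, a shortest path map for $\calP'$ of size $O(n)$, in an additional $O(n)$ time. Finally I would build a planar point-location structure \cite{ref:EdelsbrunnerOp86,ref:KirkpatrickOp83} on each of the two maps: on $\SPM(\c(\calP'),s)$ this costs $O(h)$ time and space, and on $\SPM(\calM)$ it costs $O(n)$. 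Summing up, the preprocessing runs in $O(n+h\log h)$ time and uses $O(n)$ space, and the whole data structure has size $O(n)$.

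For a length query at a point $t$ in the free space of $\calP'$, I would locate $t$ in $\SPM(\c(\calP'),s)$ in $O(\log h)$ time; this succeeds because $\calF(\calP')\subseteq\calF(\c(\calP'))$ (each core lies inside its obstacle), so $t$ falls in some cell $C_{core}(r)$, and I report $d(s,r)$ plus the $L_1$ length of $\overline{rt}$, where $d(s,r)$ is the label stored at the root $r$. By Lemma~\ref{lem:40} and its corollary this number is exactly the $L_1$ shortest path length from $s$ to $t$ avoiding $\calP'$. For an actual-path query I would instead locate $t$ in $\SPM(\calM)$ in $O(\log n)$ time, obtaining the region containing $t$ together with its root $\rho$, which (by the construction in Lemma~\ref{lem:50}) is a vertex of $\calP'$ visible to $t$ with respect to $\calP'$ and lies on a shortest $s$-$t$ path in $\calF(\calP')$; I then output the segment $\overline{\rho t}$ and walk from $\rho$ back to $s$ along a precomputed shortest path tree $\Pi$ for $\calP'$, emitting one edge per step. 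To make the walk cost $O(k)$, during preprocessing I would assemble $\Pi$ once: concatenate the shortest path trees of the simple-polygon maps $\SPM(C(r))$ (each rooted at its core vertex $r$) with the tree of vertex-preferred $s$-to-$r$ paths in $\calF(\calP')$ obtained from $\SPM(\c(\calP'),s)$ via the ear-replacement procedure in the proof of Lemma~\ref{lem:40}. By planarity $\Pi$ has $O(n)$ nodes and, using $\Tri(\calP')$, is built in $O(n)$ time, so a path with $k$ turns is extracted in $O(\log n+k)$ time.

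The step that needs the most care is the last one: checking that the path one actually outputs --- which may have more turns than the corresponding path avoiding $\c(\calP')$, since a straight link that penetrates an ear gets rerouted along a polygonal obstacle path --- is still enumerated in time linear in its own number of turns. This works out because every such extra turn sits at an obstacle vertex of $\calP'$ and is therefore already counted in $k$; the only requirement is that all these turns be materialized inside $\Pi$ during preprocessing rather than recomputed per query, and that is within budget because the total number of ear-replacement turns over the entire shortest path tree is $O(n)$ by planarity, each located in $O(1)$ amortized time while scanning $\Tri(\calP')$. The remaining points --- that the two query routines are correct and that the regions of $\SPM(\calM)$ tile $\calF(\calP')$ so that $t$ is always located --- follow directly from Lemmas~\ref{lem:40} and \ref{lem:50}, modulo the harmless non-degeneracy assumptions used in the proof of Lemma~\ref{lem:50}.
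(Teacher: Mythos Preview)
Your proposal is correct and follows exactly the paper's approach: the paper's proof is the single sentence ``Theorem~\ref{theo:20} and Lemma~\ref{lem:50} together lead to the following result,'' and you spell out precisely how those two pieces combine (the $O(h)$-size core map for length queries, the $O(n)$-size map $\SPM(\calM)$ for actual-path queries, with point-location structures on both).

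One minor simplification for the part you flagged as needing care: you do not need to build a separate ear-replaced tree $\Pi$. In the construction of Lemma~\ref{lem:50}, each core vertex $r$ lies inside the region $C(r')$ of its predecessor $r'$ in $\SPM(\c(\calP'),s)$, so the local simple-polygon map $\SPM(C(r'))$ already provides an obstacle-avoiding path from $r$ to $r'$ whose turns are recorded as parent pointers. Chaining these local maps gives the global shortest path tree for $\calP'$ directly, with no extra ear-replacement pass needed; the $O(k)$ walk then follows from the standard parent-pointer traversal.
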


\section{Shortest Paths among General Polygonal Obstacles}
\label{sec:general}

In this section, we consider the general case, i.e., the obstacles in
$\calP$ are not necessarily convex. In the following, in Section
\ref{subsec:pre}, we review the corridor structure
\cite{ref:KapoorAn97}, and introduce the {\em
ocean} $\calM$. In Section
\ref{subsec:single}, we present the algorithm for computing a single
shortest path and the similar idea also computes an SPM for $\calM$,
i.e., $\SPM(\calM)$. In Section \ref{sec:spm}, we outline our
algorithm for computing an SPM in the entire free space $\calF$.

\sectionspace
\subsection{Preliminaries}
\label{subsec:pre}

For simplicity of discussion, we assume that all obstacles are
contained in a large rectangle $\calR$ (see
Fig.~\ref{fig:triangulation}). Let $\calF$ be the free space
inside $\calR$. Let $t$ be an arbitrary point in $\calF$.

\begin{figure}[t]
\begin{minipage}[t]{0.53\linewidth}
\begin{center}
\includegraphics[totalheight=1.4in]{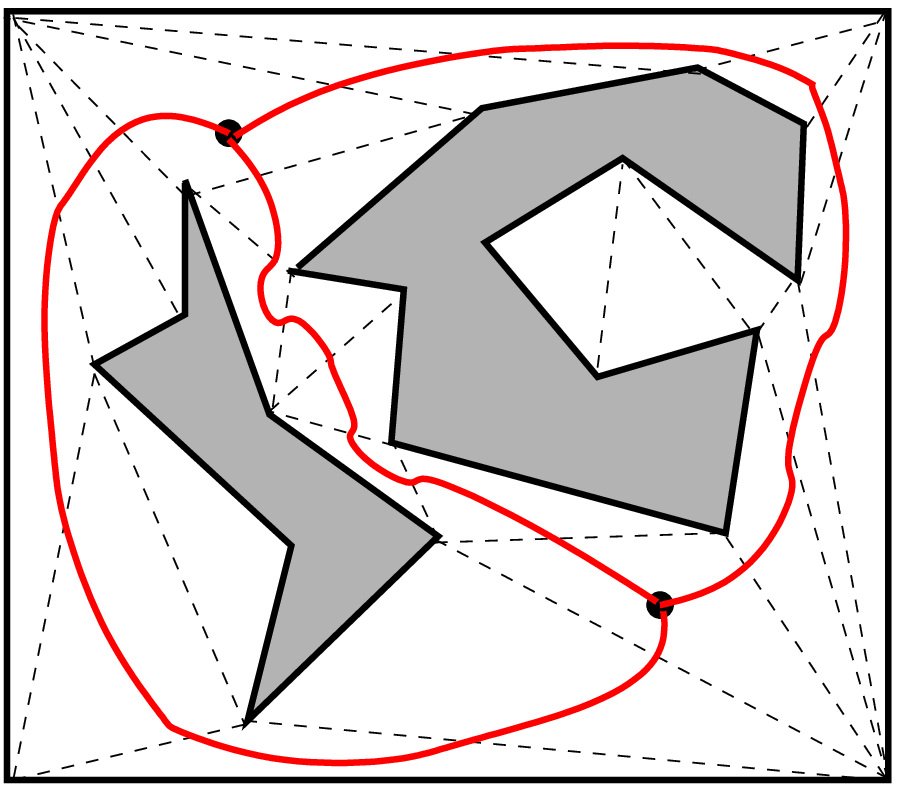}
\caption{\footnotesize Illustrating a triangulation of the free
space among two obstacles and the corridors (with red solid curves).
There are two junction triangles indicated by the large dots inside
them, connected by three solid (red) curves. Removing the two
junction triangles results in three corridors.}
\label{fig:triangulation}
\end{center}
\end{minipage}
\hspace*{0.02in}
\begin{minipage}[t]{0.45\linewidth}
\begin{center}
\includegraphics[totalheight=1.4in]{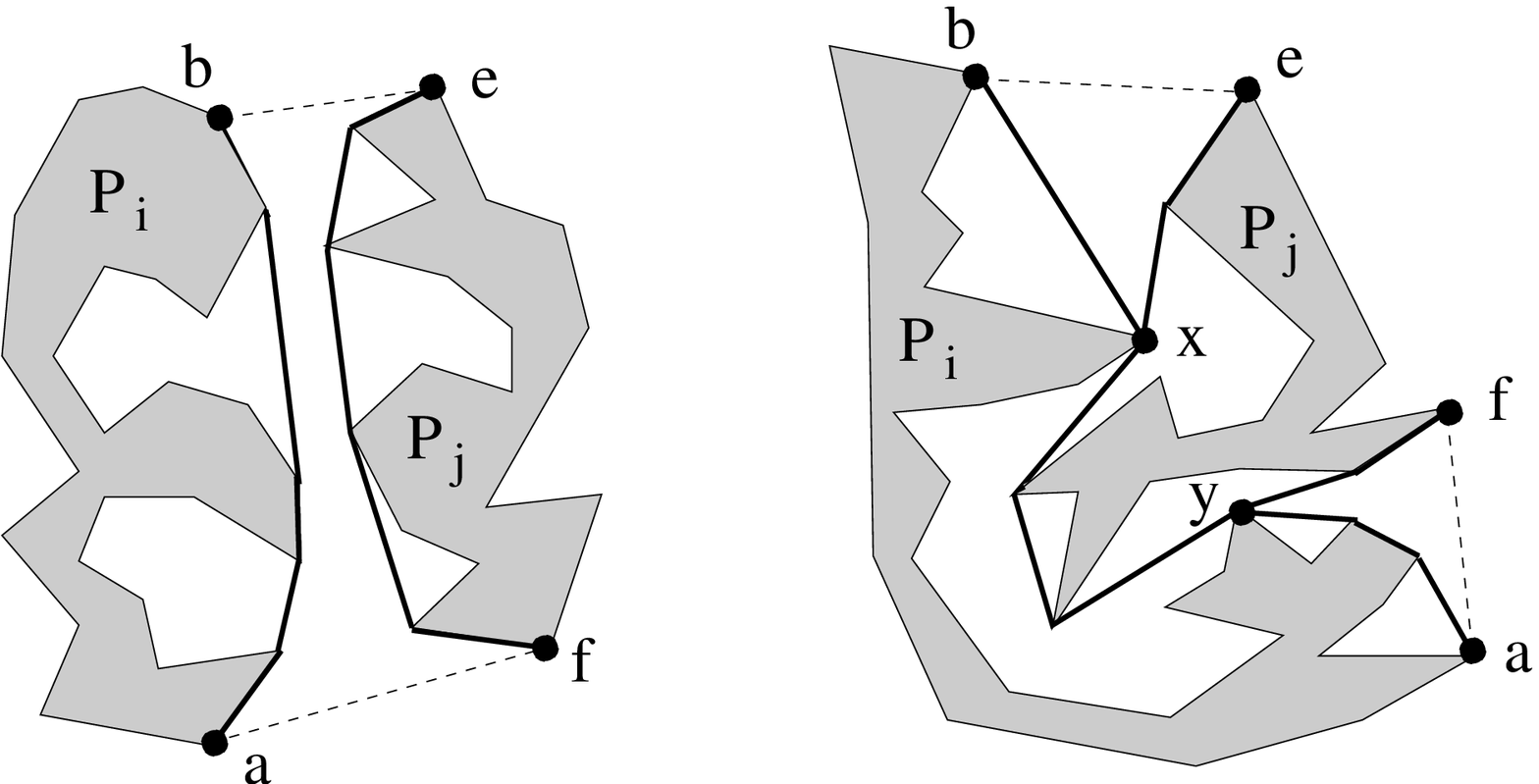}
\caption{\footnotesize Illustrating an open hourglass (left) and a
closed hourglass (right) with a corridor path linking the apices $x$
and $y$ of the two funnels. The dashed segments are diagonals. The
paths $\pi(a,b)$ and $\pi(e,f)$ are shown with thick solid curves.}
\label{fig:corridor}
\end{center}
\end{minipage}
\end{figure}

We first review the corridor structure \cite{ref:KapoorAn97}.
Denote by $\Tri(\calF)$ a triangulation of $\calF$.
Let $G(\calF)$ denote the (planar) dual graph of $\Tri(\calF)$,
i.e., each node of $G(\calF)$ corresponds to a triangle in
$\Tri(\calF)$ and each edge connects two nodes of $G(\calF)$
corresponding to two triangles sharing a diagonal of $\Tri(\calF)$.
The degree of each node in $G(\calF)$ is at most three.  As in
\cite{ref:KapoorAn97}, at least one node dual to a triangle incident
to each of $s$ and $t$ is of degree three. Based on $G(\calF)$, we
compute a planar 3-regular graph, denoted by $G^3$ (the degree of
each node in $G^3$ is three), possibly with loops and multi-edges,
as follows. First, we remove every degree-one node from $G(\calF)$
along with its incident edge; repeat this process until no
degree-one node exists. Second, remove every degree-two node from
$G(\calF)$ and replace its two incident edges by a single edge; repeat this process until no degree-two node exists. The
resulting graph is $G^3$ (e.g., see Fig.~\ref{fig:triangulation}).
The resulting graph $G^3$ has $O(h)$ faces,
$O(h)$ nodes, and $O(h)$ edges \cite{ref:KapoorAn97}. Each node of
$G^3$ corresponds to a triangle in $\Tri(\calF)$, which is called a
{\em junction triangle} (e.g., see Fig.~\ref{fig:triangulation}).
The removal of all junction triangles from $G^3$ results in $O(h)$
{\em corridors}, each of which corresponds to one edge of $G^3$.

The boundary of a corridor $C$ consists of four parts (see
Fig.~\ref{fig:corridor}): (1) A boundary portion of an obstacle
$P_i\in \calP$, from a point $a$ to a point $b$; (2) a diagonal of a
junction triangle from $b$ to a boundary point $e$ on an obstacle
$P_j\in \calP$ ($P_i=P_j$ is possible); (3) a boundary portion of
the obstacle $P_j$ from $e$ to a point $f$; (4) a diagonal of a
junction triangle from $f$ to $a$. The two diagonals $\overline{be}$
and $\overline{af}$ are called the {\em doors} of $C$. The corridor
$C$ is a simple polygon. Let
$\pi(a,b)$ (resp., $\pi(e,f)$) denote the shortest path from $a$ to $b$
(resp., $e$ to $f$) inside $C$. The region $H_C$ bounded by
$\pi(a,b), \pi(e,f)$, and the two diagonals $\overline{be}$ and
$\overline{fa}$ is called an {\em hourglass}, which is {\em open} if
$\pi(a,b)\cap \pi(e,f)=\emptyset$ and {\em closed} otherwise (see
Fig.~\ref{fig:corridor}). If $H_C$ is open, then both $\pi(a,b)$ and
$\pi(e,f)$ are convex chains and are called the {\em sides} of
$H_C$; otherwise, $H_C$ consists of two ``funnels" and a path
$\pi_C=\pi(a,b)\cap \pi(e,f)$ joining the two apices of the two
funnels, called the {\em corridor path} of $C$. The two funnel
apices connected by the corridor path are called the {\em corridor
path terminals}. Each funnel side is also convex. We compute the
hourglass for each corridor. After the triangulation, computing the
hourglasses for all corridors takes totally $O(n)$ time.

Let $Q$ be the union of all junction triangles and hourglasses. Then
$Q$ consists of $O(h)$ junction triangles, open hourglasses,
funnels, and corridor paths. As shown in \cite{ref:InkuluPl09},
there exists a shortest $s$-$t$ path $\pi(s,t)$ avoiding the
obstacles in $\calP$ which is contained in $Q$. Consider a corridor
$C$. If $\pi(s,t)$ contains an interior point of $C$, then the path
$\pi(s,t)$ must intersect both doors of $C$; further, if the
hourglass $H_C$ of $C$ is closed, then we claim that we can make the
corridor path of $C$ entirely contained in $\pi(s,t)$. Suppose
$\pi(s,t)$ intersects the two doors of $C$, say, at two points $p$
and $q$ respectively. Then since $C$ is a simple polygon, a
Euclidean shortest path between $p$ and $q$ inside $C$, denoted by
$\pi_E(p,q)$, is also an $L_1$ shortest path in $C$
\cite{ref:HershbergerCo94}. Note that $\pi_E(p,q)$ must contain the
corridor path of $C$. If we replace the portion of $\pi(s,t)$
between $p$ and $q$ by $\pi_E(p,q)$, then we obtain a new $L_1$
shortest $s$-$t$ path that contains the corridor path $\pi_C$. For
simplicity, we still use $\pi(s,t)$ to denote the new path. In other
words, $\pi(s,t)$ has the property that if $\pi(s,t)$ intersects
both doors of $C$ and the hourglass $H_C$ is closed, then the
corridor path of $C$ is contained in $\pi(s,t)$.

Let $\calM$ be $Q$ minus the corridor paths.  We call $\calM$ the {\em ocean}.
Clearly, $\calM\subseteq\calF$. The boundary of
$\calM$ consists of $O(h)$ reflex vertices and $O(h)$ convex chains,
implying that the complementary region $\calR\setminus \calM$ consists
of a set of polygons of totally $O(h)$ reflex vertices and $O(h)$
convex chains.  As shown in \cite{ref:KapoorAn97}, the region
$\calR\setminus \calM$ can be partitioned into a set $\calP'$ of $O(h)$
convex polygons of totally $O(n)$ vertices (e.g., by extending an
angle-bisecting segment inward from each reflex vertex). The ocean
$\calM$ is exactly the free space with respect to the convex polygons
in $\calP'$. In
addition, for each corridor path, no portion of it lies in $\calM$.
Further, the shortest path
$\pi(s,t)$ is a shortest $s$-$t$ path avoiding all convex polygons
in $\calP'$ and possibly utilizing some corridor paths. The set
$\calP'$ can be easily obtained in $O(n+h\log h)$ time. Therefore, as
in \cite{ref:KapoorAn97},
other than the corridor paths, we reduce our original \spp\ problem
to the convex case.

\sectionspace
\subsection{Finding a Single Shortest Path and Computing an SPM for 
$\calM$}
\label{subsec:single}

With the convex polygon set $\calP'$, to find a shortest $s$-$t$
path in $\calF$, if there is no corridor path, then we can simply apply our
algorithm for the convex case
in Section \ref{sec:convexcase}. Otherwise, the situation
is more complicated because the corridor paths can give possible
``shortcuts" for the sought $s$-$t$ path, and we must take these
possible ``shortcuts" into consideration while running the
continuous Dijkstra paradigm \cite{ref:MitchellAn89,ref:MitchellL192}.
The details are given below.

First, we compute the core set $\c(\calP')$ of $\calP'$. However,
the way we construct $\c(\calP')$ here is slightly different from
Section \ref{sec:convexcase}. For each convex polygon
$A'\in\calP'$, in addition to its leftmost, topmost, rightmost, and
bottommost vertices, if a vertex $v$ of $A'$ is a corridor path
terminal, then $v$ is also kept as a vertex of the core $\c(A')$. In
other words, $\c(A')$ is a simple (convex) polygon whose vertex set
consists of the leftmost, topmost, rightmost, and bottommost
vertices of $A'$ and all corridor path terminals on $A'$.
Since there are $O(h)$ terminal vertices, the cores in $\c(\calP')$
still have totally $O(h)$ vertices and edges. Further, the core set
thus defined still has the properties discussed in Section
\ref{sec:convexcase} for computing shortest $L_1$ paths, e.g.,
Observation \ref{obser:10} and Lemmas \ref{lem:10}, \ref{lem:20},
and \ref{lem:40}. Hence, by using our scheme in Section
\ref{sec:convexcase}, we can first find a shortest $s$-$t$ path
avoiding the cores in $\c(\calP')$ in $O(h\log h)$ time by applying
Mitchell's algorithm \cite{ref:MitchellAn89,ref:MitchellL192}, and
then obtain a shortest $s$-$t$ path avoiding $\calP'$ in $O(n)$ time
by Lemma \ref{lem:40}. But, the path thus computed may not be a true
shortest path in $\calF$ since the corridor paths are not utilized.
To find a true shortest path in $\calF$, we need to modify the
continuous Dijkstra paradigm when applying it to
$\c(\calP')$, as follows.


In Mitchell's algorithm \cite{ref:MitchellAn89,ref:MitchellL192},
when an obstacle vertex $v$ is hit by the wavefront for the first
time, it will be ``permanently labeled" with a value $d(v)$, which
is the length of a shortest path from $s$ to $v$ in the free space.
The wavefront consists of many ``wavelets" (each wavelet is a line
segment of slope $1$ or $-1$). The algorithm maintains a priority
queue (called ``event queue"), and each element in the queue is a
wavelet associated with an ``event point" and an ``event distance",
which means that the wavelet will hit the event point at the event
distance. The algorithm repeatedly takes (and removes) an element
from the event queue with the smallest event distance, and processes
the event. After an event is processed, some new events may be added
to the event queue. The algorithm stops when the point $t$ is hit by
the wavefront for the first time.

To handle the corridor paths in our problem, consider a corridor
path $\pi_C$ with $x$ and $y$ as its terminals and let $l$ be the
length of $\pi_C$. Recall that $x$ and $y$ are vertices of a core in
$\c(\calP')$. Consider the moment when the vertex $x$ is permanently
labeled with the distance $d(x)$. Suppose the wavefront that first
hits $x$ is from the funnel whose apex is $x$. Then according to our
discussions above, the only way that the wavelet of the wavefront at $x$ can affect
a shortest $s$-$t$ path is through the corridor path $\pi_C$. If $y$
is not yet permanently labeled, then $y$ has not been hit by the
wavefront. We initiate a ``pseudo-wavelet" that originates from $x$
with the event point $y$ and event distance $d(x)+l$, meaning that
$y$ will be hit by this pseudo-wavelet at the distance $d(x)+l$. We
add the pseudo-wavelet to the event queue.
If $y$ has been permanently labeled, then the wavefront has already
hit $y$ and is currently moving along the corridor path $\pi_C$ from
$y$ to $x$. Thus, the wavelet through $x$ will meet the wavelet
through $y$ somewhere on the path $\pi_C$, and these two wavelets
will ``die" there and never affect the free space outside the
corridor. Thus, if $y$ has been permanently labeled, then we do not
need to do anything on $y$. In addition, at the moment when the
vertex $x$ is permanently labeled, if the wavefront that first hits $x$ is
from the corridor path $\pi_C$ (i.e., through $y$), then the
wavelet at $x$ will keep going to the funnel of $x$ through $x$;
therefore, we process this event on $x$ as usual (i.e., as in
\cite{ref:MitchellAn89,ref:MitchellL192}), by initiating new wavelets
that originate from $x$.

For a corridor path $\pi_C$ with two terminals $x$ and $y$,
when $x$ is permanently labeled, if the wavefront that first hits $x$ is
not from the corridor path $\pi_C$, then we call $x$ a {\em wavefront incoming}
terminal; otherwise, $x$ is a {\em wavefront outgoing} terminal.
According to our discussion above, at least one of $x$ and $y$ must be a wavefront
incoming terminal. In fact, both $x$ and $y$ can be wavefront
incoming terminals, in which case the wavefronts passing through
$x$ and $y$ ``die" inside the corridor.

Intuitively, the above
treatment of corridor path terminals makes corridor paths act as
possible ``shortcuts" when we propagate the wavefront. The rest of
the algorithm proceeds in the same way as in
\cite{ref:MitchellAn89,ref:MitchellL192} (e.g., processing the
segment dragging queries). The algorithm stops when the wavefront
first hits the point $t$, at which moment a shortest $s$-$t$ path in
$\calF$ has been found.

Since there are $O(h)$ corridor paths, with the above modifications
to Mitchell's algorithm as applied to $\c(\calP')$, its running time
is still $O(h\log h)$. Indeed, comparing with the original
continuous Dijkstra scheme \cite{ref:MitchellAn89,ref:MitchellL192}
(as applied to $\c(\calP')$),
there are $O(h)$ additional events on the corridor path terminals,
i.e., events corresponding to those pseudo-wavelets.
To handle these additional events, we may, for example, as
preprocessing, for each corridor path, associate with each its
corridor path terminal $x$ the other terminal $y$ as well as the
corridor path length $l$. Thus, during the algorithm, when we process
the event point at $x$, we can find $y$ and $l$ immediately. In this
way, each additional event is handled in $O(1)$ time in addition to
adding a new event for it to the event queue. Hence, processing all
events still takes $O(h\log h)$ time. Note that the shortest $s$-$t$
path thus computed may penetrate some ears of $\calP'$. As in Lemma
\ref{lem:40}, we can obtain a shortest $s$-$t$ path in the free
space $\calF$ in additional $O(n)$ time. Since applying
Mitchell's algorithm on $core(\calP')$ takes $O(h)$ space, the space
used in our entire algorithm is $O(n)$.

In summary, we have the following result.

\lemmaspace
\begin{theorem}\label{theo:50}
Given a set of $h$ pairwise disjoint polygonal obstacles of totally
$n$ vertices in the plane, we can find an $L_1$ shortest path
between two points in the free space in $O(n+h\log^{1+\epsilon} h)$
time (or $O(n+h\log h)$ time if a triangulation of the free space is
given) and $O(n)$ space.
\end{theorem}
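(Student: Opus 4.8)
The plan is to assemble the statement from the machinery already developed in Sections \ref{sec:convexcase} and \ref{subsec:pre}--\ref{subsec:single}, tracking the running time contribution of each stage. First I would triangulate the free space $\calF$ of $\calP$; by \cite{ref:Bar-YehudaTr94} this costs $O(n+h\log^{1+\epsilon}h)$ time, or $O(n)$ additional time if the triangulation is supplied. From $\Tri(\calF)$ I would build the dual graph $G(\calF)$, prune it to the $3$-regular graph $G^3$, and extract the $O(h)$ junction triangles and $O(h)$ corridors, computing the hourglass of each corridor; as noted in Section \ref{subsec:pre} this takes $O(n)$ time overall. Then I would form the ocean $\calM$ and the convex polygon set $\calP'$ of $O(h)$ polygons with $O(n)$ total vertices (extending angle-bisecting segments inward from the $O(h)$ reflex vertices), again in $O(n+h\log h)$ time, so that (by the results of \cite{ref:KapoorAn97} recalled in Section \ref{subsec:pre}) a shortest $s$-$t$ path in $\calF$ is a shortest $s$-$t$ path avoiding $\calP'$ that possibly uses some corridor paths.

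Next I would compute the core set $\c(\calP')$ using the modified definition of Section \ref{subsec:single}, in which each core retains, besides the four extreme vertices, every corridor path terminal lying on its obstacle; this keeps the total size of $\c(\calP')$ at $O(h)$ and preserves Observation \ref{obser:10} and Lemmas \ref{lem:10}, \ref{lem:20}, \ref{lem:40}. I would then run Mitchell's continuous Dijkstra algorithm \cite{ref:MitchellAn89,ref:MitchellL192} on $\c(\calP')$ with the corridor-path modification described in Section \ref{subsec:single}: when a terminal $x$ is permanently labeled by a wavefront not coming from its corridor path $\pi_C$, inject a pseudo-wavelet toward the other terminal $y$ with event distance $d(x)+\mathrm{length}(\pi_C)$, and otherwise (wavefront arriving through the corridor) process $x$ as an ordinary event. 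Since there are only $O(h)$ corridor paths and each generates $O(1)$ extra work plus one event-queue insertion, the total running time of this stage is $O(h\log h)$, and Mitchell's algorithm uses $O(h)$ space on an input of $O(h)$ vertices. When the wavefront first reaches $t$ we obtain a shortest $s$-$t$ path among $\c(\calP')$ (with corridor shortcuts) in $O(h\log h)$ time.

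Finally I would convert this path — which by construction avoids the cores but may penetrate some ears of $\calP'$ — into a genuine shortest $s$-$t$ path in $\calF$ by the ear-replacement argument of Lemma \ref{lem:40}: each maximal segment between consecutive turn points penetrates any ear it meets (Lemma \ref{lem:10}), and replacing the chord $\overline{ef}$ by the corresponding monotone obstacle-path portion $\widehat{ef}$ preserves the $L_1$ length (Lemma \ref{lem:20}) while moving the path onto obstacle boundaries; with $\Tri(\calP')$ in hand this takes $O(n)$ time, and unwinding any corridor path used into its actual polygonal trace inside the corridor adds at most $O(n)$ more. Summing: $O(n+h\log^{1+\epsilon}h)$ for triangulation (or $O(n)$ if given), $O(n)$ for the corridor structure and $\calP'$, $O(h\log h)$ for modified Mitchell, and $O(n)$ for ear-unwinding, gives total time $O(n+h\log^{1+\epsilon}h)$ (resp. $O(n+h\log h)$) and space $O(n)$. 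The main obstacle is the correctness of the corridor-path modification — specifically verifying that treating each corridor path purely as a length-$l$ shortcut edge between its two terminals, with at least one terminal a "wavefront incoming" terminal and the paired wavelets dying inside the corridor when both are incoming, faithfully reproduces all shortest $s$-$t$ paths of $\calF$; this rests on the fact (recalled in Section \ref{subsec:pre}) that a shortest path entering a corridor with a closed hourglass can be taken to contain the entire corridor path, so no shortest path interacts with a corridor except by traversing its path end-to-end or not entering it at all.
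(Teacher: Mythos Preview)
Your proposal is correct and follows essentially the same approach as the paper: triangulate, extract the corridor structure and convex set $\calP'$, build the modified cores (retaining corridor path terminals), run Mitchell's continuous Dijkstra on $\c(\calP')$ with pseudo-wavelets for corridor shortcuts, and then use Lemma~\ref{lem:40} to repair ear-penetrations in $O(n)$ time. The time and space accounting matches the paper's, and you correctly identify that the only nontrivial correctness point is the corridor-shortcut modification, which is justified exactly as you describe via the closed-hourglass property from Section~\ref{subsec:pre}.
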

\lemmaspace

As Mitchell's algorithm \cite{ref:MitchellAn89,ref:MitchellL192}, the
above algorithm also computes a shortest path map on the free space of
the convex polygons in $\calP'$, i.e., $\SPM(\calM)$.
We should point out that
because of the $O(h)$ corridor paths, $\SPM(\calM)$ is
different from a ``normal" SPM in the following aspect. Consider a
corridor path $\pi_C$ with two terminals $x$ and $y$. Suppose $x$ is a
wavefront incoming terminal and $y$ is a wavefront outgoing
terminal. Then this means that the algorithm determines a shortest path from $s$ to
$y$ which goes through $x$. Corresponding to the corridor path $\pi_C$, we may
put a ``pseudo-cell" in
$\SPM(\calM)$ with $x$ as the root such that $y$ is the only point
in this ``pseudo-cell", and we also associate with the pseudo-cell the
corridor path $\pi_C$, which indicates that there is a
shortest $s$-$y$ path that consists of a shortest $s$-$x$ path and the
corridor path $\pi_C$. If $x$ and $y$ are both wavefront incoming
terminals, then we need not do anything for this corridor path.
Clearly, since there are $O(h)$ corridor paths, the above procedure of
building pseudo-cells affects neither the space bound nor the
time bound for constructing $\SPM(\calM)$. Therefore, the
$\SPM(\calM)$ of size $O(n)$ can be computed in $O(T)$ time and $O(n)$
space, where $T$ is the time for triangulating $\calF$.
Based on $\SPM(\calM)$, in Section \ref{sec:spm},
we will compute an SPM on the entire free space $\calF$ in additional
$O(n)$ time.

\sectionspace
\subsection{Computing a Shortest Path Map}
\label{sec:spm}

Based on $\SPM(\calM)$, in Section \ref{sec:spm}, together with Sections
\ref{sec:bay} and \ref{sec:canal}, we will compute in additional
$O(n)$ time an SPM on the entire free space $\calF$ with respect to
the source point $s$, denoted by $\SPM(\calF)$, which has the
standard query performances, i.e.,
for any query point $t$, it reports the length of a shortest $s$-$t$ path in
$O(\log n)$ time and the actual path in additional
time proportional to the number of turns of the path.

As discussed in \cite{ref:MitchellAn89,ref:MitchellL192}, $\SPM(\calF)$ may not be
unique. We show that an $\SPM(\calF)$ of size $O(n)$ can be
computed in $O(n+h\log^{1+\epsilon} h)$ time (or $O(n+h\log h)$ time
if a triangulation of the free space is given). Our techniques for
constructing $\SPM(\calF)$ are
independent of those in the earlier sections of this paper, and are
also different from those in the previous work (e.g.,
\cite{ref:MitchellAn89,ref:MitchellL192}).

This section introduces the new concepts, {\em bays} and {\em canals},
and outlines the algorithm, while the details are given
in Sections \ref{sec:bay} and \ref{sec:canal}. One key subproblem we
need to solve efficiently is the special weighted $L_1$ geodesic
Voronoi diagram problem, i.e., the challenging subproblem
illustrated in Fig.~\ref{fig:geoVoi}. A linear time
algorithm is given in Section \ref{sec:bay} for it. Section
\ref{sec:canal} deals with another subproblem, where
the algorithm in Section \ref{sec:bay} is used as a procedure.

\sectionspace
\subsubsection{Bays and Canals}

Recall that $\calM\subseteq\calF$. To compute $\SPM(\calF)$, since
we already have $\SPM(\calM)$, we only need to compute the portion of
$\SPM(\calF)$ in the space $\calF\setminus\calM$. We first examine
the space $\calF\setminus\calM$, which we partition into two type of
regions, {\em bays} and {\em canals}, defined as follows.

\begin{figure}[t]
\begin{minipage}[t]{\linewidth}
\begin{center}
\includegraphics[totalheight=1.4in]{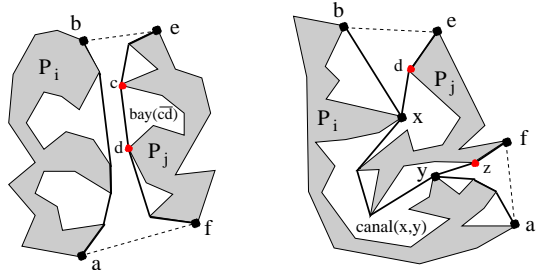}
\caption{\footnotesize Illustrating a bay $\bay$ in an open hourglass (left) and a
canal $\canal$ in a closed hourglass (right) with a corridor path linking
the apices $x$ and $y$ of its two funnels.
}
\label{fig:baycanal}
\end{center}
\end{minipage}
\vspace*{-0.15in}
\end{figure}

Consider an hourglass $H_C$ of a corridor $C$. We first discuss
the case when $H_C$ is open (see Fig.~\ref{fig:baycanal}). $H_C$
has two sides. Let $S_1(H_C)$ be an arbitrary side of $H_C$.
The obstacle vertices on $S_1(H_C)$ all lie on the same
obstacle, say $P\in\calP$. Let $c$ and $d$ be any two adjacent
vertices on $S_1(H_C)$ such that the line segment $\overline{cd}$ is
not an edge of $P$ (see the left figure in
Fig.~\ref{fig:baycanal}, with $P=P_j$). The region enclosed
by $\overline{cd}$ and a boundary portion of $P$ between $c$ and
$d$ is called the {\em bay} of $P$ and $\overline{cd}$, denoted by
$bay(\overline{cd})$, which is a simple polygon. We call
$\overline{cd}$ the {\em bay gate}.

If the hourglass $H_C$ is closed, then let $x$ and $y$ be the two apices of
its two funnels. Consider two adjacent vertices $c$ and $d$
on a side of a funnel such that the line segment $\overline{cd}$ is
not an obstacle edge. If neither $c$ nor $d$ is a funnel apex, then
$c$ and $d$ must both lie on the same obstacle and the
segment $\overline{cd}$ also defines a bay with that obstacle as above. However, if either
$c$ or $d$ is a funnel apex, say, $x=c$,
then $x$ and $d$ may lie on different obstacles.
If they both lie on the same obstacle, then they also define a
bay; otherwise, we call $\overline{xd}$ the {\em canal gate} at $x$
(see Fig.~\ref{fig:baycanal}).
Similarly, there is also a canal gate at the funnel apex $y$,
say $\overline{yz}$. Let $P_i$ and $P_j$ be
the two obstacles defining the hourglass $H_C$. The region enclosed by
$P_i$, $P_j$, and the two canal gates $\overline{xd}$ and
$\overline{yz}$ that contains the
corridor path of $H_C$ is called the {\em canal} of $H_C$, denoted by
$canal(x,y)$, which is a simple polygon.

It is easy to see that $\calF\setminus \calM$ consists of all
bays and canals thus defined.

To build $\SPM(\calF)$, we need to compute the portion of $\SPM(\calF)$ in all
bays and canals since we already have $\SPM(\calM)$. As all bays and
canals are connected with $\calM$ through their gates, we need to
``expand" $\SPM(\calM)$ to all bays/canals through their gates. Henceforth,
when saying ``compute an SPM for a bay/canal," we mean ``expand
$\SPM(\calM)$ into that bay/canal", and vice versa.
Computing an SPM for a bay is a key (i.e., the challenging
subproblem). Computing an SPM
for a canal uses the algorithm for a bay as a main procedure.

\subsubsection{Expanding $\SPM(\calM)$ into Bays and Canals}

We discuss the bays first. Consider a bay $\bay$.
If its gate $\overline{cd}$ is in a single cell $C(r)$ of $\SPM(\calM)$
with $r$ as the root, then each point in $bay(\overline{cd})$
has a shortest path to $s$ via $r$. Thus, to construct an SPM for $\bay$,
it suffices to compute an SPM on $bay(\overline{cd})$ with respect
to the single point $r$. This can be easily done
in linear time (in terms of the number of vertices of
$bay(\overline{cd})$) since $bay(\overline{cd})$ is a simple
polygon\footnote{For example, since the Euclidean shortest
path between any two points in a simple polygon is also an $L_1$
shortest path \cite{ref:HershbergerCo94}, a Euclidean SPM in a
simple polygon is also an $L_1$ one. Thus, we can use a
corresponding shortest path algorithm for the Euclidean case (e.g.,
\cite{ref:GuibasLi87}) to compute an $L_1$ SPM in $\bay$ with
respect to $r$ in linear time.}. Note that although $r$ may not be a vertex of
$bay(\overline{cd})$, we can, for example, connect $r$ to both $c$ and
$d$ with two line segments (both $\overline{rc}$ and $\overline{rd}$
are in $C(r)$) to obtain a new simple polygon that contains
$bay(\overline{cd})$.

If the gate $\overline{cd}$ is not contained in a single cell of
$\SPM(\calM)$, then the situation is more complicated. In this case,
multiple vertices of $\SPM(\calM)$ may lie in the interior of $\overline{cd}$
(i.e., the intersections of the boundaries of the cells of $\SPM(\calM)$ with
$\overline{cd}$). This is actually the challenging subproblem
illustrated by Fig.~\ref{fig:geoVoi}.
We refer to the vertices of $\SPM(\calM)$ on $\overline{cd}$ (including
its endpoints $c$ and $d$) as the {\em $\SPM(\calM)$ vertices} and let $m'$
be their total number. Let $n'$ be the number of vertices of
$bay(\overline{cd})$. A straightforward approach for computing an
SPM for $bay(\overline{cd})$ is to use the continuous Dijkstra
paradigm \cite{ref:MitchellAn89,ref:MitchellL192} to let the
wavefront continue to move into $bay(\overline{cd})$. But, this
approach may take $O((n'+m')\log (m'+n'))$ time. Later in Section
\ref{sec:bay}, we derive an $O(n'+m')$ time algorithm, as stated below.

\lemmaspace
\begin{theorem}\label{theo:baytime}
For a bay of $n'$ vertices with $m'$ $\SPM(\calM)$ vertices
on its gate, a shortest path map of size $O(n'+m')$
for the bay can be computed in $O(n'+m')$ time.
\end{theorem}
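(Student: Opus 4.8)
The plan is to exploit the special structure of the bay problem: all sources are weighted point sites lying on the gate $\overline{cd}$ (plus the two endpoints $c$ and $d$, with weights being their shortest-path distances to $s$), and they influence the simple polygon $bay(\overline{cd})$ only through that one open edge. The key geometric observation to establish first is that the weighted $L_1$-distance from a site to a point inside the bay, when restricted to $L_1$ shortest paths, behaves monotonically along the gate: as we move along $\overline{cd}$, the Voronoi regions of the $\SPM(\calM)$ vertices appear in the \emph{same order} as the vertices themselves. This ordering property — which fails for general geodesic Voronoi diagrams but holds here because the sites are collinear and separated from the polygon interior by their common influencing edge — is what makes an incremental (rather than divide-and-conquer or sweep) construction possible without logarithmic overhead.

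First I would set up the incremental framework: process the polygon $bay(\overline{cd})$ by walking its boundary starting from $\overline{cd}$, maintaining a current wavefront that is a sequence of arcs (bisector pieces), each labeled by the $\SPM(\calM)$ vertex currently ``owning'' that part of the frontier. Because of the monotone ordering, this wavefront is a simple curve whose labels form a contiguous subsequence of the gate ordering, so it can be stored in a doubly linked list and updated by local splice operations. As each new polygon vertex or reflex vertex is encountered, I would update the wavefront: a reflex vertex may spawn a new shortest-path root (becoming itself a root of an SPM region, analogous to the cells $C(r)$ in Lemma~\ref{lem:50}), and bisector events where two adjacent arcs meet cause one label to be deleted. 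The crucial amortization claim is that the total number of bisector/vertex events is $O(n'+m')$: each $\SPM(\calM)$ vertex enters the wavefront once and, once deleted, never returns (again by monotonicity, a site that loses the frontier cannot recapture it), and each polygon vertex is processed once. Computing each event — intersecting two $L_1$ bisectors, which are polygonal chains of $O(1)$ complexity — takes $O(1)$ time, so the whole construction is linear.

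For correctness I would argue in two parts: (i) the resulting decomposition is a valid SPM, i.e.\ each region has a root visible to all its points with an $L_1$ shortest path from $s$ through that root passing through the gate — here I would reuse the style of argument in Lemma~\ref{lem:50}, combining visibility within the simple polygon with the fact that the path outside $\overline{cd}$ is already encoded by the weight of the site; and (ii) the total size is $O(n'+m')$, which follows from planarity of the decomposition (the SPM of a simple polygon with $m'$ additional point sources on one edge has complexity linear in $n'+m'$) together with the linear bound on the number of roots created.

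The hard part will be proving the monotone-ordering lemma rigorously and turning it into the ``no recapture'' invariant that drives the linear amortization — this is exactly where the incremental approach could secretly cost $O((n'+m')\log(n'+m'))$ if events could cascade or labels could reappear. I expect the proof to hinge on carefully chosen geometric properties of $L_1$ bisectors of collinear weighted sites (they are monotone staircases that do not ``wrap around''), plus the convexity of the funnel sides bounding the bay; the excerpt itself flags that ``the success of it hinges on discovering many geometric properties of the problem,'' so I would budget most of the technical effort there and treat the data-structural bookkeeping as routine once the ordering invariant is in hand.
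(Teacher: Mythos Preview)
Your proposal rests on a mischaracterization of the setup: the weighted sites are \emph{not} on the gate $\overline{cd}$, and they are \emph{not} collinear. The sites are the roots $r_1,\ldots,r_k$ of the $\SPM(\calM)$ cells meeting $\overline{cd}$; each $r_i$ is an obstacle vertex lying in $\calM$, outside the bay. What sits on the gate are the $\SPM(\calM)$ vertices $v_0,\ldots,v_k$ (cell-boundary intersections with $\overline{cd}$), not the sites themselves. This matters because the bisector $B(r_{i-1},r_i)$ depends on the actual positions of $r_{i-1},r_i$ in the plane, and Lemmas~\ref{lem:700} and~\ref{lem:relativepositions} classify those relative positions case by case. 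Your anticipated picture of ``$L_1$ bisectors of collinear weighted sites are monotone staircases that do not wrap around'' is therefore the wrong geometry to be analyzing, and the appeal to ``convexity of the funnel sides bounding the bay'' is also off: a bay is an arbitrary simple polygon cut off by one chord.

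The paper's algorithm is also structurally different from your boundary-walk wavefront. Rather than sweeping $\partial$ and maintaining a labeled frontier, the paper extracts from each consecutive bisector $B(r_{i-1},r_i)$ a single ray $\rho_{i-1}$ that is either horizontal (east) or vertical (south), proves the ray origins form a northeast-to-southwest staircase (Lemma~\ref{lem:90}), and processes $\rho_1,\ldots,\rho_{k-1}$ in that order. Vertical rays go onto a stack $S$; a horizontal ray interacts with the stack top, possibly spawning a cascade of successor horizontal rays and at most one termination vertical ray. The linear bound comes from three concrete facts: each ray enters $S$ at most once (Lemma~\ref{lem:numrays}); all vertical ray shootings are \emph{target-sorted} along $\partial$, so one pass of the vertical visibility map $\VM(\bay)$ handles them (Lemma~\ref{lem:rayshooting}); and the horizontal strips $\HStrip(\rho_i)$ are pairwise disjoint, so the horizontal visibility map $\HM(\bay)$ is traversed once globally. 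There is no per-vertex wavefront update and no event queue; $\partial$ is consumed only through the two precomputed visibility maps, and the Voronoi decomposition is built first, with the SPM inside each Voronoi region computed afterward in linear time. Your ``no recapture'' intuition is in spirit related to the stack discipline, but the actual amortization argument---target-sorted rays plus disjoint horizontal strips---is the piece you are missing, and it does not follow from the collinear-site model you set up.
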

\lemmaspace

Since a canal has two gates which are
also edges of $\calM$, multiple $\SPM(\calM)$ vertices may lie on both
its gates.
Later in Section \ref{sec:canal}, we show the following result.

\begin{theorem}\label{theo:canaltime}
For a canal of $n'$ vertices with totally $m'$ $\SPM(\calM)$ vertices
on its two gates, a shortest path map of size $O(n'+m')$
for the canal can be computed in $O(n'+m')$ time.
\end{theorem}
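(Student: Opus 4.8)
I would reduce the canal problem to two applications of the bay algorithm (Theorem~\ref{theo:baytime}) plus a merging step along the corridor path. Recall that a canal $\canal$ is bounded by the two obstacles $P_i,P_j$, the two canal gates $\overline{xd}$ and $\overline{yz}$, and contains the corridor path $\pi_C$ joining the funnel apices $x$ and $y$. The essential structural fact is that any shortest $s$-$p$ path for a point $p$ inside the canal must enter through one of the two gates (since the gates are exactly the common boundary with $\calM$), and once inside it behaves like a shortest path in a simple polygon. So the wavefront enters the canal from two ``fronts'': one coming through $\overline{xd}$ carrying the weights (shortest-path lengths to $s$) of the $\SPM(\calM)$ vertices on that gate, and one coming through $\overline{yz}$ carrying the weights of the $\SPM(\calM)$ vertices on the other gate.

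First I would split $m' = m_1' + m_2'$ according to which gate each $\SPM(\calM)$ vertex lies on, and likewise conceptually split the canal. Treating the gate $\overline{xd}$ alone and ignoring the other gate, I would run the bay algorithm of Theorem~\ref{theo:baytime} to obtain a partial SPM, call it $\SPM_1$, which is correct for every point of the canal whose shortest path enters through $\overline{xd}$; this takes $O(n' + m_1')$ time. Symmetrically, run the bay algorithm on $\overline{yz}$ to get $\SPM_2$, correct for points whose shortest path enters through $\overline{yz}$, in $O(n' + m_2')$ time. The true SPM of the canal is then the ``lower envelope'' of $\SPM_1$ and $\SPM_2$: for each point $p$ in the canal, its geodesic distance is the minimum of the two distances assigned by $\SPM_1$ and $\SPM_2$, and the final partition is obtained by finding the bisecting curve where these two distance functions are equal. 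Because each $\SPM_i$ is a planar subdivision of a simple polygon into $O(n'+m_i')$ cells with the usual ``distance = Euclidean distance to the cell root plus root weight'' structure, overlaying the two subdivisions and computing the bisector is a merge operation; the key point I must establish is that this merge runs in \emph{linear} time, i.e.\ $O(n'+m')$, rather than $O((n'+m')\log(n'+m'))$.

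The main obstacle — and where I expect the real work to be — is showing that the bisector curve between the ``$x$-side'' distance field and the ``$y$-side'' distance field is well-behaved enough to be traced in linear time. I would argue this exploiting the corridor-path geometry: every shortest path entering through $\overline{xd}$ that reaches ``deep'' into the canal must travel along (a prefix of) the corridor path $\pi_C$, and symmetrically for $\overline{yz}$; hence the two distance fields, restricted to the region between the two funnel apices, are essentially ``one-dimensional'' along $\pi_C$, so their bisector crosses $\pi_C$ in a single point (or a single sub-segment), and elsewhere the bisector consists of $O(1)$ pieces per cell of the overlaid subdivisions. Combined with the fact that $\SPM_1$ and $\SPM_2$ are subdivisions of the \emph{same} simple polygon (so their overlay has linear complexity and can be computed by a simultaneous walk, as is standard for subdivisions of a simple polygon), this yields the $O(n'+m')$ bound. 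One must also handle the bookkeeping of the ``pseudo-cell'' attached to the corridor path (so that a query point whose shortest path uses $\pi_C$ can recover the actual path), but this is $O(1)$ extra work. Finally, the size bound $O(n'+m')$ for the resulting map is immediate since it is a refinement of the linear-complexity overlay of $\SPM_1$ and $\SPM_2$. Assembling these pieces proves Theorem~\ref{theo:canaltime}.
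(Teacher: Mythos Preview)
Your proposal is essentially the paper's own approach: run the bay algorithm once per gate to get two one-sided diagrams, then merge them by tracing the single dividing curve between the two distance fields. The paper makes the merge step concrete by first locating an explicit starting point $p^*$ on that curve---on the corridor path when both terminals are wavefront-incoming, on a gate in the other case---and then tracing from $p^*$ through triangulations of the two diagrams so that each triangle is visited at most once; your side claim that the full overlay of $\SPM_1$ and $\SPM_2$ has linear complexity is not needed and not obviously true, but the bisector trace itself is linear, which is all that is required.
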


\subsubsection{Wrapping Things Up}

By Theorems \ref{theo:baytime} and $\ref{theo:canaltime}$, the
time bound for computing the shortest path maps for all bays and canals is
linear in terms of the total sum of the numbers of obstacle vertices of all
bays and canals, which is $O(n)$, and the total number of the
$\SPM(\calM)$ vertices on the gates of all bays and canals, which is
also $O(n)$ since the size of $\SPM(\calM)$ is $O(n)$.

We hence conclude that given $\SPM(\calM)$,
$\SPM(\calF)$ can be computed in additional $O(n)$ time. With a
linear size planar point location data structure
\cite{ref:EdelsbrunnerOp86,ref:KirkpatrickOp83}, we have the
following result.

\begin{theorem}\label{theo:60}
Given a set of $h$ pairwise disjoint polygonal obstacles of totally $n$
vertices and a source point $s$ in the plane, we can build a shortest path
map of size $O(n)$ with respect to $s$ in $O(n+h\log^{1+\epsilon} h)$
time (or $O(n+h\log h)$ time if a triangulation of the free space is
given) and $O(n)$ space,
such that for any query point $t$, the length of a shortest $s$-$t$ path
can be reported in $O(\log n)$ time and the actual
path can be found in additional $O(k)$ time, where $k$ is the number
of turns of the path.
\end{theorem}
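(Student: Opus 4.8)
The plan is to assemble Theorem~\ref{theo:60} directly from the machinery developed in the preceding sections, treating it essentially as a bookkeeping result: all the hard work has been isolated into Theorem~\ref{theo:50} (finding a single path / the continuous-Dijkstra stage), the construction of $\SPM(\calM)$ in Section~\ref{subsec:single}, and the two expansion theorems, Theorems~\ref{theo:baytime} and \ref{theo:canaltime}. First I would recall the corridor-structure reduction from Section~\ref{subsec:pre}: starting from a triangulation $\Tri(\calF)$ (computable in $T = O(n + h\log^{1+\epsilon}h)$ time, or $O(n+h\log h)$ if the triangulation is supplied), we build the $3$-regular graph $G^3$, extract the $O(h)$ junction triangles and $O(h)$ corridors, compute all hourglasses in $O(n)$ total time, and obtain the convex polygon set $\calP'$ of $O(h)$ polygons with $O(n)$ total vertices together with the ocean $\calM$, its bays, and its canals. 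This partitions $\calF$ into $\calM$ plus a collection of bays and canals whose obstacle-vertex counts sum to $O(n)$.

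Next I would invoke the modified continuous-Dijkstra computation of Section~\ref{subsec:single}: running Mitchell's algorithm on the core set $\c(\calP')$ (with corridor-path terminals added as core vertices and the pseudo-wavelet/pseudo-cell treatment of corridor paths) produces $\SPM(\calM)$, an SPM of size $O(n)$, in $O(T)$ time and $O(n)$ space. Then, for each bay apply Theorem~\ref{theo:baytime} and for each canal apply Theorem~\ref{theo:canaltime}; since these run in time linear in (number of obstacle vertices of the region) $+$ (number of $\SPM(\calM)$ vertices on its gate(s)), and both quantities sum to $O(n)$ over all bays and canals --- the latter because $|\SPM(\calM)| = O(n)$ so only $O(n)$ cell-boundary-vs-gate intersections exist in total --- the whole expansion phase costs $O(n)$. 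Concatenating $\SPM(\calM)$ with the per-bay and per-canal maps (and the pseudo-cells for corridor paths) yields $\SPM(\calF)$ of size $O(n)$.

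Finally I would address the query side. Build a linear-size planar point-location structure \cite{ref:EdelsbrunnerOp86,ref:KirkpatrickOp83} on $\SPM(\calF)$; a length query locates $t$ in its cell in $O(\log n)$ time, reads off the root $r$ and the stored distance $d(r)$, and returns $d(r)$ plus the $L_1$ distance $\|r-t\|_1$ (with a special case for pseudo-cells, where the answer is $d(x)+l$ for the corridor-path terminal $x$ and corridor length $l$). For the actual path, trace roots from $t$'s cell back toward $s$ through $\SPM(\calF)$, emitting one turn per cell boundary crossed and splicing in obstacle-path detours around ears and the appropriate corridor paths exactly as in Lemma~\ref{lem:40}; this takes $O(\log n + k)$ time where $k$ is the number of turns, and since $k$ can be as small as $O(1)$ the dominant term for very short paths is the $O(\log n)$ point location, giving the claimed ``$O(\log n)$ plus additional $O(k)$'' bound. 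The main obstacle --- really the only nontrivial point left --- is verifying that gluing the independently computed $\SPM(\calM)$, bay maps, and canal maps along their shared gates produces a globally consistent SPM with the stated query performance: one must check that a shortest $s$-to-(point-in-a-bay/canal) path indeed enters that region through its gate and that the weighted-Voronoi roots on the gate correctly encode the $\calM$-side shortest-path lengths, which is precisely what Theorems~\ref{theo:baytime} and \ref{theo:canaltime} are set up to guarantee, so the argument here is assembly rather than new analysis.
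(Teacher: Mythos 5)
Your proposal is correct and follows essentially the same route as the paper: build the corridor structure and $\calP'$ from a triangulation, run the modified continuous Dijkstra on $\c(\calP')$ to get $\SPM(\calM)$ in $O(T)$ time, expand into all bays and canals via Theorems~\ref{theo:baytime} and \ref{theo:canaltime} in total $O(n)$ time (both sums being $O(n)$ for the reasons you give), and equip $\SPM(\calF)$ with a linear-size point-location structure for queries. The only minor inaccuracy is in the query description: the ear/obstacle-path detours are already baked into the cells of $\SPM(\calM)$ during construction (via Lemma~\ref{lem:50}, applied as part of Section~\ref{subsec:single}), so at query time one simply traces roots through the SPM tree rather than re-applying the Lemma~\ref{lem:40} splicing --- but this does not affect the correctness or the stated bounds.
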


\section{Computing a Shortest Path Map for a Bay}
\label{sec:bay}

Consider a bay $bay(\overline{cd})$ with the gate $\overline{cd}$ (see
Fig.~\ref{fig:baycanal}).
Let $\SPM(\bay)$ be the SPM for $\bay$ that we seek to compute.

For the case when the segment $\overline{cd}$ lies in a single cell $C(r)$ of
$\SPM(\calM)$ with the root $r$, we have already shown how to
construct $\SPM(\bay)$
in linear time (in terms of the number of vertices of $bay(\overline{cd})$).
If the gate $\overline{cd}$ is not contained in a single cell of
$\SPM(\calM)$, then let $m'$ be the number of $\SPM(\calM)$ vertices
on $\overline{cd}$, and $n'$ be the number of vertices of
$bay(\overline{cd})$. In this section, we give an
algorithm for computing $\SPM(\bay)$ in $O(n'+m')$ time.

Let $R$ be the set of roots of the cells of $\SPM(\calM)$ that
intersect with $\overline{cd}$. To obtain $\SPM(\bay)$, we can first
compute, for each $r\in R$, the {\em Voronoi region} $\VD(r)$ inside
$bay(\overline{cd})$ such that for any point $t\in \VD(r)$, there is
a shortest $s$-$t$ path via $r$; we then compute an SPM on $\VD(r)$
with respect to the single point $r$. Since every $\VD(r)$ is a simple
polygonal region in $bay(\overline{cd})$, the shortest path map
$\SPM(\VD(r),r)$ can be computed in linear time in terms of the
number of vertices of $\VD(r)$ (e.g.,
by using an algorithm in \cite{ref:GuibasLi87,ref:HershbergerCo94}).
Thus, the key is to
decompose $bay(\overline{cd})$ into Voronoi regions for the roots of
$R$, which is exactly the challenging subproblem illustrated by
Fig.~\ref{fig:geoVoi}. Denote by $\Vor(bay(\overline{cd}))$ this
Voronoi diagram decomposition of
$bay(\overline{cd})$. We aim to compute
$\Vor(bay(\overline{cd}))$ in $O(n'+m')$ time.

Without loss of generality (WLOG), assume that $\overline{cd}$ is positive-sloped,
$bay(\overline{cd})$ is on the right of $\overline{cd}$,
and the vertex $c$ is higher than $d$ (e.g., $\bay=B$ in
Fig.~\ref{fig:geoVoi}). Other cases can be handled similarly.
Let $R=\{r_1,r_2,\ldots,r_k\}$ be the set of
roots of the cells of $\SPM(\calM)$ that intersect with
$\overline{cd}$ in the order from $c$ to $d$ along $\overline{cd}$.
Note that $R$ may be a multi-set, i.e., two roots $r_i$ and $r_j$
with $i\neq j$ may refer to the same physical point; but this is not
important to our algorithm (e.g., we can view each $r_i$ as a
physical copy of the same root). Let $c=v_0, v_1, \ldots, v_k=d$ be
the $\SPM(\calM)$ vertices on $\overline{cd}$ ordered from $c$ to
$d$ (thus $m'=k+1$). Hence, for each $1\leq i\leq k$, the segment
$\overline{v_{i-1}v_{i}}$ is on the boundary of the cell $C(r_i)$ of
$\SPM(\calM)$. Note that each cell $C(r_i)$ is a star-shaped polygon, and for each $1\leq i\leq k-1$,
$v_i$ lies on the common boundary of $C(r_i)$ and $C(r_{i+1})$ (i.e., $v_i\in C(r_i)\cap C(r_{i+1})$).
To obtain $\Vor(bay(\overline{cd}))$, for each
$r_i\in R$, we need to compute the Voronoi region $\Vor(r_i)$.

Our algorithm can be viewed as an incremental one, i.e., it
considers the roots in $R$ one by one. It is commonly known that
incremental approaches can construct Voronoi diagrams in quadratic
time, or may give good randomized result. In contrast, our
algorithm is deterministic and takes only linear time. The success
of it hinges on that we can find an {\em order} of the roots
in $R$ such that by following this order to consider the roots in $R$
incrementally, we are able to compute $\Vor(bay(\overline{cd}))$ in
linear time. The order is nothing but that
of the indices of the roots in $R$ we have defined.
With this order, the algorithm is quite simple. 
However, it is quite challenging to argue its
correctness and achieve a linear time implementation. Our strategy
is to show that the algorithm implicitly maintains a number of {\em
invariants} that assure the correctness of the algorithm. For this
purpose, we give many observations (in Section
\ref{sec:observations}). Additionally, some interesting techniques
are also used to implement and simplify the algorithm. 

We first give an algorithm overview in Section \ref{sec:algoOverview}.

\subsectionspace
\subsection{Algorithm Sketch}
\label{sec:algoOverview}

To compute $\Vor(\bay)$, it turns out that we need to deal with the interactions
between some rays, each
of which belongs to the bisector of two roots in $R$. Every such ray is either
horizontal or vertical. Further, considering the roots in $R$ incrementally
is equivalent to considering the corresponding rays incrementally.
We process these
rays in a certain order (e.g., as to be proved, their origins somehow form
a staircase structure). For each ray considered,
if it is vertical, then it is easy (it eventually leads to a ray shooting
operation), and its processing does not introduce any new ray.
But, if it is horizontal,
then the situation is more complicated since its processing may
introduce many new horizontal rays and (at most) one vertical ray,
also in a certain order along a staircase structure
(in addition to causing a ray shooting operation).
A stack is used to store certain vertical rays that need to be further
processed.

The algorithm needs to perform ray shooting operations for some
vertical and horizontal rays. Although there are known data structures
for ray shooting queries
\cite{ref:ChazelleRa94,ref:ChazelleVi89,ref:GuibasLi87,ref:HershbergerA95},
they are not efficient enough for a linear time implementation of the
entire algorithm.
Based on observations, our approach
makes use of the horizontal visibility map and vertical visibility map
of $\bay$ \cite{ref:ChazelleTr91}. More
specifically, we prove that all vertical ray shootings are in a ``nice"
sorted order (called {\em
target-sorted}). With this property, all vertical ray shootings are
performed in totally linear time by using the vertical visibility
map of $\bay$. The horizontal visibility map is used to guide the overall
process of the algorithm. During the algorithm, we march into the bay and
the horizontal visibility map allows us to keep track of our current
position (i.e., in a trapezoid of the map that contains our
current position).
The horizontal visibility map also allows each horizontal ray shooting to be
done in $O(1)$ time.
In addition, in the preprocessing of the algorithm, we also need to perform
some other ray shootings (for rays of slope $-1$); our linear time solution
for this also hinges on the target-sorted property of such rays.

Our algorithm is conceptually simple. As mentioned above, the only
data structures we need are linked lists, a stack, and the horizontal and vertical
visibility maps.  Its correctness relies
on the fact that the algorithm implicitly maintains a set
of invariant properties in each
iteration. To prove the algorithmic correctness,
of course, we need to show that these invariant
properties hold iteratively. Specifically, in our discussion of the algorithm, after
each iteration we formally prove that the invariants are well
maintained. For this purpose, before presenting the algorithm in
Section \ref{sec:algorithm}, we first show a set of observations in
Section \ref{sec:observations}, which capture some essential
properties of this $L_1$ problem. These observations may be helpful
for solving other related problems as well.
However, the discussion of these
observations and the formal proofs that the invariant properties
are maintained by the algorithm somehow make the presentation
of this whole section lengthy, technically complicated, or even tedious,
for which we ask for the reader's patience.

\subsectionspace
\subsection{Observations}
\label{sec:observations}

In this subsection, we give a number of observations, most of which
help capture the behaviors of the bisectors for the
roots of $R$ in computing $\Vor(bay(\overline{cd}))$.
Although some of the observations individually might appear
simple, they are essential
and adding them up leads to an efficient algorithmic strategy
for computing $\Vor(\bay)$ (as presented in Section \ref{sec:algorithm}).  The
observations also allow our algorithm to perform some key operations (e.g.,
ray shootings) in a faster manner than using a standard approach
\cite{ref:ChazelleRa94,ref:ChazelleVi89,ref:GuibasLi87,ref:HershbergerA95}.

For a point $p$, denote by $x(p)$ its $x$-coordinate and by $y(p)$
its $y$-coordinate.
For two objects $O_1$ and $O_2$ in the plane, if $x(p_1)\leq
x(p_2)$ for any two points $p_1\in O_1$ and $p_2\in O_2$, then we say
$O_1$ is to the {\em left} or {\em west} of $O_2$,
or $O_2$ is to the {\em right} or {\em  east} of $O_1$; if $y(p_1)\leq
y(p_2)$ for any two points $p_1\in O_1$ and $p_2\in O_2$, then we say
$O_1$ is to the {\em south} of $O_2$ or $O_1$ is {\em below} $O_2$, or
$O_2$ is to the {\em north} of $O_1$ or $O_2$ is {\em above} $O_1$.
If $O_1$ is to the left of $O_2$ and is also below $O_2$, then we say
$O_1$ is to the {\em southwest} of $O_2$ or
$O_2$ is to the {\em northeast} of $O_1$. We define {\em southeast}
and {\em northwest} similarly.

\begin{figure}[t]
\begin{minipage}[t]{\linewidth}
\begin{center}
\includegraphics[totalheight=1.2in]{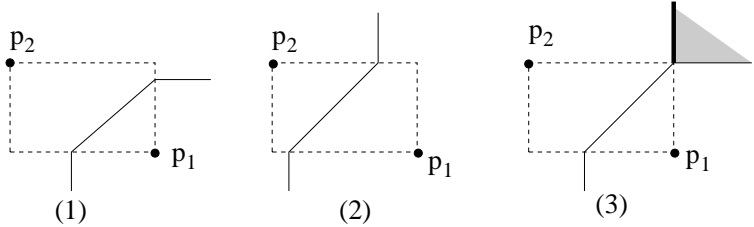}
\caption{\footnotesize Illustrating some cases of the bisector $B(p_1,p_2)$ of two
weighted points $p_1$ and $p_2$. In (3), an entire quadrant
(the shaded area) can be used as $B(p_1,p_2)$, but we choose $B(p_1,p_2)$
to be the vertical (solid thick) half-line.}
\label{fig:bisector}
\end{center}
\end{minipage}
\vspace*{-0.15in}
\end{figure}

In our problem, each root $r_i\in R$ can be viewed as an additively
weighted point whose weight is the $L_1$ length of a shortest
path from $s$ to $r_i$.  Thus, we need to consider the
possible shapes of the bisector of two weighted points.
For two weighted points $p_1$ and $p_2$ with weights $w_1$ and
$w_2$, respectively, their bisector $B(p_1,p_2)$ consists of all points
$q$ such that the $L_1$ length of the line segment $\overline{p_1q}$
plus $w_1$ is equal to the $L_1$ length of $\overline{p_2q}$ plus $w_2$.
Figure \ref{fig:bisector} shows some cases. Note
that the bisector can be an entire quadrant of the plane
(e.g., see Figure \ref{fig:bisector}(3)); in this case, as
in \cite{ref:MitchellAn89,ref:MitchellL192}, we choose a vertical
half-line as the bisector. For any pair of consecutive roots $r_{i-1}$ and
$r_i$ in $R$ for $2\leq i\leq k$, since the $\SPM(\calM)$ vertex
$v_{i-1}\in \overline{cd}$ is on the common boundary of $C(r_{i-1})$
and $C(r_i)$, $v_{i-1}$ lies on the bisector $B(r_{i-1},r_i)$ of $r_{i-1}$ and
$r_i$. For two points $p_1$ and $p_2$, denote by $Rec(p_1,p_2)$ the
rectangle with $p_1$ and $p_2$ as its two diagonal vertices.
The next observation is self-evident.

\lemmaspace
\begin{observation}\label{obser:20}
The bisector $B(p_1,p_2)$ consists of three portions: Two half-lines
and a line segment connecting them; the line segment has a slope $1$ or
$-1$ and is the intersection of $B(p_1,p_2)$ and
the rectangle $Rec(p_1,p_2)$, and each of the two
half-lines is perpendicular to an edge of $Rec(p_1,p_2)$ that
touches the half-line. Depending on the relative
positions and weights of $p_1$ and $p_2$, some portions of
$B(p_1,p_2)$ may degenerate and become
empty. $B(p_1,p_2)$ is monotone to both the
$x$- and $y$-axes. For any line $l$ containing a
portion of $B(p_1,p_2)$, $p_1$ and $p_2$ cannot lie strictly on the
same side of $l$.
\end{observation}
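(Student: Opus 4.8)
The statement to prove is Observation~\ref{obser:20}, which describes the shape of the bisector $B(p_1,p_2)$ of two additively weighted points under the $L_1$ metric. Let me sketch a proof plan.

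The plan is to work directly from the definition: a point $q$ lies on $B(p_1,p_2)$ if and only if $\|q-p_1\|_1 + w_1 = \|q-p_2\|_1 + w_2$, i.e.\ $\|q-p_1\|_1 - \|q-p_2\|_1 = w_2 - w_1$. The key algebraic fact I would exploit is that inside the rectangle $Rec(p_1,p_2)$ (the axis-aligned box with $p_1,p_2$ as opposite corners), the quantity $\|q-p_1\|_1 + \|q-p_2\|_1$ is \emph{constant}, equal to the $L_1$-diameter of the box; this is the standard "taxicab" identity. Therefore, writing $D = \|q-p_1\|_1 + \|q-p_2\|_1$ (constant on the box) and using the bisector equation, inside the box $B(p_1,p_2)$ is the locus where $\|q-p_1\|_1 = (D + w_2 - w_1)/2$, i.e.\ an $L_1$-sphere of fixed radius centered at $p_1$ intersected with the box. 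An $L_1$-sphere is a diamond, and its intersection with the box is a line segment of slope $+1$ or $-1$ (one of the two diamond sides facing into the box), which proves the middle-portion claim.

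\textbf{Handling the exterior half-lines.} Outside $Rec(p_1,p_2)$, I would split the plane into the eight regions determined by the four axis-parallel lines through $p_1$ and $p_2$. In each such region, at least one of $x(q)-x(p_1)$, $x(q)-x(p_2)$, $y(q)-y(p_1)$, $y(q)-y(p_2)$ has a fixed sign pattern that is "aligned" for both points, so the difference $\|q-p_1\|_1 - \|q-p_2\|_1$ simplifies. In the regions directly above/below or left/right of the box, one coordinate difference cancels entirely, and the equation reduces to a single linear equation in one coordinate, i.e.\ a half-line that is perpendicular to the corresponding edge of the box; in the four "corner" regions the difference is a \emph{constant} (independent of $q$), so either the whole region lies on the bisector (the degenerate quadrant case, where one picks the vertical half-line as convention, exactly as in \cite{ref:MitchellAn89,ref:MitchellL192}) or none of it does. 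This case analysis also directly gives the degeneracy statement: portions vanish precisely when the constant in a corner region forces emptiness, or when $p_1,p_2$ share a coordinate. Monotonicity in $x$ and $y$ follows because the three pieces are: a half-line perpendicular to a box edge (hence axis-parallel), a $\pm1$-slope segment, and another axis-parallel half-line, and they join up consistently — so the union is the graph of a monotone staircase-like curve. Finally, the "no strict same-side" claim follows because any line $l$ through a bisector piece is either axis-parallel through a point between the two box edges (so $p_1,p_2$ are on opposite sides or on $l$) or has slope $\pm1$ through the box, and one checks $p_1,p_2$ straddle it.

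\textbf{Main obstacle.} The conceptually clean part is the interior segment (the taxicab identity does all the work); the fiddly part is the exhaustive case analysis over the eight exterior regions together with the various coincidence/degeneracy subcases (when $x(p_1)=x(p_2)$ or $y(p_1)=y(p_2)$, when $|w_2-w_1|$ is large enough to push the segment out of the box or collapse a half-line, etc.). Since the paper labels this "self-evident," I expect the intended proof is essentially a pointer to this case analysis rather than a fully written-out argument; my plan is to present the taxicab identity explicitly and then indicate the region-by-region reduction, noting that each region contributes exactly one of the three described pieces (possibly empty), which assembles into the claimed shape.
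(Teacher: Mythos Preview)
Your proposal is correct. The paper itself does not give a proof of this observation at all---it simply states ``The next observation is self-evident'' and moves on---so your explicit argument via the taxicab identity inside $Rec(p_1,p_2)$ and the eight-region sign-pattern analysis outside is considerably more than what the paper provides, and is exactly the standard justification one would write out if pressed.
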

\lemmaspace

We call the {\em open} line segment of $B(p_1,p_2)$ strictly
inside $Rec(p_1,p_2)$ its
{\em middle segment}, denoted by $B_M(p_1,p_2)$, and
the two half-lines of $B(p_1,p_2)$ its two {\em rays},
each originating at a point on an edge of $Rec(p_1,p_2)$. Thus, the
origins of the two rays of $B(p_1,p_2)$ are the two endpoints of
$B_M(p_1,p_2)$.

Since each cell in an SPM is a star-shaped simple polygon, the
observation below is obvious.

\lemmaspace
\begin{observation}\label{obser:30}
Let $C(r)$ and $C(r')$ be two different cells in $\SPM(\calM)$ with roots
$r$ and $r'$. For any two points $p\in C(r)$ and $p'\in C(r')$, the
line segments $\overline{pr}$ and $\overline{p'r'}$ cannot cross
each other.
\end{observation}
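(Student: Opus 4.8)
The statement to prove is Observation \ref{obser:30}: for two distinct cells $C(r)$ and $C(r')$ in $\SPM(\calM)$ with roots $r,r'$, and points $p\in C(r)$, $p'\in C(r')$, the segments $\overline{pr}$ and $\overline{p'r'}$ do not cross.

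\medskip

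The plan is to argue by contradiction using the defining properties of a shortest path map: each cell $C(r)$ is star-shaped with center (root) $r$, the segment $\overline{qr}$ lies inside $C(r)$ for every $q\in C(r)$, and — crucially — this segment is the last link of a shortest $s$-to-$q$ path, so its $L_1$ length equals $d(q)-d(r)$ where $d(\cdot)$ denotes the $L_1$ shortest-path distance from $s$. First I would suppose $\overline{pr}$ and $\overline{p'r'}$ cross at a common point $z$. Since $z\in\overline{pr}\subseteq C(r)$ and $z\in\overline{p'r'}\subseteq C(r')$, the point $z$ lies in both cells; moreover $\overline{zr}\subseteq C(r)$ and $\overline{zr'}\subseteq C(r')$ are sub-segments of the respective last links, so there is a shortest $s$-$z$ path through $r$ and also one through $r'$, giving $d(z)=d(r)+|\overline{rz}|=d(r')+|\overline{r'z}|$.

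\medskip

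Next I would derive a contradiction by a local ``exchange'' argument at the crossing point $z$, exactly the kind of argument used to show optimal paths in a metric do not cross unless they share a vertex. Concretely, consider the four sub-segments $\overline{rz},\overline{zp},\overline{r'z},\overline{zp'}$. A shortest $s$-$p$ path is $\pi(s,r)\cup\overline{rz}\cup\overline{zp}$ and a shortest $s$-$p'$ path is $\pi(s,r')\cup\overline{r'z}\cup\overline{zp'}$. Swapping the tails at $z$ produces an $s$-$p$ path of length $d(r')+|\overline{r'z}|+|\overline{zp}| = d(z)+|\overline{zp}|$ and an $s$-$p'$ path of length $d(z)+|\overline{zp'}|$; since both equal the original optimal lengths $d(p)$ and $d(p')$, the swapped paths are themselves shortest. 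But the swapped $s$-$p$ path now passes through $r'$ as its penultimate ``root'', which would force $p\in C(r')$, and symmetrically $p'\in C(r)$; combined with $p\in C(r)$, $p'\in C(r')$ and the fact that distinct SPM cells have disjoint interiors, this pins $p$ (and $p'$) onto the common boundary of $C(r)$ and $C(r')$ — which contradicts $\overline{pr}$ and $\overline{p'r'}$ crossing transversally in the interior rather than merely touching. To make this airtight I would instead run the cleaner version: if the two segments cross at an interior point $z$, then since $\overline{pr}$ enters the interior of $C(r')$ near $z$ (it goes from $z$ toward $p$, and $p\in C(r)$), while $C(r')$ is star-shaped about $r'$, the segment $\overline{pr}$ must exit $C(r')$ and $p\notin C(r')$; pairing this with the symmetric statement and the star-shapedness about $r$ yields that $\overline{r'z}$ leaves $C(r)$, so $r'\notin C(r)$ — but then $\overline{rz}$ and $\overline{r'z}$ together with the boundary between the two cells form a configuration violating that each cell is the set of points whose shortest-path-last-link reaches back to its own root without leaving the cell.

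\medskip

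I expect the main obstacle to be handling degeneracies: in the $L_1$ metric shortest paths are highly non-unique, bisectors can be two-dimensional, and ``crossing'' versus ``overlapping on a segment'' must be distinguished — the observation is stated for genuine crossings, and the cleanest route is to use the star-shapedness of each cell (each cell's interior is visible from its root via a segment staying inside the cell) together with the disjointness of cell interiors, rather than manipulating path lengths, since the length-based exchange argument needs care when $z$ could be achieved by multiple distinct roots. So in the writeup I would lead with the purely topological/star-shaped argument: assume $\overline{pr}$ and $\overline{p'r'}$ cross at $z$ in the relative interiors of both segments; then a small neighborhood of $z$ contains points of both segments on both sides of each, forcing $\overline{pr}$ to contain points of $C(r')$'s interior strictly on the far side of $\overline{p'r'}$ from $r'$ — impossible by star-shapedness — unless the crossing is improper, i.e.\ $z$ is an endpoint or the segments are collinear, which are exactly the cases the statement excludes. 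This keeps the proof short and avoids wrestating Mitchell's machinery.
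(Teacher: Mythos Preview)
The paper does not give a proof here at all: it simply says ``since each cell in an SPM is a star-shaped simple polygon, the observation below is obvious.'' So the intended argument is the one-line containment $\overline{pr}\subseteq C(r)$, $\overline{p'r'}\subseteq C(r')$, together with the fact that distinct SPM cells have disjoint interiors. You eventually head in that direction, but none of your three sketches actually closes the loop, and two of them contain wrong steps.

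Your exchange argument does not yield $p\in C(r')$. The swapped $s$--$p$ path is $\pi(s,r')\cup\overline{r'z}\cup\overline{zp}$; its last turn is at $z$, not at $r'$, and $z$ is generically not an obstacle vertex, so nothing forces $p$ into the cell rooted at $r'$. In the $L_1$ setting with non-unique geodesics this route is a dead end, as you yourself suspect. Your second and third sketches both assert at some point that $\overline{pr}$ ``enters the interior of $C(r')$'' (or ``contains points of $C(r')$'s interior''); but that is precisely what \emph{cannot} happen, since $\overline{pr}\subseteq C(r)$ and $\mathrm{int}\,C(r)\cap\mathrm{int}\,C(r')=\emptyset$. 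So the contradiction you are pointing at is not there.

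The clean way to finish is purely local. Star-shapedness gives $\overline{pr}\subseteq C(r)$ and $\overline{p'r'}\subseteq C(r')$, so any intersection point $z$ lies in $C(r)\cap C(r')\subseteq\partial C(r)\cap\partial C(r')$. If the segments crossed transversally at $z$, then $\overline{pr}$ passes straight through the boundary point $z$ while staying in the simple polygon $C(r)$; this forces the interior angle of $C(r)$ at $z$ to be at least $\pi$. The same holds for $C(r')$. Since the two interior sectors are disjoint and the total angle at $z$ is $2\pi$, both sectors are exactly $\pi$, so near $z$ the common boundary is a straight line and both $\overline{pr}$ and $\overline{p'r'}$ lie along it --- hence they are collinear and do not cross. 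That is the ``obvious'' the paper is invoking; your write-up should just be these three sentences rather than the length-exchange detour.
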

\lemmaspace

The next lemma shows the possible relative positions of
two consecutive roots in $R$.

\lemmaspace
\begin{lemma} \label{lem:700}
For any two consecutive roots $r_{i-1}$ and $r_i$ in $R$ with $2\leq i\leq
k$, $r_i$ cannot be to the northeast of $r_{i-1}$, or equivalently,
$r_{i-1}$ cannot be to the southwest of $r_i$.
\end{lemma}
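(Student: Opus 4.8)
The plan is to argue by contradiction: suppose $r_i$ lies to the northeast of $r_{i-1}$, i.e. $x(r_{i-1}) \le x(r_i)$ and $y(r_{i-1}) \le y(r_i)$. I want to derive a contradiction with the geometry of the two adjacent cells $C(r_{i-1})$ and $C(r_i)$ of $\SPM(\calM)$, which share the $\SPM(\calM)$ vertex $v_{i-1}$ on the positive-sloped gate $\overline{cd}$, with $C(r_{i-1})$ above $C(r_i)$ along $\overline{cd}$ (recall we assumed $c$ higher than $d$ and $\bay$ on the right of $\overline{cd}$). The key fact to exploit is Observation~\ref{obser:30}: the segments $\overline{r_{i-1}p}$ and $\overline{r_i p'}$ cannot cross for any $p \in C(r_{i-1})$, $p' \in C(r_i)$. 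In particular, since $v_{i-1}$ lies on the common boundary of both cells, both $\overline{r_{i-1}v_{i-1}}$ and $\overline{r_i v_{i-1}}$ are legitimate ``spoke'' segments, and more importantly points of $\overline{cd}$ near $v_{i-1}$ on the $c$-side lie in $C(r_{i-1})$ while points just on the $d$-side lie in $C(r_i)$.

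First I would recall from Observation~\ref{obser:20} that $v_{i-1}$ lies on the bisector $B(r_{i-1}, r_i)$, and that the roots $r_{i-1}$ and $r_i$ cannot lie strictly on the same side of any line containing a portion of that bisector. Then I would set up the argument using the spokes: consider the segment $\overline{r_{i-1}v_{i-1}}$ inside $C(r_{i-1})$ and the segment $\overline{r_i v_{i-1}}$ inside $C(r_i)$. Both terminate at $v_{i-1}$. If $r_i$ were northeast of $r_{i-1}$, I would show that the spoke from $r_i$ to a point slightly on the $c$-side of $v_{i-1}$ along $\overline{cd}$ (a point in $C(r_{i-1})$, hence reached by a spoke from $r_{i-1}$) must cross the spoke from $r_{i-1}$ to a point slightly on the $d$-side of $v_{i-1}$ (a point in $C(r_i)$, reached by a spoke from $r_i$) — violating Observation~\ref{obser:30}. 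The crossing is forced by the northeast relationship together with the orientation of $\overline{cd}$: going from $v_{i-1}$ toward $c$ means moving up-and-left along the gate, which is ``toward'' $r_i$'s side relative to $r_{i-1}$ when $r_i$ is up-and-right, so the two spoke fans interleave around $v_{i-1}$ in the wrong cyclic order.

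The main obstacle, I expect, will be making the ``interleaving forces a crossing'' step fully rigorous: I need to pin down the cyclic order around $v_{i-1}$ of the four relevant directions — toward $r_{i-1}$, toward $r_i$, toward $c$ along the gate, toward $d$ along the gate — and show that the northeast hypothesis on $(r_{i-1},r_i)$ together with the positive slope of $\overline{cd}$ puts $r_{i-1}$ and $r_i$ on the ``wrong'' sides, so that the segment $\overline{r_{i-1}d'}$ and the segment $\overline{r_i c'}$ (for $c',d'$ near $v_{i-1}$ toward $c,d$ respectively) are forced to intersect. An alternative, possibly cleaner route is to use Observation~\ref{obser:20} directly: since $v_{i-1} \in B(r_{i-1},r_i)$ and $B(r_{i-1},r_i)$ is monotone in both coordinates and lives (in its middle segment) inside $\mathit{Rec}(r_{i-1},r_i)$, I would analyze where on the bisector $v_{i-1}$ can sit and which half-plane side of $\overline{cd}$ each of $r_{i-1}, r_i$ lies on; the contradiction would come from the requirement that $\bay$ (and the reachable points of each cell) lie to the right of $\overline{cd}$ while the northeast configuration would force one of the two roots to ``see'' the gate only from a side that conflicts with the vertical/horizontal structure of the bisector ray through $v_{i-1}$. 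I would try the spoke/Observation~\ref{obser:30} argument first since it is the most direct, falling back on the bisector-monotonicity argument of Observation~\ref{obser:20} if the cyclic-order bookkeeping gets unwieldy.
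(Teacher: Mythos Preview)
Your overall strategy---contradiction via Observation~\ref{obser:30} applied to spokes near $v_{i-1}$---is exactly the paper's approach. However, your execution contains two concrete errors that would derail the argument.

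First, a slip in orientation: since $\overline{cd}$ is positive-sloped with $c$ higher than $d$, moving from $v_{i-1}$ toward $c$ is moving \emph{up and to the right} (northeast), not ``up-and-left'' as you wrote.

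Second, and more seriously, you have the spoke pairing backwards. Observation~\ref{obser:30} forbids a crossing between $\overline{r_{i-1}\,c'}$ (since $c'\in C(r_{i-1})$) and $\overline{r_i\,d'}$ (since $d'\in C(r_i)$). You instead propose to show that $\overline{r_{i-1}\,d'}$ and $\overline{r_i\,c'}$ cross; but $d'\notin C(r_{i-1})$ and $c'\notin C(r_i)$, so a crossing of \emph{those} two segments contradicts nothing. With the roots swapped, neither the contradiction nor even the crossing itself goes through (informally, $r_{i-1}$ and $d'$ are both on the southwest side of $v_{i-1}$, while $r_i$ and $c'$ are both on the northeast side, so those two segments stay apart).

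The paper fixes both the pairing and the ``cyclic-order bookkeeping'' you were worried about by first extracting from Observation~\ref{obser:20} a precise positional fact: under the northeast hypothesis, $v_{i-1}$ must lie \emph{below} $r_i$ and \emph{to the right of} $r_{i-1}$. Then it takes a single perturbed point $p$ on $\overline{v_{i-1}v_i}$ (toward $d$, so $p\in C(r_i)$) and shows directly, by a rotation argument around $r_i$, that $\overline{r_i\,p}$ must cross $\overline{r_{i-1}\,v_{i-1}}$. You should use Observation~\ref{obser:20} in this way rather than as a fallback; it is what makes the crossing argument clean and avoids the vague interleaving reasoning.
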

\lemmaspace
\begin{proof}
Since the $\SPM(\calM)$ vertex
$v_{i-1}\in \overline{cd}$
lies on the common boundary of the two
cells $C(r_{i-1})$ and $C(r_i)$, $v_{i-1}$ is on the bisector
$B(r_{i-1},r_i)$.

\begin{figure}[t]
\begin{minipage}[t]{\linewidth}
\begin{center}
\includegraphics[totalheight=1.0in]{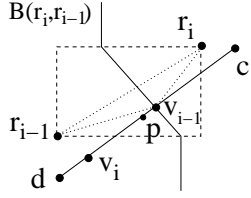}
\caption{\footnotesize An example of $r_i$ to the
northeast of $r_{i-1}$. The point $p\neq v_{i-1}$ is on $\overline{v_{i-1}v_i}$
and is infinitely close to $ v_{i-1}$. The line segment $\overline{r_ip}$ must
cross $\overline{r_{i-1}v_{i-1}}$.}
\label{fig:northeast}
\end{center}
\end{minipage}
\vspace*{-0.15in}
\end{figure}

Assume to the contrary that $r_i$ is to the northeast of $r_{i-1}$.
Note that $v_{i-1}$ may lie on either a half-line or the middle
segment of $B(r_{i-1},r_i)$. In either case, since $r_i$ is to the
northeast of $r_{i-1}$ and $\overline{cd}$ is positive-sloped,
according to Observation \ref{obser:20}, $v_{i-1}$ must be lower than $r_i$,
and $v_{i-1}$ must be to the right of $r_{i-1}$ (see
Fig.~\ref{fig:northeast}).

Since the segment $\overline{v_{i-1}v_i}$ is not a single point
and $v_i$ is to the left of $v_{i-1}$, we can find a point
$p\in\overline{v_{i-1}v_i}$ such that $p\neq v_{i-1}$ and $p$ is
infinitely close to $v_{i-1}$ (see Fig.~\ref{fig:northeast}). Since
$p\in\overline{v_{i-1}v_i}$ and $\overline{v_{i-1}v_i}\subseteq
C(r_i)$, we have $p\in C(r_i)$. Note that $v_{i-1}\in
C(r_{i-1})\cap C(r_i)$. Below we show that the two line segments
$\overline{r_ip}$ and $\overline{r_{i-1}v_{i-1}}$ must cross each
other, which contradicts with Observation \ref{obser:30}.

Since both $r_i$ and $r_{i-1}$ are obstacle vertices, by our
assumption, $r_i$ and $r_{i-1}$ do not lie on a horizontal
or vertical line. Hence $r_i$ is {\em strictly} to the northeast
of $r_{i-1}$. Note that no root in $R$ lies on $\overline{cd}$.
Since $v_{i-1}$ is lower than $r_i$ and is to the right
of $r_{i-1}$, the three points $v_{i-1}$, $r_i$, and $r_{i-1}$ do
not lie on the same line (see Fig.~\ref{fig:northeast}). In other words, the
triangle $\triangle r_iv_{i-1}r_{i-1}$ is a proper one. Further,
suppose $\rho(r_i,v_{i-1})$ (resp., $\rho(r_i,r_{i-1})$) is the ray
originating from $r_i$ and going through $v_{i-1}$ (resp., $r_{i-1}$);
then $\rho(r_i,r_{i-1})$ can be obtained by rotating
$\rho(r_i,v_{i-1})$ clockwise by an angle $\angle
v_{i-1}r_ir_{i-1}>0^{\circ}$. By the definition of the point $p$, during this
rotation, $p$ will be encountered by the rotating ray $\rho(r_i,v_{i-1})$ at an
angle $\angle v_{i-1}r_ip$ with $0^{\circ}<\angle v_{i-1}r_ip
<\angle v_{i-1}r_ir_{i-1}$, which
implies that $\overline{r_ip}$ crosses $\overline{r_{i-1}v_{i-1}}$.
The lemma thus follows.
\end{proof}

By Lemma \ref{lem:700},
there are three cases on the possible relative positions of $r_{i-1}$ with
respect to $r_i$, i.e., $r_{i-1}$ can be to the
southeast, northwest, or northeast of $r_{i}$.

\lemmaspace
\begin{lemma}\label{lem:relativepositions}
Consider any two consecutive roots $r_{i-1}$ and $r_i$ in $R$
with $2\leq i\leq k$.
\begin{enumerate}
\item
If $r_i$ is to the southeast of $r_{i-1}$, then $v_{i-1}$ is on a
ray of $B(r_{i-1},r_i)$ that is horizontally going east and $v_{i-1}$ is
to the right of $Rec(r_{i-1},r_i)$ (see Fig.~\ref{fig:relativepos}(1)).
\item
If $r_i$ is to the northwest of $r_{i-1}$, then $v_{i-1}$ is on a
ray of $B(r_{i-1},r_i)$ that is vertically going south and $v_{i-1}$ is
below $Rec(r_{i-1},r_i)$ (see Fig.~\ref{fig:relativepos}(2)).
\item
If $r_i$ is to the southwest of $r_{i-1}$, then $v_{i-1}$ is either on
the middle segment $B_M(r_{i-1},r_i)$, or on a ray of $B(r_{i-1},r_i)$
that is either horizontally going east or vertically going south (see
Fig.~\ref{fig:relativepos}(3)).
Further, if $v_{i-1}$ is on the ray horizontally going east, then
$v_{i-1}$ is to the right of $Rec(r_{i-1},r_i)$; if
$v_{i-1}$ is on the ray vertically going south, then
$v_{i-1}$ is below $Rec(r_{i-1},r_i)$.

\end{enumerate}
\end{lemma}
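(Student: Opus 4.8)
The plan is to prove all three parts together by the same three‑case analysis used in Lemma~\ref{lem:700}, exploiting two structural facts: the description of $L_1$ bisectors in Observation~\ref{obser:20}, and the non‑crossing property of the star‑shaped cells of $\SPM(\calM)$ in Observation~\ref{obser:30}. The extra geometric data I would lean on are: $v_{i-1}\in\overline{cd}$, with $\overline{cd}$ positive‑sloped and $c$ higher than $d$; $v_i$ lies on $\overline{cd}$ strictly to the lower‑left of $v_{i-1}$ with $\overline{v_{i-1}v_i}\subseteq C(r_i)$; and (for $i\geq 2$) $\overline{v_{i-2}v_{i-1}}\subseteq\partial C(r_{i-1})$ — so that at $v_{i-1}$ the two boundary pieces of $C(r_i)$ meeting there are the gate piece $\overline{v_{i-1}v_i}$ and an arc of $B(r_{i-1},r_i)$, and symmetrically for $C(r_{i-1})$.

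Step one is to pin down the shape of $B(r_{i-1},r_i)$ in each case. Since $r_{i-1}$ and $r_i$ are obstacle vertices they share neither a horizontal nor a vertical line, so $Rec(r_{i-1},r_i)$ is non‑degenerate; by Observation~\ref{obser:20}, $B(r_{i-1},r_i)$ consists of a ray, then the middle segment $B_M(r_{i-1},r_i)\subseteq Rec(r_{i-1},r_i)$, then a ray, each ray axis‑parallel and rooted on an edge of $Rec(r_{i-1},r_i)$. In Cases~1 and~2 ($r_i$ to the southeast, resp.\ northwest, of $r_{i-1}$) the middle segment has slope $1$; in Case~3 ($r_i$ to the southwest of $r_{i-1}$) it has slope $-1$. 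I would record, in each case, which edges of $Rec(r_{i-1},r_i)$ the (possibly degenerate) rays can emanate from, hence the finitely many possible ray directions.

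The core of the argument is a local analysis at $v_{i-1}$. Because $\overline{v_{i-1}v_i}\subseteq C(r_i)$ (and $\overline{v_{i-2}v_{i-1}}\subseteq\partial C(r_{i-1})$), near $v_{i-1}$ the ray of $\overline{cd}$ toward $v_i$ — pointing strictly to the lower‑left, as $\overline{cd}$ is positive‑sloped — lies on the $r_i$‑side of $B(r_{i-1},r_i)$, while the opposite ray of $\overline{cd}$ lies on the $r_{i-1}$‑side. Combining this with the last sentence of Observation~\ref{obser:20} (the supporting line of any portion of $B(r_{i-1},r_i)$ cannot have $r_{i-1}$ and $r_i$ strictly on the same side) and with the relative position of $r_{i-1},r_i$ forced by the current case, I can discard every portion of $B(r_{i-1},r_i)$ except those named in the lemma: e.g.\ in Case~1 a vertical portion would force the lower‑$x$ side of $B(r_{i-1},r_i)$ to be the $r_i$‑side, hence $r_i$ to the west of $r_{i-1}$, contradicting the case hypothesis, which kills the south ray (and the north/west rays do not occur at all); symmetric computations handle Cases~2 and~3. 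To eliminate the remaining middle segment in Cases~1 and~2 — the part not settled by the purely local sign test — and to obtain the ``$v_{i-1}$ lies outside $Rec(r_{i-1},r_i)$'' refinements, I would push a little further along $\overline{cd}$ toward $v_i$ and, as in the proof of Lemma~\ref{lem:700}, introduce $p\in\overline{v_{i-1}v_i}$ arbitrarily close to $v_{i-1}$ (so $p\in C(r_i)$, $v_{i-1}\in C(r_{i-1})$), showing that any excluded placement of $v_{i-1}$ would force $\overline{r_ip}$ to cross $\overline{r_{i-1}v_{i-1}}$, violating Observation~\ref{obser:30}; the ``to the right of / below $Rec(r_{i-1},r_i)$'' statements then follow because the surviving ray is rooted on the right (resp.\ bottom) edge of $Rec(r_{i-1},r_i)$ and extends outward, the degenerate possibility ``$v_{i-1}=$ the ray's origin'' being excluded by the genericity assumptions together with the orientation just established.

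I expect Case~3 to be the main obstacle: there the middle segment has slope $-1$ and, a priori, an eastward, a southward, a westward and a northward ray may all be present, so three placements of $v_{i-1}$ must be kept open while the west and north rays (and the ``upper‑left half'' of the bisector) are discarded — managing that sub‑case bookkeeping cleanly, and correctly handling degenerate configurations where $\overline{cd}$ is parallel to a middle segment or $v_{i-1}$ falls exactly on a rectangle edge, is where the care is needed, though the tools are exactly those already exercised in Cases~1 and~2.
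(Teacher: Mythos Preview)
Your local sign test is sound as far as it goes, but it does not close the argument, and the gap is exactly where you wave your hands. In Case~1 your test correctly rules out any \emph{vertical} portion of $B(r_{i-1},r_i)$ through $v_{i-1}$: the direction toward $v_i$ lies to the west, forcing the $r_i$-side to be west, hence $x(r_i)<x(r_{i-1})$, contradicting the case hypothesis. But for a \emph{horizontal} portion the direction toward $v_i$ lies to the south, forcing only $y(r_i)<y(r_{i-1})$ --- which is perfectly consistent with Case~1. So both the east ray \emph{and} the west ray survive your test. Your parenthetical ``the north/west rays do not occur at all'' is unjustified: depending on the weights, the bisector in Case~1 can have a west-going ray (when the middle segment meets the left edge of $Rec(r_{i-1},r_i)$). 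And the crossing argument from Lemma~\ref{lem:700} does not rescue you here: if $v_{i-1}$ lay on the west ray it would sit to the left of \emph{both} roots, so $\overline{r_ip}$ and $\overline{r_{i-1}v_{i-1}}$ both run from the right toward $v_{i-1}$ and need not cross. The symmetric problem (the north ray) arises in Case~2.

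The ingredient you are missing is the paper's opening observation for Parts~1 and~2: since each root is the center of a star-shaped cell of $\SPM(\calM)$ touching $\overline{cd}$, both $r_{i-1}$ and $r_i$ lie on the $\calM$-side of the positive-sloped line through $\overline{cd}$; in Cases~1 and~2 the two roots are opposite corners of $Rec(r_{i-1},r_i)$ along a positive-sloped diagonal, which forces all four corners --- hence the whole rectangle, and hence the north- and west-going rays --- to lie strictly on the $\calM$-side as well. That single observation disposes of the middle segment and of the north/west rays in one stroke, leaving only the south and east rays, and then the crossing argument from Lemma~\ref{lem:700} eliminates one of those two. With this observation in hand Case~3 becomes the easiest case, not the hardest: the three surviving placements (middle segment, east ray, south ray) are precisely the ones the lemma allows, and nothing further needs to be eliminated.
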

\lemmaspace
\begin{proof}
We first prove Part 1 of the lemma.
If $r_i$ is to the southeast of $r_{i-1}$ (see
Fig.~\ref{fig:relativepos}(1)), then the rectangle $Rec(r_{i-1},r_i)$
cannot intersect $\overline{cd}$. Thus, $v_{i-1}$ cannot be on
$B_M(r_{i-1},r_i)$, and $v_{i-1}$ must be on a ray of
$B(r_{i-1},r_i)$, denoted by $\rho$. By Observation \ref{obser:20},
the origin of $\rho$ is on an edge $\alpha$ of $Rec(r_{i-1},r_i)$ and is
perpendicular to the edge $\alpha$. Since $v_{i-1}\in \rho$ and $r_{i-1}$ is to
the northwest of $r_i$, $\alpha$ must be one of the two edges incident
to $r_i$, i.e., the bottom edge or the right edge of
$Rec(r_{i-1},r_i)$.
In addition, if $\alpha$ is the bottom edge of $Rec(r_{i-1},r_i)$, then $\rho$
must be vertically going south; further, since $r_i$ is to the
southeast of $r_{i-1}$, by a similar argument as that for the
proof of Lemma \ref{lem:700}, we can obtain a contradiction. Thus,
$\alpha$ is the right edge of $Rec(r_{i-1},r_i)$ and $\rho$
must be horizontally going east. In addition, it is easy to see that
$v_{i-1}$ must be to the right of $Rec(r_{i-1},r_i)$.
Part 1 of the lemma thus follows.

\begin{figure}[t]
\begin{minipage}[t]{\linewidth}
\begin{center}
\includegraphics[totalheight=1.4in]{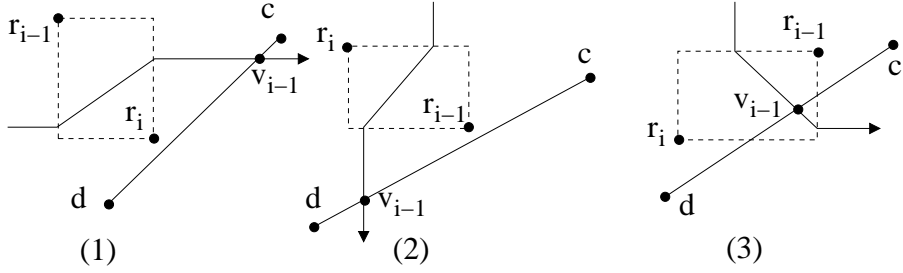}
\caption{\footnotesize Illustrating the three possible relative
positions of $r_{i-1}$ and $r_i$.}
\label{fig:relativepos}
\end{center}
\end{minipage}
\vspace*{-0.15in}
\end{figure}

Part 2 can be proved analogously as Part 1, and we omit it.

For Part 3, if $Rec(r_{i-1},r_i)$ intersects $\overline{cd}$, then it is
possible that $B_M(r_{i-1},r_i)$ intersects $\overline{cd}$ (at
$v_{i-1}$). If $B_M(r_{i-1},r_i)$ doest not intersect $\overline{cd}$,
then $v_{i-1}$ lies on a ray of $B(r_{i-1},r_i)$, denoted by $\rho$.
Again, the origin of $\rho$ is on either the right edge of $Rec(r_{i-1},r_i)$
or the bottom edge of $Rec(r_{i-1},r_i)$. In the former case,
$\rho$ is horizontally going east and $v_{i-1}$ is to the right of
$Rec(r_{i-1},r_i)$. In the latter case,
$\rho$ is vertically going south and $v_{i-1}$ is below $Rec(r_{i-1},r_i)$.
Part 3 thus follows.
\end{proof}

For any two consecutive roots $r_{i-1}$ and $r_i$ in $R$ with $2\leq i\leq
k$, if $v_{i-1}$ is on a ray $\rho$ of $B(r_{i-1},r_i)$, then we let $\rho_{i-1}$
be the ray originating at $v_{i-1}$ with the same direction as $\rho$.
If $v_{i-1}$ lies on the middle segment
of $B(r_{i-1},r_i)$, then by Lemma \ref{lem:relativepositions},
$r_{i-1}$ is to the northeast of $r_i$ and $\overline{cd}$ intersects
$Rec(r_{i-1},r_i)$; in this case, let $\rho_{i-1}$ be the ray of
$B(r_{i-1},r_i)$ that is below or to the right of $v_{i-1}$ and goes inside
$bay(\overline{cd})$. For a ray $\rho$, let $or(\rho)$ denote the
origin of $\rho$. Observation \ref{obser:40} below is obvious.

\lemmaspace
\begin{observation}\label{obser:40}
For any $2\leq i\leq k$, the ray $\rho_{i-1}$ is either horizontally
going east or vertically going south. If $v_{i-1}$ is on a ray of
$B(r_{i-1},r_i)$, then $or(\rho_{i-1})=v_{i-1}$; if $v_{i-1}$ is on
$B_M(r_{i-1},r_i)$, then $or(\rho_{i-1})$ is on either the right edge
or the bottom edge of $Rec(r_{i-1},r_i)$.
\end{observation}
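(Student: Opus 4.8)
The plan is to obtain Observation~\ref{obser:40} as a direct consequence of the definition of $\rho_{i-1}$ together with Lemma~\ref{lem:700}, Lemma~\ref{lem:relativepositions}, and Observation~\ref{obser:20}, splitting into the two cases that occur in the definition of $\rho_{i-1}$: the case where $v_{i-1}$ lies on a ray of $B(r_{i-1},r_i)$, and the case where $v_{i-1}$ lies on the middle segment $B_M(r_{i-1},r_i)$.

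First I would handle the ray case. Here $\rho_{i-1}$ is by definition the ray originating at $v_{i-1}$ with the same direction as the ray of $B(r_{i-1},r_i)$ containing $v_{i-1}$, so $or(\rho_{i-1})=v_{i-1}$ is immediate. By Lemma~\ref{lem:700}, $r_i$ lies to the southeast, northwest, or southwest of $r_{i-1}$, so exactly one of the three parts of Lemma~\ref{lem:relativepositions} applies; in Part~1 the relevant ray is horizontally going east, in Part~2 it is vertically going south, and in Part~3 it is one of these two. Hence $\rho_{i-1}$ is horizontally going east or vertically going south, as claimed.

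Next I would handle the middle-segment case. Since Parts~1 and~2 of Lemma~\ref{lem:relativepositions} place $v_{i-1}$ on a ray (not on $B_M$), we must be in Part~3, so $\overline{cd}$ crosses $Rec(r_{i-1},r_i)$ and $r_{i-1}$ lies to the northeast of $r_i$; in particular $r_i$ is the southwest corner and $r_{i-1}$ the northeast corner of $Rec(r_{i-1},r_i)$. By Observation~\ref{obser:20}, each of the two rays of $B(r_{i-1},r_i)$ is perpendicular to the edge of $Rec(r_{i-1},r_i)$ it touches, hence axis-parallel, and its origin is the corresponding endpoint of $B_M(r_{i-1},r_i)$, which lies on an edge of $Rec(r_{i-1},r_i)$. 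Using the $x$- and $y$-monotonicity of $B(r_{i-1},r_i)$ and the fact that $bay(\overline{cd})$ lies on the right side of the positive-sloped gate $\overline{cd}$, I would argue that the ray of $B(r_{i-1},r_i)$ lying below or to the right of $v_{i-1}$ and entering $bay(\overline{cd})$ --- which is exactly $\rho_{i-1}$ --- leaves $Rec(r_{i-1},r_i)$ through its bottom edge or its right edge; accordingly $\rho_{i-1}$ is vertically going south or horizontally going east, and $or(\rho_{i-1})$ lies on the bottom or right edge of $Rec(r_{i-1},r_i)$.

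The one point that needs genuine care --- the rest being a direct quotation of Lemma~\ref{lem:relativepositions} and Observation~\ref{obser:20} --- is the middle-segment case: one must rule out that the ray of $B(r_{i-1},r_i)$ entering the bay is a north-going or west-going ray (with origin on the top or left edge of $Rec(r_{i-1},r_i)$), which is where the monotonicity of the bisector and the placement of $bay(\overline{cd})$ relative to the positive-sloped segment $\overline{cd}$ (cf.\ Fig.~\ref{fig:relativepos}(3)) are used.
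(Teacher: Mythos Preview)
Your proposal is correct. The paper itself does not give a proof of Observation~\ref{obser:40}; it simply declares it ``obvious'' immediately after defining $\rho_{i-1}$, so your unpacking via Lemma~\ref{lem:700}, Lemma~\ref{lem:relativepositions}, and Observation~\ref{obser:20} is exactly the kind of verification the paper leaves to the reader and is fully consistent with how the paper sets things up (note that the paper already invokes Lemma~\ref{lem:relativepositions} in the very definition of $\rho_{i-1}$ for the middle-segment case).
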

\lemmaspace

\lemmaspace
\begin{lemma}\label{lem:1000}
Consider any two consecutive roots $r_{i-1}$ and $r_i$ in $R$ with $2\leq i\leq k$.
\begin{enumerate}
\item
If the ray $\rho_{i-1}$ is horizontal, then $r_{i-1}$ is above
$\rho_{i-1}$ and $r_i$ is below $\rho_{i-1}$.
\item
If $\rho_{i-1}$ is
vertical, then $r_{i-1}$ is to the right of
$\rho_{i-1}$ and $r_i$ is to the left of $\rho_{i-1}$.
\item The origin $or(\rho_{i-1})$ of $\rho_{i-1}$ is always below $r_{i-1}$
and to the right of $r_i$.
\end{enumerate}
\end{lemma}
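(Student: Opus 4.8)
The plan is to prove the three parts of Lemma~\ref{lem:1000} together by a case analysis on the relative position of $r_{i-1}$ and $r_i$, using the structural description of $\rho_{i-1}$ supplied by Lemma~\ref{lem:relativepositions} together with the monotonicity and ``no two points strictly on the same side'' properties of bisectors from Observation~\ref{obser:20}. Recall from Lemma~\ref{lem:700} that $r_i$ is never to the northeast of $r_{i-1}$, so there are exactly three configurations to handle: $r_i$ southeast of $r_{i-1}$, $r_i$ northwest of $r_{i-1}$, and $r_i$ southwest of $r_{i-1}$. In each configuration Lemma~\ref{lem:relativepositions} tells us whether $\rho_{i-1}$ points east or south (and, when it points east/south, that $v_{i-1}=or(\rho_{i-1})$ is to the right of, resp.\ below, $Rec(r_{i-1},r_i)$); the degenerate middle-segment subcase (possible only when $r_i$ is southwest of $r_{i-1}$) is pinned down by Observation~\ref{obser:40}, which places $or(\rho_{i-1})$ on the right edge or bottom edge of $Rec(r_{i-1},r_i)$.

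First I would dispose of Part~3. If $v_{i-1}$ lies on a ray of $B(r_{i-1},r_i)$, then $or(\rho_{i-1})=v_{i-1}$; Lemma~\ref{lem:relativepositions} says $v_{i-1}$ is either to the right of $Rec(r_{i-1},r_i)$ (the east-ray case) or below it (the south-ray case), and in either case $v_{i-1}$ is weakly to the right of the right edge of the rectangle and weakly below its top edge, i.e.\ to the right of $r_i$ and below $r_{i-1}$; since $r_{i-1},r_i$ are obstacle vertices and hence share no horizontal or vertical line, strictness follows. If instead $v_{i-1}$ lies on $B_M(r_{i-1},r_i)$, then by Observation~\ref{obser:40} $or(\rho_{i-1})$ sits on the right or bottom edge of $Rec(r_{i-1},r_i)$, which again is below $r_{i-1}$ and to the right of $r_i$; here I must use Lemma~\ref{lem:700} (this middle-segment subcase forces $r_{i-1}$ strictly northeast of $r_i$, so $r_{i-1}$ is strictly above and strictly right of $r_i$) to get strictness. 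This gives Part~3 in all cases.

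Next I would prove Parts~1 and~2. Suppose $\rho_{i-1}$ is horizontal (going east). By Part~3, $r_i$ is strictly to the right of $or(\rho_{i-1})$ — wait, that is the wrong direction for a ray going east, so I will instead argue directly from the bisector geometry: the supporting line of $\rho_{i-1}$ is the horizontal line through $or(\rho_{i-1})$, and since $or(\rho_{i-1})$ lies on (or on the right edge of) $Rec(r_{i-1},r_i)$ with the east-ray configuration of Lemma~\ref{lem:relativepositions}, the right edge of the rectangle is incident to $r_i$ and the ray is perpendicular to it; hence $or(\rho_{i-1})$ has the same $y$-coordinate as a point strictly between $r_i$ and $r_{i-1}$ in $y$. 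More cleanly: Observation~\ref{obser:20} says $r_{i-1}$ and $r_i$ cannot lie strictly on the same side of the line containing $\rho_{i-1}$, so they lie on opposite sides (they cannot lie on the line, being obstacle vertices off any common horizontal); combining with the fact that in the east-ray configuration $r_{i-1}$ is the northwest corner and $r_i$ the southeast corner of $Rec(r_{i-1},r_i)$, we get $r_{i-1}$ above and $r_i$ below $\rho_{i-1}$. The vertical case (Part~2) is symmetric, swapping roles of $x$ and $y$ and of east/south.

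The main obstacle I anticipate is the bookkeeping in the degenerate middle-segment subcase: there $or(\rho_{i-1})$ is not $v_{i-1}$ but an endpoint of $B_M(r_{i-1},r_i)$ on an edge of the rectangle, and one must check carefully that the chosen ray (``the ray of $B(r_{i-1},r_i)$ below or to the right of $v_{i-1}$ going into $bay(\overline{cd})$'', per the definition preceding Observation~\ref{obser:40}) is indeed the one whose origin lies on the right or bottom edge and points east or south, so that Parts~1--3 apply to it rather than to the other ray of the bisector. This amounts to verifying that the geometry of $\overline{cd}$ (positive-sloped, $v_{i-1}$ on the middle segment, bay to the lower-right) forces $bay(\overline{cd})$ to lie on the east/south side of $v_{i-1}$; I would settle this by the same rotation argument used in the proof of Lemma~\ref{lem:700}, applied to the corner of the rectangle. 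Everything else is a direct consequence of Observations~\ref{obser:20} and~\ref{obser:40} and Lemma~\ref{lem:relativepositions}.
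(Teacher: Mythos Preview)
Your proposal is correct and follows essentially the same approach as the paper: a case analysis on the three possible relative positions of $r_{i-1}$ and $r_i$ (via Lemma~\ref{lem:700}), using Lemma~\ref{lem:relativepositions} and Observation~\ref{obser:40} to locate $or(\rho_{i-1})$ on the appropriate edge of $Rec(r_{i-1},r_i)$, and then reading off Parts~1--3 from the rectangle geometry. The paper treats all three parts together within each positional case, whereas you separate Part~3 from Parts~1--2 and invoke the ``not strictly on the same side'' clause of Observation~\ref{obser:20} to identify which root is above/right; this is a cosmetic difference, and the paper's direct argument (the ray's origin lies on the right or bottom edge of the rectangle, and $r_{i-1},r_i$ are diagonal corners) is what you end up using anyway once you pin down which side is which.

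Two small remarks. First, the ``wait, that is the wrong direction'' self-correction and the anticipated ``main obstacle'' about the middle-segment subcase are both unnecessary: Observation~\ref{obser:40} already asserts that in that subcase $or(\rho_{i-1})$ sits on the right or bottom edge of $Rec(r_{i-1},r_i)$, so no further rotation argument is needed. Second, when you argue Parts~1--2 you only spell out the configuration where $r_{i-1}$ is the northwest corner; remember that the east-ray also arises when $r_{i-1}$ is the northeast corner (Case~3 of Lemma~\ref{lem:relativepositions}), though the conclusion is identical since $r_{i-1}$ is still the top corner in $y$.
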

\lemmaspace
\begin{proof}
There are three cases on the possible relative positions of $r_{i-1}$ and $r_i$.

\begin{itemize}
\item
If $r_{i-1}$ is to the northwest of $r_i$ (see
Fig.~\ref{fig:relativepos}(1)), then by the proof of
Lemma \ref{lem:relativepositions}, $\rho_{i-1}$ is horizontal and
is contained in the ray of $B(r_{i-1},r_i)$ whose origin is
on the right edge of $Rec(r_{i-1},r_i)$.  Since $r_{i-1}$ and $r_i$ are
two diagonal vertices of $Rec(r_{i-1},r_i)$, $\rho_{i-1}$ is above
$r_{i}$ and below $r_{i-1}$.

Further, the origin
$or(\rho_{i-1})$ is $v_{i-1}$, which is below $r_{i-1}$ and to the
right of $r_i$.
\item
If $r_{i-1}$ is to the southeast of $r_i$ (see
Fig.~\ref{fig:relativepos}(2)), then by the proof of
Lemma \ref{lem:relativepositions}, $\rho_{i-1}$ is vertical and
lies on the ray of $B(r_{i-1},r_i)$ whose origin is
on the bottom edge of $Rec(r_{i-1},r_i)$.
Since $r_{i-1}$ and $r_i$ are
two diagonal vertices of $Rec(r_{i-1},r_i)$, $\rho_{i-1}$ is to the
right of $r_i$ and to the left of $r_{i-1}$.
Further, the origin
$or(\rho_{i-1})$ is $v_{i-1}$, which is below $r_{i-1}$ and to the
right of $r_i$.

\item
If $r_{i-1}$ is to the northeast of $r_i$ (see
Fig.~\ref{fig:relativepos}(3)), then if $\rho_{i-1}$
is horizontal, then the proof is similar to the first case; otherwise, the
proof is similar to the second case.
\end{itemize}

The lemma thus follows.
\end{proof}

\lemmaspace
\begin{lemma}\label{lem:80}
For any $i$ with $3\leq i\leq k-1$, if $r_i$ is to the southwest of
$r_{i-1}$, then $v_{i-2}$ is to the right of the rectangle
$Rec(r_{i-1},r_i)$ and $v_{i}$ is below $Rec(r_{i-1},r_i)$.
\end{lemma}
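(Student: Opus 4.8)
The plan is to reduce each of the two claims to a single coordinate comparison, and to obtain both comparisons from one general ``side'' property of gate vertices.

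First I would fix the shape of $Rec(r_{i-1},r_i)$. Since $r_i$ is southwest of $r_{i-1}$ and, by general position, $x(r_i)<x(r_{i-1})$ and $y(r_i)<y(r_{i-1})$, the rectangle $Rec(r_{i-1},r_i)$ has $r_{i-1}$ as its northeast corner and $r_i$ as its southwest corner; in particular its right edge lies on the line $x=x(r_{i-1})$ and its bottom edge on the line $y=y(r_i)$. Hence it suffices to prove $x(v_{i-2})\ge x(r_{i-1})$ and $y(v_i)\le y(r_i)$.

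Next I would establish a helper fact that holds for \emph{every} consecutive pair $r_{j-1},r_j$ in $R$ (with $2\le j\le k$): the gate vertex $v_{j-1}$ satisfies $x(v_{j-1})\ge x(r_j)$ and $y(v_{j-1})\le y(r_{j-1})$. The proof splits on where $v_{j-1}$ lies on $B(r_{j-1},r_j)$. If $v_{j-1}$ lies on one of the two rays of $B(r_{j-1},r_j)$, then $v_{j-1}=or(\rho_{j-1})$ by the definition of $\rho_{j-1}$, and Lemma~\ref{lem:1000}(3) states precisely that $or(\rho_{j-1})$ is below $r_{j-1}$ and to the right of $r_j$. If instead $v_{j-1}$ lies on the open middle segment $B_M(r_{j-1},r_j)$, I would first show that $r_j$ must be southwest of $r_{j-1}$: Lemma~\ref{lem:700} rules out $r_j$ being northeast of $r_{j-1}$, while parts~(1) and~(2) of Lemma~\ref{lem:relativepositions} say that if $r_j$ were southeast or northwest of $r_{j-1}$ then $v_{j-1}$ would lie on a ray situated strictly to the right of, or strictly below, $Rec(r_{j-1},r_j)$ (in particular outside the rectangle), contradicting $v_{j-1}\in B_M(r_{j-1},r_j)$. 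So $r_j$ is the southwest and $r_{j-1}$ the northeast corner of $Rec(r_{j-1},r_j)$, and since $v_{j-1}$ lies strictly inside this rectangle, $x(r_j)<x(v_{j-1})<x(r_{j-1})$ and $y(r_j)<y(v_{j-1})<y(r_{j-1})$, which again gives the helper inequalities.

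Finally I would apply the helper fact twice. With $j=i-1$ (valid since $i\ge 3$), it gives $x(v_{i-2})\ge x(r_{i-1})$, so $v_{i-2}$ lies to the right of $Rec(r_{i-1},r_i)$; with $j=i+1$ (valid since $i\le k-1$), it gives $y(v_i)\le y(r_i)$, so $v_i$ lies below $Rec(r_{i-1},r_i)$. The one delicate step, where I expect to spend the most care, is the middle-segment case of the helper fact: pinning down that $v_{j-1}$ can be interior to $Rec(r_{j-1},r_j)$ only in the southwest/northeast configuration requires combining Lemma~\ref{lem:700}, both relevant parts of Lemma~\ref{lem:relativepositions}, and the definition of $\rho_{j-1}$ together with Observation~\ref{obser:40}. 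The rest is bookkeeping about which edge of a rectangle lies on which coordinate line.
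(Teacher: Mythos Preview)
Your proof is correct and, in its overall structure, a bit different from the paper's. The paper proves $x(v_{i-2})\ge x(r_{i-1})$ directly by a three-way case split on the relative position of $r_{i-2}$ with respect to $r_{i-1}$ (southeast, northwest, or northeast), invoking Lemmas~\ref{lem:relativepositions} and~\ref{lem:1000} separately in each case, and then declares the claim about $v_i$ analogous. You instead isolate a single helper fact---for every consecutive pair, $v_{j-1}$ is to the right of $r_j$ and below $r_{j-1}$---and derive both conclusions by two instantiations. Your helper is essentially Lemma~\ref{lem:1000}(3) extended from $or(\rho_{j-1})$ to $v_{j-1}$ itself, with the middle-segment case handled via the observation that $B_M$ can only carry $v_{j-1}$ when $r_j$ is southwest of $r_{j-1}$. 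The advantage of your route is modularity: one reusable statement, two applications, no separate ``analogous'' half; the paper's route avoids stating and proving an auxiliary lemma but repeats essentially the same case work implicitly. Both rely on the same ingredients (Lemma~\ref{lem:700}, Lemma~\ref{lem:relativepositions}, Lemma~\ref{lem:1000}, Observation~\ref{obser:40}), so the difference is organizational rather than conceptual.
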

\lemmaspace
\begin{proof}
Suppose $r_i$ is to the southwest of $r_{i-1}$. We only prove that
$v_{i-2}$ is to the right of the rectangle
$Rec(r_{i-1},r_i)$. The case that $v_{i}$ is below $Rec(r_{i-1},r_i)$
can be proved analogously.

Note that $v_{i-2}\in B(r_{i-2},r_{i-1})$.
We discuss the three possible relative positions of $r_{i-2}$ and
$r_{i-1}$. By Lemma \ref{lem:700}, $r_{i-2}$ may be to the
southeast, northwest, or northeast of $r_{i-1}$.
Since $r_i$ is to the southwest of $r_{i-1}$, to prove $v_{i-2}$ is to
the right of $Rec(r_{i-1},r_i)$, it suffices to show
that $v_{i-2}$ is to the right of $r_{i-1}$.

\begin{itemize}
\item
If $r_{i-2}$ is to the southeast of $r_{i-1}$, then by Lemma
\ref{lem:relativepositions}, $v_{i-2}$ is on the ray of
$B(r_{i-1},r_{i-2})$ vertically going south, i.e., $\rho_{i-2}$ is vertical.
By Lemma \ref{lem:1000}, $r_{i-1}$ is to the left of $\rho_{i-2}$.
Since $v_{i-2}\in \rho_{i-2}$, $v_{i-2}$ is to the right of $r_{i-1}$.

\item
If $r_{i-2}$ is to the northwest of $r_{i-1}$, then by Lemma
\ref{lem:relativepositions}, $v_{i-2}$ is to the right of
$Rec(r_{i-2},r_{i-1})$, and thus to the right of $Rec(r_{i-1},r_i)$.

\item
If $r_{i-2}$ is to the northeast of $r_{i-1}$, then the rectangle
$Rec(r_{i-2},r_{i-1})$ is to the northeast of $Rec(r_{i-1},r_i)$.
If $v_{i-2}$ is on $B_M(r_{i-2},r_{i-1})$, then since $v_{i-2}$ is inside
$Rec(r_{i-2},r_{i-1})$, $v_{i-2}$ is to the right of
$Rec(r_{i-1},r_i)$; otherwise, the proof is similar to the above two
cases.
\end{itemize}

The lemma thus follows.
\end{proof}

Recall that when sketching the algorithm in Section
\ref{sec:algoOverview}, we mentioned that the origins of the rays involved
somehow form a staircase structure. The next lemma states this important fact.

\lemmaspace
\begin{lemma}\label{lem:90}
For any $i$ with $2\leq i\leq k-1$, $or(\rho_{i-1})$ is to the northeast of
$or(\rho_i)$.
\end{lemma}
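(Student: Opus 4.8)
The plan is to compare the positions of the origins $or(\rho_{i-1})$ and $or(\rho_i)$ by routing the argument through the common vertex $v_i\in\overline{cd}$, which lies on the bisectors $B(r_{i-1},r_i)$ and $B(r_i,r_{i+1})$ and hence is intimately tied to both rays. First I would recall from Lemma~\ref{lem:1000}(3) that $or(\rho_i)$ is below $r_i$ and to the right of $r_{i+1}$, and similarly $or(\rho_{i-1})$ is below $r_{i-1}$ and to the right of $r_i$. So the goal ``$or(\rho_{i-1})$ is to the northeast of $or(\rho_i)$'' splits into the two inequalities $x(or(\rho_{i-1})) \ge x(or(\rho_i))$ and $y(or(\rho_{i-1})) \ge y(or(\rho_i))$, and I would prove each separately, using the staircase-like constraints already established. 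A natural reduction is: since $or(\rho_i)$ lies on $\overline{cd}$ at or ``inside'' the segment $\overline{v_{i-1}v_i}$ side, and $v_i$ is to the left of $v_{i-1}$ along the positive-sloped gate, it should suffice to relate $or(\rho_{i-1})$ and $v_i$ (for the $x$-coordinate) and $or(\rho_i)$ and $v_{i-1}$ (for the $y$-coordinate), i.e. to show $or(\rho_{i-1})$ is to the right of and above the gate point that ``separates'' the two cells.

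Concretely, for the horizontal-coordinate inequality I would argue as follows. The origin $or(\rho_i)$ is either $v_i$ itself (when $v_i$ lies on a ray of $B(r_i,r_{i+1})$) or a point on the right/bottom edge of $Rec(r_i,r_{i+1})$ (when $v_i$ lies on the middle segment), by Observation~\ref{obser:40}; in both cases $or(\rho_i)$ is to the right of $r_{i+1}$ and no further right than $v_i$, which in turn is to the left of $v_{i-1}$. Meanwhile $or(\rho_{i-1})$ is at least as far right as $v_{i-1}$ in the ray case (it equals $v_{i-1}$), and in the middle-segment case we still have $or(\rho_{i-1})$ sitting on an edge of $Rec(r_{i-1},r_i)$ with $v_{i-1}$ inside that rectangle, so $or(\rho_{i-1})$ is to the right of $r_i$, and Lemma~\ref{lem:80} (applied with the roles of the indices shifted) pins down that $v_i$ is below $Rec(r_{i-1},r_i)$ when $r_i$ is southwest of $r_{i-1}$, forcing the required horizontal ordering. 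The vertical-coordinate inequality is handled symmetrically, swapping the roles of ``right/east'' and ``below/south'' and using the second halves of Lemmas~\ref{lem:700}, \ref{lem:relativepositions}, \ref{lem:1000}, and \ref{lem:80}.

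The proof therefore becomes a case analysis over the relative position of the triple $(r_{i-1}, r_i, r_{i+1})$: by Lemma~\ref{lem:700} each consecutive pair is in one of three configurations (southeast, northwest, northeast), so there are a bounded number of combined cases. In each, Lemma~\ref{lem:relativepositions} tells us whether $\rho_{i-1}$ (resp.\ $\rho_i$) is horizontal or vertical and on which side of $Rec$ its origin sits, and Lemmas~\ref{lem:1000} and \ref{lem:80} then give the inequalities needed to place $or(\rho_{i-1})$ northeast of $or(\rho_i)$. I expect the main obstacle to be the case where $r_i$ is to the southwest of $r_{i-1}$ \emph{and} $v_{i-1}$ (or $v_i$) lies on the middle segment of a bisector rather than on a ray: there $or(\rho)$ is not simply a gate vertex but a point on the boundary of a rectangle, so one cannot directly quote ``$or(\rho_{i-1})=v_{i-1}$'' and must instead combine the rectangle-containment of $v_{i-1}$ with Lemma~\ref{lem:80} to recover the coordinate comparison; making sure the boundary/degenerate subcases (rays going exactly east or south, roots sharing a coordinate — excluded by the general-position assumption) are all covered will be the fiddly part. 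Everything else is routine bookkeeping with the monotone bisector structure from Observation~\ref{obser:20}.
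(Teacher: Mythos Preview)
Your overall plan---case analysis plus Lemmas~\ref{lem:relativepositions}, \ref{lem:1000}, and \ref{lem:80}---is the right one, and in fact the paper uses exactly these ingredients. But your concrete argument for the $x$-coordinate contains a genuine error. You claim that when $v_i$ lies on the middle segment $B_M(r_i,r_{i+1})$, the origin $or(\rho_i)$ is ``no further right than $v_i$''. The opposite is true: the middle segment has slope $-1$, and $or(\rho_i)$ is the endpoint of that segment on the right or bottom edge of $Rec(r_i,r_{i+1})$, reached from $v_i$ by moving \emph{southeast} (into the bay). Hence $x(or(\rho_i))\ge x(v_i)$ and $y(or(\rho_i))\le y(v_i)$. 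This kills the chain $x(or(\rho_i))\le x(v_i)\le x(v_{i-1})\le x(or(\rho_{i-1}))$ at the first step. Likewise, your remark that Lemma~\ref{lem:80}'s conclusion ``$v_i$ is below $Rec(r_{i-1},r_i)$'' forces the \emph{horizontal} ordering is a non sequitur: that is a $y$-constraint, useful for the $y$-coordinate comparison but not the $x$-coordinate one.

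The fix---and this is what the paper does---is to route the $x$-comparison through $v_{i-1}$ rather than $v_i$, and the $y$-comparison through $v_i$ rather than $v_{i-1}$. Concretely, when $or(\rho_i)\neq v_i$ we must have $r_{i+1}$ southwest of $r_i$, so Lemma~\ref{lem:80} (index-shifted) gives that $v_{i-1}$ is to the right of $Rec(r_i,r_{i+1})$; since $or(\rho_i)$ lies on that rectangle, $x(or(\rho_i))\le x(r_i)\le x(v_{i-1})\le x(or(\rho_{i-1}))$, the last inequality because $or(\rho_{i-1})$ is either $v_{i-1}$ or southeast of it. The paper packages this by introducing an auxiliary point $z_i$, the intersection of $\overline{cd}$ with the right edge of $Rec(r_i,r_{i+1})$: then $z_i$ is northeast of $or(\rho_i)$ (both on or inside the rectangle boundary) and, being on $\overline{cd}$ to the left of $v_{i-1}$, is southwest of $v_{i-1}=or(\rho_{i-1})$. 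The paper's case split is simply on whether $or(\rho_{i-1})=v_{i-1}$ and whether $or(\rho_i)=v_i$ (four cases), which is cleaner than splitting on all relative positions of the root triple.
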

\lemmaspace
\begin{proof}
We first discuss a scenario that will be used later in this proof.
Consider any two consecutive roots $r_j$ and $r_{j+1}$ in $R$, $1\leq j\leq
k-1$, with $or(\rho_{j})\neq v_{j}$. Then based on our discussion above, it must be the case that
$r_{j+1}$ is to the southwest of $r_{j}$, $\overline{cd}$
intersects the rectangle $Rec(r_j,r_{j+1})$, and
$or(\rho_{j})$ is a point on an edge of $Rec(r_j,r_{j+1})$.
Let $z_{j}$ be the intersection
of $\overline{cd}$ and the right edge of $Rec(r_j,r_{j+1})$
(see Fig.~\ref{fig:originpos}).
The origin $or(\rho_{j})$ can be either
on the right edge or the bottom edge of $Rec(r_j,r_{j+1})$.
In either case, $or(\rho_{j})$ must be both below and to the left of
$z_{j}$, i.e., $z_{j}$ is to the northeast of
$or(\rho_{j})$.

\begin{figure}[t]
\begin{minipage}[t]{\linewidth}
\begin{center}
\includegraphics[totalheight=1.0in]{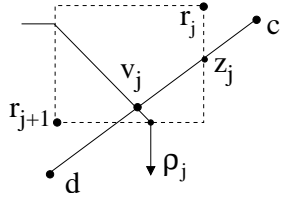}
\caption{\footnotesize Illustrating the case when $r_{j}$ is to the
northeast of $r_{j+1}$ and $or(\rho_{j})\neq v_{j}$.}
\label{fig:originpos}
\end{center}
\end{minipage}
\vspace*{-0.15in}
\end{figure}

Consider any $i$ with $2\leq i\leq k-1$.
To prove the lemma, depending on whether
$or(\rho_{i-1})=v_{i-1}$ and whether $or(\rho_{i})=v_i$,
there are four cases.

\begin{enumerate}
\item
If $or(\rho_{i-1})=v_{i-1}$ and $or(\rho_{i})=v_{i}$,
then since $v_{i-1}$ and $v_i$ are on $\overline{cd}$ in the order from
$c$ to $d$, $v_{i-1}$ is to the northeast of $v_i$, and thus
$or(\rho_{i-1})$ is to the northeast of $or(\rho_{i})$.

\item
If $or(\rho_{i-1})=v_{i-1}$ and $or(\rho_{i})\neq v_{i}$,
then by our discussion at the beginning of this proof,
$r_{i+1}$ is to the southwest of $r_{i}$, the rectangle
$Rec(r_i,r_{i+1})$ intersects $\overline{cd}$, and
the point $z_i$ is to the northeast of $or(\rho_{i})$. Further,
since $r_{i+1}$ is to the southwest of $r_{i}$,
by Lemma \ref{lem:80}, $v_{i-1}$ is to the right of
$Rec(r_i,r_{i+1})$ and thus to the right of $z_i$. Since $v_{i-1}$ is to the
right of $z_i$ and both
$v_{i-1}$ and $z_i$ are on $\overline{cd}$, $v_{i-1}$ is to the
northeast of $z_i$.  Therefore, $or(\rho_{i-1})$
($=v_{i-1}$) is to the northeast of $or(\rho_{i})$.

\item
If $or(\rho_{i-1})\neq v_{i-1}$ and $or(\rho_{i})= v_{i}$,
then the analysis is somewhat similar to the second case.

\item
If $or(\rho_{i-1})\neq v_{i-1}$ and $or(\rho_{i})\neq v_{i}$,
then $r_{i-1}$ is to the northeast of $r_i$ and $r_{i}$ is to the
northeast of $r_{i+1}$. Hence,
the rectangle $Rec(r_{i-1},r_i)$ is to the northeast
of $Rec(r_i,r_{i+1})$. Since $or(\rho_{i-1})$ is on
$Rec(r_{i-1},r_i)$ and $or(\rho_{i})$ is on $Rec(r_i,r_{i+1})$,
we also obtain that $or(\rho_{i-1})$ is to
the northeast of $or(\rho_{i})$.

\end{enumerate}

The lemma thus follows.
\end{proof}
\lemmaspace
\begin{lemma}\label{lem:100}
Consider any root $r_i\in R$ with $1\leq i\leq k$. For any ray
$\rho_{j}$, if $j\leq i-1$ and $\rho_{j}$ is vertical,
then $\rho_{j}$ is to the right of $r_i$; if $j\geq i$ and $\rho_{j}$ is
horizontal, then $\rho_{j}$ is below $r_i$.
\end{lemma}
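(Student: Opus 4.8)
The plan is to derive Lemma~\ref{lem:100} directly from two facts already in hand: Lemma~\ref{lem:1000}(3), which locates each origin $or(\rho_\ell)$ relative to the two roots $r_\ell,r_{\ell+1}$ that define it, and Lemma~\ref{lem:90}, which asserts that the origins $or(\rho_1),or(\rho_2),\ldots,or(\rho_{k-1})$ descend along a staircase. First I would record the transitive consequence of Lemma~\ref{lem:90} (a routine induction on the number of consecutive steps): for all indices $1\le j_1\le j_2\le k-1$, the point $or(\rho_{j_1})$ is to the northeast of $or(\rho_{j_2})$, so in particular $x(or(\rho_{j_2}))\le x(or(\rho_{j_1}))$ and $y(or(\rho_{j_2}))\le y(or(\rho_{j_1}))$. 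I would also recall from Observation~\ref{obser:40} that every $\rho_j$ is either a horizontal half-line going east, in which case all of its points have $y$-coordinate $y(or(\rho_j))$, or a vertical half-line going south, in which case all of its points have $x$-coordinate $x(or(\rho_j))$.

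For the first assertion, fix $i$ and suppose $\rho_j$ is vertical with $j\le i-1$ (so $j\ge 1$ forces $i\ge 2$, and the consecutive pair $r_{i-1},r_i$ exists). Since every point of $\rho_j$ has $x$-coordinate $x(or(\rho_j))$, it suffices to show $x(r_i)\le x(or(\rho_j))$. Applying Lemma~\ref{lem:1000}(3) to the pair $r_{i-1},r_i$ tells us that $or(\rho_{i-1})$ lies to the right of $r_i$, i.e.\ $x(r_i)\le x(or(\rho_{i-1}))$; and since $j\le i-1$, the staircase fact above gives $x(or(\rho_{i-1}))\le x(or(\rho_j))$. Chaining these two inequalities shows $\rho_j$ lies to the right of $r_i$ (when $j=i-1$ the second inequality is an equality and there is nothing to chain).

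For the second assertion, fix $i$ and suppose $\rho_j$ is horizontal with $j\ge i$; since $\rho_j$ exists we have $j\le k-1$, hence $i\le k-1$, so the consecutive pair $r_i,r_{i+1}$ is well defined. Every point of $\rho_j$ has $y$-coordinate $y(or(\rho_j))$, so it suffices to show $y(or(\rho_j))\le y(r_i)$. Applying Lemma~\ref{lem:1000}(3) to $r_i,r_{i+1}$ tells us that $or(\rho_i)$ lies below $r_i$, i.e.\ $y(or(\rho_i))\le y(r_i)$; and since $i\le j$, the staircase fact gives $y(or(\rho_j))\le y(or(\rho_i))$. Chaining these yields $\rho_j$ below $r_i$, as claimed.

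I do not anticipate a genuine obstacle: the statement is a short corollary of the two preceding lemmas. The only things needing care are the index bookkeeping --- invoking Lemma~\ref{lem:1000}(3) for the correct consecutive pair ($(r_{i-1},r_i)$ in the first case, $(r_i,r_{i+1})$ in the second), and observing that $i=1$ makes the first clause vacuous while $i=k$ makes the second clause vacuous --- and noting that the degenerate ``middle-segment'' case, where $or(\rho_\ell)$ is a point on an edge of $Rec(r_\ell,r_{\ell+1})$ rather than $v_\ell$, needs no separate treatment because Lemmas~\ref{lem:1000} and~\ref{lem:90} are already stated to cover it.
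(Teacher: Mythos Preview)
Your proof is correct and follows essentially the same approach as the paper: invoke Lemma~\ref{lem:1000}(3) to anchor $or(\rho_{i-1})$ (resp.\ $or(\rho_i)$) relative to $r_i$, then cascade via the staircase of Lemma~\ref{lem:90} to reach $or(\rho_j)$. The paper's own proof is terser---it argues only the horizontal case and declares the vertical case analogous---whereas you spell out both directions and the index edge cases explicitly.
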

\lemmaspace
\begin{proof}
WLOG, assume $i<k$.  Consider the ray $\rho_i$, which is on
$B(r_i,r_{i+1})$. By Lemma \ref{lem:1000}, the origin $or(\rho_i)$ is
below $r_i$. By Lemma \ref{lem:90}, for any ray $\rho_j$ with $j\geq
i$, $or(\rho_j)$ is below $or(\rho_i)$ and thus is below
$r_i$. Hence, if $\rho_j$ is horizontal, then
$\rho_j$ must be below $r_i$.

By an analogous analysis, we can show that if $j\leq i-1$ and $\rho_j$ is
vertical, then $\rho_j$ is to the right of $r_i$. We omit the details.
The lemma thus follows.
\end{proof}

Note that in any SPM, a common boundary of two adjacent cells $C(r)$ and
$C(r')$ is a subset of the bisector $B(r,r')$.

For any two consecutive roots $r_{i-1}$ and $r_i$ in $R$, $2\leq i\leq
k$, the vertex $v_{i-1}$ divides $B(r_{i-1},r_i)$ into two portions;
we denote by $B_{bay}(r_{i-1},r_i)$ the portion that goes inside $\bay$
following $v_{i-1}$.
A key to building $\Vor(bay(\overline{cd}))$ is to compute the
interactions among all $B_{bay}(r_{i-1},r_i)$'s, for $i=2,3,\ldots,k$, inside
$bay(\overline{cd})$. Note that if $v_{i-1}$ is on a ray of
$B(r_{i-1},r_i)$, then $B_{bay}(r_{i-1},r_i)$ is the ray $\rho_{i-1}$;
otherwise, $v_{i-1}$ is on $B_M(r_{i-1},r_i)$ (i.e., the middle segment
of $B(r_{i-1},r_i)$), and
$B_{bay}(r_{i-1},r_i)$ consists of a portion of $B_M(r_{i-1},r_i)$ in
$Rec(r_{i-1},r_i)$ (i.e., the line segment
$\overline{v_{i-1}or(\rho_{i-1})}$) and the ray $\rho_{i-1}$.
Lemma \ref{lem:800} below shows that the portion
of $B_M(r_{i-1},r_i)$ which is inside
$bay(\overline{cd})$ will appear in $\SPM(\calF)$ (and
thus in $\Vor(bay(\overline{cd}))$), implying
that we can simply keep it when computing
$\Vor(bay(\overline{cd}))$ and we only need to further deal with
the rays $\rho_{i}$ for $i=1,2,\ldots,k-1$.
Thus, dealing with the rays $\rho_{i}$ is the main issue
of our algorithm (as discussed in Section \ref{sec:algoOverview}).

\lemmaspace
\begin{lemma} \label{lem:800}
For any two consecutive roots $r_{i-1}$ and $r_i$ in $R$, $2\leq
i\leq k$, if $v_{i-1}$ lies on $B_M(r_{i-1},r_i)$, then the
portion of $B_M(r_{i-1},r_i)$ inside $bay(\overline{cd})$ will
appear in $\Vor(bay(\overline{cd}))$.
\end{lemma}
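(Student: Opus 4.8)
The plan is to prove a statement slightly stronger than the one asked for: that \emph{every} point $q$ on the segment $\sigma:=\overline{v_{i-1}or(\rho_{i-1})}$ is a boundary point of $\Vor(bay(\overline{cd}))$. By the decomposition of $B_{bay}(r_{i-1},r_i)$ recalled just before the lemma, together with Observation~\ref{obser:40}, the segment $\sigma$ is exactly the portion of $B_M(r_{i-1},r_i)$ lying inside $bay(\overline{cd})$, so this suffices. To show $q\in\sigma$ is a boundary point, I would exhibit two distinct roots of $R$, namely $r_{i-1}$ and $r_i$, each of which realizes the $L_1$ geodesic distance in $\calF$ from $s$ to $q$.

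First I would fix the local geometry. Since $v_{i-1}$ lies on the middle segment $B_M(r_{i-1},r_i)$, Lemma~\ref{lem:relativepositions}(3) forces $r_i$ to be southwest of $r_{i-1}$ and $\overline{cd}$ to cross $Rec(r_{i-1},r_i)$; by Observation~\ref{obser:40}, $or(\rho_{i-1})$ lies on the right edge or the bottom edge of $Rec(r_{i-1},r_i)$. Hence $\sigma$ runs from $v_{i-1}$ to the boundary of $Rec(r_{i-1},r_i)$, has slope $+1$ or $-1$, is monotone in both coordinates, and lies entirely inside $Rec(r_{i-1},r_i)$ and on the bay side of $\overline{cd}$. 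In particular $x(r_i)\le x(q)\le x(r_{i-1})$ and $y(r_i)\le y(q)\le y(r_{i-1})$ for every $q\in\sigma$, so the $L_1$-shortest (monotone staircase) connections from $r_{i-1}$ and from $r_i$ to $q$ have $L_1$-lengths $|r_{i-1}q|_1$ and $|r_iq|_1$ respectively, with no ``detour''.

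Next, for a fixed $q\in\sigma$ I would establish $d(q)=d(r_{i-1})+|r_{i-1}q|_1=d(r_i)+|r_iq|_1$, where $d(\cdot)$ denotes the length of an $L_1$ shortest path from $s$ in $\calF$ (consistent with its use for the roots of $R$ and with $\SPM(\calM)$); the equality of the two right-hand sides is immediate from $q\in B(r_{i-1},r_i)$. For ``$\le$'' I would produce, for each of $r_{i-1}$ and $r_i$, a monotone path to $q$ of the stated $L_1$-length that stays in $\calF$: the portion before the path meets the gate stays in the star-shaped cell $C(r_{i-1})$ (resp. $C(r_i)$) because the crossing point can be shown, using the monotonicity from the previous paragraph together with $q\in Rec(r_{i-1},r_i)$, to lie on the boundary piece $\overline{v_{i-2}v_{i-1}}$ (resp. $\overline{v_{i-1}v_i}$) of that cell, and the portion after the gate stays in the simple obstacle-free polygon $bay(\overline{cd})$. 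For ``$\ge$'', take any $s$-$q$ path and let $p'$ be its last point on $\overline{cd}$; its length is at least $d(p')+|p'q|_1$, and since $p'$ lies in some cell $C(r_j)$ we have $d(p')=d(r_j)+|r_jp'|_1$, so its length is at least $d(r_j)+|r_jp'|_1+|p'q|_1$. For $j\in\{i-1,i\}$ this is at least $d(r_{i-1})+|r_{i-1}q|_1$ by the triangle inequality. For any other $j$, the point $p'$ lies outside the ``window'' that $Rec(r_{i-1},r_i)$ cuts on $\overline{cd}$ — this is precisely what Lemmas~\ref{lem:80}, \ref{lem:90} and~\ref{lem:100}, combined with the relative-position facts of Lemmas~\ref{lem:700} and~\ref{lem:relativepositions}, are set up to control — and one shows the resulting $L_1$-detour around a corner of $Rec(r_{i-1},r_i)$ is at least as long as any possible saving, so again $d(r_j)+|r_jp'|_1+|p'q|_1\ge d(r_{i-1})+|r_{i-1}q|_1$. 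With $d(q)$ attained by the two distinct roots $r_{i-1}$ and $r_i$, the point $q$ lies on the common boundary of their Voronoi regions in $\Vor(bay(\overline{cd}))$, so all of $\sigma$ appears in $\Vor(bay(\overline{cd}))$.

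The hard part is the ``$\ge$'' step: pinning down where the connecting segments cross the gate (so that they remain inside the star-shaped $\SPM(\calM)$ cells) and excluding a cheaper entry into $bay(\overline{cd})$ through a far gate point. This is exactly where all the positional observations of Section~\ref{sec:observations} — the relative positions of consecutive roots, the facts that $v_{i-2}$ is to the right of and $v_i$ is below $Rec(r_{i-1},r_i)$, and the northeast-to-southwest staircase of the ray origins — must be used together with the star-shapedness of the cells of $\SPM(\calM)$; the remainder is routine bookkeeping.
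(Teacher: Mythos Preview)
Your approach is genuinely different from the paper's, and it carries real gaps that the paper's route avoids.

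The paper never computes $d(q)$ at all. Instead it argues structurally: it suffices to show that no portion of $B(r_j,r_h)$ with $\{r_j,r_h\}\neq\{r_{i-1},r_i\}$ that appears in the diagram can meet $B_M'$. For the ``initial'' bisectors $B_{bay}(r_{j-1},r_j)$ with $j\le i-1$, the paper observes that each such curve is $x$-monotone with leftmost point $v_{j-1}$; since $v_{j-1}$ lies to the northeast of $v_{i-2}$, and Lemma~\ref{lem:80} places $v_{i-2}$ strictly to the right of $Rec(r_{i-1},r_i)$, the whole curve stays strictly to the right of $B_M'$. The symmetric argument (using $v_i$ below $Rec(r_{i-1},r_i)$) handles $j\ge i+1$. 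Bisectors of non-consecutive pairs in $R_1$ or in $R_2$ arise only at intersections of the initial ones and inherit directions that keep them away from $B_M'$; and a bisector between some $r_j\in R_1$ and $r_t\in R_2$ with $\{r_j,r_t\}\neq\{r_{i-1},r_i\}$ can appear only after $B_{bay}(r_{i-1},r_i)$ is blocked, which cannot happen along $B_M'$. No path construction, no distance comparison.

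Your metric argument, by contrast, commits you to proving $d(q)=d(r_{i-1})+|r_{i-1}q|_1$ for every $q\in\sigma$, and two of the steps you flag as routine are not. In the ``$\le$'' direction you assert that a monotone path from $r_{i-1}$ through some $p'\in\overline{v_{i-2}v_{i-1}}$ to $q$ lies in $\calF$; star-shapedness of $C(r_{i-1})$ handles $\overline{r_{i-1}p'}$, but you give no reason the post-gate monotone piece stays inside the simple polygon $\bay$ --- a monotone staircase between two points of a simple polygon need not lie in it, and nothing you cite excludes $\partial$ from cutting across. In the ``$\ge$'' direction for $j\notin\{i-1,i\}$ you assert that ``the $L_1$-detour around a corner of $Rec(r_{i-1},r_i)$ is at least as long as any possible saving,'' but Lemmas~\ref{lem:80}--\ref{lem:100} only locate $p'$ and the ray origins; they do not by themselves give $d(r_j)+|r_jp'|_1+|p'q|_1\ge d(r_{i-1})+|r_{i-1}q|_1$. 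Since $d(r_j)+|r_jp'|_1=d(p')<d(r_{i-1})+|r_{i-1}p'|_1$ whenever $p'\in C(r_j)$, you genuinely have to prove that the extra $|p'q|_1$ incurred by entering outside the window compensates --- exactly the comparison the paper's bisector-separation argument sidesteps entirely.
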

\lemmaspace
\begin{proof}
Consider two consecutive roots $r_{i-1}$ and $r_i$ in $R$, $2\leq
i\leq k$, with $v_{i-1}$ lying on $B_M(r_{i-1},r_i)$.

Denote by $B_M'$ the portion of $B_M(r_{i-1},r_i)$ inside
$bay(\overline{cd})$. Recall that $B_M(r_{i-1},r_i)$ is an open
segment that does not contain its endpoints and is strictly inside
$Rec(r_{i-1},r_i)$.  To prove the lemma, it suffices to show
that for any two roots $r_j$ and $r_h$ in $R$ with $\{r_j,r_h\}\neq
\{r_{i-1},r_i\}$, if a portion of $B(r_j,r_h)$ appears in
$\SPM(\calF)$, then that portion does not intersect $B_M'$.

By Lemma \ref{lem:700}, $r_i$ may be to the southeast, or northwest,
or southwest of $r_{i-1}$. Since $\overline{cd}$ is positive-sloped,
if $r_i$ is to the northwest or southeast of $r_{i-1}$, then
$\overline{cd}$ cannot intersect the rectangle $Rec(r_{i-1},r_i)$
and thus $v_{i-1}$ cannot lie on $B_M(r_{i-1},r_i)$.
Therefore, the only possible case is that $r_i$ is to the southwest of
$r_{i-1}$.

First, we assume $i-1\geq 2$ and consider the root $r_{i-2}$.
We discuss the possible relative positions of $r_{i-2}$ with respect
to $r_{i-1}$. Recall that the bisector portion
$B_{bay}(r_{i-2},r_{i-1})$ either is $\rho_{i-2}$ or consists of
$\overline{v_{i-2}or(\rho_{i-2})}$ and $\rho_{i-2}$. Note that in
either case, when moving along $B_{bay}(r_{i-2},r_{i-1})$ from
$v_{i-2}$, $B_{bay}(r_{i-2},r_{i-1})$ is monotonically increasing in the
$x$-coordinates. Hence, $v_{i-2}$ is a leftmost point of
$B_{bay}(r_{i-2},r_{i-1})$.
Since $r_i$ is to the southwest of
$r_{i-1}$, by Lemma \ref{lem:80}, $v_{i-2}$ is to the right of
$Rec(r_{i-1},r_i)$ and thus is strictly to the right of $B_M'$.
Hence, $B_{bay}(r_{i-2},r_{i-1})$ cannot intersect $B_M'$.


%
%
%

For any pair of consecutive roots $r_{j-1}$ and $r_j$ in $R$,
$2\leq j\leq i-2$, similarly, when moving from $v_{j-1}$ along
$B_{bay}(r_{j-1},r_j)$, $B_{bay}(r_{j-1},r_j)$ is monotonically increasing in the
$x$-coordinates. Since $v_{i-2}$ is strictly to the right of $B_M'$ and
$v_{j-1}$ is to the right of $v_{i-2}$, $B_{bay}(r_{j-1},r_j)$
cannot intersect $B_M'$.

Let $R_1=\{r_1,r_2,\ldots,r_{i-1}\}$ and
$R_2=\{r_{i},r_{i+1},\ldots,r_{k}\}$. (Note that since $R$ may be a
multi-set, $R_1$ and $R_2$ possibly contain the same physical
root, but this is not important to our analysis.)

For any two different pairs of consecutive roots $r_{j-1},r_j$ and
$r_{t-1},r_t$ with $2\leq j\leq i-1$ and $2\leq t\leq i-1$,
it is possible that $B_{bay}(r_{j-1},r_j)$ and $B_{bay}(r_{t-1},r_t)$
intersect in $\SPM(\calF)$; if that happens, then let $B'$ be the
resulting bisector.  It is not difficult to see that $B'$ must be going
in a direction between the original directions of $B_{bay}(r_{j-1},r_j)$ and
$B_{bay}(r_{t-1},r_t)$. Since neither $B_{bay}(r_{j-1},r_j)$ nor
$B_{bay}(r_{t-1},r_t)$ intersects $B_M'$, $B'$ cannot intersect $B_M'$.
We can further consider the possible intersection between $B'$ and
the bisector of another two roots in $R_1$ in the manner as above, and
show likewise that the new bisector thus resulted cannot intersect $B_M'$.

The above argument shows that for any two
roots $r_j$ and $r_t$ in $R_1$ such that a portion of $B(r_j,r_t)$ appears in
$\Vor(bay(\overline{cd}))$, that portion does not intersect
$B_M'$. By a similar argument, we can also show that for any two
roots $r_j$ and $r_t$ in $R_2$ such that a portion of $B(r_j,r_t)$ appears in
$\Vor(bay(\overline{cd}))$, that portion does not intersect $B_M'$.

It remains to show that for any two roots $r_j\in R_1$ and $r_t\in
R_2$ such that $\{r_j,r_t\}\neq \{r_{i-1},r_i\}$ and a portion of
$B(r_j,r_t)$ appears in $\Vor(bay(\overline{cd}))$,
that portion does not intersect $B_M'$. Note that the case of $B(r_j,r_t)$
(partially) appearing in $\Vor(bay(\overline{cd}))$ can occur only after
$B_{bay}(r_{i-1},r_i)$ is ``blocked" by an intersection between
$B_{bay}(r_{i-1},r_i)$ and the bisector of two roots in $R_1$ or two roots in $R_2$.
Since the bisector of any two roots in $R_1$ or
any two roots in $R_2$ cannot intersect $B_M'$,
the portion of $B(r_j,r_t)$ appearing in
$\Vor(bay(\overline{cd}))$ cannot intersect $B'_M$ either.

The lemma thus follows.
\end{proof}

The observations presented above help determine the behaviors of the bisectors for
the roots in $R$ (e.g., the properties of the rays
$\rho_1,\rho_2,\ldots,\rho_{k-1}$), which are crucial to constructing
$\Vor(\bay)$. They form a basis
for both showing the correctness and the efficiency of our algorithm
in Section \ref{sec:algorithm}. For example, Lemma \ref{lem:90} can
help conduct a set of ray shooting operations in linear time, and Lemma
\ref{lem:800} allows us to decompose the problem into certain
subproblems with good properties.

\subsection{The Algorithm for Computing $\Vor(\bay)$}
\label{sec:algorithm}

In this subsection, we present our algorithm for computing
$\Vor(bay(\overline{cd}))$, i.e., computing the Voronoi region
$\Vor(r)$ for each root $r\in R$.

As shown in \cite{ref:MitchellAn89,ref:MitchellL192}, a key property
of the problem in the $L_1$ metric is: There exists
an SPM such that each edge of the SPM is horizontal, or vertical, or
of a slope $1$ or $-1$.
As shown below, the curves
involved in specifying $\Vor(bay(\overline{cd}))$ consist of only
line segments of slopes $0$, $\infty$, and $-1$ (there is no $+1$,
which is due to the assumption that $\overline{cd}$ is
positive-sloped). A line (segment) is said to be {\em ($-1$)-sloped}
if its slope is $-1$. Our algorithm needs to perform some vertical,
horizontal, and ($-1$)-sloped ray shooting queries, whose total number
is $O(k)$.
By exploiting some properties of our
problem shown in Section \ref{sec:observations}, we conduct all ray
shootings in a global manner in totally $O(n'+k)$ time.

\begin{algorithm}
\caption{Computing a shortest path map for $\bay$}
\label{algo:10}
{\footnotesize
\SetAlgoVlined
\KwIn{$\bay$, $R=\{r_1,r_2,\ldots,r_k\}$, and $\SPM(\calM)$ vertices
$v_1,v_2,\ldots,v_{k-1}$.}
\KwOut{A shortest path map on $\bay$ with respect to the source point $s$.}
\BlankLine
\tcc{Preprocessing}
Compute the ray set $\Psi=\{\rho_1,\rho_2,\ldots,\rho_{k-1}\}$ \;
Compute the line segment $\overline{v_ior(\rho_i)}$ for each $1\leq
i\leq k-1$ if $v_i\neq or(\rho_i)$ \;
Compute the horizontal visibility map $\HM(\bay)$ and the vertical visibility map
$\VM(\bay)$ \;
Compute the trapezoid in $\HM(\bay)$ that contains $or(\rho_i)$ for each
$1\leq i\leq k-1$ \;
\tcc{The main algorithm}
$p^*\leftarrow c$, $S\leftarrow \emptyset$,
$Q\leftarrow \{\rho_1,\rho_2,\ldots,\rho_{k-1}\}$ \tcc*[l]{$Q$ is a
queue storing the rays.}
\While{$Q$ is not empty}
{
   Consider the first ray $\rho$ in $Q$ and remove it from $Q$ \tcc*[l]{Assume $\rho$ is on
   $B(r_j,r_i)$ with $i>j$.}
   \eIf{$\rho$ is vertical}
   {Push $\rho$ onto the top of $S$, and exit the current loop \;
   }
   ( \tcc*[h]{$\rho$ is horizontal.})
   {
    Compute the target point $tp(\rho)$ \;
    \eIf{$S$ is empty}
     {The Voronoi region $\Vor(r_j)$ is determined with
     $\overline{or(\rho)tp(\rho)}$ \;
      $p^*\leftarrow tp(\rho)$, and exit the current loop \;
     }
     ( \tcc*[h]{$S$ is not empty; assume $\rho' \subset B(r_t,r_j)$ with
     $j>t$ is the
     ray at the top of $S$.})  
     {Scan $\partial(p^*,tp(\rho))$ to compute the target points on
     $\partial(p^*,tp(\rho))$ of the rays in $S$ \;
      \eIf{$tp(\rho')$ is before $tp(\rho)$ (i.e., $tp(\rho')$ has
      been computed)}
      {
       Determine the Voronoi
       regions for the roots defining the rays in $S$ \;
       Pop all rays out of $S$ \;
       $p^*\leftarrow tp(\rho)$, and exit the current loop \;
      }
      ( \tcc*[h]{$tp(\rho')$ is not before $tp(\rho)$ (i.e.,
      $tp(\rho')$ has not been computed).})  
      {
       Determine the Voronoi region $\Vor(r_j)$ \;
       Let $p$ be the intersection of $\rho$ and $\rho'$, and $q$ be the
       intersection of the horizontal line through $r_{i}$ and the
       vertical line through $r_t$; let $p'$ be the other intersection of
       $B_M(r_t,r_i)$ and the boundary of $Rec(p,q)$ than $p$\;
       Move from $p$ along $\overline{pp'}$ in $\HM(\bay)$ until either $p'$ or $\partial$
       is encountered first\;
       \eIf{$\partial$ is encountered (say, at the point $z$)}
       {
        Scan $\partial(tp(\rho),z)$ to compute the target points on
        $\partial(tp(\rho),z)$ of the rays in $S$ \;
        Determine the Voronoi regions for the roots defining the rays in $S$ \;
        Pop all rays out of $S$ \;
        $p^*\leftarrow z$, and exit the current loop \;
       }
      ( \tcc*[h]{$\partial$ is not encountered.})  
       {
        Pop $\rho'$ out of $S$ \;
        \eIf{$p'$ is on the bottom edge of $Rec(p,q)$}
        {
         Push the ray originating at $p'$ and going south onto the top
         of $S$ \;
        }
        ( \tcc*[h]{$p'$ is on the right edge of $Rec(p,q)$.})  
        {
         Add the ray originating at $p'$ and going east to the front
         of $Q$ \;
        }
         $p^*\leftarrow tp(\rho)$, and exit the current loop \;
       }
      }
     }
   }
}
For each $r_i\in\Psi$, compute the SPM on the Voronoi region $\Vor(r_i)$ with
respect to $r_i$\;
}
\end{algorithm}

The pseudo-code of Algorithm \ref{algo:10} summarizes the entire algorithm.

Before describing the main algorithm, we
discuss some preprocessing work as well as
some basic algorithmic methods that will be used later in the main algorithm.

\subsectionspace
\subsubsection{Preliminaries and Preprocessing}
\label{prelim-and-prepproc}

By Lemma \ref{lem:800}, for any two consecutive roots $r_{i-1}$ and
$r_i$ in $R$, $2\leq i\leq k$, if the middle
segment $B_M(r_{i-1},r_i)$ of their bisector intersects $\overline{cd}$ (at
$v_{i-1}$), then we can ``separately" process the portion of
$B_M(r_{i-1},r_i)$ inside $Rec(r_{i-1},r_i)$, as follows. Let
$\partial$ be the boundary of $bay(\overline{cd})$ minus
$\overline{cd}$, i.e., $\partial$ consists of all edges of $\bay$
except $\overline{cd}$.

\begin{figure}[t]
\begin{minipage}[t]{\linewidth}
\begin{center}
\includegraphics[totalheight=1.5in]{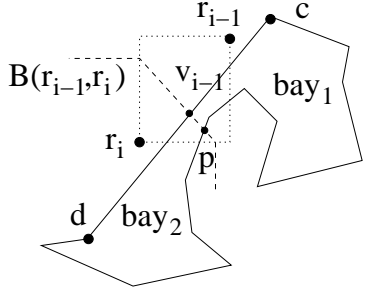}
\caption{\footnotesize Illustrating an example of $B_M(r_{i-1},r_i)$
intersecting both $\overline{cd}$ (at $v_{i-1}$) and $\partial$ (at
$p$).  The line segment $\overline{v_{i-1}p}$ divides
$\bay$ into $bay_1$ and $bay_2$.}
\label{fig:prebay}
\end{center}
\end{minipage}
\vspace*{-0.15in}
\end{figure}

Clearly, $v_{i-1}$ divides $B_M(r_{i-1},r_i)$ into two portions;
one portion does not contain any point in $\bay$ and the other
contains some points in $\bay$. Denote by $B_M'(r_{i-1},r_i)$
the portion that contains some points in $\bay$. Thus,
$B_M'(r_{i-1},r_i)$ is a line segment and $v_{i-1}$ is one of its
endpoints (and $or(\rho_{i-1})$ is the other endpoint).
We first determine whether $B_M'(r_{i-1},r_i)$ intersects
$\partial$, by performing a $-1$-sloped ray shooting operation.
Specifically, we shoot a ray $\rho$ originating at $v_{i-1}$
and passing through the other endpoint of
$B_M'(r_{i-1},r_i)$. If the length of the portion of $\rho$ between
$v_{i-1}$ and the first point $p$ on $\partial$ hit by $\rho$ is larger
than the length of $B_M'(r_{i-1},r_i)$, then $B_M'(r_{i-1},r_i)$ does not intersect
$\partial$, and we do nothing. Otherwise, $B_M'(r_{i-1},r_i)$ intersects
$\partial$ (at the point $p$).
By Lemma \ref{lem:800}, the line
segment $\overline{v_{i-1}p}$ appears in $\SPM(\calF)$.  Also,
$\overline{v_{i-1}p}$ partitions $bay(\overline{cd})$
into two simple polygons (see Fig.~\ref{fig:prebay}); one
polygon contains $\overline{cv_{i-1}}$ as an edge, which we denote
as $bay_1$, and we denote the other polygon as $bay_2$. Let
$R_1=\{r_1,r_2,\ldots,r_{i-1}\}$ and
$R_2=\{r_{i},r_{i+1},\ldots,r_{k}\}$.  (Note that since $R$ may be a
multi-set, $R_1$ and $R_2$ possibly refers to the same
physical root, but this is not important to our algorithm.) Since
$\overline{v_{i-1}p}$ is in $\Vor(\bay)$, it is not difficult to
see that for any point $q$ in $bay_1$, there is a root $r\in R_1$
such that a shortest path from $s$ to $q$ goes through $r$.
Similarly, for any point $q$ in $bay_2$, there is a root $r\in
R_2$ such that a shortest $s$-$q$ path goes through $r$.
This implies that we can divide the original problem of computing
$\Vor(bay(\overline{cd}))$ on $bay(\overline{cd})$ and $R$ into two
subproblems of computing $\Vor(bay_1)$ on $bay_1$ and $R_1$ and
computing $\Vor(bay_2)$ on $bay_2$ and $R_2$.

If we process each pair of consecutive roots in $R$ as above, then
the original problem may be divided into multiple subproblems, each
of which has the following property: For any pair of consecutive
roots $r_{i-1}$ and $r_i$ in the corresponding root subset of $R$, if
$B_M(r_{i-1},r_i)$ intersects $\overline{cd}$, then
$B'_M(r_{i-1},r_i)$ does not intersect $\partial$ and is contained in
the corresponding subpolygon of $bay(\overline{cd})$; further,
$B'_M(r_{i-1},r_i)$ is in $\Vor(\bay)$ and has been computed.

To perform the above process, a key is to derive an efficient
method for the $-1$-sloped ray shooting operations. For this,
we choose to check all pairs of consecutive roots in $R$ in the order of
$r_1,r_2,\ldots,r_k$. In this way, it is easy to see that the ray
shootings are conducted such that the origins of the rays are
sorted along $\overline{cd}$ from $c$ to $d$. This is summarized by
the next observation.

\lemmaspace
\begin{observation}\label{obser:50}
The preprocessing conducts $O(k)$ $-1$-sloped ray shooting operations
that are organized such that the origins of all rays are on $\overline{cd}$
ordered from $c$ to $d$.
\end{observation}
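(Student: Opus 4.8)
The plan is simply to unwind the description of the preprocessing loop and read off the three claimed facts. Recall that the preprocessing examines the consecutive pairs $(r_{i-1},r_i)$ of roots in $R$ for $i=2,3,\ldots,k$, in this order, and for the pair $(r_{i-1},r_i)$ it performs a $-1$-sloped ray shooting only when the middle segment $B_M(r_{i-1},r_i)$ of $B(r_{i-1},r_i)$ crosses $\overline{cd}$; in that case the ray originates at the crossing point $v_{i-1}$ (the $\SPM(\calM)$ vertex separating $C(r_{i-1})$ from $C(r_i)$) and passes through the other endpoint $or(\rho_{i-1})$ of $B'_M(r_{i-1},r_i)$. Since there are only $k-1$ consecutive pairs and each contributes at most one ray shooting, the total number of $-1$-sloped ray shootings is at most $k-1=O(k)$; this gives the count.

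For the ordering, note that whenever a ray is shot for $(r_{i-1},r_i)$ its origin is $v_{i-1}$, which lies on $\overline{cd}$. By the way the indices are assigned, $v_1,v_2,\ldots,v_{k-1}$ are listed precisely in the order in which they occur along $\overline{cd}$ from $c$ to $d$; since the loop runs over $i=2,3,\ldots,k$ in increasing order, the origins actually used form a subsequence of $v_1,v_2,\ldots,v_{k-1}$ taken in this same order, and hence they appear along $\overline{cd}$ sorted from $c$ to $d$, as claimed.

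Finally I would verify that each shot ray is genuinely $-1$-sloped, so that the operations really are $-1$-sloped ray shootings. If $B_M(r_{i-1},r_i)$ meets $\overline{cd}$ then, since $\overline{cd}$ is positive-sloped, $\overline{cd}$ must intersect the rectangle $Rec(r_{i-1},r_i)$; combining this with Lemma~\ref{lem:700} (which excludes $r_i$ lying to the northeast of $r_{i-1}$) and Lemma~\ref{lem:relativepositions}, the only remaining possibility is that $r_i$ lies to the southwest of $r_{i-1}$. In that configuration the middle segment of $B(r_{i-1},r_i)$ --- the intersection of the bisector with $Rec(r_{i-1},r_i)$, which by Observation~\ref{obser:20} has slope $1$ or $-1$ --- has slope $-1$ (a slope-$+1$ middle segment arises only when $r_i$ lies to the southeast or northwest of $r_{i-1}$). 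Both $v_{i-1}$ and $or(\rho_{i-1})$ lie on this middle segment, so the ray through them is $-1$-sloped. I do not expect any real difficulty here: the statement is essentially bookkeeping about the order in which the preprocessing visits the pairs of $R$, and the only mildly substantive point, the slope of the shot rays, follows immediately from the relative-position lemmas already established; the one thing to be careful about is to invoke those lemmas to rule out the cases in which the middle segment would fail to be $-1$-sloped (or would not meet $\overline{cd}$ at all).
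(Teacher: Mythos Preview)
Your proposal is correct and follows essentially the same reasoning the paper has in mind: the paper gives no formal proof of this observation, merely noting that ``we choose to check all pairs of consecutive roots in $R$ in the order of $r_1,r_2,\ldots,r_k$'' and that ``it is easy to see'' the origins are then sorted along $\overline{cd}$. Your write-up simply expands this into an explicit argument, and your additional verification that the shot rays are genuinely $-1$-sloped (via Lemmas~\ref{lem:700} and~\ref{lem:relativepositions}) is a nice point the paper leaves implicit.
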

\lemmaspace

We show next that the ray
shootings for Observation \ref{obser:50} can be done in $O(n'+k)$ time.
Since the origins of all rays
in Observation \ref{obser:50} are sorted on $\overline{cd}$,
we can perform the ray shootings by computing the visible region of $bay(\overline{cd})$
 from $\overline{cd}$ along the direction of these rays.  This can be easily done by
a visibility algorithm on a simple polygon (e.g.,
\cite{ref:AtallahAn91,ref:JoeCo87,ref:LeeVi83}). Below, we
give a different algorithm for a more general problem; this more
general result is needed by the main algorithm.

Given a simple polygon $P$, the {\em horizontal visibility map} of
$P$ contains a horizontal line segment inside $P$ through each vertex of $P$,
extending as long as possible without properly crossing the
boundary of $P$ (such line segments are called the {\em diagonals};
see Fig.~\ref{fig:visMap}). The {\em vertical visibility map} with
vertical diagonals is defined similarly.  Each
region in a visibility map is a trapezoid (a triangle is a special
trapezoid). A visibility map of a simple polygon can be computed in
linear time \cite{ref:ChazelleTr91}.

\begin{figure}[t]
\begin{minipage}[t]{\linewidth}
\begin{center}
\includegraphics[totalheight=1.2in]{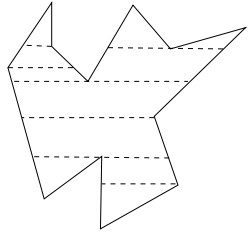}
\caption{\footnotesize Illustrating the horizontal visibility map of a simple
polygon.}
\label{fig:visMap}
\end{center}
\end{minipage}
\vspace*{-0.15in}
\end{figure}

For a ray $\rho$ with its origin in $bay(\overline{cd})$ (inside it
or on the boundary), the boundary point of $bay(\overline{cd})$ that
is not the origin $or(\rho)$ hit by $\rho$ first is called the {\em
target point} of $\rho$, denoted by $tp(\rho)$.
Recall that $\partial$ is the boundary of $\bay$ excluding the edge
$\overline{cd}$.
In the rest of this
paper, unless otherwise stated, a ray in our discussion always has
its origin in $\bay$ and its target point on $\partial$.

We say that $m$ parallel rays $\rho_1',\rho_2',\ldots,\rho_m'$
are {\em target-sorted} if we move from $c$ to $d$ (clockwise) on $\partial$,
we encounter the target points of these rays on $\partial$ in
the order of $tp(\rho'_1),tp(\rho'_2),\ldots,tp(\rho'_m)$.

Given a set of $m$ target-sorted parallel rays $\rho_1',\rho_2',\ldots,\rho_m'$
for $bay(\overline{cd})$ whose origins are in $\bay$ and whose target
points are on $\partial$, below we
present a {\em visibility map based approach} for computing their target
points in $O(n'+m)$ time (recall that $n'$ is the number of vertices of $\bay$).

WLOG, we assume that the rays are all horizontal.
We first compute the horizontal
visibility map of $bay(\overline{cd})$ in $O(n')$ time.
Then, starting from the vertex $c$, we scan $\partial$ and check each
edge $e$ of $\partial$ and the trapezoid $t(e)$ of the visibility map
bounded by $e$, to see whether the next
ray $\rho_i'$ (initially $i=1$) is in the trapezoid $t(e)$ and can hit the edge $e$.
Once the target point of the
ray $\rho_i'$ is found, we continue with the next ray $\rho_{i+1}'$.
Clearly, the time for computing all target points is $O(n'+m)$.
Thus, we have the following result.

\lemmaspace
\begin{lemma}\label{lem:rayshooting}
Given a set of $m$ target-sorted parallel rays for
$bay(\overline{cd})$ whose origins are in $\bay$ and whose target
points are on $\partial$, their target points can be computed in
$O(n'+m)$ time.
\end{lemma}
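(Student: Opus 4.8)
The plan is to prove Lemma~\ref{lem:rayshooting} by replacing the $m$ individual ray shootings with one coordinated linear-time scan that uses the horizontal visibility map of $\bay$ as a road map. Following the setup already in place, assume without loss of generality that the target-sorted rays $\rho_1',\ldots,\rho_m'$ are horizontal and directed east. First I would compute $\HM(\bay)$ in $O(n')$ time by the triangulation-based construction of \cite{ref:ChazelleTr91}. Recall that every face of $\HM(\bay)$ is a trapezoid whose top and bottom sides are horizontal diagonals or horizontal edges of $\bay$ and whose left and right sides are each a single sub-segment of one edge of $\partial$ (or of $\overline{cd}$). The structural fact I will rely on is that the eastward ray from any point $p\in\bay$ leaves $\bay$ through the right side of the last trapezoid of $\HM(\bay)$ it traverses, so $tp(\rho_i')$ is exactly the point where the horizontal line through $or(\rho_i')$ meets the edge of $\partial$ carrying that right side; thus computing a ray's target amounts to identifying its ``eastward exit edge''.

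From $\HM(\bay)$ I would then extract, in additional $O(n')$ time, the decomposition of $\bay$ into maximal regions of constant eastward exit edge (obtained by coalescing any two vertically adjacent trapezoids of $\HM(\bay)$ whose right sides lie on the same edge of $\partial$); since this process only merges faces, the decomposition still has $O(n')$ complexity, and each of its cells records one edge of $\partial$. The main pass is a merge: walk the chain $\partial$ once, clockwise from $c$ to $d$, while advancing a pointer $i$ through the target-sorted list. With the walk at an edge $e$ of $\partial$, I test in $O(1)$ time whether $tp(\rho_i')$ lies on $e$ --- equivalently, whether $or(\rho_i')$ lies in the decomposition cell whose exit edge is $e$ and which spans height $y(or(\rho_i'))$. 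If so, I output $tp(\rho_i')$ as the intersection of $e$ with the horizontal line through $or(\rho_i')$, advance $i$, and repeat (several consecutive rays may resolve on the same $e$); otherwise I advance the walk to the next edge of $\partial$. Because the list is target-sorted, the exit edges of $or(\rho_1'),or(\rho_2'),\ldots$ occur in clockwise order along $\partial$, so the boundary pointer never backtracks; it advances $O(n')$ times, the ray pointer $O(m)$ times, each step costs $O(1)$, and together with the two $O(n')$ preprocessing steps the total is $O(n'+m)$.

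For correctness I would establish, by induction along the merged scan, the invariant that $tp(\rho_1'),\ldots,tp(\rho_{i-1}')$ have already been output and lie on edges already passed, while $tp(\rho_i')$ lies on the current edge or a later one. Target-sortedness supplies the ``not earlier'' half of the invariant; the horizontally monotone, non-crossing structure of $\HM(\bay)$ (inherited by the derived decomposition) supplies the ``and we recognize it when we reach it'' half, since the decomposition cell containing $or(\rho_i')$ is unique, so the $O(1)$ cell-membership test can never mistake the true exit edge for an edge whose supporting line the ray merely crosses after having already left $\bay$. The degenerate cases --- rays through vertices of $\bay$, edges of $\partial$ parallel to the rays (which carry no eastward target point and are simply skipped), and several rays sharing one target edge or even one target point --- are routine once the invariant is in place; in particular, among the rays sharing an exit edge $e$, target-sortedness forces their target points, hence their origin heights, to be monotone along $e$, so they mesh with the vertically stacked cells of exit edge $e$ in $O(1)$ amortized time.

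I expect the real work to lie in two places. The first is setting up the eastward-exit-edge decomposition cleanly from $\HM(\bay)$ and proving it has $O(n')$ size and is computable in $O(n')$ time; the second, and the one I regard as the main obstacle, is justifying that the purely local $O(1)$ test is correct and implementable in lockstep with the walk --- a ray's horizontal line can cross the supporting lines of arbitrarily many edges of $\partial$, so the argument must pin down that, under the invariant, consulting only the decomposition cell incident to the current boundary edge (rather than performing a fresh point location for $or(\rho_i')$) is enough to decide the membership test, and that the boundary-edge pointer, the stacked-cell pointer, and the ray pointer can all be advanced monotonically. Both points hinge precisely on the non-crossing structure of $\HM(\bay)$ and the target-sorted hypothesis, which is why they cannot be skipped.
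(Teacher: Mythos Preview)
Your proposal is correct and takes essentially the same approach as the paper: compute $\HM(\bay)$, then perform a merge scan of $\partial$ (edge by edge, visiting the trapezoids each edge bounds) against the target-sorted ray list. The paper's version is shorter because it skips your coalescing step and simply tests whether the current ray's origin lies in the current trapezoid; since a horizontal ray exits $\bay$ through the right side of the single trapezoid of $\HM(\bay)$ that contains its origin, the ``main obstacle'' you flag---ruling out later boundary edges whose supporting lines the ray also crosses---does not actually arise.
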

\lemmaspace

For the ray shootings in Observation \ref{obser:50}, it is easy to see
that these rays are target-sorted. Thus,
by Lemma \ref{lem:rayshooting}, their target points
can be computed in $O(n'+k)$ time
(of course, these ray shootings can be done by using the visibility
algorithms in \cite{ref:AtallahAn91,ref:JoeCo87,ref:LeeVi83},
which do not compute a visibility map).
We present the above visibility map based technique because
our main algorithm in Section \ref{sec:main} will need it.

In addition, as part of the preprocessing for our main algorithm, we also
compute the horizontal visibility map $\HM(\bay)$ and the vertical
visibility map $\VM(\bay)$ of $\bay$. Further, for each $1\leq i\leq
k-1$, we compute the trapezoid of the horizontal visibility
map $\HM(\bay)$ that contains
the origin $or(\rho_i)$ of the ray $\rho_i$, in totally $O(n'+k)$
time, in the following way.

Recall that $or(\rho_i)$ is either $v_i$ or in the interior of $\bay$.
In the latter case, $or(\rho_i)$ is an endpoint of the line segment
$B'_M(r_i,r_{i+1}) = \overline{v_ior(\rho_i)}$ whose slope is $-1$, and
the position of $or(\rho_i)$
has been determined earlier by the $-1$-sloped ray shooting
operations. By Lemma \ref{lem:90}, all origins
$or(\rho_1),or(\rho_2),\ldots,or(\rho_{k-1})$ are ordered
from northeast to southwest. Further, $or(\rho_i)$'s
are all visible from $\overline{cd}$ along the
direction of slope $-1$.  Thus,
it is not difficult to show that if we visit the trapezoids of $\HM(\bay)$
by scanning the edges of $\partial$ from $c$ to $d$ and looking at
the trapezoids bounded by each edge, then the trapezoids containing such
$or(\rho_i)$'s are encountered in the same order as
$or(\rho_1),or(\rho_2),\ldots,or(\rho_{k-1})$.  This implies that we can use
a similar algorithm as for computing the target points of target-sorted
parallel rays on $\partial$ (i.e., scanning $\partial$ from $c$ to $d$ and
checking the trapezoids of $\HM(\bay)$ thus visited along $\partial$) to
find all the sought trapezoids, in $O(n'+k)$ time.

The above discussion leads to the following lemma.

\lemmaspace
\begin{lemma}\label{lem:preprocess}
The preprocessing on $\bay$ takes $O(n'+k)$ time.
\end{lemma}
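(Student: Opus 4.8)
The plan is to account, one group at a time, for the operations that make up the preprocessing — the four steps listed under ``Preprocessing'' in Algorithm~\ref{algo:10} together with the ($-1$)-sloped ray shootings described above — and to show that each group runs in $O(n'+k)$ time; summing the four bounds yields the lemma.

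First I would handle the construction of the ray set $\Psi=\{\rho_1,\ldots,\rho_{k-1}\}$ and of the segments $\overline{v_ior(\rho_i)}$. Given $R$ and the $\SPM(\calM)$ vertices $v_1,\ldots,v_{k-1}$, the direction of each $\rho_i$ (horizontal going east or vertical going south) and, whenever $v_i$ lies on a ray of $B(r_i,r_{i+1})$, its origin $v_i$, are all read off in $O(1)$ time from the relative position of the consecutive roots $r_i,r_{i+1}$ via Observation~\ref{obser:40} and Lemma~\ref{lem:1000}, which is $O(k)$ in total. The only remaining unknowns are the origins $or(\rho_i)$ in the case $v_i\neq or(\rho_i)$, i.e.\ when $v_i$ lies on $B_M(r_i,r_{i+1})$; each such origin is obtained by one ($-1$)-sloped ray shooting from $v_i$ into $\bay$. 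By Observation~\ref{obser:50} there are $O(k)$ such ray shootings and, because the pairs of consecutive roots are processed in the order $r_1,r_2,\ldots,r_k$, their origins occur on $\overline{cd}$ in the $c$-to-$d$ order, so the rays are target-sorted. Applying Lemma~\ref{lem:rayshooting} (with the visibility map taken along the direction of slope $-1$ instead of the horizontal direction; the argument is identical) computes all these target points in $O(n'+k)$ time, after which the segments $\overline{v_ior(\rho_i)}$ and the resulting decomposition of $\bay$ into the subpolygons on which the main algorithm runs follow in $O(k)$ additional time.

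Next I would dispose of the two visibility maps: by Chazelle's linear-time algorithm~\cite{ref:ChazelleTr91} applied to the simple polygon $\bay$, each of $\HM(\bay)$ and $\VM(\bay)$ is computed in $O(n')$ time. It remains to locate, for every $i$, the trapezoid of $\HM(\bay)$ containing $or(\rho_i)$; this is the only point that needs genuine care, and it is supplied by Lemma~\ref{lem:90}: the origins $or(\rho_1),\ldots,or(\rho_{k-1})$ are ordered from northeast to southwest, and each of them is visible from $\overline{cd}$ along the direction of slope $-1$. Consequently, if we scan the edges of $\partial$ from $c$ to $d$ and inspect the trapezoids of $\HM(\bay)$ bounded by each scanned edge, the trapezoids containing the $or(\rho_i)$'s are met in exactly the order $or(\rho_1),or(\rho_2),\ldots,or(\rho_{k-1})$ — the same monotone-scan pattern used in the proof of Lemma~\ref{lem:rayshooting} — so a single $O(n'+k)$-time pass over $\partial$ suffices.

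I expect the main obstacle to be not a computation but a verification: one must be certain that in each step a single linear scan (of $\overline{cd}$, or of $\partial$) legitimately replaces what would otherwise be $O(k)$ separate ray-shooting or point-location queries. The two facts that license this are the target-sortedness of the ($-1$)-sloped rays (Observation~\ref{obser:50}) and the northeast-to-southwest ordering of the ray origins together with their visibility from $\overline{cd}$ (Lemma~\ref{lem:90}). Once these are invoked, every preprocessing group is bounded by $O(n'+k)$, and adding the four bounds proves the lemma.
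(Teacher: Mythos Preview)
Your proposal is correct and follows essentially the same route as the paper: the paper's argument for Lemma~\ref{lem:preprocess} is precisely the discussion immediately preceding the lemma statement, and you have reproduced its structure faithfully --- $(-1)$-sloped ray shootings handled in $O(n'+k)$ time via target-sortedness (Observation~\ref{obser:50} and Lemma~\ref{lem:rayshooting}), the two visibility maps built in $O(n')$ time by Chazelle's algorithm, and the trapezoid location for the $or(\rho_i)$'s done by a single $c$-to-$d$ scan of $\partial$ justified by Lemma~\ref{lem:90}. One minor descriptive inaccuracy: the $(-1)$-sloped ray shooting does not \emph{compute} $or(\rho_i)$ (that point is read off directly from the boundary of $Rec(r_i,r_{i+1})$); rather, it determines whether the segment $\overline{v_ior(\rho_i)}$ hits $\partial$ before reaching $or(\rho_i)$, which governs the decomposition into subproblems --- but this does not affect the time bound or the validity of your argument.
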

\lemmaspace

In the main algorithm, the horizontal visibility map
$\HM(\bay)$ will be used to guide the main process. More
specifically, during the algorithm, we traverse inside
$\bay$ following certain rays, and use $\HM(\bay)$ to keep
track of where we are (i.e.,
which trapezoid of $\HM(\bay)$ contains our current position). The vertical
visibility map $\VM(\bay)$ will be used to compute the target points
of some target-sorted vertical rays using the above visibility map based
approach.

For any two points $a$ and $b$ on $\partial$ with $a$ lying on the portion
of $\partial$ from $c$ clockwise to $b$, we denote by $\partial(a,b)$
the portion of $\partial$ between $a$ and $b$ and say that
$a$ is {\em before} $b$ or $b$ is {\em after} $a$.

\subsubsection{The Main Algorithm}
\label{sec:main}

After the preprocessing, the problem of computing
$\Vor(bay(\overline{cd}))$ with the root set $R$ may be divided into
multiple subproblems and we need to solve each subproblem. For
convenience of the forthcoming discussion, we assume that the
original problem on $bay(\overline{cd})$ with $R$ is merely one such
subproblem, i.e., for any two consecutive roots $r_i$ and $r_{i+1}$
in $R$, if $v_i\in B_M(r_i,r_{i+1})$, then $B'_M(r_i,r_{i+1})$
($=\overline{v_ior(\rho_i)}$) lies completely in
$\Vor(bay(\overline{cd}))$ and has been computed.
Recall that in the preprocessing, we have already computed the
trapezoid of the horizontal visibility map $\HM(\bay)$ that contains
the origin $or(\rho_i)$ of the ray $\rho_i$, for each $1\leq i\leq
k-1$. Observation \ref{obser:afterPre} below summarizes these facts.

\lemmaspace
\begin{observation}\label{obser:afterPre}
After the preprocessing,
\begin{itemize}
\item
for any two consecutive roots $r_i$ and
$r_{i+1}$ in $R$, if $v_i\in B_M(r_i,r_{i+1})$, then their bisector
portion
$B'_M(r_i,r_{i+1})$ ($=\overline{v_ior(\rho_i)}$) has been computed;
\item
for each $1\leq i\leq k-1$, the trapezoid of $\HM(\bay)$ that
contains the origin $or(\rho_i)$ of the ray $\rho_i$ is known.
\end{itemize}
\end{observation}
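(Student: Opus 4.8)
The plan is to derive both items as immediate bookkeeping consequences of the preprocessing already carried out, so the ``proof'' will mainly consist of recalling which preprocessing step supplies each fact and why that step is correct.

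For the first item I would recall the $(-1)$-sloped ray-shooting phase and the reduction it performs. After that phase the problem on $bay(\overline{cd})$ with $R$ has been broken into subproblems, and we are analyzing one of them; by the way the splitting is done, for every pair of consecutive roots $r_i,r_{i+1}$ of the current root set with $v_i\in B_M(r_i,r_{i+1})$ the segment $B'_M(r_i,r_{i+1})=\overline{v_ior(\rho_i)}$ does not reach $\partial$ inside the current subpolygon. Since $or(\rho_i)$ is the endpoint of the middle segment $B_M(r_i,r_{i+1})$ on the boundary of $Rec(r_i,r_{i+1})$, it is computed in $O(1)$ time from $r_i$, $r_{i+1}$ and their weights, so $B'_M(r_i,r_{i+1})$ is available explicitly; by Lemma~\ref{lem:800} it belongs to $\Vor(bay(\overline{cd}))$. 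I would then point back to the argument preceding the observation for why the splitting is legitimate (namely that $\overline{v_ip}$ appears in $\SPM(\calF)$ when $B'_M$ meets $\partial$ at $p$, and that the two sides can be solved independently with the root sets $R_1,R_2$), and note that the whole phase runs in $O(n'+k)$ time because these ray shootings are target-sorted (Observation~\ref{obser:50}) and so can be answered together via Lemma~\ref{lem:rayshooting}.

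For the second item the main ingredient is Lemma~\ref{lem:90}, which gives that the origins $or(\rho_1),\dots,or(\rho_{k-1})$ appear from northeast to southwest. I would additionally check that each $or(\rho_i)$ is visible from $\overline{cd}$ in the direction of slope $-1$: if $or(\rho_i)=v_i$ this is trivial, and otherwise $or(\rho_i)$ is the far endpoint of the $(-1)$-sloped segment $\overline{v_ior(\rho_i)}$, which the preprocessing already verified does not cross $\partial$, so the whole segment from $v_i\in\overline{cd}$ to $or(\rho_i)$ lies in $\bay$. Combining these two facts, scanning the edges of $\partial$ from $c$ to $d$ and examining, for each edge, the trapezoid of $\HM(\bay)$ bounded by it, will encounter the trapezoids containing $or(\rho_1),\dots,or(\rho_{k-1})$ in exactly that order; this is the same left-to-right sweep used to establish Lemma~\ref{lem:rayshooting} (here applied with the $(-1)$-sloped segments $\overline{v_ior(\rho_i)}$ playing the role of the rays), so all the trapezoids are found in $O(n'+k)$ time.

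There is no deep difficulty in this observation; it is a recapitulation. The one place I would be careful is the second item, where it is necessary to combine the staircase ordering from Lemma~\ref{lem:90} with the slope-$(-1)$ visibility of the origins from $\overline{cd}$ in order to be sure that a single sweep of $\partial$ against $\HM(\bay)$ suffices. Together with Lemma~\ref{lem:preprocess}, which already bounds the preprocessing by $O(n'+k)$ time, this completes the argument.
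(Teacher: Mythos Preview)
Your proposal is correct and mirrors the paper's treatment: the observation is stated there explicitly as a summary of the preprocessing (``Observation~\ref{obser:afterPre} below summarizes these facts''), with the first item justified by the $(-1)$-sloped ray-shooting phase (Observation~\ref{obser:50}, Lemma~\ref{lem:rayshooting}, and Lemma~\ref{lem:800}) and the second item by the sweep of $\HM(\bay)$ along $\partial$ using the staircase ordering of Lemma~\ref{lem:90} together with the slope-$(-1)$ visibility of each $or(\rho_i)$ from $\overline{cd}$. Your write-up identifies exactly these ingredients and the $O(n'+k)$ time bound, so nothing further is needed.
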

\lemmaspace

As discussed
before, our task is to handle the
interactions among the rays $\rho_i$ for all $i=1,2,\ldots,k-1$.


In the algorithm, we need to compute the target points for $O(k)$
horizontal and vertical rays. The main procedure is guided by the
horizontal visibility map $\HM(\bay)$ so that the target point of
each horizontal ray can be determined in constant time. For the
vertical ray shootings, we use the visibility map based approach
with the vertical visibility map $\VM(\bay)$. Note that the vertical
ray shootings will occur in an online fashion in the algorithm. We
will show that the vertical rays involved are target-sorted. To
compute the target points for these vertical rays, the algorithm
maintains a {\em reference point}, denoted by $p^*$. Initially,
$p^*=c$. Then during the algorithm, $p^*$ will be moved forward
along $\partial$ from $c$ to $d$, i.e., every time $p^*$ is moved on
$\partial$, its new position is always after its previous position.
In this way, the target points of all vertical rays are computed in
totally $O(n'+k)$ time (recall that $n'$ is the number of obstacle
vertices of $\bay$).

Let $\Psi=\{\rho_1,\rho_2,\ldots,\rho_{k-1}\}$. We process the rays
of $\Psi$ incrementally in the order of $\rho_1,\rho_2,\ldots,\rho_{k-1}$, whose
origins are ordered from northeast to southwest by Lemma 5 \ref{lem:90}. 
By Observation \ref{obser:40}, each ray in $\Psi$ is
either horizontally going east or vertically going south. We say
that initially all rays are {\em active} and the entire
$bay(\overline{cd})$ is {\em active}. In general, the active rays
are used to decompose the active region of $\bay$. During the
algorithm, some portion of $\bay$ will be implicitly set as {\em
inactive}, which means that each point of such a region is in the
Voronoi region of a root that has been determined. The active region
of $\bay$ at any moment of the algorithm always forms a connected
simple polygon, a fact that we will not explicitly argue in the
following algorithm description. Similarly, some rays will be set as
inactive, meaning that they will no longer be involved in the
further decomposition of the current active region of $\bay$. When
the algorithm terminates, the entire $bay(\overline{cd})$ is
inactive and all rays of $\Psi$ are inactive. Note that setting a
region or a ray as inactive is done implicitly and is used only for
our analysis. Since each ray in $\Psi$ lies on the bisector of two
roots in $R$, we say that the two roots {\em define} the ray.

\begin{figure}[t]
\begin{minipage}[t]{\linewidth}
\begin{center}
\includegraphics[totalheight=1.5in]{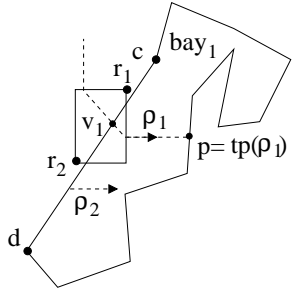}
\caption{\footnotesize Illustrating an example of $\rho_1$ being
horizontal.} \label{fig:rho1}
\end{center}
\end{minipage}
\vspace*{-0.15in}
\end{figure}

We start with the first ray $\rho_1$. If $\rho_1$ is horizontal
(going east), then since $or(\rho_1)$ is the most northeast origin,
no other ray in $\Psi$ can intersect it. Let $p$ be the target point
of $\rho_1$ on $\partial$ (see Fig.~\ref{fig:rho1}). Clearly, $p$
can be found in $O(1)$ time since we already know the trapezoid in
$\HM(\bay)$ that contains $or(\rho_1)$  by Observation
\ref{obser:afterPre}. Denote by $\alpha$ the
portion of $B(r_1,r_2)$ between $v_1$ and $p$. Note that $\alpha$ is
either the line segment $\overline{v_1p}$ (if $v_1=or(\rho_1)$), or
the concatenation of the two line segments
$\overline{v_1or(\rho_1)}$ and $\overline{or(\rho_1)p}$. In either
case, $\alpha$ partitions $bay(\overline{cd})$ into two simple
polygons. One of them contains $\overline{cv_1}$ as an edge and we
denote it by $bay_1$ (see Fig.~\ref{fig:rho1}). We claim that
$bay_1$ is the Voronoi region of $r_1$, i.e., $\Vor(r_1)=bay_1$.
Indeed, by the above analysis and Lemma \ref{lem:800}, $\alpha$ is
in $\Vor(\bay)$, implying that for any point $q\in bay_1$, there
is a shortest path from $s$ to $q$ via $r_1$. The claim thus
follows, and $\Vor(r_1)$ is determined. We then set the ray $\rho_1$
and the region of $bay_1$ as inactive. Hence, the active region of
$bay(\overline{cd})$ becomes $bay(\overline{cd})\setminus bay_1$,
which needs to be further decomposed. In addition, we move the
reference point $p^*$ from $c$ to $p$ ($=tp(\rho_1)$). We then
continue with the next ray $\rho_2$.

If $\rho_1$ is vertical (going south), then we push $\rho_1$ onto a
stack $S$ (initially, $S=\emptyset$), and let the reference point
$p^*$ stay at $c$.  We then continue with the next ray $\rho_2$.

We will show below that our algorithm maintains the following
general {\em invariants}, which are used to prove the correctness of
the algorithm. Suppose the current moment of the algorithm is right
before the next ray $\rho$
is considered, and assume
$\rho$ lying on the bisector $B(r_j,r_i)$ with $i>j$.  The stack $S$
may be non-empty; if $S=\emptyset$, then the invariants below
related to any rays in $S$ are not applicable. Let $\rho'$ be the
ray at the top of $S$, and suppose $\rho'$ lies on $B(r_t,r_{t'})$
with $t'>t$.

{\bf Invariant Properties:} (1) All rays in $S$ are active and
vertically going south. (2) The origins of all rays in $S$ from top
to bottom are ordered from southwest to northeast. (3) The origin of
the next ray to be considered by the algorithm (i.e., $\rho$) is to
the southwest of the origin of the ray at the top of $S$ (i.e.,
$\rho'$). (4) The two indices $j=t'$. (5) For each ray
$\rho''$ in $S\cup\{\rho\}$,
suppose $\rho''$ lies on the bisector $B(r_{j'},r_{i'})$
of two roots $r_{j'}$ and $r_{i'}$ with $i'>j'$; then the portion of
the boundary of the Voronoi region $\Vor(r_{i'})$ (resp.,
$\Vor(r_{j'})$) from $v_{i'-1}$ (resp., $v_{j'}$) to the origin
$or(\rho'')$ of $\rho''$ has already been computed. (6) For each ray
$\rho''$ in $S$, suppose it lies on the bisector $B(r_{j'},r_{i'})$
of two roots $r_{j'}$ and $r_{i'}$ with $i'>j'$; then $r_{j'}$ is to
the right of $\rho''$ and $r_{i'}$ is to the left of $\rho''$. (7)
The root $r_t$ is to the left of all rays in $S\setminus\{\rho'\}$
(recall that $\rho'\subset B(r_{t},r_{t'})$ with $t'>t$). (8) For
any two consecutive rays $\rho'_1$ and $\rho'_2$ in $S$ such that
$\rho'_1$ is closer to the top of $S$, suppose $\rho'_1$ is on
$B(r_{i_1},r_{i_2})$ for $i_2>i_1$ and $\rho'_2$ is on
$B(r_{j_1},r_{j_2})$ for $j_2>j_1$; then $i_1=j_2$. (9) The target
points of all rays in $S$ from bottom to top are ordered clockwise
on $\partial$ (i.e., from $c$ to $d$). (10) If $\rho$ is vertical,
then the target point $tp(\rho)$ of $\rho$ is after the target point
$tp(\rho')$ of $\rho'$ on $\partial$. (11) If the target point of
any ray in $S$ has not been computed yet, then the target point of
that ray is after the reference point $p^*$ (i.e., on
$\partial(p^*,d)$). (12) The target point $tp(\rho)$ is after $p^*$.
(13) Suppose $\rho''$ is the first {\em horizontal} ray in $\Psi$
that will be considered by the algorithm in a future time from now;
then its target point $tp(\rho'')$ is after $p^*$. (14) The
trapezoid in the horizontal visibility map $\HM(\bay)$ that contains the origin
$or(\rho)$ of the ray $\rho$ is known.

Now consider the moment that is right after we finish processing
$\rho_1$ and before we consider $\rho_2$. Based on the processing of
$\rho_1$ discussed above, either $\rho_1$ is horizontal and $S$ is
empty, or $\rho_1$ is vertical and $S= \{\rho_1\}$. Lemma
\ref{lem:invariants10} below shows that in either case, all
invariants of the algorithm hold. We intend to use the proof of
Lemma \ref{lem:invariants10} as a ``warm-up" for the analysis of the
more general situations later. Observation \ref{obser:new10} follows
from the definitions of the rays in $\Psi$ and Lemma \ref{lem:90}.

\lemmaspace
\begin{observation}\label{obser:new10}
The target points of all rays in $\Psi$ are on $\partial$. For any
two rays $r_j$ and $r_i$ in $\Psi$ with $i>j$, if $r_j$ is horizontal
or $r_i$ is vertical, then $tp(r_i)$ is after $tp(r_j)$ on $\partial$.
\end{observation}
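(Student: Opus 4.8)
The plan is to prove the two assertions separately, and in both to replace the ray $\rho_i$ by the polyline $\gamma_i$ obtained by prepending to $\rho_i$ the (possibly empty) $(-1)$-sloped segment $\overline{v_ior(\rho_i)}=B'_M(r_i,r_{i+1})$. By the preprocessing this prefix lies inside $\bay$ and is disjoint from $\partial$, so $\gamma_i$ runs from $v_i\in\overline{cd}$ to $tp(\rho_i)$, is $x$-nondecreasing and $y$-nonincreasing along its length (Observation~\ref{obser:40} together with the fact that $\overline{v_ior(\rho_i)}$ has slope $-1$ and points into $\bay$), is a simple curve meeting $\partial\bay$ only at its two endpoints, and has the same first boundary hit as $\rho_i$. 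I will argue everything in terms of the $\gamma_i$'s.

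For the first assertion I would use that $\overline{cd}$ is positive-sloped and, by the standing assumptions, $\bay$ lies on the right (southeast) side of $\overline{cd}$. A half-line issued in direction due east or due south from a point of $\overline{cd}$ leaves the line supporting $\overline{cd}$ at once and thereafter moves monotonically farther to that side, so it never returns to that line; likewise the prefix $\overline{v_ior(\rho_i)}$ starts on (or strictly to the bay side of) that line and stays there. Hence $\gamma_i$ never meets the relative interior of $\overline{cd}$, and its first boundary hit $tp(\rho_i)$ must lie on $\partial$.

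For the ordering, fix $j<i$ with $\rho_j$ horizontal or $\rho_i$ vertical. Since $\gamma_j$ is a chord of the simple polygon $\bay$ from $v_j\in\overline{cd}$ to $tp(\rho_j)\in\partial$, it splits $\bay$ into two simple subpolygons; let $B^-$ be the one whose boundary contains the portion of $\overline{cd}$ from $v_j$ down to $d$. Because $i>j$, the vertex $v_i$ lies on that portion, hence on $\partial B^-$, and $\gamma_i$ enters the interior of $B^-$ just past $v_i$. By the first assertion $\gamma_i$ cannot re-cross $\overline{cd}$, so $\gamma_i$ can leave $B^-$ only through $\gamma_j$ or through $\partial\cap\partial B^-=\partial(tp(\rho_j),d)$; thus it suffices to show $\gamma_i$ does not cross $\gamma_j$, and then $tp(\rho_i)\in\partial(tp(\rho_j),d)$, i.e.\ $tp(\rho_i)$ is after $tp(\rho_j)$, as claimed.

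The non-crossing claim is the crux, and this is exactly where the hypothesis is used. By Lemma~\ref{lem:90}, $or(\rho_i)$ lies weakly to the southwest of $or(\rho_j)$; and since $i>j$, the vertex $v_i$ lies strictly to the left of and below $v_j$ on the positive-sloped segment $\overline{cd}$, so the $(-1)$-sloped prefixes of $\gamma_i$ and $\gamma_j$ lie on parallel lines of the form $x+y=\mathrm{const}$ with the constant for $\gamma_i$ strictly smaller (the $\gamma_i$-prefix is the lower-left one). Going through the at most three admissible direction pairs for $(\rho_j,\rho_i)$ — $(\text{east},\text{east})$, $(\text{east},\text{south})$, and $(\text{south},\text{south})$ — one checks from these two facts that each piece of $\gamma_i$ (its prefix and its terminal axis-parallel ray) stays weakly to the left of and below the corresponding piece of $\gamma_j$, so $\gamma_i$ cannot cross $\gamma_j$; the excluded pair $(\text{south},\text{east})$ is precisely the one in which the two curves could swap order, which is why the hypothesis forbids it. I expect the only delicate bookkeeping here is pitting the $(-1)$-sloped prefixes against the axis-parallel terminal rays when they are at comparable heights or abscissas; the underlying staircase and intercept inequalities do the rest. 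Combining this with the subpolygon reduction above establishes the observation.
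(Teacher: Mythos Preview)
Your proposal is correct and supplies far more detail than the paper, which simply asserts that the observation ``follows from the definitions of the rays in $\Psi$ and Lemma~\ref{lem:90}'' and gives no further argument. Your approach uses exactly the same core ingredient---the staircase ordering of the origins from Lemma~\ref{lem:90}---and then adds a clean Jordan-curve/chord argument (extending each $\rho_i$ to the polyline $\gamma_i$ from $v_i$, splitting $\bay$ by $\gamma_j$, and verifying non-crossing in the three admissible direction pairs) to turn that staircase picture into a rigorous statement about the order of target points on $\partial$. The case analysis you sketch goes through: the parallel $(-1)$-sloped prefixes lie on distinct lines $x+y=\mathrm{const}$ with the one for $\gamma_i$ strictly lower-left, and together with $or(\rho_i)$ lying southwest of $or(\rho_j)$ this rules out any crossing between a prefix of one curve and the axis-parallel tail of the other in each of the three cases. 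Your identification of $(\text{south},\text{east})$ as the unique crossing-permitting pair, and hence the reason for the hypothesis, is exactly right.
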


\begin{lemma}\label{lem:invariants10}
At the moment after $\rho_1$ has been processed and before $\rho_2$ is
considered, all invariants of the algorithm hold.
\end{lemma}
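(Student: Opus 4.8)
The plan is to verify all fourteen invariant properties at the moment right after $\rho_1$ has been processed, splitting into the two cases determined by the processing of $\rho_1$: either $\rho_1$ is horizontal (so $S=\emptyset$) or $\rho_1$ is vertical (so $S=\{\rho_1\}$). In the first case most invariants referring to rays in $S$ are vacuous, so the real content is Invariants (12), (13), and (14): that $tp(\rho_2)$ is after the reference point $p^*=tp(\rho_1)$, that the first future horizontal ray has its target point after $p^*$, and that the trapezoid of $\HM(\bay)$ containing $or(\rho_2)$ is known. The first two follow from Observation~\ref{obser:new10} (target-sortedness of the rays in $\Psi$, using Lemma~\ref{lem:90}) together with the fact that $p^*$ was moved exactly to $tp(\rho_1)$, which precedes $tp(\rho_i)$ for all $i\ge 2$ whenever $\rho_1$ is horizontal or $\rho_i$ is vertical; when $\rho_i$ is horizontal we invoke the part of Observation~\ref{obser:new10} that handles two horizontal rays. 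Invariant (14) is immediate from the preprocessing (Observation~\ref{obser:afterPre}), which computed the $\HM(\bay)$-trapezoid of every $or(\rho_i)$.

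In the second case, $S=\{\rho_1\}$ with $\rho_1$ vertical and $p^*$ still at $c$. Here I would check the invariants about $S$ one by one. Invariant (1): $\rho_1$ was just pushed and is vertical going south by construction, and it is active. Invariants (2), (8), (9): trivial since $S$ has only one ray. Invariant (3): $or(\rho_1)$ is to the northeast of $or(\rho_2)$ by Lemma~\ref{lem:90} applied with $i-1=1$, which is exactly the statement that the origin of the next ray $\rho_2$ is to the southwest of the top-of-$S$ ray $\rho_1$. Invariant (4): $\rho_1\subset B(r_1,r_2)$ and $\rho_2\subset B(r_2,r_3)$, so with the notation $j=2$ (the smaller index of $\rho_2$'s pair) and $t'=2$ (the larger index of $\rho_1$'s pair) we indeed have $j=t'$. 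Invariant (6): since $\rho_1$ is vertical, Lemma~\ref{lem:1000}(2) gives that $r_1$ is to the right of $\rho_1$ and $r_2$ is to the left of $\rho_1$. Invariant (7): vacuous, as $S\setminus\{\rho'\}=\emptyset$. Invariant (5): for $\rho_1$, the relevant boundary portions of $\Vor(r_1)$ from $v_0=c$ to $or(\rho_1)$ and of $\Vor(r_2)$ from $v_1$ to $or(\rho_1)$ are, by Observation~\ref{obser:afterPre} and Lemma~\ref{lem:800}, either empty (if $or(\rho_1)=v_1$) or exactly the precomputed segment $B'_M(r_1,r_2)=\overline{v_1 or(\rho_1)}$; for $\rho_2$ the analogous portions are also covered by that observation. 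Invariant (10): since $\rho_2$ may be vertical, we need $tp(\rho_2)$ after $tp(\rho_1)$, which is again Observation~\ref{obser:new10} with the pair $(r_1,r_3)$, $\rho_1$ vertical. Invariants (11), (12), (13): with $p^*=c$ every target point on $\partial$ lies on $\partial(c,d)=\partial(p^*,d)$ trivially, so these hold automatically. Invariant (14): from the preprocessing as before.

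I expect the main obstacle to be Invariant (5), the statement that the relevant portions of the Voronoi-region boundaries have ``already been computed''. The subtlety is bookkeeping: one must be careful that the only thing the preprocessing guarantees is the middle-segment portion $B'_M(r_{i-1},r_i)$ when it exists, and that when $or(\rho_i)=v_i$ there is genuinely nothing to compute, so the claim is vacuously true in that subcase. I would therefore state explicitly, with reference to Observation~\ref{obser:afterPre} and the reduction to subproblems described just before the main algorithm, that ``computed so far'' means precisely ``either empty or equal to the precomputed segment $\overline{v_i or(\rho_i)}$,'' and check that the boundary portions named in Invariant~(5) for $\rho_1$ and $\rho_2$ are each of this form. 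The remaining geometric invariants (3), (6), (10) are then just direct citations of Lemmas~\ref{lem:90} and~\ref{lem:1000} and Observation~\ref{obser:new10}, and the invariants about the reference point are trivial because $p^*$ is still at $c$ (or, in the horizontal case, exactly at $tp(\rho_1)$).
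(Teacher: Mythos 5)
Your proposal is correct and follows essentially the same route as the paper: the same case split on whether $\rho_1$ is horizontal or vertical, the same observation that most invariants involving $S$ are vacuous or trivial, and the same citations of Lemma~\ref{lem:90}, Lemma~\ref{lem:1000}, Observation~\ref{obser:new10}, and Observation~\ref{obser:afterPre} for the nontrivial ones. One small slip: for Invariant~(5) applied to $\rho_1\subset B(r_1,r_2)$ the relevant boundary portion of $\Vor(r_1)$ runs from $v_{j'}=v_1$ (not from $v_0=c$) to $or(\rho_1)$, which coincides with the $\Vor(r_2)$ portion and is indeed $\overline{v_1\,or(\rho_1)}$, so your conclusion is unaffected.
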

\begin{proof}
Recall that $\rho_1$ is on $B(r_1,r_2)$ and $\rho_2$ is on
$B(r_2,r_3)$, and the reference point $p^*$ is at the target point
$tp(\rho_1)$ if $\rho_1$ is horizontal and at the vertex $c$ otherwise.

We first discuss the case when $\rho_1$ is horizontal, in which $S$
is empty and $p^*=tp(\rho_1)$.
Invariants (1) through (11) except
(5) simply follow since they are all related to some rays in $S$.
For Invariant (5), we only need to consider $\rho_2\subset B(r_2,r_3)$,
i.e., we need to show that the portion of the boundary of the
Voronoi region $\Vor(r_{3})$ from $v_{2}$ to the origin $or(\rho_2)$
of $\rho_2$, which is also the boundary portion of the Voronoi region
$\Vor(r_{2})$ from $v_{2}$ to $or(\rho_2)$,
has already been computed. Denote this boundary portion by
$\alpha$. Note that $\alpha$ is the portion
of $B(r_2,r_3)$ between $v_2$ and $or(\rho_2)$. Recall that
$or(\rho_2)$ is either $v_2$ or not. If $or(\rho_2)=v_2$, then we
are done since $\alpha$ is just a single point $v_2$.
Otherwise, $v_2$ must be on $B_M(r_2,r_3)$ and $\alpha$ is
$B'_M(r_2,r_3)$ ($=\overline{v_2or(\rho_2)}$), which has been computed in
our preprocessing by Observation \ref{obser:afterPre}.
Hence, Invariant (5) follows.

For Invariant (12), we need to show that $tp(\rho_2)$ is after
$p^*=tp(\rho_1)$, which is true due to Observation \ref{obser:new10}
and $\rho_1$ being horizontal. For Invariant (13), let
$i>1$ be the smallest index such that $\rho_i\in\Psi$ is horizontal.
If there is no such $i$, then Invariant (13) trivially holds;
otherwise, we need to prove that $tp(\rho_i)$ is after
$p^*=tp(\rho_1)$, which is true due to Observation \ref{obser:new10}
and $\rho_1$ being horizontal. For Invariant (14), we need to show
that the trapezoid of $\HM(\bay)$ containing $or(\rho_2)$ is known,
which is true by Observation \ref{obser:afterPre}.
Hence, when $\rho_1$ is horizontal, all invariants hold.

We then discuss the case when $\rho_1$ is vertical, in which
$S= \{\rho_1\}$ and the reference point $p^*=c$. Invariants
(1) and (2) simply follow. By Lemma \ref{lem:90}, $or(\rho_2)$ is to
the southwest of $or(\rho_1)$, and thus Invariant (3) holds.
Invariant (4) is obvious.
For Invariant (5), we need to consider both
$\rho_1$ and $\rho_2$. The proof is similar to that for the case
when $\rho_1$ is horizontal, and we omit it.
For Invariant (6), we need to show that $r_1$ is to the right of
$\rho_1$ and $r_2$ is to the left of $\rho_1$, which is true due to
Lemma \ref{lem:1000} and $\rho_1$ being vertical. Invariant (7)
simply follows since $\rho_1$ is the only ray in $S$. Invariants (8)
and (9) trivially hold since $S$ has only one ray. For
Invariant (10), we need to show that if $\rho_2$ is vertical,
then $tp(\rho_2)$ is after $tp(\rho_1)$, which is true by
Observation \ref{obser:new10}. For Invariant (11), note that the
target point $tp(\rho_1)$ has not been computed. Since $p^*=c$,
Invariant (11) trivially holds. Invariants (12) and (13) also
easily hold since $p^*=c$ and the target
points of all rays in $\Psi$ are on $\partial$.
For Invariant (14), we need to show that the trapezoid of $\HM(\bay)$
that contains $or(\rho_2)$ is known, which is true by Observation
\ref{obser:afterPre}.

We hence conclude that all invariants of the algorithm hold.
\end{proof}

As an implementation detail, although we view $S$ as a stack,
we represent $S$ as a doubly-linked list so that we can
access the rays in $S$ from both the top and the bottom of $S$. But,
we always pop a ray out of $S$ from its top and push a ray onto $S$
at its top. Next, we discuss the general situations of our algorithm.

Suppose our algorithm just starts to process a ray $\rho_{i}\in \Psi$, $i>1$,
which lies on the bisector $B(r_i,r_{i+1})$, and all invariants of the
algorithm hold right before $\rho_i$ is processed. There are a number of cases
and subcases to consider, depending on whether $\rho_i$ is vertical or horizontal,
whether $S$ is empty, and the intersecting consequences between $\rho_i$ and
the rays in $S$ (if $S\not=\emptyset$), etc.

\vspace*{0.05in}
\noindent
{\bf Case 1}: $\rho_i$ is
vertical (going south). Then we simply push $\rho_i$ onto the top of $S$ and
the reference point $p^*$ is not changed. The algorithm then continues with the next
ray $\rho_{i+1}\in \Psi$ in this situation.
Lemma \ref{lem:invariants30} below shows that all invariants of the algorithm hold.

\lemmaspace
\begin{lemma}\label{lem:invariants30}
If the ray $\rho_i\in \Psi$ is vertical, then at the moment after $\rho_i$ is
processed and before $\rho_{i+1}$ is considered, all invariants of
the algorithm hold.
\end{lemma}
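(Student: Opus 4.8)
The plan is to verify the fourteen invariant properties one at a time, treating $S$ as it stands after $\rho_i$ (vertical, going south) has been pushed onto the top. Throughout, let $\rho_i\subset B(r_i,r_{i+1})$ be the newly pushed ray, and let $\rho'$ denote the ray that was on top of $S$ \emph{before} the push (if $S$ was nonempty beforehand), so $\rho'$ is now the second ray from the top. The next ray the algorithm will consider is $\rho_{i+1}$. First I would record the easy structural invariants: Invariant~(1) holds because all rays already in $S$ were active and vertical by the pre-push invariant, and $\rho_i$ is vertical and active; Invariant~(14) holds because the trapezoid of $\HM(\bay)$ containing $or(\rho_{i+1})$ was computed in the preprocessing (Observation~\ref{obser:afterPre}).

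Next I would handle the ``staircase'' and indexing invariants. Invariant~(2) (origins of $S$ from top to bottom ordered southwest to northeast): the new top is $or(\rho_i)$, and by Lemma~\ref{lem:90} $or(\rho_i)$ is to the southwest of $or(\rho_{i-1})=or(\rho')$ when $\rho'=\rho_{i-1}$; more generally the pre-push Invariant~(3) told us exactly that $or(\rho_i)$ was to the southwest of the old top, so prepending $\rho_i$ preserves the ordering. Invariant~(3) for the new state asks that $or(\rho_{i+1})$ be to the southwest of the new top $or(\rho_i)$, which is again Lemma~\ref{lem:90}. Invariant~(4) ($j=t'$ in the notation of the invariant list, i.e. the lower index of the incoming ray equals the upper index of the top-of-stack ray): the incoming ray is now $\rho_{i+1}\subset B(r_{i+1},r_{i+2})$ and the top ray is $\rho_i\subset B(r_i,r_{i+1})$, so the lower index $i+1$ of the incoming ray equals the upper index $i+1$ of the top ray. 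Invariant~(8) (consecutive rays $\rho_1',\rho_2'$ in $S$ share an index, $i_1=j_2$): for the new adjacent pair $(\rho_i,\rho')$ this is the pre-push Invariant~(4); for all deeper pairs it is the pre-push Invariant~(8). Invariant~(7) ($r_t$ to the left of all rays in $S\setminus\{\rho'\}$, where $\rho'$ is the \emph{new} top): here the new top is $\rho_i$, $\rho'$ in the new-state notation is the old top; we must show the lower root of the old top lies to the left of $\rho_i$ and of everything below. That the old-top's lower root is left of $\rho_i$: by the pre-push Invariant~(6) that root is to the left of the old top $\rho'$, and by Invariant~(4)/(8) chaining plus Lemma~\ref{lem:100} (vertical rays with smaller index lie to the right of higher-indexed roots) one gets it is also left of $\rho_i$; for the rays deeper in $S$ this is the pre-push Invariant~(7).

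Then come the target-point invariants, which I expect to be the crux. Invariant~(6) for $\rho_i$: since $\rho_i$ is vertical, Lemma~\ref{lem:1000}(2) gives $r_i$ to the right of $\rho_i$ and $r_{i+1}$ to the left; for the deeper rays it is inherited. Invariant~(5) for $\rho_i$ and for $\rho_{i+1}$: the boundary portion from $v_i$ (resp.\ $v_{i+1}$) to $or(\rho_i)$ (resp.\ $or(\rho_{i+1})$) is either a single point or the segment $B'_M$ computed in preprocessing (Observation~\ref{obser:afterPre}), exactly as in the proof of Lemma~\ref{lem:invariants10}; the deeper entries are inherited. Invariant~(9) (target points of rays in $S$ from bottom to top ordered clockwise on $\partial$): the new entry $\rho_i$ sits at the top, so I must show $tp(\rho_i)$ comes \emph{after} $tp(\rho')$ on $\partial$; this is precisely the pre-push Invariant~(10), which applied because $\rho_i$ was the vertical incoming ray. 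Invariant~(10) for the new state asks, if $\rho_{i+1}$ is vertical, that $tp(\rho_{i+1})$ be after $tp(\rho_i)$: this is Observation~\ref{obser:new10} (for two rays with smaller index $i$, if the higher-indexed one $\rho_{i+1}$ is vertical, $tp(\rho_{i+1})$ follows $tp(\rho_i)$). Invariants~(11),(12),(13) concern positions relative to the reference point $p^*$, which is \emph{unchanged} in Case~1: (11) for the old rays is the pre-push invariant, and for $\rho_i$ its target point has not been computed and by the pre-push Invariant~(12) lies after $p^*$; (12) for the new incoming ray $\rho_{i+1}$ and (13) for the next horizontal ray both follow from the corresponding pre-push invariants because $p^*$ did not move and target points only advance along $\partial$.

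The \textbf{main obstacle} I anticipate is Invariant~(9) together with the interaction between Invariants~(7) and (10): showing that pushing $\rho_i$ keeps the stacked target points monotone along $\partial$ really does rely on the incoming-ray comparison (pre-push Invariant~(10)) being available, and one has to be careful that the invariant-list's local names ($\rho'$, $r_t$, $t'$, etc.)\ are re-anchored correctly after the push so that each post-push claim is matched to the right pre-push hypothesis. I would therefore organize the write-up as a short table: for each of (1)--(14), name the pre-push invariant or earlier lemma (Lemma~\ref{lem:90}, Lemma~\ref{lem:100}, Lemma~\ref{lem:1000}, Observation~\ref{obser:new10}, Observation~\ref{obser:afterPre}) that supplies it, with one or two sentences of justification, exactly paralleling the structure of the proof of Lemma~\ref{lem:invariants10}. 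No new geometric fact is needed beyond those already established; the content is purely bookkeeping of the invariant list under a stack push with $p^*$ held fixed.
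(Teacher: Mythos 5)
Your overall plan---verify the fourteen invariants one at a time, citing the pre-push invariants together with Lemma~\ref{lem:90}, Lemma~\ref{lem:100}, Lemma~\ref{lem:1000}, Observation~\ref{obser:new10}, and Observation~\ref{obser:afterPre}---is exactly the structure the paper uses, and most of your individual verifications (Invariants (1), (2), (3), (4), (5), (6), (8), (9), (10), (11), (14)) match the paper's reasoning.

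There is, however, a genuine error in your handling of Invariant~(7). After the push, the ray ``$\rho'$ at the top of $S$'' in the invariant statement refers to the \emph{new} top $\rho_i\subset B(r_i,r_{i+1})$, so $r_t = r_i$ and the claim to establish is that $r_i$ is to the left of every ray in $S\setminus\{\rho_i\}$. You instead write that ``$\rho'$ in the new-state notation is the old top'' and then try to show that ``the lower root of the old top lies to the left of $\rho_i$ and of everything below,'' which is not what the invariant asks. Compounding this, you justify it with ``by the pre-push Invariant~(6) that root is to the left of the old top,'' but pre-push Invariant~(6) says the \emph{lower}-indexed root of a ray in $S$ is to the \emph{right} of that ray (it is the \emph{upper}-indexed root that is to the left). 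The correct argument is shorter: $r_i$ is simultaneously the lower root of the new top and the upper root of the old top $\rho\subset B(r_j,r_i)$ (these indices coincide by pre-push Invariant~(4)); so pre-push Invariant~(6) gives $r_i$ to the left of $\rho$, and since $\rho$ was the leftmost ray in $S$ by pre-push Invariant~(2), $r_i$ is to the left of all of them.

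Secondarily, your one-line justification of Invariant~(12)---``follows from the corresponding pre-push invariants because $p^*$ did not move and target points only advance''---elides a real case split: if $\rho_{i+1}$ is vertical you need Observation~\ref{obser:new10} to relate $tp(\rho_{i+1})$ to $tp(\rho_i)$ and then pre-push Invariant~(12); if $\rho_{i+1}$ is horizontal you need to observe that it is the first future horizontal ray at the pre-push moment and invoke pre-push Invariant~(13). This is an easy fix, but as written it is not a proof.
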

\lemmaspace
\begin{proof}
Note that $\rho_{i+1}$ is on $B(r_{i+1},r_{i+2})$. Let $\xi$ be the
moment right after $\rho_i$ is processed and $\xi'$ be the
moment right before $\rho_i$ is considered. Thus, from $\xi'$ to
$\xi$, the only change to $S$ is that we push $\rho_i$ onto the top of
$S$. The proof below is based on the assumption that $S$
has at least two rays at the moment $\xi$ (i.e., $\rho_i$ and at
least another ray), since otherwise the invariants related to other
rays in $S$ than $\rho_i$ trivially hold. This also implies that $S$
is not empty at the moment $\xi'$. Let $\rho$ be the ray at the top
of $S$ at the moment $\xi'$, and assume $\rho$ lying on the bisector
$B(r_{j},r_{j'})$ with $j'>j$.

Invariant (1) holds since $\rho_i$ is vertical.

For Invariant (2), since all invariants of the algorithm hold at the
moment $\xi'$, it suffices to show that $or(\rho_i)$ is to the
southwest of $or(\rho)$. Note that at the moment $\xi'$, $\rho_i$ is
the next ray to be considered by the algorithm. Thus, by
Invariant (3) at the moment $\xi'$, $or(\rho_i)$ is to the southwest of
$or(\rho)$. Invariant (2) thus follows.

For Invariant (3), Lemma \ref{lem:90} implies that $or(\rho_{i+1})$ is to
the southwest of $or(\rho_i)$.

Invariant (4) trivially holds since $\rho_i\subset B(r_{i},r_{i+1})$ and
$\rho_{i+1}\subset B(r_{i+1},r_{i+2})$.

For Invariant (5), it suffices to consider the ray $\rho_{i+1}$, i.e.,
to show that the portion of the
boundary of $\Vor(r_{i+1})$ between $v_{i+1}$ and $or(\rho_{i+1})$, which is also
the portion of the boundary of $\Vor(r_{i+2})$ between $v_{i+1}$ and
$or(\rho_{i+1})$, has already been computed. (Note that the case for
the ray $\rho_i$ trivially holds due to Invariant (5) at the moment
$\xi'$ when $\rho_i$ is the next ray to be considered.)
Recall that the above boundary portion
is a single point $v_{i+1}$ if $v_{i+1}=or(\rho_{i+1})$ and is the line segment
$\overline{v_{i+1}or(\rho_{i+1})}$ otherwise. By Observation
\ref{obser:afterPre}, if $v_{i+1}\neq
or(\rho_{i+1})$, then $\overline{v_{i+1}or(\rho_{i+1})}$ has already
been computed in the preprocessing. Thus, Invariant (5) follows.

For Invariant (6), it suffices to prove that $r_{i}$ is to the
right of $\rho_i$ and $r_{i+1}$ is to the left of $\rho_i$, which follows from
Lemma \ref{lem:1000} since $\rho_i$ is vertical.

For Invariant (7), it suffices to show that $r_i$ is to the left
of $\rho$. Recall that $\rho$ is on $B(r_{j},r_{j'})$ with $j'>j$.
At the moment $\xi'$, by Invariant (4), $j'=i$; by Invariant (6),
$r_i$ ($=r_{j'}$) is to
the left of $\rho$. Thus, Invariant (7) follows.

For Invariant (8), it suffices to show $j'=i$, which has been
proved above for Invariant (7).

For Invariant (9), it suffices to show that $tp(\rho_i)$ is after
$tp(\rho)$ on $\partial$. Since $\rho_i$ is vertical, at the moment $\xi'$,
by Invariant (10), $tp(\rho_i)$ is
after $tp(\rho)$ on $\partial$. Invariant (9) thus follows.

For Invariant (10), we need to prove that if $\rho_{i+1}$ is
vertical, then $tp(\rho_{i+1})$ is after $tp(\rho_i)$ on $\partial$,
which follows from Observation \ref{obser:new10}.

For Invariant (11), note that the target point $tp(\rho_i)$ has not
been computed.  We need to show that $tp(\rho_i)$ is after
$p^*$ on $\partial$. At the moment $\xi'$,  by Invariant (12), $tp(\rho_i)$ is
after $p^*$. Further, $p^*$ has not been moved since the moment
$\xi'$. Invariant (11) thus follows.

For Invariant (12), we need to show that $tp(\rho_{i+1})$ is after
$p^*$. If $\rho_{i+1}$ is vertical, then by Observation
\ref{obser:new10}, $tp(\rho_{i+1})$ is after $tp(\rho_i)$ on $\partial$,
and we have also
shown above that $tp(\rho_i)$ is after $p^*$; thus $tp(\rho_{i+1})$
is after $p^*$. If $\rho_{i+1}$ is
horizontal, then at the moment $\xi'$, since $\rho_i$
is vertical, the first horizontal ray in $\Psi$ to be considered by the
algorithm in future is $\rho_{i+1}$; thus by Invariant (13),
$tp(\rho_{i+1})$ is after $p^*$.
Invariant (12) then follows.

Invariant (13) trivially holds since $\rho_i$ is vertical. More
specifically, suppose the first horizontal ray in $\Psi$ that will
be considered by the algorithm after the moment $\xi'$ is $\rho_j$.
Note that $j\geq i$. Since all invariants of the algorithm hold at
the moment $\xi'$, by Invariant (13), $tp(r_j)$ is after $p^*$
on $\partial$. Then at the moment $\xi$,
since $\rho_i$ is vertical, the first horizontal ray in $\Psi$
to be considered by the algorithm is still $\rho_j$. Proving that
Invariant (13) holds at the moment $\xi$ is to prove that
$tp(r_j)$ is after $p^*$, which has been proved above since $p^*$
has not been moved since the moment $\xi'$.

For Invariant (14), we need to show that the trapezoid of $\HM(\bay)$
that contains $or(\rho_{i+1})$ is known, which is true by Observation
\ref{obser:afterPre}.

We conclude that all invariants of the algorithm hold at the moment
$\xi$.
\end{proof}

\vspace*{0.05in}
\noindent
{\bf Case 2}:
$\rho_i$ is horizontal (going east).  Let $p=tp(\rho_i)$. We claim
that we can find $p$ in constant time. Indeed, since $\rho_i$ is the
next ray considered by the algorithm, by Invariant (14), the trapezoid of
$\HM(\bay)$ that contains $or(\rho_i)$ is known. The claim then follows
since $p$ is on the boundary of the above trapezoid.
Since $\rho_i$ is horizontal, by Invariant (13)
(at the moment right before processing $\rho_i$), $p$
is after the reference point $p^*$.
Depending on whether the stack $S$ is empty, there are two
subcases to consider.

\vspace*{0.05in}
\noindent
{\bf Subcase 2(a)}:
$S=\emptyset$. Then no ray in $S$ intersects $\rho_i$
before it hits $\partial$
(and thus no ray shooting for any ray $\rho_j\in \Psi$ with $j<i$
intersects $\rho_i$ before hitting $\partial$).
Also, for each ray $\rho_j\in \Psi$ with $j>i$,
since $or(\rho_j)$ is to the southwest of $or(\rho_i)$ and $\rho_i$ is
horizontal, $\rho_j$ cannot intersect $\rho_i$.
Hence, the portion $\overline{or(\rho_i)p}$ of the ray $\rho_i$
appears in $\Vor(\bay)$. Recall that $\rho_i$ is on
$B(r_i,r_{i+1})$. The portion of $B(r_i,r_{i+1})$ between
$v_{i}$ and $p$ divides the current active region of
$bay(\overline{cd})$ into two simple polygons; one of them contains
$\overline{v_{i-1}v_i}$ and we denote it by $bay_i$.
Further, each point in $bay_i$ has a shortest path to $s$ via $r_i$.
Thus, $bay_i$ is the Voronoi region $\Vor(r_i)$.
We then set $\rho_{i}$ and the region
$bay_i$ as inactive.  In addition, we move $p^*$ to $p$. We then consider the
next ray $\rho_{i+1}$. We prove below that all invariants of the
algorithm hold right after processing $\rho_i$.

Since $S$ is empty,
Invariants (1) to (11) except (5) simply hold since
they are all related to some rays in $S$. For Invariant (5), we only
need to consider $\rho_{i+1}$, which also holds by Observation
\ref{obser:afterPre} (the analysis is similar as before).
For Invariant (12), we need
to show that $tp(\rho_{i+1})$ is after $p^*$. Since $\rho_i$ is
horizontal, by Observation \ref{obser:new10}, $tp(\rho_{i+1})$ is
after $tp(\rho_i)$ ($=p^*$). Thus, Invariant (12) follows. For
Invariant (13), suppose $\rho_j\in\Psi$ is the first horizontal ray
to be considered by the algorithm. Note that it must be
$j>i$.  We need to show that $tp(\rho_j)$ is after $p^*$
($=tp(\rho_i)$), which is true by Observation \ref{obser:new10}
since $\rho_i$ is horizontal.
For Invariant (14), we need to show that the trapezoid of $\HM(\bay)$
that contains $or(\rho_{i+1})$ is known, which is true by Observation
\ref{obser:afterPre}.  Therefore, all invariants of the
algorithm hold right after processing $\rho_i$.

\vspace*{0.05in}
\noindent
{\bf Subcase 2(b)}:
$S\not=\emptyset$. Then
for the rays in $S$ whose target points lie on $\partial(p^*,p)$,
we compute their target points
by scanning $\partial(p^*,p)$ from $p^*$ to $p$ ($=tp(\rho_i)$);
this scanning process uses
the visibility map based approach with $\VM(\bay)$,
as described in the preprocessing.
By Invariant (9), such vertical rays (from bottom to top in $S$)
are target-sorted. Thus, the scanning
procedure takes linear time in terms of the number of edges of
$\partial(p^*,p)$ and the number of target points
found in this process. Note that the scanning procedure stops when
we encounter the point $p$. This also implies that the target points
of some rays in $S$ (e.g., the ray at the top of $S$) are not yet found if
they are on $\partial$ after $p$.

Let $\rho$ be the ray at the top of
$S$ (e.g., if $\rho_{i-1}$ is vertical, then $\rho$ is $\rho_{i-1}$).
Suppose $\rho$ is on the bisector $B(r_{j},r_{j'})$ with
$j'>j$. Then right before $\rho_i$ is processed,
by Invariant (4), $i=j'$ since $\rho_i\subset B(r_{i},r_{i+1})$;
by Invariant (3), $or(\rho_i)$ is to the southwest of $or(\rho)$.
Depending on whether the target point $tp(\rho)$ is before $p$,
there are two subcases.

\begin{figure}[t]
\begin{minipage}[t]{\linewidth}
\begin{center}
\includegraphics[totalheight=1.5in]{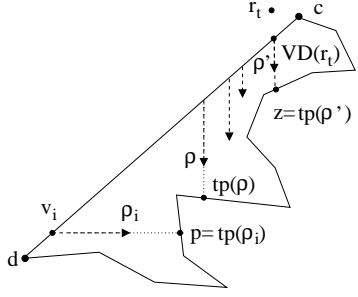}
\caption{\footnotesize Illustrating an example that the target
points of all rays in $S$ are before $p=tp(\rho_i)$. All vertical
rays are in $S$. The ray $\rho$ is at the top of $S$ and $\rho'$ is
at the bottom of $S$.} \label{fig:splitting}
\end{center}
\end{minipage}
\vspace*{-0.15in}
\end{figure}

\vspace*{0.05in}
\noindent
{\bf Subcase 2(b.1)}:
The target point $tp(\rho)$ is before $p=tp(\rho_i)$.
In this case, the scanning procedure has found $tp(\rho)$ on
$\partial(p^*,p)$.
Then by Invariants (9) and (11), the target points
of all rays in $S$ have been obtained and all such
target points
are before $p$ on $\partial$.  Since $p$ is the target point of
$\rho_i$, the above implies that
all rays in $S$ hit $\partial$ before they intersect $\rho_i$ (see
Fig.~\ref{fig:splitting}). Further, since all other active rays in $\Psi$, i.e.,
$\rho_{i+1},\rho_{i+2},\ldots,\rho_{k-1}$, have their origins to the
southwest of $or(\rho_i)$, no ray in $S$ can intersect these
active rays before it hits $\partial$. This means that for each
ray $\rho'$ in $S$, the portion of $\rho'$ between $or(\rho')$ and
$tp(\rho')$ appears in $\Vor(\bay)$. Based on the discussion
above, we perform a {\em splitting procedure} on $S$, as follows.

Let $\rho'$ be the ray at the {\em bottom} of $S$ and $z=tp(\rho')$
(see Fig.~\ref{fig:splitting}). Suppose $\rho'$ is on the bisector
$B(r_{t},r_{t'})$ with $t'>t$. By Invariant (5), the boundary portion
of $\Vor(r_t)$ between $v_t$ and $or(\rho')$ has been computed. The
concatenation of the segment $\overline{or(\rho')z}$ and this
boundary portion of $\Vor(r_t)$ splits the current active
region of $bay(\overline{cd})$ into two simple polygons.
One of them contains $\overline{v_{t-1}v_t}$ as an edge; further,
each point in this polygon has a shortest path to $s$ via $r_t$.
Thus, the polygon containing $\overline{v_{t-1}v_t}$ is the Voronoi
region $\Vor(r_t)$.
We also set the region $\Vor(r_t)$ as inactive.

We then continue to process the second bottom ray in $S$, in the similar fashion.
This splitting procedure stops once all rays in $S$ are
processed. In addition, we set all rays in $S$ as inactive and pop them
out of $S$ ($S$ then becomes empty). Finally, we move the reference
point $p^*$ to $p$ ($=tp(\rho_i)$), and consider the next ray
$\rho_{i+1}$. By the same analysis as that for the subcase 2(a) when $S$ is
empty, we can prove that all invariants of the algorithm hold. We
omit the details.

\vspace*{0.05in}
\noindent
{\bf Subcase 2(b.2)}:
The target point $tp(\rho)$ is not before $p=tp(\rho_i)$.
In this case, $tp(\rho)$ has not been found on $\partial(p^*,p)$ by the
scanning procedure. Then, it is easy to see that $\rho$ intersects $\rho_i$
before it hits $\partial$ (and so may some other rays in $S$).
We need to consider the consequences of the intersections of such rays in $S$
with $\rho_i$. Recall that $\rho_i$ is on
$B(r_i,r_{i+1})$ and $\rho$ is on $B(r_j,r_i)$ with $i>j$. Below
we show how to determine the Voronoi region $\Vor(r_i)$ and the portion
of the bisector $B(r_j,r_{i+1})$ in $\Vor(\bay)$. Let $p_1$
be the intersection point of $\rho_i$ and $\rho$ (see
Fig.~\ref{fig:rayintersection}).

First of all, we determine the Voronoi region $\Vor(r_i)$ (see
Fig.~\ref{fig:rayintersection}).
Since $\rho$ is the leftmost ray in $S$ by Invariant (2),
both the line segments $\overline{or(\rho_i)p_1}$ and
$\overline{or(\rho)p_1}$ appear in $\Vor(\bay)$.
Since the ray $\rho$ is in the stack $S$,
by Invariant (5), the boundary portion of $\Vor(r_i)$ between
$v_{i-1}$ and $or(\rho)$ has been computed, which we denote by $\alpha$.
At the moment right before $\rho_i$ is processed, since $\rho_i$ is
the next ray to be considered, also by Invariant (5), the
boundary portion of $\Vor(r_i)$ between
$v_{i}$ and $or(\rho_i)$ has been computed, which we denote by $\beta$
(i.e., $\beta=v_i$ if $v_i=or(\rho_i)$ and
$\beta=\overline{v_ior(\rho_i)}$ otherwise).
As argued similarly in the earlier analysis, $\Vor(r_i)$ is
the region bounded clockwise by $\alpha$, the segment
$\overline{or(\rho)p_1}$, the segment $\overline{or(\rho_i)p_1}$,
$\beta$, and $\overline{v_{i-1}v_i}$. This region of $\VD(r_i)$ is
then set as inactive.

\begin{figure}[t]
\begin{minipage}[t]{\linewidth}
\begin{center}
\includegraphics[totalheight=2.0in]{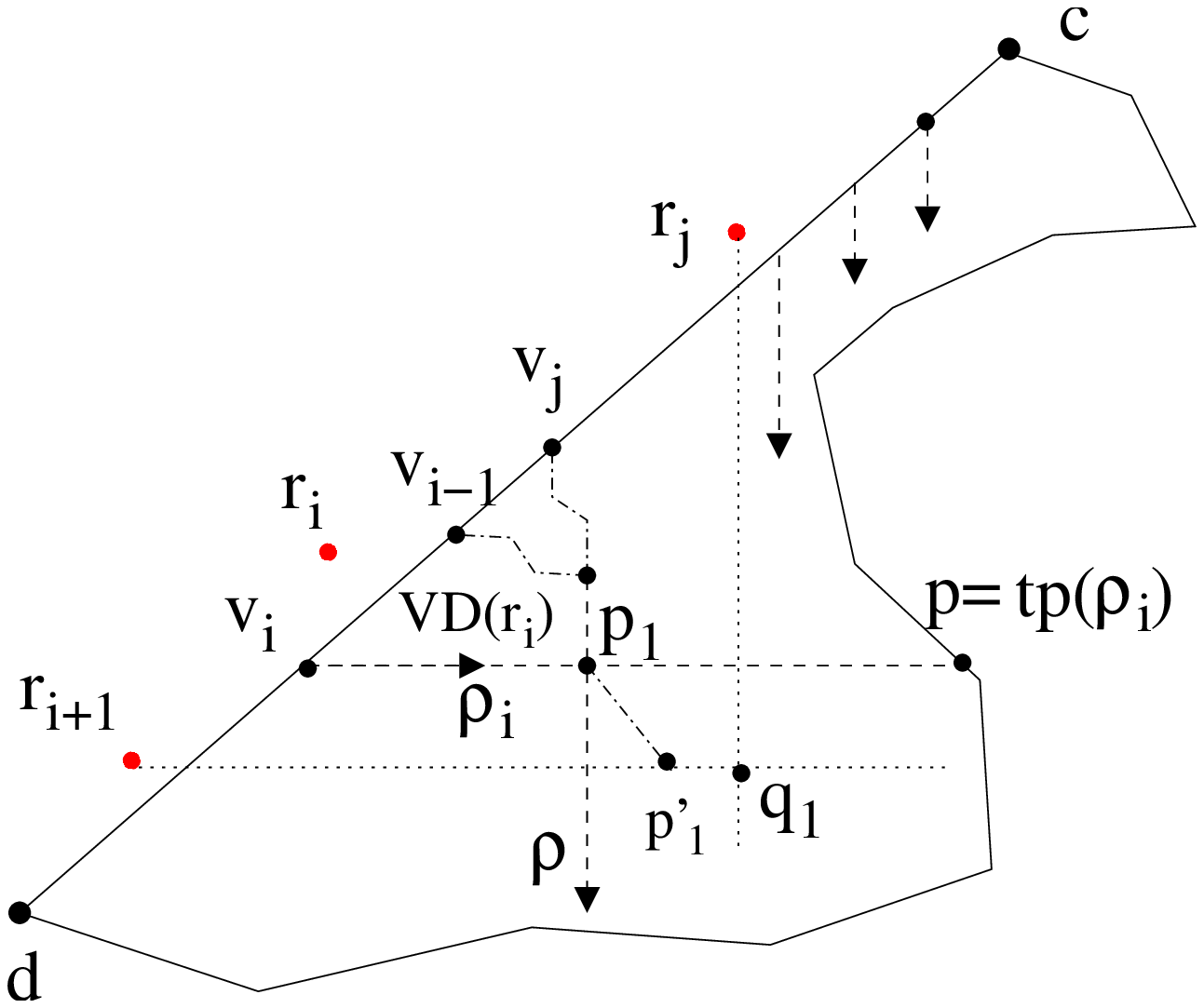}
\caption{\footnotesize Illustrating an example that
the ray $\rho$ at the top of $S$ intersects $\rho_i$ (at $p_1$) before $\rho$
hits $\partial$.}
\label{fig:rayintersection}
\end{center}
\end{minipage}
\vspace*{-0.15in}
\end{figure}

Second, we determine the portion of the bisector $B(r_j,r_{i+1})$ that
appears in $\Vor(\overline{cd})$. Since $\rho_i$ is horizontal,
by Lemma \ref{lem:1000}, the root $r_{i+1}$ is below $\rho_i$. Since $\rho\subset
B(r_j,r_i)$ with $i>j$, by Invariant (6), $r_j$ is to the right
of $\rho$. Therefore, the intersection point (denoted by $q_1$) of
the horizontal line through $r_{i+1}$ and the vertical line through
$r_j$ is to the southeast of $p_1$ (see
Fig.~\ref{fig:rayintersection}). We first discuss the portion of
$B(r_j,r_{i+1})$ contained in the rectangle $Rec(p_1,q_1)$.

Obviously, $Rec(p_1,q_1)$ is contained in the rectangle
$Rec(r_j,r_{i+1})$. Thus, the portion of $B(r_j,r_{i+1})$  in
$Rec(p_1,q_1)$ is a portion of the middle segment of
$B(r_j,r_{i+1})$. Further, since $p_1$ is the intersection of
$\rho_i$ and $\rho$, $p_1$ is at the intersection of $B(r_i,r_{i+1})$
and $B(r_j,r_i)$. Thus, $p_1$ is on $B(r_j,r_{i+1})$.

We claim that $r_{i+1}$ is to the southwest of $r_{j}$. This can be
proved by showing that $r_{i+1}$ is to the southwest of $p_1$ and
$p_1$ is to the southwest of $r_j$. Indeed, since $r_{i+1}$ is below
$\rho_i$ and $\overline{cd}$ is positive-sloped, $p_1$ must be to
the right of $r_{i+1}$, which also implies that $r_{i+1}$ is to the
southwest of $p_1$. Similarly, we can show that $p_1$ is to the
southwest of $r_j$.

Because $r_{i+1}$ is to the southwest of $r_{j}$, the middle segment
of $B(r_j,r_{i+1})$ is $-1$-sloped. Denote by $B_M'(r_j,r_{i+1})$
the portion of $B(r_j,r_{i+1})$ contained in $Rec(p_1,q_1)$. Based
on the above analysis, $B_M'(r_j,r_{i+1})$ is a $-1$-sloped line
segment with an endpoint at $p_1$ and the other endpoint on one of
the two edges of $Rec(p_1,q_1)$ incident to $q_1$ (see
Fig.~\ref{fig:rayintersection}). Below we prove that
$B_M'(r_j,r_{i+1})\cap \bay$ (i.e., the portion of
$B_M'(r_j,r_{i+1})$ contained in $\bay$) appears in $\Vor(\bay)$, implying that
we should keep this portion of
$B_M'(r_j,r_{i+1})$. The
proof is similar to that for Lemma \ref{lem:800} and hence we only sketch it
here. For convenience, we view $B_M'(r_j,r_{i+1})$ as the open
segment that does not contain its two endpoints.

It suffices to show that $B_M'(r_j,r_{i+1})$ does not intersect
any current active ray. Consider any current active ray $\rho'$,
$\rho'\not\in \{\rho,\rho_{i}\}$. Then $\rho'$ either
is in $S$ or is a ray $\rho_t\in\Psi$ with $t>i$.

\begin{itemize}
\item
If $\rho' \in S$, then $\rho'$ is vertical by Invariant (1). By Invariant
(7), $\rho'$ is to the right of the root $r_j$, and thus to the right of
the rectangle $Rec(p_1,q_1)$. Hence, $\rho_t$ does not intersect
$B_M'(r_j,r_{i+1})$ since $B_M'(r_j,r_{i+1})$ is strictly inside
$Rec(p_1,q_1)$.

\item
If $\rho'= \rho_t\in \Psi$ with $t>i$, then there are two subcases.

If $\rho_t$ is horizontal, then by Lemma \ref{lem:100}, $\rho_t$ is below
$r_{i+1}$, and is thus below the rectangle $Rec(p_1,q_1)$. Hence,
$\rho_t$ does not intersect $B_M'(r_j,r_{i+1})$.

If $\rho_t$ is vertical, then by Lemma \ref{lem:90}, the origin
$or(\rho_i)$ is to the northeast of $or(\rho_t)$. Clearly, $or(\rho_i)$
is to the left of $Rec(p_1,q_1)$ and thus $or(\rho_t)$ is to the left
of $Rec(p_1,q_1)$. Since $\rho_t$ is vertical, $\rho_t$ is also to
the left of $Rec(p_1,q_1)$. Hence, $\rho_t$ does not intersect
$B_M'(r_j,r_{i+1})$.
\end{itemize}

The above argument shows that all active rays cannot intersect
$B_M'(r_j,r_{i+1})$. Hence, the portion of $B_M'(r_j,r_{i+1})$
contained in $\bay$ must appear in $\Vor(\bay)$.

Then, we compute $B_M'(r_j,r_{i+1})$ in $O(1)$ time, and let
$B_M'(r_j,r_{i+1})=\overline{p_1p_1'}$ (see
Fig.~\ref{fig:rayintersection}). Note that $p_1'$ is either on the
right edge or the bottom edge of $Rec(p_1,q_1)$.

However, $\overline{p_1p_1'}$ may intersect $\partial$.
To determine whether such intersection occurs, we move among
the trapezoids in the horizontal
visibility map $\HM(\bay)$ from the endpoint $p_1$ of
$B_M'(r_j,r_{i+1})$ along the segment $\overline{p_1p_1'}$, as follows.

Note that the portion of the ray $\rho_i$ between its origin $or(\rho_i)$
and its target point $p=tp(\rho_i)$ is contained in a
single trapezoid of $\HM(\bay)$,
i.e., the trapezoid containing $or(\rho_i)$, which is already known according
to Invariant (14). Further, this trapezoid is the one that
contains $p_1$ since $p_1\in \overline{or(\rho_i)p}$.
Starting at $p_1$ in this trapezoid, we move along the
segment $\overline{p_1p_1'}$, and enter/exit trapezoids in $\HM(\bay)$ one after another,
until we encounter either $p_1'$ or
an edge of $\partial$ for the first time. In this way, we can
determine whether $\overline{p_1p_1'}$ intersects $\partial$.
Further, if $\overline{p_1p_1'}$ intersects $\partial$, then the first such
intersection point, denoted by $z$, is also found in this moving
process; if $B_M'(r_j,r_{i+1})$ does not intersect $\partial$,
then the trapezoid of $\HM(\bay)$ containing the point $p_1'$ is determined.
It is easy to see that the running time of the above moving procedure is
proportional to the number of trapezoids in $\HM(\bay)$ that we visit
when moving along $\overline{p_1p_1'}$. We will analyze the total running time
of the moving process in a global manner later.

Depending on whether $B_M'(r_j,r_{i+1})$ ($=\overline{p_1p_1'}$) intersects
$\partial$, there are two cases to consider.

\begin{figure}[t]
\begin{minipage}[t]{\linewidth}
\begin{center}
\includegraphics[totalheight=2.0in]{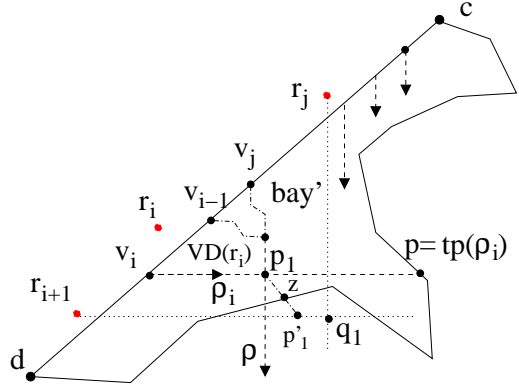}
\caption{\footnotesize Illustrating an example that
$B'_M(r_j,r_{i+1})$ ($=\overline{p_1p_1'}$) intersects $\partial$
(first at $z$).}
\label{fig:rayintersectioncase1}
\end{center}
\end{minipage}
\vspace*{-0.15in}
\end{figure}

If $B_M'(r_j,r_{i+1})$ intersects $\partial$, then we have found the first
intersection point $z$ of $B_M'(r_j,r_{i+1})$ and $\partial$ (see
Fig.~\ref{fig:rayintersectioncase1}).  Note that $z$ must be after $p$
on $\partial$.  Also, note that the Voronoi region $\Vor(r_i)$ has been
computed and set as inactive,
and thus $p_1$ lies on the boundary of the current active region of $\bay$ (see
Fig.~\ref{fig:rayintersectioncase1}).
Similarly as before, the line segment $\overline{p_1z}$ divides the current
active region of $\bay$ into two simple polygons;
one of them, say $bay'$, contains the point $p$.
Then, the Voronoi
regions of the roots that define the rays in $S$ form a decomposition of
$bay'$, and we use a procedure similar to the splitting procedure
discussed earlier to compute this decomposition of $bay'$, i.e., by
considering the rays in $S$
from bottom to top. However, it is possible that the target points
of some rays in $S$ have not been computed yet. Recall that all target
points of the rays in $S$ before $p$ ($=tp(\rho_i)$) have
been computed. But, if the target point of a ray in $S$ is on
$\partial(p,z)$, then it is not yet known.
To compute these target points, we simply scan
$\partial(p,z)$ from $p$ to $z$. Again, by Invariant (9), the
vertical rays in $S$ are target-sorted. Hence this computation can be done
in linear time in terms of the number of edges of $\partial(p,z)$
and the number of target points found during this process. In addition,
we set the region $bay'$ and all rays in $S$ as inactive, and pop all rays
out of $S$ ($S$ becomes empty). Finally, we move the reference point
$p^*$ to $z$, and continue with the next ray $\rho_{i+1}$.
Again, since $S$ is empty, similar to the analysis for the subcase 2(a)
when $S$ is empty, all invariants of the algorithm hold. We omit the details
of the proof.

If $B_M'(r_j,r_{i+1})$ ($=\overline{p_1p_1'}$) does not intersect $\partial$ (see
Fig.~\ref{fig:termination}), then as shown
above, $B_M'(r_j,r_{i+1})$ appears entirely in $\SPM(\calF)$ since it is contained
inside $\bay$. Again, the point $p_1'$ is
on either the right edge or the bottom edge of $Rec(p_1,q_1)$ (two cases).
We discuss these two cases below. Recall that the trapezoid of
$\HM(\bay)$ that contains $p_1'$ has been computed.
Recall that the ray at the top of $S$
is $\rho$, lying on $B(r_j,r_i)$ with $i>j$.

We first discuss the case when $p_1'$ is on the bottom edge of $Rec(p_1,q_1)$
(see Fig.~\ref{fig:termination}). Let $\rho_i^*$ be
the vertical ray originating at $p_1'$ and going south, which is on
$B(r_j,r_{i+1})$ by Observation \ref{obser:20}.
We pop $\rho$ out of $S$ and push $\rho_i^*$ onto the top of $S$,
and set $\rho$ as inactive and $\rho_i^*$ as active.
We move $p^*$ to $p$ ($=tp(\rho_i)$). We then continue to
consider the next ray $\rho_{i+1}\in \Psi$. Lemma \ref{lem:invariants40}
below shows that all invariants of the algorithm hold.

\begin{figure}[t]
\begin{minipage}[t]{\linewidth}
\begin{center}
\includegraphics[totalheight=2.0in]{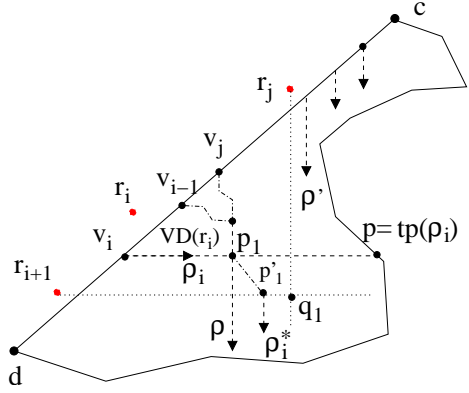}
\caption{\footnotesize Illustrating an example that
the point $p_1'$ ($=or(\rho_i^*)$) is on the bottom edge of
$Rec(p_1,q_1)$.}
\label{fig:termination}
\end{center}
\end{minipage}
\vspace*{-0.15in}
\end{figure}

\lemmaspace
\begin{lemma}\label{lem:invariants40}
At the moment right before the next ray $\rho_{i+1}$ is considered,
all invariants of the algorithm hold.
\end{lemma}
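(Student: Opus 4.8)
The plan is to verify all fourteen invariants one at a time, in the style of the proofs of Lemmas~\ref{lem:invariants10} and~\ref{lem:invariants30}, by comparing the state of the algorithm at this moment with its state just before $\rho_i$ was considered (when, by hypothesis, all invariants hold). In passing from that moment to this one the only changes are: the old top ray $\rho\subset B(r_j,r_i)$ (with $i>j$) is popped and set inactive; the vertical ray $\rho_i^*$, emanating south from $p_1'$, is pushed onto $S$ and set active; $\Vor(r_i)$ is determined (bounded clockwise by $\alpha$, $\overline{or(\rho)p_1}$, $\overline{or(\rho_i)p_1}$, $\beta$, $\overline{v_{i-1}v_i}$) and set inactive; and $p^*$ is advanced to $p=tp(\rho_i)$. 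I will freely use the facts already established in Subcase~2(b.2): $p_1\in B(r_j,r_{i+1})$ with $r_{i+1}$ southwest of $p_1$ and $p_1$ southwest of $r_j$; $\overline{p_1p_1'}=B'_M(r_j,r_{i+1})$ is $-1$-sloped, stays inside $\bay$, and (this case) $p_1'$ lies on the bottom edge of $Rec(p_1,q_1)$, so that $\rho_i^*$ is precisely the south ray of $B(r_j,r_{i+1})$ issuing from the bottom edge of $Rec(r_j,r_{i+1})$. As in the text's other invariant lemmas, I may assume $S$ has at least two rays after the update, since otherwise the invariants that refer to two or more rays of $S$ are vacuous.

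Many invariants transfer with little work because the low-index root of the new top ray $\rho_i^*\subset B(r_j,r_{i+1})$ is still $r_j$, exactly the low-index root of the popped $\rho\subset B(r_j,r_i)$. Invariant~(1) holds since $\rho_i^*$ is active and vertical going south. Invariants~(4) and~(8) are index identities: $\rho_{i+1}\subset B(r_{i+1},r_{i+2})$ shares index $i+1$ with $\rho_i^*$, and the ray just below $\rho_i^*$ has upper index $j$ by Invariant~(8) at the prior moment. Invariant~(7) is literally Invariant~(7) at the prior moment because $r_j$ is unchanged. Invariant~(14) follows from Observation~\ref{obser:afterPre}. Invariants~(12) and~(13) follow from Observation~\ref{obser:new10} applied to $\rho_i$ and $\rho_{i+1}$ (or to $\rho_i$ and the first future horizontal ray of $\Psi$) together with $\rho_i$ horizontal and $p^*=tp(\rho_i)$, exactly as in Subcase~2(a). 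Invariant~(5) holds for $\rho_{i+1}$ by Observation~\ref{obser:afterPre}, and for $\rho_i^*$ by concatenating already-drawn pieces: the $\Vor(r_{i+1})$ side is $\beta\cup\overline{or(\rho_i)p_1}\cup\overline{p_1p_1'}$ and the $\Vor(r_j)$ side is $\alpha\cup\overline{or(\rho)p_1}\cup\overline{p_1p_1'}$, where $\alpha$ and $\beta$ are the boundary arcs guaranteed by Invariant~(5) at the prior moment, $\overline{or(\rho)p_1}$ and $\overline{or(\rho_i)p_1}$ were drawn as part of $\partial\Vor(r_i)$, and $\overline{p_1p_1'}=B'_M(r_j,r_{i+1})$ was just computed.

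For the positional invariants~(2),~(3),~(6) I will run a short coordinate computation using $y(p_1')=y(q_1)=y(r_{i+1})$ and $x(p_1')\in(x(p_1),\,x(q_1)=x(r_j))$. Invariant~(6): since $\rho_i^*$ is the south ray of $B(r_j,r_{i+1})$ off the bottom edge of $Rec(r_j,r_{i+1})$, its NE corner $r_j$ is to the right of $\rho_i^*$ and its SW corner $r_{i+1}$ is to the left (strictness from $x(r_{i+1})<x(p_1)\le x(p_1')$ and, in this subcase, $x(p_1')<x(q_1)=x(r_j)$). Invariant~(2): for the ray immediately below $\rho_i^*$, its origin lies on a vertical ray to the right of $r_j$ (Invariant~(7) at the prior moment), so $x(p_1')\le x(r_j)<x(\text{that origin})$; and $y(p_1')<y(p_1)<y(or(\rho))\le y(\text{that origin})$ by Invariant~(2) at the prior moment; hence $or(\rho_i^*)=p_1'$ is southwest of it, and the ordering among the lower rays is unchanged. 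Invariant~(3): Lemma~\ref{lem:90} gives $or(\rho_{i+1})$ southwest of $or(\rho_i)$; combined with $x(or(\rho_i))<x(p_1)\le x(p_1')$, with $y(or(\rho_i))>y(r_{i+1})=y(p_1')$ (Lemma~\ref{lem:1000}, $r_{i+1}$ below the horizontal $\rho_i$), and with $or(\rho_{i+1})$ below $r_{i+1}$ (Lemma~\ref{lem:1000}), this yields $or(\rho_{i+1})$ southwest of $p_1'=or(\rho_i^*)$.

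The hard part, as in the companion invariant lemmas, will be the target-ordering invariants~(9),~(10),~(11), because $tp(\rho_i^*)$ is not yet computed and must be placed correctly relative to $p^*=p$, to the (already scanned) targets of the surviving rays of $S$, and to $tp(\rho_{i+1})$. I would argue from: (i) $\rho_i^*$ sits strictly between the popped $\rho$ and the next-lower ray of $S$ in $x$-coordinate, $x(\rho)=x(p_1)<x(p_1')=x(\rho_i^*)<x(\text{next-lower ray})$, so replacing $\rho$ by $\rho_i^*$ preserves the west-to-east order of the vertical rays in $S$; (ii) for south-going vertical rays inside the active region (a simple polygon whose boundary minus the gate is $\partial$, with the rays pairwise non-crossing), the clockwise order of target points along $\partial$ is the reverse of the west-to-east order of the rays --- this is the geometric content already underlying Invariant~(9) at the prior moment; and (iii) $tp(\rho_i^*)$ is clockwise after $p=tp(\rho_i)$, because $\overline{p_1p_1'}$ leaves $p_1\in\overline{or(\rho_i)p}$ into the interior of $\bay$ without meeting $\partial$, so the south ray $\rho_i^*$ from $p_1'$ exits $\bay$ on the portion of $\partial$ past the trapezoid of $\HM(\bay)$ containing $\overline{or(\rho_i)p}$. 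Combining (i)--(iii) with Invariants~(9)--(11) at the prior moment and with the enlarged set of computed targets produced by the scan of $\partial(p^*_{\mathrm{old}},p)$ gives~(9) and~(11); for~(10) I additionally use Invariant~(3) just proved, so that $\rho_{i+1}$, if vertical, lies west of $\rho_i^*$ and therefore has a clockwise-later target. I expect making (ii)--(iii) rigorous --- converting the ``simple active region, no crossing rays'' picture into a clean statement about the clockwise order of target points on $\partial$ --- to be the only genuine obstacle; the rest is bookkeeping.
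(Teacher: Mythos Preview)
Your approach is essentially the same as the paper's: you run through the fourteen invariants in order, comparing the moment $\xi$ (just before $\rho_{i+1}$) with the moment $\xi'$ (just before $\rho_i$), and for the easy invariants your citations match the paper's almost line for line (occasionally you cite Invariant~(7) where the paper combines (6) and (8), but both routes are valid). Your identification of the crux is also correct: Invariants~(9), (10), (11) are where the work lies, and your principle~(ii) --- that the clockwise order of target points on $\partial$ is the reverse of the west-to-east order of the south-going vertical rays --- is exactly the geometric fact needed.

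The paper does not prove~(ii) as a general lemma, however; instead it gives a direct topological argument for each of (9), (10), (11) separately. For~(9), it constructs the simple polygon $Z\subset\bay$ bounded by $\overline{v_ior(\rho_i)}$, $\overline{or(\rho_i)p}$, and $\partial(p,d)$, then the subpolygon $Z'\subset Z$ cut off by the curve $\alpha=\overline{p_1p_1'}\cup\overline{p_1'tp(\rho_i^*)}$, and argues that the second ray $\rho'$ of $S$ enters $Z'$ (because it must cross $\overline{p_1p}$) but cannot cross $\alpha$ (because $r_j$ is to the left of $\rho'$ by Invariant~(7), while $\alpha$ lies to the left of the vertical through $r_j$), so $tp(\rho')\in\partial(p,tp(\rho_i^*))$. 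Invariants~(10) and~(11) get analogous but simpler region arguments. This is precisely the ``making (ii)--(iii) rigorous'' step you anticipated; the paper's method is to avoid the general statement and instead build an explicit separating curve each time. If you want to complete your write-up, the cleanest route is to imitate these region constructions rather than to try to prove~(ii) in full generality.
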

\lemmaspace
\begin{proof}
Let $\xi$ be the moment right before the next ray $\rho_{i+1}$ is
considered, and $\xi'$ be the moment right before the ray $\rho_i$ is
considered. Thus, the change to $S$ from the time $\xi'$ to $\xi$ is that $\rho$
is popped out and $\rho_i^*$ is pushed in. Recall that at the moment $\xi'$, all
invariants of the algorithm hold. Our goal is to prove that all
invariants still hold at the moment $\xi$. We assume that $S$ has
at least two rays at the moment $\xi$ (otherwise,
all invariants related to any other rays in $S$ than $\rho^*_i$
hold trivially). Let $\rho'$ be the second ray from the top of $S$ (i.e., right below
$\rho_i^*$ in $S$) at the moment $\xi$. Then $\rho'$ is also the second
ray from the top of $S$ at the moment $\xi'$ (i.e., right below
$\rho$ in $S$). See Fig.~\ref{fig:termination} for
an example.

Invariant (1) simply follows since $\rho_i^*$ is vertically going
south.

For Invariant (2), it suffices to show that $or(\rho^*_i)$
is to the southwest of $or(\rho')$. At the moment $\xi'$, since
the ray $\rho$ at the top of $S$ is on $B(r_j,r_i)$ with $i>j$,
by Invariant (6), $r_j$ is to the left of $or(\rho')$. Since
$or(\rho^*_i)\in Rec(r_j,r_{i+1})$ is to the left of $r_j$, we obtain
that $or(\rho_i^*)$ is to the left of $or(\rho')$.  Further, at the moment $\xi'$,
by Invariant (2), $or(\rho)$
is to the southwest of $or(\rho')$. Since
$or(\rho_i^*)$ is below $or(\rho)$, $or(\rho_i^*)$ is below $or(\rho')$. Since
$or(\rho_i^*)$ is both below and to the left of $or(\rho')$, we obtain
that $or(\rho_i^*)$ is to the southwest of $or(\rho')$. Invariant (2) thus
follows.

For Invariant (3), we need to show
that $or(\rho_{i+1})$ is to the southwest of $or(\rho_i^*)$.
By Lemma \ref{lem:90}, $or(\rho_{i+1})$ is to the southwest of $or(\rho_i)$.
Since $or(\rho_i^*)$ is to the right of $or(\rho_i)$,
$or(\rho_i^*)$ is also to the right of $or(\rho_{i+1})$. By Lemma
\ref{lem:1000}(3), $or(\rho_{i+1})$ is below $r_{i+1}$. Hence,
$or(\rho_{i+1})$ is below the rectangle $Rec(p_1,q_1)$ and thus below
$or(\rho_i^*)$. Since $or(\rho_{i+1})$ is both below and to the left of
$or(\rho_i^*)$, we obtain that $or(\rho_{i+1})$ is  to the southwest of
$or(\rho_i^*)$. Thus, Invariant (3) follows.

Invariant (4) simply
follows since $\rho_i^*\subset B(r_j,r_{i+1})$ and
$\rho_{i+1}\subset B(r_{i+1},r_{i+2})$.

For Invariant (5), we need to consider both $\rho_i^*$ and
$\rho_{i+1}$. For $\rho_i^*$, since $\rho_i^*\subset B(r_j,r_{i+1})$,
we need to show that the boundary portion of the Voronoi region
$\Vor(r_{i+1})$ (resp., $\Vor(r_j)$) from $v_{i}$ (resp., $v_j$) to $or(\rho_i^*)$
has been computed. For this, recall
that the boundary portion of $\Vor(r_{i+1})$ between $v_i$ and $p_1$ is a
common boundary of $\Vor(r_{i+1})$ and $\Vor(r_i)$, which has been computed. We
denote this boundary portion by $\alpha$. Also, the boundary
portion of $\Vor(r_{j})$ between $v_j$ and $p_1$ has been computed;
we denote this boundary portion by $\beta$.
Further, after we find the point $p'_1$, the line
segment $\overline{p_1p_1'}$ has also been obtained. Since
$\overline{p_1p_1'}$ appears entirely in $\SPM(\calF)$, the
boundary portion of $\Vor(r_{i+1})$ between $v_i$ and
$or(\rho_i^*)$ ($=p_1'$) is the concatenation of $\alpha$ and
$\overline{p_1p_1'}$, which has been computed. Similarly, the boundary portion of
$\Vor(r_{j})$ between $v_j$ and
$or(\rho_i^*)$ is the concatenation of $\beta$ and $\overline{p_1p_1'}$,
which has been computed too.  Thus, the case for $\rho_i^*$ holds.

For the ray $\rho_{i+1}$, which is the ray to be considered next by
the algorithm, we need to show that the boundary portion of the Voronoi
region $\Vor(r_{i+1})$ from $v_{i+1}$ to $or(\rho_{i+1})$, which is
also the boundary portion of the Voronoi region $\Vor(r_{i+2})$ from $v_{i+1}$
to $or(\rho_{i+1})$, has been computed. This simply follows from
Observation \ref{obser:afterPre}.

In summary, Invariant (5) holds.

For Invariant (6), it suffices to show
that $or(\rho_i^*)$ is to the left of $r_j$ and to the right of $r_{i+1}$.
Recall that $or(\rho_i^*)$ is on the rectangle $Rec(p_1,q_1)$, $q_1$ is to
the southeast of $p_1$, and $q_1$ is the intersection of the
vertical line through $r_j$ and the horizontal line through $r_{i+1}$.
As shown above, $r_{i+1}$ is to the left of $p_1$.
Since $or(\rho_i^*)$ is to the right of $p_1$, $or(\rho_i^*)$ is to
the right of $r_{i+1}$. Since $Rec(p_1,q_1)$ is to the
left of $r_{j}$, $or(\rho_i^*)$ is to the left of $r_j$.
Invariant (6) thus holds.

For Invariant (7), we need to show that $r_j$ is to the left of
all rays in $S\setminus \{\rho_i^*\}$. At the moment $\xi'$,
the ray $\rho\subset B(r_j,r_i)$ is at the top of $S$ with $i>j$; thus
by Invariant (7), $r_j$ is to the left of all rays in $S\setminus
\{\rho\}$. Since $S\setminus \{\rho\}=S\setminus \{\rho_i^*\}$,
Invariant (7) still holds at the moment $\xi$.

For Invariant (8), recall that $\rho'$ is the second ray from the top of
$S$ at both the moments $\xi$ and $\xi'$. We assume $\rho'$ lying on $B(r_t,r_{t'})$ with
$t'>t$. To prove Invariant (8) held at $\xi$, it suffices to
show $t'=j$ since $\rho_i^*\subset B(r_j,r_{i+1})$.
At the moment $\xi'$, since $\rho\subset B(r_j,r_i)$ is the ray at the top
of $S$, by Invariant (8), we have $j=t'$. Thus, Invariant (8) still
holds at the moment $\xi$.

For Invariant (9), it suffices to show that $tp(\rho_i^*)$ is
after $tp(\rho')$ on $\partial$. Intuitively this is true due to the
following facts: There is a path inside $\bay$ from $v_i$ to
$or(\rho_i^*)$ (i.e., the concatenation of $\overline{v_ior(\rho_i)}$,
$\overline{or(\rho_i)p_1}$, and $\overline{p_1p_1'}$),
and both $\rho_i^*$ and
$\rho'$ are vertical, and $\rho_i^*$ is to the left of $\rho'$.
A detailed analysis is given below.

First, it is easy to see that
$tp(\rho_i^*)$ must be after the point $p$ ($=tp(\rho_i)$). The
target point $tp(\rho')$ may be after $p$ or before $p$. If
$tp(\rho')$ is before $p$, then we are done. Thus, we consider the case of
$tp(\rho')$ being after $p$.
By Invariant (2) (at the moment $\xi$) proved above, $or(\rho')$
is to the northeast of $or(\rho^*_i)$.  Thus, the ray $\rho'$ must
cross $\rho_i$ before it hits $\partial$ at $tp(\rho')$; in other
words, the two line segments $\overline{or(\rho')tp(\rho')}$ and
$\overline{or(\rho_i)tp(\rho_i)}$ intersect inside $\bay$.  Further, since
$\rho'$ is to the right of $\rho^*_i$ and $p_1$ is to the left of
$\rho_i^*$, the intersection point of $\overline{or(\rho')tp(\rho')}$ and
$\overline{or(\rho_i)tp(\rho_i)}$ is on $\overline{p_1tp(\rho_i)}$.

Recall that
$\overline{v_ior(\rho_i)}$ is either a single point or a line segment
that is in $\Vor(\bay)$ and does not intersect
$\partial$. Consider the region in $\bay$ bounded by
$\overline{v_ior(\rho_i)}$, $\overline{or(\rho_i)p}$, and $\partial(p,d)$,
which we denote by $Z$. It is easy to see that $Z$ is a simple polygon. Let
$\alpha$ be the concatenation of $\overline{p_1p'_1}$ and
$\overline{p_1'tp(p_i^*)}$. Note that $\alpha$ is entirely inside $Z$
except that its two endpoints are on the boundary of $Z$, i.e.,
$p_1\in \overline{or(\rho_i)p}$ and $tp(p_i^*)\in \partial(p,d)$.
Thus, $\alpha$ divides $Z$ into two simple polygons; one of them
contains $\overline{p_1p}$ as an edge, which is denoted by $Z'$.
Since the intersection of $\overline{or(\rho')tp(\rho')}$ and
$\overline{or(\rho_i)tp(\rho_i)}$ is on $\overline{p_1tp(\rho_i)}$,
the ray $\rho'$ intersects $Z'$.  By Invariant (7)
(at the moment $\xi$) proved above, the root $r_j$ is to the left of $\rho'$.
Thus, $\rho'$ cannot intersect the curve $\alpha$. Hence,
the target point $tp(\rho')$ must be on the boundary of $Z'\cap \partial(p,d)$,
which is on $\partial(p,tp(\rho_i^*))$. Thus,
$tp(\rho_i^*)$ is after $tp(\rho')$, and Invariant (9) follows.

For Invariant (10), we need to show that if $\rho_{i+1}$ is
vertical, then the target point $tp(\rho_{i+1})$ is after $tp(\rho_i^*)$
on $\partial$.  By Invariant (3)
(at the moment $\xi$) proved above, $or(\rho_{i+1})$ is to the southwest of
$or(\rho_i^*)$. Let $Z$ be the simple polygonal region in $\bay$ bounded
by $\overline{v_ior(\rho_i)}$, $\overline{or(\rho_i)p_1}$,
$\overline{p_1p_1'}$, $\overline{p'_1tp(\rho_i^*)}$,
$\partial(tp(\rho_i^*),d)$, and $\overline{dv_i}$.
Regardless of whether $or(\rho_{i+1})= v_{i+1}$, the origin
$or(\rho_{i+1})$ of $\rho_{i+1}$ is in $Z$ since $or(\rho_{i+1})$
is to the southwest of $or(\rho_i^*)=p_1'$.
Further, since both $\rho^*_i$ and $\rho_{i+1}$ are vertical, $tp(\rho_{i+1})$
must be on $\partial(tp(\rho_i^*),d)$. Invariant (10) thus follows.

For Invariant (11), it suffices to show that $tp(\rho_i^*)$ is
after $p^*$ since $tp(\rho_i^*)$ has not been computed.  Since
$tp(\rho_i^*)$ is after $p=tp(\rho_i)$ ($=p^*$), Invariant (11) simply follows.

For Invariant (12), we need to show that the target point
$tp(\rho_{i+1})$ is after $p^*$ ($=p=tp(\rho_i)$).  Let $Z$ be the simple
polygonal region in $\bay$ bounded by $\overline{v_ior(\rho_i)}$,
$\overline{or(\rho_i)p}$, $\partial(p,d)$, and $\overline{dv_i}$.
Clearly, $or(\rho_{i+1})$ is in $Z$. Further, since $or(\rho_{i+1})$
is to the southwest of $or(\rho_i)$, regardless of whether
$\rho_{i+1}$ is vertical or horizontal, $tp(\rho_{i+1})$ must be on
$\partial(p^*,d)$. Thus,
Invariant (12) holds.

For Invariant (13), suppose $l$ is the smallest index with $l>i$
such that $\rho_l\in \Psi$ and $\rho_l$ is horizontal. We need
to prove that $tp(\rho_l)$ is after $p^*$ ($=p=tp(\rho_i)$).
Consider the simple polygon $Z$ defined above for proving Invariant (12).
Since $\rho_l$ is horizontal, by Lemma \ref{lem:100}, $\rho_l$ is below
$r_{i+1}$. Thus, it is easy to see
that $or(\rho_{l})$ is in $Z$ and $tp(\rho_{l})$ is on
$\partial(p^*,d)$. Hence, $tp(\rho_{l})$ is after $p^*$, and Invariant (13) holds.

For Invariant (14), we need to show that the trapezoid of $\HM(\bay)$
that contains $or(\rho_{i+1})$ is known, which is true by Observation
\ref{obser:afterPre}.

We conclude that all invariants of the algorithm still hold at the moment $\xi$.
\end{proof}

For the purpose of discussing the analysis of the running time of our
algorithm later, we call the ray $\rho_i^*$ the {\em termination
vertical ray} of the (horizontal) ray $\rho_i$.

We have finished the discussion for the case when $p_1'$ is on the
bottom edge of $Rec(p_1,q_1)$.

We then discuss the case when the point
$p_1'$ is on the right edge of $Rec(p_1,q_1)$ (see
Fig.~\ref{fig:successor}). Denote by
$\rho_{i1}$ the horizontal ray originating at $p'_1$ and going
east, which is on $B(r_j,r_{i+1})$ by Observation \ref{obser:20}.
Then, we pop $\rho$ out of $S$
and set $\rho$ as inactive. Also, we set $\rho_{i1}$ as active and move
the reference point $p^*$ to $p$ ($=tp(\rho_i)$).
Finally, we let $\rho_{i1}$ be the
next ray to be considered by the algorithm (note that $\rho_{i1}$ is
not in $\Psi$). Lemma \ref{lem:invariants50} below shows that all
invariants of the algorithm hold. Recall that the trapezoid of
$\HM(\bay)$ that contains $p_1'$ has been computed.

\begin{figure}[t]
\begin{minipage}[t]{\linewidth}
\begin{center}
\includegraphics[totalheight=2.0in]{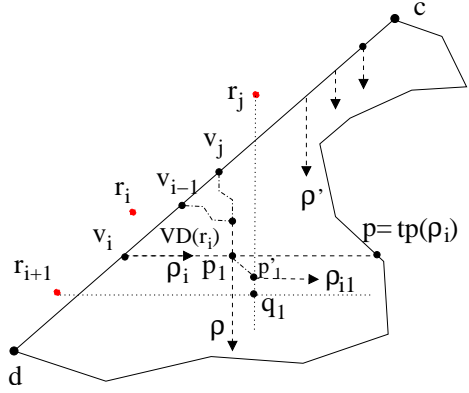}
\caption{\footnotesize Illustrating an example that
the point $p_1'$ ($=or(\rho_{i1})$) is on the right edge of
$Rec(p_1,q_1)$.}
\label{fig:successor}
\end{center}
\end{minipage}
\vspace*{-0.15in}
\end{figure}

\lemmaspace
\begin{lemma}\label{lem:invariants50}
At the moment right before the next ray $\rho_{i1}$ is considered,
all invariants of the algorithm hold.
\end{lemma}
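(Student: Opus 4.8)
The plan is to follow the template of the proof of Lemma \ref{lem:invariants40}. Let $\xi$ be the moment right before $\rho_{i1}$ is considered and $\xi'$ the moment right before $\rho_i$ is considered; then the only change to $S$ between $\xi'$ and $\xi$ is that $\rho$, the top of $S$ at $\xi'$ (lying on $B(r_j,r_i)$ with $i>j$), is popped off, and nothing is pushed. As in Lemma \ref{lem:invariants40} I would assume that $S$ still has at least two rays at $\xi$, the remaining cases being trivial since the invariants ``related to rays in $S$'' then become vacuous. Let $\rho'$ be the new top of $S$ at $\xi$ (the second ray from the top at $\xi'$), and suppose $\rho'\subset B(r_t,r_{t'})$ with $t'>t$. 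The verification will use the invariants known to hold at $\xi'$ together with the geometric facts already established in Subcase 2(b.2): $p_1=\rho_i\cap\rho$ lies on $\overline{or(\rho_i)tp(\rho_i)}$; $q_1$ (with $x(q_1)=x(r_j)$ and $y(q_1)=y(r_{i+1})$) is to the southeast of $p_1$; $\overline{p_1p_1'}=B_M'(r_j,r_{i+1})$ is $(-1)$-sloped, lies inside $\bay$, appears in $\SPM(\calF)$, and has its endpoint $p_1'$ on the right edge of $Rec(p_1,q_1)$, so $x(p_1')=x(r_j)$ and $y(q_1)\le y(p_1')\le y(p_1)$; and the trapezoid of $\HM(\bay)$ containing $p_1'$ is already known.

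Most of the fourteen invariants go through routinely. Invariants (1), (2), (6), (8), (9) hold at $\xi$ because the stack $S$ at $\xi$ is a sub-stack of $S$ at $\xi'$ and these are inherited by any sub-stack; Invariant (4), $j=t'$, follows from Invariant (8) at $\xi'$ applied to the consecutive pair $(\rho,\rho')$; Invariant (5) for the rays already in $S$ is inherited from $\xi'$ (the algorithm only enlarges Voronoi regions), and for $\rho_{i1}\subset B(r_j,r_{i+1})$ it follows, exactly as for $\rho_i^*$ in Lemma \ref{lem:invariants40}, by concatenating the already-computed boundary portion of $\Vor(r_{i+1})$ (resp.\ $\Vor(r_j)$) from $v_i$ (resp.\ $v_j$) to $p_1$ with the just-computed segment $\overline{p_1p_1'}$; Invariant (10) is vacuous since $\rho_{i1}$ is horizontal; Invariant (11) holds because the scan of $\partial$ performed in Subcase 2(b) computed the target points of precisely the rays of $S$ whose targets precede $p=p^*$, so any still-uncomputed target lies on $\partial(p^*,d)$ by the bottom-to-top $c$-to-$d$ ordering of Invariant (9); and Invariant (14) holds because the trapezoid of $\HM(\bay)$ containing $p_1'=or(\rho_{i1})$ was found in Subcase 2(b.2). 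For Invariant (3) I would check that $or(\rho_{i1})=p_1'$ is to the southwest of $or(\rho')$: horizontally, $x(p_1')=x(r_j)\le x(or(\rho'))$ since by Invariant (7) at $\xi'$ the root $r_j$ lies to the left of the vertical ray $\rho'$; vertically, $y(p_1')\le y(p_1)\le y(or(\rho))\le y(or(\rho'))$, using that $p_1$ lies on the downward ray $\rho$ and Invariant (2) at $\xi'$. Invariants (12) and (13) --- that $tp(\rho_{i1})$, and the target of the first horizontal ray among $\rho_{i+1},\ldots,\rho_{k-1}$ still to be considered, both lie after $p^*=p$ --- follow by the same ``region $Z$'' argument as in the proofs of Invariants (12) and (13) of Lemma \ref{lem:invariants40}: the relevant horizontal ray has its origin inside the simple polygon $Z$ that the path $\overline{v_ior(\rho_i)}\cup\overline{or(\rho_i)p}$ cuts off from the current active region, and, going east, it can leave $Z$ only through $\partial(p,d)$.

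The main obstacle is Invariant (7). In Lemma \ref{lem:invariants40} the popped ray $\rho$ and the pushed ray $\rho_i^*$ share the same small root $r_j$, so the root playing the role of $r_t$ in Invariant (7) is unchanged and the invariant transfers immediately; here popping $\rho$ promotes $\rho'$ to the top and changes that root to $r_t$, the small root of $\rho'$, so Invariant (7) must be re-established. I would take $\rho''$ to be the ray immediately below $\rho'$ in $S$ (the third ray from the top at $\xi'$), say $\rho''\subset B(r_s,r_{s'})$ with $s'>s$; Invariant (8) at $\xi'$ for the pair $(\rho',\rho'')$ gives $s'=t$, and then Invariant (6) at $\xi'$ for $\rho''$ gives that $r_t=r_{s'}$ lies to the left of $\rho''$. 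For every ray strictly below $\rho''$ in $S$, Invariant (2) at $\xi'$ places its origin to the northeast of $or(\rho'')$, and since it and $\rho''$ are both vertical by Invariant (1), it lies to the right of $\rho''$ and hence (by transitivity of ``lies to the left of'' through the nonempty ray $\rho''$) to the right of $r_t$; thus $r_t$ lies to the left of every ray in $S\setminus\{\rho'\}$, which is Invariant (7). Assembling these verifications shows that all invariants hold at $\xi$, completing the proof.
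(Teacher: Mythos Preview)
Your proof is correct and takes essentially the same approach as the paper's. The one substantive difference is Invariant~(7): the paper disposes of Invariants (6), (7), (8), (9), (11) in a single line, saying they ``trivially hold since no new ray is pushed into $S$,'' whereas you correctly observe that (7) is not literally inherited---the role of the top ray changes from $\rho$ to $\rho'$, so the root in question changes from $r_j$ to $r_t$---and you supply a direct argument via Invariants (8), (6), and (2) at $\xi'$. Your argument is valid. The paper's terse claim is also defensible by a stack-discipline argument: when $\rho'$ was first pushed onto $S$, it was the top and Invariant~(7) held, placing $r_t$ to the left of all rays then below $\rho'$; since $S$ is a stack, those are exactly the rays below $\rho'$ now. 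So both routes work; yours is simply more explicit about a step the paper leaves implicit.

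One minor remark: your blanket assumption that $S$ has at least two rays at $\xi$ is slightly stronger than needed (the paper assumes only $S\neq\emptyset$). With a single ray $\rho'$ in $S$, Invariants (3), (4), (5), (6), (11) are not vacuous and still require the same checks you give; only (7), (8), (9) become vacuous. This does not affect correctness, since your arguments for those invariants go through verbatim in the one-ray case.
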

\lemmaspace
\begin{proof}
Let $\xi$ be the moment right before the next ray $\rho_{i1}$ is
considered, and $\xi'$ be the moment right before the ray $\rho_i$ is
considered. Thus, the only change to $S$ from the
time $\xi'$ to $\xi$ is that $\rho$
is popped out. At the moment $\xi'$, all
invariants of the algorithm hold. Our goal is to prove that all
invariants still hold at the moment $\xi$.
We assume $S\not=\emptyset$ at the moment $\xi$ (otherwise,
all invariants related to any rays in $S$ hold trivially).
Let $\rho'$ be the ray at the top of $S$
at the moment $\xi$. Then $\rho'$ is the second ray from the top
of $S$ (i.e., right below the ray $\rho$ in $S$)
at the moment $\xi'$. Refer to Fig.~\ref{fig:successor} for an example.

Invariants (1) and (2) simply hold.

For Invariant (3), we need to show that $or(\rho_{i1})$ is to the
southwest of $or(\rho')$.
By Invariant (2) at the moment $\xi'$, $or(\rho)$ of the ray $\rho$ at the top of $S$
is to the southwest of $or(\rho')$.  Since $or(\rho_{i1})=p_1'$ is below $or(\rho)$,
$or(\rho_{i1})$ is below $or(\rho')$. Also, by Invariant (7) at the moment $\xi'$,
since the top ray $\rho$ in $S$ is on $B(r_j,r_i)$ (with $i>j$), $r_j$ is to the left of
$\rho'$ (which is vertical).
Since $p_1'$ is on the vertical line through $r_j$, $p_1'=or(\rho_{i1})$ is
to the left of $or(\rho')$. Hence, $or(\rho_{i1})$ is to the
southwest of $or(\rho')$, and Invariant (3) follows.

For Invariant (4), suppose $\rho'$ is on $B(r_{t},r_{t'})$ with $t'>t$;
we need to show $j=t'$ since $\rho_{i1}\subset B(r_j,r_{i+1})$ is
the next ray to be considered by the algorithm.
At the moment $\xi'$, $\rho\subset B(r_j,r_i)$ (with $i>j$) is at the top
of $S$ and $\rho'$ is the second ray from the top of $S$; thus,
by Invariant (8) at the moment $\xi'$, $j=t'$.
Invariant (4) hence follows.

For Invariant (5), since no new ray is pushed onto $S$, we only
need to consider the ray $\rho_{i1}$.
The proof is the same as that for Invariant (5) (for the ray $\rho_i^*$) in
the proof of Lemma \ref{lem:invariants40}, and we omit it.

Invariants (6), (7), (8), (9), and (11) trivially hold since
no new ray is pushed into $S$.

Invariant (10) simply follows since $\rho_{i1}$ is the next ray to
be considered by the algorithm and $\rho_{i1}$ is not vertical.

For Invariant (12), we need to show that the target point
$tp(\rho_{i1})$ is after $p^*$ ($=p=tp(\rho_i)$).
Consider the simple polygonal region $Z$ in $\bay$ bounded by $\overline{v_ior(\rho_i)}$,
$\overline{or(\rho_i)p}$, $\partial(p,d)$, and $\overline{dv_i}$.
It is easy to see that
$or(\rho_{i1})$ is in $Z$ and $tp(\rho_{i1})$ is on
$\partial(p^*,d)$. Thus, $tp(\rho_{i1})$ is after $p^*$.

For Invariant (13), suppose $l$ is the smallest index with $l>i$
such that $\rho_l\in \Psi$ and $\rho_l$ is horizontal. We need
to prove that $tp(\rho_l)$ is after $p^*$ ($=p=tp(\rho_i)$).
Consider the simple polygon $Z$ defined above for proving Invariant (12).
Since $\rho_l$ is horizontal, by Lemma \ref{lem:100}, $\rho_l$ is below
$r_{i+1}$. Thus, it is easy to see that
$or(\rho_{l})$ is in $Z$ and $tp(\rho_{l})$ is on
$\partial(p^*,d)$. Hence, $tp(\rho_{l})$ is after $p^*$,
and Invariant (13) holds.

For Invariant (14), recall that the trapezoid of $\HM(\bay)$
that contains $p_1'$ ($=or(\rho_{i1})$) has been computed, and thus
Invariant (14) holds.

We conclude that all invariants of the algorithm hold at the moment $\xi$.
\end{proof}

%

For analysis, we refer to the ray $\rho_{i1}$ as a {\em successor horizontal ray}
of the (horizontal) ray $\rho_i$.

This finished the discussion for the case when $p_1'$ is on the
right edge of $Rec(p_1,q_1)$.

Again, $\rho_{i1}$ is the next ray to be considered by the algorithm.
Although our earlier discussion on the algorithm processing
the next ray is mostly on processing a ray $\rho_i\in \Psi$,
the processing for $\rho_{i1}$ ($\not\in \Psi$) is the same, and
the proof for all invariants is also very similar. In particular, there may
also be a termination vertical ray or a successor horizontal ray
generated at the end of processing $\rho_{i1}$,
which we still refer to as a termination vertical ray or a
successor horizontal ray of $\rho_i$. It is easy to see that
a horizontal ray $\rho_i$ may lead to multiple successor horizontal rays but at
most one termination vertical ray, i.e.,
a successor horizontal ray may generate another successor horizontal ray
(e.g., see Fig.~\ref{fig:sucrays}),
but a termination vertical ray does not generate another ray.

One might be curious about why the roles of horizontal rays and vertical rays
are quite different in our above algorithm, while the $L_1$ metric does not prefer
one of these two directions over the other.  The asymmetric roles of these
two directions are related to the order of $\rho_1,\rho_2,\ldots,\rho_{k-1}$ in which
we process these rays.  If one uses a reversed order
(i.e., $\rho_{k-1},\rho_{k-2},\ldots,\rho_1$) in the processing, then
the roles of these two types of rays will be reversed.

For the purpose of analyzing the running time of the algorithm later,
we discuss more details related to the successor horizontal rays of a horizontal
ray $\rho_i\in \Psi$.
We process the first successor horizontal ray $\rho_{i1}$ of $\rho_i$
in the same way as $\rho_i$.
After $\rho_{i1}$ is processed, we may obtain another
successor horizontal ray $\rho_{i2}$. In general, assume all
successor horizontal rays we obtain for $\rho_i$ are
$\rho_{i1},\rho_{i2},\ldots,\rho_{it}$, ordered by the time when they are
produced (see Fig.~\ref{fig:sucrays}).
Then, after the last ray $\rho_{it}$ is processed, we may
or may not obtain the termination vertical ray $\rho_i^*$. For example,
when processing $\rho_{it}$, if $S=\emptyset$, then no
termination vertical ray is generated. In either case, after
$\rho_{it}$ is processed, we continue to consider the next ray
$\rho_{i+1}\in \Psi$.

Let $\rho_{i0}=\rho_i$. For each $1\leq w\leq t$, we define the
points $p_{w+1}$, $q_{w+1}$, and $p'_{w+1}$ for the ray $\rho_{iw}$
similarly to the points $p_{1}$, $q_1$, and $p'_{1}$ for
$\rho_{i0}$ (see Fig.~\ref{fig:sucrays}).  Note that when processing
$\rho_{it}$, depending on the specific situations,
the points $p_{t+1}$, $q_{t+1}$, and $p'_{t+1}$ may
not exist (e.g., if $S=\emptyset$). In the following, we assume
they exist (otherwise, the analysis is actually simpler).

\begin{figure}[t]
\begin{minipage}[t]{\linewidth}
\begin{center}
\includegraphics[totalheight=2.5in]{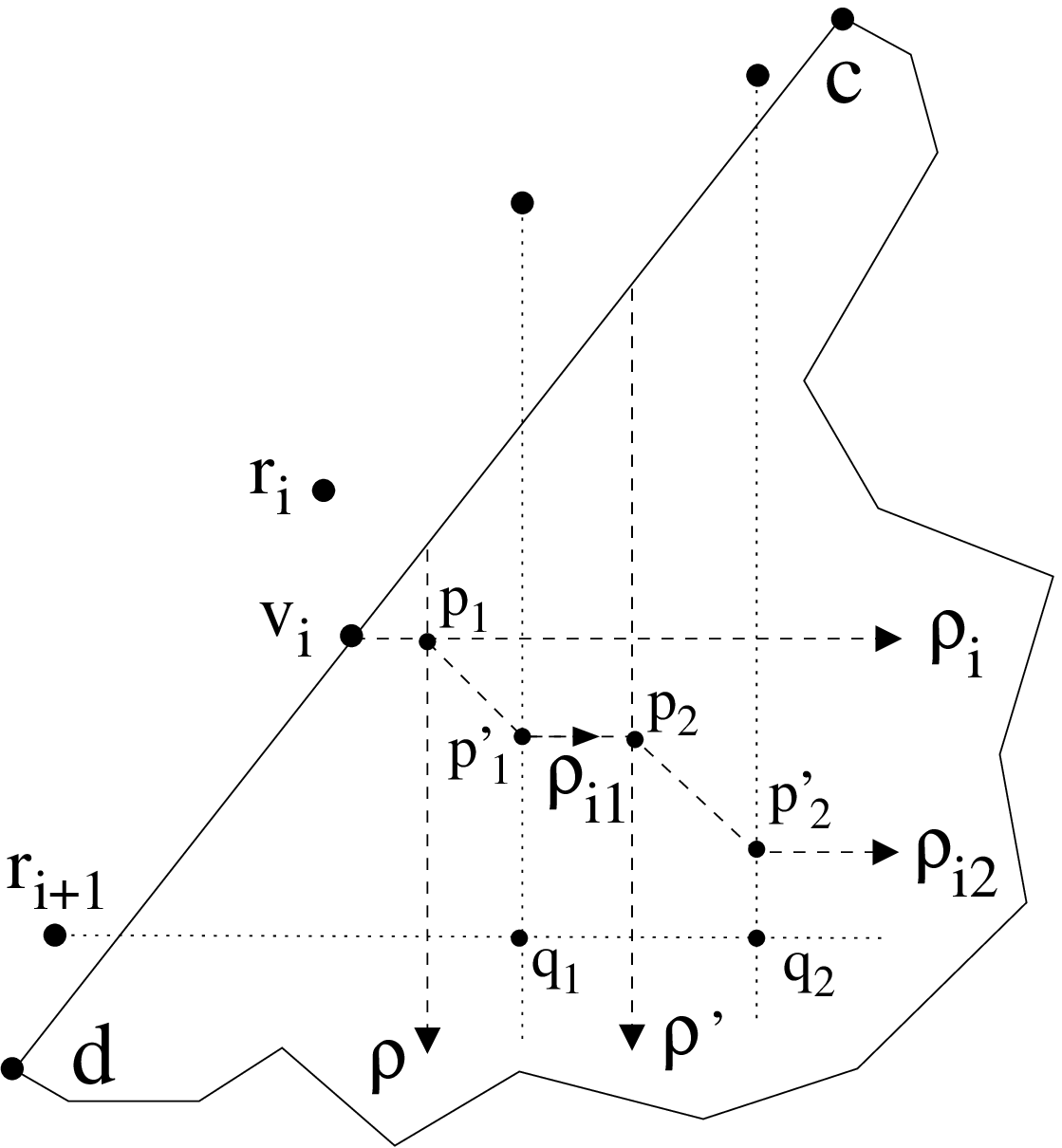}
\caption{\footnotesize Illustrating the first two successor horizontal rays
$\rho_{i1}$ and $\rho_{i2}$ of a horizontal ray $\rho_i\in \Psi$.}
\label{fig:sucrays}
\end{center}
\end{minipage}
\vspace*{-0.15in}
\end{figure}

It is easy to see that for each $1\leq w\leq t$, the ray
$\rho_{i,w-1}$ contains the top edge of the rectangle
$Rec(p_w,p_w')$ and the ray $\rho_{iw}$ touches the bottom edge of
$Rec(p_w,p_w')$. In addition, the ray $\rho_{it}$ contains the top
edge of $Rec(p_{t+1},p_{t+1}')$. In other words, $\rho_{i0}$ ($=\rho_i$),
$Rec(p_1,p_1')$, $\rho_{i1}$, $Rec(p_2,p_2')$, $\rho_{i2}$,
$\ldots$, $Rec(p_t,p_t')$, $\rho_{it}, Rec(p_{t+1},p_{t+1}')$ are
ordered from high to low and left to right (see Fig.~\ref{fig:sucrays}).
Thus, no two different rectangles in the sequence
above intersect in their interior. Actually, the rectangles
$Rec(p_1,p_1'), Rec(p_2,p_2'), \ldots, Rec(p_{t+1},p_{t+1}')$ are
ordered from northwest to southeast. Further, all successor
horizontal rays and rectangles involved are higher than
$r_{i+1}$. To see this fact, note that for each $1\leq w\leq t+1$, the
point $p_w'$ is higher than the point $q_w$ and $q_w$ is on the
horizontal line through $r_{i+1}$ (see Fig.~\ref{fig:sucrays}).
Thus, all these rectangles are
contained in the horizontal strip between the horizontal line
containing $\rho_i$ and the horizontal line through $r_{i+1}$;
we denote this strip by $\HStrip(\rho_i)$. Recall that during our
algorithm, the horizontal visibility map $\HM(\bay)$ is utilized as a
guide and we often move among its trapezoids.
When computing and processing these successor horizontal rays, we always
follow $\HM(\bay)$, e.g., for each $w=0,1,\ldots,t$,
we utilize $\HM(\bay)$ to compute the target
point $tp(\rho_{iw})$ of the ray $\rho_{iw}$, to determine whether
$\overline{p_{w+1}p'_{w+1}}$ intersects $\partial$, and to find the trapezoid
in $\HM(\bay)$ that contains $p'_{w+1}=or(\rho_{i,w+1})$. The discussion above
implies that the time for processing all successor horizontal rays of
$\rho_i$ is proportional to $O(t)$ plus the number of trapezoids in
$\HM(\bay)$ that intersect the horizontal strip $\HStrip(\rho_i)$ as
well as the time for computing the target points of some (vertical) rays in
$S$.

In addition, during this process, each of the $t$ successor horizontal rays
$\rho_{iw}$ of $\rho_i$ corresponds to a ray in $S$ that is popped out.
Thus, there are $t$ vertical rays popped out of $S$ for $\rho_i$.
But, at most one ray, i.e., the termination vertical ray
$\rho_i^*$, is pushed onto $S$ for $\rho_i$.

We have finished the description of our algorithm for computing
$\SPM(\bay)$, which is
summarized by the pseudo-code of Algorithm \ref{algo:10}.

\subsubsection{The Time Complexity}

It remains to analyze the running time of the algorithm. First,
we show the following lemma.

\lemmaspace
\begin{lemma}\label{lem:numrays}
The total number of rays ever contained in the stack $S$
throughout the entire algorithm is at most $k$. Once a ray is popped out
of $S$, it will never be pushed back in again.
\end{lemma}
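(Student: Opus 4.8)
The plan is to charge the rays that enter $S$ against the original rays $\rho_1,\dots,\rho_{k-1}$ in $\Psi$, using the structural facts established above. The key accounting observation is that a vertical ray is pushed onto $S$ in exactly two situations: (i) directly in Case~1, when some $\rho_i\in\Psi$ itself is vertical; or (ii) as a \emph{termination vertical ray} $\rho_i^*$ at the end of processing a horizontal ray of the form $\rho_i$ or one of its successor horizontal rays. I would first note that every ray ever pushed onto $S$ is vertical (this is Invariant~(1), and it is also clear from the pseudo-code: only vertical rays are pushed), so horizontal rays and successor horizontal rays are never themselves put into $S$.

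Next I would set up the charging scheme. For each $\rho_i\in\Psi$ that is vertical, its single push in Case~1 is charged to $\rho_i$ itself; that accounts for at most one push per such index. For each $\rho_i\in\Psi$ that is horizontal, I would recall the discussion of successor horizontal rays: processing $\rho_i$ produces a sequence $\rho_{i1},\rho_{i2},\dots,\rho_{it}$ of successor horizontal rays, and during this whole process exactly $t$ vertical rays are \emph{popped} out of $S$ (one for each successor), while \emph{at most one} vertical ray — the termination vertical ray $\rho_i^*$ — is \emph{pushed} onto $S$. So the processing triggered by a single horizontal $\rho_i\in\Psi$ contributes at most one push to $S$, which I charge to $\rho_i$. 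Since each index $i\in\{1,\dots,k-1\}$ is charged at most one push (and the processing of $\rho_k$, if relevant, or of the last ray adds nothing extra), the total number of pushes over the entire run is at most $k-1$; counting the rays present initially or an off-by-one slack, the total number of distinct rays ever in $S$ is at most $k$.

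For the second sentence — once popped, a ray is never pushed back — I would argue by tracking the identity of the rays. Every ray that is pushed onto $S$ is either one of the fixed rays $\rho_1,\dots,\rho_{k-1}$ (pushed only at the single moment its index is processed in Case~1, after which the algorithm advances monotonically through $\Psi$ and never revisits that index) or a freshly created termination vertical ray $\rho_i^*$, whose origin $p_1'$ is a brand-new point determined at the moment it is created; such a ray, once popped, corresponds to a rectangle-edge point that the algorithm has already passed and will never regenerate, because the reference point $p^*$ and the sweep through $\Psi$ both move strictly forward (Invariants~(3), (11), (12), (13) guarantee this monotone progress). Hence no ray can reappear on $S$ after being popped.

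The main obstacle I expect is making the bound genuinely tight rather than merely $O(k)$: I have to be careful that the termination vertical ray of $\rho_i$ is charged to $\rho_i$ and \emph{not} double-counted when that same $\rho_i^*$ later sits on $S$ and gets popped during the processing of some \emph{later} horizontal ray $\rho_{i'}$ (it would then be one of that $\rho_{i'}$'s "successor pops," not a new push). In other words, I need to verify that the map "push event $\mapsto$ the index $i\in\Psi$ currently being processed" is well defined and at most one-to-one, using the fact that between consecutive indices of $\Psi$ the only structural events are a bounded chain of successor-horizontal-ray processings each of which nets at most one push. I would formalize this by an amortized argument: assign a token to each $\rho_i\in\Psi$ at the start, have each push consume a token of the index being processed, and observe from the successor-ray analysis that each index spends at most one token. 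That, together with the monotone-advance observation for the "never pushed back" claim, completes the proof.
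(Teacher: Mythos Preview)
Your proposal is correct and follows essentially the same approach as the paper: charge each push onto $S$ to the unique $\rho_i\in\Psi$ being processed (one push if $\rho_i$ is vertical, at most one termination vertical ray if $\rho_i$ is horizontal), yielding at most $k-1\le k$ pushes, and observe that the algorithm never revisits a popped ray. The paper's own proof is a two-sentence version of exactly this argument, so your more detailed accounting with the explicit charging scheme and the discussion of why termination vertical rays are not double-counted is faithful elaboration rather than a different route.
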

\lemmaspace
\begin{proof}
When processing each ray $\rho_i\in \Psi$, if it is vertical, then we push it
onto $S$; if it is horizontal, then as shown above, although
there may be multiple successor horizontal rays of $\rho_i$, at most one
ray, i.e., the termination vertical ray, is put into $S$. Further,
according to our algorithm, once a ray in $S$ is popped out, it will
never be considered again, and thus never be put into $S$ again.
\end{proof}

We then discuss the total time for computing the target
points for all vertical ray shootings in the entire algorithm.
We use a reference point $p^*$ on $\partial$ and the
vertical visibility map $\VM(\bay)$ for this purpose. To conduct the
vertical ray shootings, because the rays involved are always target-sorted,
we simply scan the edges in a portion of $\partial$ between $p^*$
and another point $p$ that is after $p^*$ on $\partial$. Further,
when such a scanning is done, we always move $p^*$ to $p$. This
implies that any portion of $\partial$ is scanned at most once in
the entire algorithm. In addition, the number of all vertical ray
shootings is at most $k$. This is because each vertical ray involved is from
$S$, and by Lemma \ref{lem:numrays}, the number of rays ever contained in
$S$ is at most $k$. Therefore, the total time for computing the
target points of all vertical rays in the entire algorithm is $O(n'+k)$.

For each ray $\rho_i\in \Psi$, if it is vertical, then processing it takes
$O(1)$ time, i.e., pushing $\rho_i$ onto
$S$. If it is horizontal, then assume that $\rho_i$  has $t$ successor
horizontal rays. We have discussed that, besides the procedure for
computing their target points, the time for processing these $t$ successor
horizontal rays is proportional to $t$ plus the number of
trapezoids in $\HM(\bay)$ intersecting the horizontal strip
$\HStrip(\rho_i)$. We have also shown that each successor
horizontal ray corresponds to a ray in the stack $S$ that is popped
out. Since there are at most $k$ rays ever contained in $S$ by Lemma \ref{lem:numrays},
the total number of successor horizontal rays in the entire algorithm is
at most $k$. On the other hand, consider two different horizontal
rays $\rho_i$ and $\rho_j$ in $\Psi$. We claim that the two
horizontal strips $HStip(\rho_i)$ and $\HStrip(\rho_j)$ do not
intersect each other in their interior. WLOG, assume $i<j$.
Indeed, the strip $\HStrip(\rho_i)$ is above the horizontal line
through the root $r_{i+1}$ and $\HStrip(\rho_j)$ is below the ray
$\rho_j$. Since $\rho_j$  is horizontal and $j>i$, by Lemma
\ref{lem:100}, $\rho_j$ is below $r_{i+1}$. Our claim thus holds.
The above claim implies that, besides the time for computing
their target points, the time for processing all successor
horizontal rays in the entire algorithm is proportional to
the total number of trapezoids in $\HM(\bay)$ plus $k$, which is $O(n'+k)$.

The algorithm performs totally $O(k)$ horizontal ray shootings, for computing
the target points of the horizontal rays in $\Psi$ and their successor
horizontal rays.  Using $\HM(\bay)$ and based on the fact that we already know
(i.e., have computed) the trapezoid of $\HM(\bay)$ containing the origin of each
such horizontal ray, all such horizontal ray shootings can be done in $O(k)$ time.


In summary, the total running time of our algorithm for computing the
shortest path map for the bay $\bay$ is $O(n'+m')$ (where $m'=k-1$
is the number of $\SPM(\calM)$ vertices on $\overline{cd}$).
It is easy to see that the size of this SPM is $O(n'+m')$
(e.g., since the running time is $O(n'+m')$).

Theorem \ref{theo:baytime} thus follows.

\sectionspace
\section{Computing a Shortest Path Map for a Canal}
\label{sec:canal}

In this section, we show how to compute a shortest path map for
a canal, which uses our shortest path map algorithm for a bay
in Section \ref{sec:bay} as a main procedure.

Consider a canal $\canal$ with $x$ and $y$ as the corridor path
terminals and two gates $\overline{xd}$ and
$\overline{yz}$ (e.g., see Fig.~\ref{fig:baycanal}).  There may
be $\SPM(\calM)$ vertices on both gates. Let $m_1$
(resp., $m_2$) be the number of $\SPM(\calM)$ vertices on
$\overline{xd}$ (resp., $\overline{yz}$), and $n'$ be the number of
obstacle vertices of the canal. We
show that a shortest path map for the canal can be computed in
$O(m_1+m_2+n')$ time. Let $R_1$ (resp., $R_2$)
be the set of roots whose cells in $\SPM(\calM)$ intersect
$\overline{xd}$ (resp., $\overline{yz}$).

Recall that we have defined {\em wavefront incoming/outgoing} terminals in Section \ref{subsec:single}. Namely, consider the corridor path terminals $x$ and $y$ of $\canal$. It is possible that $y$ has a shortest path from
$s$ via $x$ (i.e., this path contains the corridor path of $\canal$), 
in which case there is
a ``pseudo-cell" in $\SPM(\calM)$ with $x$ as the root and $y$ being the only other point in this ``pseudo-cell"; then $x$ is a {\em wavefront incoming}
terminal and $y$ is the {\em wavefront-outgoing} terminal. If neither $y$
has a shortest path from $s$ via $x$ nor $x$ has a shortest path
from $s$ via $y$, then both $x$ and $y$ are {\em wavefront-incoming}
terminals. In this case, there is a point on the corridor path of
$\canal$ that has two shortest paths from $s$, one via $x$ and the
other via $y$ (we will use this property to compute an SPM for
$\canal$).

Note that for the two terminals $x$ and $y$, either both of them are
wavefront-incoming terminals, or only one of them is an wavefront-incoming
terminal and the other is an wavefront-outgoing terminal. Below, we first
discuss the former case; the algorithm for the latter case is very
similar.


\subsection{Both $x$ and $y$ are Wavefront-Incoming Terminals}

If both $x$ and $y$ are wavefront-incoming terminals, by the
properties of the corridor path, there is a point $p^*$ on the
corridor path of $\canal$ such that there exist two
shortest paths $\pi_1(s,p^*)$ and $\pi_2(s,p^*)$ from $s$ to $p^*$ with
$x\in \pi_1(s,p^*)$ and $y\in \pi_2(s,p^*)$.
The point $p^*$ can be found in $O(n')$ time
since we know the shortest path distances from $s$ to $x$ and to $y$.

Let $\Vor(\canal,R_1)$ be the (additively) weighted
Voronoi diagram of $\canal$ with respect to the root set $R_1$, i.e.,
we treat $\canal$ as a bay with the gate $\overline{xd}$. As
defined in Section \ref{sec:bay},
$\Vor(\canal,R_1)$ is the Voronoi decomposition of $\canal$ with
respect to the roots in $R_1$.  Similarly, let $\Vor(\canal,R_2)$ be
the weighted Voronoi diagram of $\canal$ with respect to the root set
$R_2$. Using our algorithm in Section \ref{sec:bay},
$\Vor(\canal,R_1)$ and $\Vor(\canal,R_2)$ can be computed in totally
$O(m_1+m_2+n')$ time. Denote by $\Vor(\canal,R_1,R_2)$ the weighted
Voronoi diagram of $\canal$ with respect to the roots in $R_1\cup R_2$.
As shown in Section \ref{sec:bay}, after $\Vor(\canal,R_1,R_2)$ is
computed, an SPM on $\canal$ with the source $s$ can be built in $O(m_1+m_2+n')$ time.
Thus, the key is to compute  $\Vor(\canal,R_1,R_2)$. Below, we show
how to compute $\Vor(\canal,R_1,R_2)$ in $O(m_1+m_2+n')$ time with
the help of the point $p^*$, $\Vor(\canal,R_1)$, and
$\Vor(\canal,R_2)$.

To compute $\Vor(\canal,R_1,R_2)$, our strategy is to find a
``dividing curve" in $\canal$ that divides $\canal$ into two simple
polygons $C_1$ and $C_2$, such that each point in $C_1$ has a
shortest path from $s$ via a root in $R_1$ and each point in $C_2$
has a shortest path from $s$ via a root in $R_2$. Further, each
point on the dividing curve has two shortest paths from $s$, one
path containing a root in $R_1$ and the other path containing a root
in $R_2$. After finding $C_1$ and $C_2$, we simply apply the
algorithm in Section \ref{sec:bay} on $C_1$ and $R_1$ to compute the
weighted Voronoi diagram of $C_1$ with respect to $R_1$, i.e.,
$\Vor(C_1,R_1)$. We similarly compute $\Vor(C_2,R_2)$. Then,
$\Vor(\canal,R_1,R_2)$ consists of $\Vor(C_1,R_1)$ and
$\Vor(C_2,R_2)$. Thus, our remaining task is to compute a dividing
curve in $\canal$, which we denote by $\PC$.

Note that the point $p^*\in \PC$. Computing $\PC$ can be done in
$O(n'+m_1+m_2)$ time by a procedure similar to the merge procedure
of the divide-and-conquer algorithm for computing the Voronoi
diagram of a set of points in the plane \cite{ref:ShamosCl75}. The details are
given below.

 To compute $\PC$, we start at the
point $p^*$ and trace $\PC$ out by traversing some corresponding
cells in $\Vor(\canal,R_1)$ and in $\Vor(\canal,R_2)$
simultaneously. Specifically, we first compute a triangulation of
$\Vor(\canal,R_1)$, denoted by $Tri_1$, and a triangulation of
$\Vor(\canal,R_2)$, denoted by $Tri_2$ (this can be done in linear
time \cite{ref:ChazelleTr91} since each cell of $\Vor(\canal,R_1)$
and $\Vor(\canal,R_2)$ is a simple polygon).
Since $p^*$ is in a triangle (say, $tri_1$) of $Tri_1$ and is in
a triangle (say, $tri_2$) of $Tri_2$, we find $tri_1$ in $Tri_1$ and
$tri_2$ in $Tri_2$.  From the cell
of $\Vor(\canal,R_1)$ (resp., $\Vor(\canal,R_2)$) that contains
$tri_1$ (resp., $tri_2$), we obtain the root $r_1$ (resp., $r_2$) of
that cell. We then move along the bisector $B(r_1,r_2)$ inside
$\canal$, starting at $p^*$ and going in each of the two directions
along $B(r_1,r_2)$. As following a line segment or a ray of
$B(r_1,r_2)$ in a direction, we determine, in $O(1)$ time, which of
$tri_1$ or $tri_2$ that we exit first. As we cross from one triangle
$tri$ (say, in $Tri_1$) to the next triangle $tri'$, we check which
of the following cases occurs: (i) The next triangle $tri'$ (in
$Tri_1$) is contained in the same cell of $\Vor(\canal,R_1)$ as that
containing $tri$; (ii) $tri'$ is contained in a different cell of
$\Vor(\canal,R_1)$ than that containing $tri$; (iii) the movement
touches the boundary of $\canal$ (thus $tri'$ does not exist).  In
Case (i), we continue to follow the same bisector (say,
$B(r_1,r_2)$). In Case (ii), we find the root (say, $r'_1$) of the
next cell of $\Vor(\canal,R_1)$; then we compute a new bisector
(say, $B(r'_1,r_2)$), and our movement continues along
$B(r'_1,r_2)$. In Case (iii), the movement reaches an end of $\PC$
(on the boundary of $\canal$). The dividing curve $\PC$ is the
concatenation of the portions of the bisectors thus traversed.

Due to the properties of the cells of $\Vor(\canal,R_1)$ and
$\Vor(\canal,R_2)$, our movement above can visit each triangle of $Tri_1$ and $Tri_2$
at most once, taking $O(1)$ time per triangle visited.
Thus, the partition curve $\PC$ is computed in $O(n'+m_1+m_2)$ time.

In summary, in this case, an SPM on $\canal$ can be computed in
$O(n'+m_1+m_2)$ time.

\subsection{Only One of $x$ and $y$ is a Wavefront-Incoming Terminal}

In this case, exactly one of $x$ and $y$ is a wavefront-incoming terminal.  The
algorithm is similar to that for the former case.  The only
difference is on how to find a point $p^*$ on the dividing curve
$\PC$ because in this case no such a point $p^*$ can be on the
corridor path of $\canal$.

WLOG, we assume that $x$ is a wavefront-incoming terminal and $y$ is
not. Then each point on the corridor path (including $y$) has a
shortest path from $s$ via $x$.  Further, the shortest path  
through $x$ passes $y$ and goes to the outside of $\canal$,
which means that $y$ is the root of a cell $C(y)$ in
$\SPM(\calM)$. If the canal gate $\overline{yz}$ is completely
contained in the cell $C(y)$, then it is easy to see that
$\Vor(\canal,R_1)$ is $\Vor(\canal,R_1,R_2)$. Otherwise, as in the
former case, we need to find a dividing curve $\PC$ to divide
$\canal$ into two polygons $C_1$ and $C_2$ such that each point in
$C_1$ has a shortest path from $s$ via a root in $R_1$ and each
point in $C_2$ has a shortest path from $s$ via a root in $R_2$.  To
obtain $\PC$, the key is to find a point $p^* \in \PC$. Since the
canal gate $\overline{yz}$ is not completely contained in $C(y)$,
there must be a point $q$ on $\overline{yz}$ that is on the common
boundary of $C(y)$ and another cell $C(r)$ in $\SPM(\calM)$. We
claim that $q$ is on $\PC$. Indeed, note that $r$ is in $R_2$. Hence
there is a shortest path $\pi_1(s,q)$ from $s$ to $q$ that contains
$x$, the corridor path in $\canal$, and the line segment
$\overline{yq}$, and there is another shortest path $\pi_2(s,q)$
from $s$ to $q$ via the root $r\in R_2$. In other words, $q$ has two
shortest paths from $s$, one via a root in $R_1$ and the other via a
root in $R_2$.
Therefore, $q$ is on the
dividing curve $\PC$. The rest of the algorithm is similar to that
for the former case.

In summary, in this case, an SPM on $\canal$ can also be built in
$O(n'+m_1+m_2)$ time.

Therefore, a shortest path map SPM on $\canal$ can be computed in
$O(n'+m_1+m_2)$ time. Similarly, the size of this SPM is
$O(n'+m_1+m_2)$.

Theorem \ref{theo:canaltime} thus follows.

\sectionspace
\section{Applications of Our Shortest Path Algorithms}
\label{sec:fixed}

In this section, we extend our techniques to solve some other 
problems.

\subsection{The $L_1$ Geodesic Voronoi Diagram}

Given a set $\calP$ of $h$ polygonal obstacles of totally $n$
vertices and a set of $m$ point sites, the \lgvd\ problem aims to 
construct the $L_1$ geodesic Voronoi diagram of for the $m$ point
sites. Denote by $\gvd(\calP)$ the Voronoi diagram that we want to
construct. 

Mitchell's algorithm \cite{ref:MitchellAn89,ref:MitchellL192} can be modified to
compute $\gvd(\calP)$ in $O((n+m)\log(n+m))$ time. Namely,
instead of initiating a wavelet at a single source, the modified
algorithm for $\gvd(\calP)$ initiates a wavelet at
each point site. The rest of the algorithm remains the same as before.

We can also extend our \SPM\ algorithm in a similar way to
compute $\gvd(\calP)$. Generally, since our algorithm makes use of
Mitchell's algorithm \cite{ref:MitchellAn89,ref:MitchellL192} as a
main procedure when computing the shortest path map $\SPM(\calM)$
for the ocean $\calM$, to
compute $\gvd(\calP)$, we can simply replace Mitchell's algorithm by
its modified version for computing $L_1$ geodesic Voronoi diagrams.
More specifically, our
algorithm for computing $\gvd(\calP)$ has the following steps. (1) Compute a
triangulation of the free space, in which the $m$ point sites are
treated as $m$ point obstacles. (2) Compute the corridor structure on $\calP$
and the $m$ point obstacles that consists of
$O(m+h)$ corridors, which partition the plane into a set $\calP'$ of
$O(m+h)$ convex polygons of totally $O(n+m)$ vertices. (3) Compute
the core set $\c(\calP')$ for the convex polygons in $\calP'$. (4)
Apply Mitchell's modified algorithm 
\cite{ref:MitchellAn89,ref:MitchellL192} to compute the $L_1$
geodesic Voronoi diagram $\gvd(\c(\calP'))$ on the core set
$\c(\calP')$. (5) Based on $\gvd(\c(\calP'))$, compute the $L_1$ geodesic
Voronoi diagram $\gvd(\calP')$ on the convex polygon set $\calP'$.
Although we have multiple sources, this step is the same as before
(i.e., as in Lemma \ref{lem:50}). (6) Based on $\gvd(\calP')$, compute the
Voronoi regions in all bays and canals, as in Sections \ref{sec:bay}
and \ref{sec:canal}. Again, the algorithms for this step are as before,
i.e., as the algorithms in Sections \ref{sec:bay} and \ref{sec:canal}.
We then obtain the final $L_1$ geodesic Voronoi diagram $\gvd(\calP)$.

To analyze the running time, Steps (1), 
(2), and (3) are the same as before except that the number of obstacles
becomes $m+h$. Specifically, the triangulation in 
Step (1) takes $O(n+(h+m)\log^{1+\epsilon}(h+m))$ time \cite{ref:Bar-YehudaTr94}. 
Steps (2) and
(3) together take $O(n+(h+m)\log(h+m))$ time. Step (4) takes
$O((m+h)\log(m+h))$ time since the core set $\c(\calP')$ has totally
$O(m+h)$ vertices. Steps (5) and (6) are also the same as before,
which take linear time, i.e., $O(n+m)$. Therefore, the entire
algorithm takes $O(n+(h+m)\log^{1+\epsilon}(h+m))$ time, which is
dominated by the time of the triangulation procedure in Step (1).

\lemmaspace
\begin{theorem}\label{theo:100}
The $L_1$ geodesic Voronoi diagram of $m$ point sites among
a set of $h$ pairwise disjoint polygonal obstacles of totally $n$
vertices in the plane can be computed in
$O(n+(h+m)\log^{1+\epsilon}(h+m))$ time (or $O(n+(h+m)\log(h+m))$
time if a triangulation is given).
\end{theorem}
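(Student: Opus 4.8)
The plan is to mirror the single-source $\SPM$ construction of Theorem \ref{theo:60}, replacing the single wavefront source by $m$ simultaneous sources, one placed at each point site, and to check that every structural ingredient used in the single-source case survives this change. First I would treat the $m$ point sites as $m$ additional point obstacles and triangulate the free space of $\calP$ together with these sites; this is the step that dominates the running time, costing $O(n+(h+m)\log^{1+\epsilon}(h+m))$ by \cite{ref:Bar-YehudaTr94} (or $O(n+(h+m)\log(h+m))$ if the triangulation is supplied). From this triangulation I would build the corridor structure of \cite{ref:KapoorAn97} on the enlarged obstacle set, obtaining $O(m+h)$ corridors and $O(m+h)$ junction triangles, and a convex decomposition $\calP'$ into $O(m+h)$ convex polygons with $O(n+m)$ total vertices; the ocean $\calM$ is exactly the free space of $\calP'$, and the remaining free space splits into bays and canals exactly as in Sections \ref{sec:bay} and \ref{sec:canal}. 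I would then form the core set $\c(\calP')$, keeping for each convex polygon its four extreme vertices plus all corridor-path terminals lying on it.

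Next I would run Mitchell's continuous Dijkstra paradigm \cite{ref:MitchellAn89,ref:MitchellL192} in its multi-source form --- initiating a wavelet at each point site rather than at a single $s$ --- on $\c(\calP')$, with the corridor-path shortcut machinery incorporated exactly as in Section \ref{subsec:single}. Since $\c(\calP')$ has $O(m+h)$ vertices and edges, this produces the $L_1$ geodesic Voronoi diagram $\gvd(\c(\calP'))$ of the cores in $O((m+h)\log(m+h))$ time. The crucial point is that Lemmas \ref{lem:10}, \ref{lem:20}, and \ref{lem:40} only concern replacing a segment of a shortest path that penetrates an ear by an equal-$L_1$-length obstacle path; they are indifferent to which source a shortest path emanates from, so the argument of Lemma \ref{lem:40} applies verbatim to each Voronoi cell of $\gvd(\c(\calP'))$. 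Hence the cell-by-cell refinement of Lemma \ref{lem:50} converts $\gvd(\c(\calP'))$ into $\gvd(\calP')$ in additional $O(n+m)$ time. Finally I would expand $\gvd(\calP')$ into every bay and canal through its gates: a gate now carries some number of $\gvd(\calP')$ vertices, and expanding through it is precisely the additively weighted $L_1$ geodesic Voronoi subproblem solved in $O(n'+m')$ time by Theorems \ref{theo:baytime} and \ref{theo:canaltime}, neither of which uses that the weights come from a single source. Summing over all bays and canals --- whose obstacle-vertex counts total $O(n+m)$ and whose gate-vertex counts total $O(n+m)$, since the size of $\gvd(\calP')$ is $O(n+m)$ --- this stage costs $O(n+m)$, and assembling the six steps gives total time $O(n+(h+m)\log^{1+\epsilon}(h+m))$, dominated by the triangulation, as claimed.

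The step I expect to require the most care is confirming that the corridor-path shortcut mechanism of Section \ref{subsec:single} behaves correctly with multiple sources: a corridor path may now be traversed by wavefronts originating from different sites, and the ``pseudo-wavelet''/``pseudo-cell'' bookkeeping must still identify, for each corridor-path terminal, the correct predecessor vertex and the correct shortest-path length. Since the invariant maintained by Mitchell's multi-source variant is exactly that a vertex, once permanently labeled, carries its true geodesic distance to the nearest site, this amounts to re-checking the case analysis of wavefront-incoming versus wavefront-outgoing terminals with ``distance to $s$'' everywhere replaced by ``distance to the nearest point site'' --- in particular, when both terminals of a corridor are wavefront-incoming, two wavelets (possibly from distinct sites) still meet and die on the corridor path, which is the property exploited in Section \ref{sec:canal} to build an SPM for the canal. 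I would also note in passing that specializing to a single simple polygon $P$ of $n$ vertices (so $h=O(1)$) yields the $O(n+m\log^{1+\epsilon}m)$ bound mentioned in the introduction, since then the triangulation and the whole corridor construction cost $O(n+m\log^{1+\epsilon}m)$.
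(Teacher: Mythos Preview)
Your proposal is correct and follows essentially the same approach as the paper: treat the $m$ sites as point obstacles, triangulate, build the corridor structure and core set on the enlarged obstacle set, run the multi-source variant of Mitchell's algorithm on the cores, then lift to $\gvd(\calP')$ via Lemma~\ref{lem:50} and expand into bays and canals via Theorems~\ref{theo:baytime} and~\ref{theo:canaltime}. Your justification that the ear-replacement lemmas and the bay/canal expansion are indifferent to the number of sources, and your remark on the corridor-path terminal bookkeeping under multiple sources, are exactly the checks the paper glosses over when it says ``this step is the same as before.''
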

\lemmaspace

If the $m$ point sites are all inside a simple polygon, then Theorem
\ref{theo:100} leads to the following result.

\lemmaspace
\begin{corollary}\label{corr:30}
The $L_1$ geodesic Voronoi diagram of a set of $m$ point sites in a
simple polygon can be computed in $O(n+m\log^{1+\epsilon} m)$ time
(or $O(n+m\log m)$ time if a triangulation is given).
\end{corollary}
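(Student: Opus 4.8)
The plan is to obtain Corollary \ref{corr:30} as the special case $h=O(1)$ of Theorem \ref{theo:100}, after recasting ``a simple polygon domain'' as ``the free space of a constant number of polygonal obstacles.''

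First I would set up the reduction. Let $P$ be the given simple polygon with $n$ vertices, with the $m$ sites in its interior. Enclose $P$ in a large axis-parallel rectangle $\calR$ and regard $P_1 = \calR\setminus\mathrm{int}(P)$ as a single polygonal obstacle of $O(n)$ vertices (if one insists that obstacles be simple polygons, a zero-width slit joining $\partial\calR$ to $\partial P$ turns $P_1$ into a simple polygon without changing the underlying free space). Then $\{P_1\}$ is an obstacle set with $h=1$ and $O(n)$ total vertices whose free space inside $\calR$ is exactly $\mathrm{int}(P)$, and all $m$ sites lie in it; hence the $L_1$ geodesic Voronoi diagram of the $m$ sites with respect to $\{P_1\}$ coincides with the diagram we wish to compute.

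Next I would invoke Theorem \ref{theo:100} on this instance. Its time bound $O(n+(h+m)\log^{1+\epsilon}(h+m))$ becomes $O(n+(1+m)\log^{1+\epsilon}(1+m))$, which is $O(n+m\log^{1+\epsilon}m)$ for $m\ge 2$; for $m\le 1$ the diagram is simply all of $P$ and is produced in $O(n)$ time, also within the claimed bound. When a triangulation of $P$ together with the $m$ sites is already supplied, Step~(1) of the algorithm behind Theorem \ref{theo:100} is skipped, and the remaining steps run in $O(n+(h+m)\log(h+m)) = O(n+m\log m)$ time, which gives the second bound. The main (routine) obstacle is simply to confirm that treating $\calR\setminus\mathrm{int}(P)$ as one obstacle keeps both $h$ and the vertex count at $O(1)$ and $O(n)$ respectively, so that the corridor structure, the core construction, and the bay/canal expansions of Sections \ref{sec:general}--\ref{sec:canal} all carry over verbatim with $h$ instantiated to $1$; since every estimate used in proving Theorem \ref{theo:100} is already phrased in terms of $h$ and $n$, no genuinely new argument is needed.
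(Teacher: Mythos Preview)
Your proposal is correct and takes essentially the same approach as the paper: the corollary is stated immediately after Theorem~\ref{theo:100} with only the remark that it is the special case of point sites inside a single simple polygon, i.e., the instantiation $h=O(1)$ of that theorem. Your write-up simply makes explicit how to cast the simple-polygon domain into the obstacle framework (via $\calR\setminus\mathrm{int}(P)$, with the slit trick if needed), which the paper leaves tacit.
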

\lemmaspace

Note that the currently fastest known \gvd\ algorithm for the
Euclidean version of the single simple polygon case runs in
$O((n+m)\log (n+m))$ time \cite{ref:PapadopoulouA98}.

{\bf Remark.} Since the given $m$ sites are points, there is an
alternative triangulation algorithm that may be faster (than simply
applying the algorithm in \cite{ref:Bar-YehudaTr94}) in some
situations. The algorithm works as follows: (1) Compute the
triangulation of the free space without considering the $m$ sites;
(2) find the triangles in the triangulation that contain those $m$
sites (e.g., by a point location data structure); (3) triangulate
those triangles that contain at least one point site by considering
the point sites as obstacles. It is easy to see that this algorithm
takes $O(n+m\log n)$ time in the single polygon case and
$O(n+h\log^{1+\epsilon}h+m\log n)$ time in the polygonal domain
case. Therefore, using this triangulation algorithm, the geodesic
Voronoi diagram can be constructed in $O(n+m(\log n+\log m))$ time
in the single polygon case and in $O(n+h\log^{1+\epsilon}h+m\log n +
(h+m)\log(h+m))$ time in the polygonal domain case.

\subsection{Shortest Paths with Fixed Orientations and Approximate
Euclidean Shortest Paths}

As in \cite{ref:MitchellAn89,ref:MitchellL192}, our algorithms can be
generalized to solving the $C$-oriented shortest path problem
\cite{ref:WidmayerOn87}. {\em A $C$-oriented path} is a polygonal
path with each edge parallel to one of a given set $C$ of fixed
orientations. A shortest $C$-oriented path between two points is a
$C$-oriented path with the minimum Euclidean distance. Rectilinear
paths are a special case of this problem with two fixed orientations of $0$ and
$\pi/2$. Let $c=|C|$. Mitchell's algorithm
\cite{ref:MitchellAn89,ref:MitchellL192} can compute a shortest
$C$-oriented path in $O(cn\log n)$ time and $O(cn)$
space among $h$ pairwise disjoint
polygons of totally $n$ vertices in the plane.
Similarly, our algorithms also work for this problem, as follows.

We first consider the convex case (i.e., all polygons are
convex). We compute a core for each convex polygon based on the
orientations in $C$. Note that in this case, a core has $O(c)$ vertices. Thus, we
obtain a core set of totally $O(ch)$ vertices. We then apply
Mitchell's algorithm for the fixed orientations of $C$ on the core
set to compute a shortest path avoiding the cores in $O(c^2h\log
ch)$ time and $O(c^2h)$ space, after which we find a shortest path
avoiding the input polygons in additional $O(n)$ time as in Lemma
\ref{lem:40}. Thus, a shortest path can be found in totally
$O(n+c^2h\log ch)$
time and $O(n+c^2h)$ space. For the general case when the polygons
need not be convex, the algorithm scheme is similar to our
$L_1$ algorithm in Section \ref{sec:general}. In summary, we have the following result.

\lemmaspace
\begin{theorem}\label{theo:70}
Given a set $C$ of orientations and a set of $h$
pairwise disjoint polygonal obstacles of totally $n$ vertices in the plane,
we can compute a
$C$-oriented shortest $s$-$t$ path in the free space in
$O(n+h\log^{1+\epsilon} h+c^2h\log ch)$ time (or $O(n+c^2
h\log ch)$ time if a triangulation is given) and $O(n+c^2h)$
space, where $c=|C|$.
\end{theorem}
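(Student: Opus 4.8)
The plan is to follow the same two-stage strategy used for the $L_1$ metric: first handle the convex case by shrinking each obstacle to a small \emph{core}, running Mitchell's $C$-oriented continuous Dijkstra algorithm on the cores, and then lifting the computed path back to the original obstacles; and then reduce the general case to the convex one via the corridor structure of Section~\ref{sec:general}.

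For the convex case, for each convex obstacle $P_i'$ I would define its $C$-core $\c(P_i')$ as the convex polygon obtained by keeping, for every orientation in $C$, the two vertices of $P_i'$ that are extreme along that orientation's direction; since $|C|=c$, each core has $O(c)$ vertices and $\c(\calP')$ has $O(ch)$ vertices in all (for $c=2$ this is exactly the leftmost/rightmost/topmost/bottommost construction of Section~\ref{sec:convexcase}). The heart of the argument is the $C$-oriented analogue of Lemma~\ref{lem:40}: from a vertex-preferred shortest $C$-oriented path that avoids the cores one can recover, in $O(n)$ time, a shortest $C$-oriented path of the same length that avoids $\calP'$. As in Section~\ref{sec:convexcase}, this comes down to an \emph{ear} argument: an edge $\overline{ab}$ of a core cuts off an ear of the obstacle whose obstacle path is a convex chain joining two consecutive extreme vertices, hence monotone with respect to the two orientations of $C$ that bracket the direction of $\overline{ab}$; inside such a wedge the $C$-oriented length of a monotone curve depends only on its endpoints, so a path segment that penetrates the ear can be replaced by the ear's obstacle path without changing its $C$-oriented length, exactly as Observation~\ref{obser:10} and Lemmas~\ref{lem:10} and~\ref{lem:20} establish for $L_1$. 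Running Mitchell's algorithm for the orientations $C$ on $\c(\calP')$ then takes $O(c\cdot ch\log(ch))=O(c^2h\log ch)$ time and $O(c^2h)$ space, and the lifting step adds $O(n)$ time given a triangulation of the free space together with the obstacle interiors; hence a $C$-oriented shortest path avoiding convex obstacles is found in $O(n+c^2h\log ch)$ time and $O(n+c^2h)$ space.

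For the general case I would reuse the corridor construction of Section~\ref{sec:general} verbatim: triangulate $\calF$, extract the $O(h)$ corridors and junction triangles, and partition the complement of the ocean into a set $\calP'$ of $O(h)$ convex polygons of $O(n)$ total vertices, so that a shortest $C$-oriented $s$-$t$ path is a shortest $C$-oriented path avoiding $\calP'$ that may use some corridor paths. Building the $C$-cores as above while also retaining all $O(h)$ corridor-path terminals as core vertices (as in Section~\ref{subsec:single}) keeps the core count at $O(ch)$, and the pseudo-wavelet device for turning corridor paths into shortcuts carries over unchanged, contributing only $O(h)$ additional events. Since triangulating $\calF$ costs $O(n+h\log^{1+\epsilon}h)$ (or $O(n)$ if a triangulation is supplied) and the remaining steps are as in the convex case plus an $O(n)$ lifting step, the stated time and space bounds follow. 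The main obstacle is precisely the geometric core of the argument --- proving that ear replacement is length-preserving in the $C$-oriented metric --- because, unlike the $L_1$ case where everything splits into the positive- and negative-sloped cases, one must now track $c$ distinct orientations, verify that each core edge's direction falls strictly between two consecutive ones, and argue that the associated ear's obstacle path is monotone in that local two-direction frame; once this is settled, the complexity accounting for Mitchell's $C$-oriented algorithm and the corridor machinery is routine.
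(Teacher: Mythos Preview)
Your proposal is correct and follows essentially the same approach as the paper: build $O(c)$-vertex $C$-cores, apply Mitchell's $C$-oriented algorithm on the $O(ch)$ core vertices, lift back via the ear-replacement analogue of Lemma~\ref{lem:40}, and handle the general case through the corridor structure of Section~\ref{sec:general}. If anything, you supply more detail than the paper does---particularly in isolating the $C$-oriented ear/monotonicity argument as the one nontrivial step---since the paper treats this theorem as a direct extension and only sketches the bounds.
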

\lemmaspace

This also yields an approximation algorithm for computing a
Euclidean shortest path between two points among polygonal obstacles.
Since the Euclidean metric can be approximated within an accuracy
of $O(1/c^2)$ if we use $c$ equally spaced orientations, as in
\cite{ref:MitchellAn89,ref:MitchellL192}, Theorem \ref{theo:70}
leads to an algorithm that computes a path guaranteed to have a length
within a factor $(1+\delta)$ of the Euclidean shortest path length, where
$c$ is chosen such that $\delta=O(1/c^2)$.

\begin{corollary}\label{corr:20}
A $\delta$-optimal Euclidean shortest path between two points among
$h$ pairwise disjoint polygons of totally $n$ vertices in the plane
can be computed in
$O(n+h\log^{1+\epsilon} h+(1/\delta)h\log \frac{h}{\sqrt{\delta}})$ time (or
$O(n+(1/\delta) h\log \frac{h}{\sqrt{\delta}})$ time if a triangulation is given)
and $O(n+(1/\delta)h)$ space.
\end{corollary}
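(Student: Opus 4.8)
The plan is to obtain this result directly from Theorem \ref{theo:70} by specializing the number of fixed orientations. First I would recall the standard metric-approximation fact used in \cite{ref:MitchellAn89,ref:MitchellL192} (and in \cite{ref:ClarksonAp87}): if $C$ consists of $c$ equally spaced orientations, then for any two points $p$ and $q$ the $C$-oriented distance $d_C(p,q)$ satisfies $d_E(p,q)\le d_C(p,q)\le (1+O(1/c^2))\,d_E(p,q)$, where $d_E$ denotes Euclidean distance. This follows by replacing a segment of direction $\theta$ with the two-edge $C$-oriented path spanning the parallelogram determined by the two orientations of $C$ nearest to $\theta$; an elementary trigonometric estimate bounds the length blow-up by $1/\cos(\pi/c)-1=O(1/c^2)$.

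Next I would lift this pointwise estimate to geodesic paths among the obstacles, i.e., argue that a shortest $C$-oriented obstacle-avoiding $s$-$t$ path has length at most $(1+O(1/c^2))$ times the Euclidean shortest $s$-$t$ path length. This is the one step that needs care, and I would handle it exactly as in \cite{ref:MitchellAn89,ref:MitchellL192}: take a Euclidean shortest $s$-$t$ path, which is polygonal and turns only at obstacle vertices, and for each of its maximal segments produce a nearby $C$-oriented detour of length within the stated factor that still avoids all obstacles, using the room available on the reflex side of each turn vertex. I expect the main obstacle in a fully self-contained treatment to be making this detour construction rigorous while preserving obstacle-avoidance; since the excerpt explicitly invokes the corresponding argument from \cite{ref:MitchellAn89,ref:MitchellL192}, I would cite it rather than reprove it.

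Finally, given a target accuracy $\delta>0$, I would choose $c=\Theta(1/\sqrt{\delta})$ so that $1+O(1/c^2)$ becomes $1+O(\delta)$ (absorbing the hidden constant into the choice of $c$), which makes the computed $C$-oriented path $\delta$-optimal in the Euclidean sense; running the algorithm of Theorem \ref{theo:70} on this $C$ then yields such a path. Substituting $c=\Theta(1/\sqrt{\delta})$ into the bounds of Theorem \ref{theo:70} gives $c^2=\Theta(1/\delta)$ and $ch=\Theta(h/\sqrt{\delta})$, so the running time $O(n+h\log^{1+\epsilon}h+c^2h\log ch)$ becomes $O(n+h\log^{1+\epsilon}h+(1/\delta)h\log\frac{h}{\sqrt{\delta}})$, or $O(n+(1/\delta)h\log\frac{h}{\sqrt{\delta}})$ when a triangulation is supplied, while the space bound $O(n+c^2h)$ becomes $O(n+(1/\delta)h)$, as claimed.
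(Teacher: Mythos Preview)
Your proposal is correct and follows essentially the same approach as the paper: the paper simply notes (in the paragraph immediately preceding the corollary) that the Euclidean metric is approximated to within $O(1/c^2)$ by $c$ equally spaced orientations as in \cite{ref:MitchellAn89,ref:MitchellL192}, chooses $c$ so that $\delta=O(1/c^2)$, and invokes Theorem~\ref{theo:70}. Your write-up is in fact more detailed than the paper's, which treats the corollary as an immediate consequence and provides no further argument.
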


\section{Conclusions}
\label{sec:con}

We present new algorithms for solving $L_1$ shortest path problems
in polygonal domains. Our algorithms are optimal if the triangulation
for the free space can be done optimally (i.e., $T=O(n+h\log h)$).
In fact, our results show that building an $L_1$ shortest path map is equivalent to
the triangulation in terms of the running time.  

Some of our techniques may be helpful on solving the Euclidean version
of the problem. For the Euclidean version, as the $L_1$ version, 
a long-standing open
problem is to compute a shortest path in $O(n+h\log h)$ time and $O(n)$
space. Hershberger and Suri \cite{ref:HershbergerAn99} built an SPM of
size $O(n)$ in $O(n\log n)$ time and $O(n\log n)$ space.
Recently, Inkulu {\em et al.} announced an algorithm that can find 
an Euclidean shortest path in $O(n+h\log h\log n)$ time
\cite{ref:InkuluA10}; we give an algorithm for the problem that
runs in $O(n+h\log^{1+\epsilon}h+k)$ time \cite{ref:ChenCo11Curved}, where
$k$ is a parameter sensitive to the input and is bounded by $O(h^2)$.
Note that our algorithm \cite{ref:ChenCo11Curved} particularly works
for obstacles that have curved boundaries. To generalize the techniques
given in this paper to the Euclidean version, some difficulty appears.
For example, the idea of using cores does not seem to work (e.g., 
Lemma \ref{lem:40} is not applicable to the Euclidean
version). A possible direction for solving the Euclidean version is
to first solve the convex case with the performance desired by the
open problem. If this is possible, then by generalizing the techniques
given in this paper, it is very likely that 
the general case may also be solved accordingly, 
and thus the open problem can be settled although we may still have to 
suffer the $O(n+h\log^{1+\epsilon}h)$ triangulation time.


\footnotesize \baselineskip=11.0pt
\bibliographystyle{plain}
\bibliography{reference}

\newpage
\normalsize
\appendix



\end{document}